\newtheorem{theorem}{Theorem}[section]
\newtheorem{lemma}[theorem]{Lemma}
\newtheorem{proposition}[theorem]{Proposition} 
\newtheorem{remark}{Remark}[section]
\newcommand\bb {\mathbf b}
\newcommand\bt {\mathbf t}
\newcommand\bu {\mathbf u}
\newcommand\bx {\mathbf x}
\newcommand\bz {\mathbf z}
\newcommand\bA {\mathbf A}
\newcommand\bX {\mathbf X}
\newcommand\indica {\mathbb{I}}
\newcommand\wb {\widehat{{b}}}
\newcommand\wbx {\widehat{\bx}}
\newcommand\wF {\widehat{F}}
\newcommand\wL {\widehat{L}}
\newcommand\wY {\widehat{Y}}
\newcommand\itA {{\mathcal{A}}}
\newcommand\itE {{\mathcal{E}}}
\newcommand\itF {{\mathcal{F}}}
\newcommand\itG {{\mathcal{G}}}
\newcommand\itH {{\mathcal{H}}}
\newcommand\itI {{\mathcal{I}}}
\newcommand\itN {{\mathcal{N}}}
\newcommand\itR {{\mathcal{R}}}
\newcommand\itS {{\mathcal{S}}}
\newcommand\balfa {\mbox{\boldmath $\alpha$}}
\newcommand\balfach {\mbox{\scriptsize\boldmath $\alpha$}}
\newcommand\bbe {\mbox{\boldmath $\beta$}}
\newcommand\bbech {\mbox{\scriptsize${\bbe}$}}
\newcommand\bmu {\mbox{\boldmath $\mu$}}
\newcommand\bLam {\mbox{\boldmath $\Lambda$}}
\newcommand\bLamch {\mbox{\scriptsize \boldmath $\Lambda$}}
\newcommand\bSi {\mbox{\boldmath $\Sigma$}}
\newcommand\wbalfa {\widehat{\balfa}}
\newcommand\wbalfach {\widehat{\balfach}}
\newcommand\wbeta {\widehat{\beta}}
\newcommand\wbbe {\widehat{\bbe}}
\newcommand\wbbech {\mbox{\scriptsize$\wbbe$}}
\newcommand\wphi {\widehat{\phi}}
\newcommand\wmu {\widehat{\mu}}
\newcommand\wbmu {\widehat{\bmu}}
\newcommand\wGamma {\widehat{\Gamma}}
\newcommand\wbLam {\widehat{\bLam}}
\newcommand\wbLamch {\widehat{\bLamch}}
\newcommand\wUps  {\widehat{\Upsilon}}
\newcommand\wbSi {\widehat{\bSi}}
\def\real{\mathbb{R}}
\newcommand{\esp}{\mathbb{E}}
\newcommand{\prob}{\mathbb{P}}
\newcommand{\eLe}{\mathbb{L}}
\newcommand{\cov}{\mbox{\sc Cov}}
\newcommand{\var}{\mbox{\sc Var}}
\newcommand{\diag}{\mbox{\sc diag}}
\newcommand{\convpp}{ \buildrel{a.s.}\over\longrightarrow}
\newcommand{\convprob  }{ \buildrel{p}\over\longrightarrow}
\newcommand{\trasp}{^{\top}}
\newcommand\bcero {{\bf{0}}}
\def\dst{\displaystyle}
\def\argmax{\mathop{\mbox{argmax}}}
\def\ROC {\mathop{\mbox{ROC}}} 
\def\AUC {\mathop{\mbox{AUC}}}
\def\YI {\mathop{\mbox{\sc YI}}}
\newcommand\noi{\noindent}
\def\dst{\displaystyle}
\def\square{\ifmmode\sqr\else{$\sqr$}\fi}
\def\sqr{\vcenter{
         \hrule height.1mm
         \hbox{\vrule width.1mm height2.2mm\kern2.18mm
\vrule width.1mm}
         \hrule height.1mm}}
\newcommand\ave {\mbox{\scriptsize \sc a}}
\newcommand{\pool}{\mbox{\scriptsize \sc pool}}
\newcommand{\inte}{\mbox{\footnotesize \sc i}}
\newcommand{\mini}{\mbox{\footnotesize \sc min}}
\newcommand{\maxi}{\mbox{\footnotesize \sc max}}
\newcommand{\cuad}{\mbox{\footnotesize \sc quad}}
\newcommand{\cuadch}{\mbox{\scriptsize \sc quad}}
\newcommand{\lin}{\mbox{\footnotesize \sc lin}}
\newcommand{\media}{\mbox{\footnotesize \sc m}}
\begin{document}

 \title{\textbf{ROC curve analysis for functional markers}}
\author{Ana M. Bianco$^a$, Graciela Boente$^b$  and  Juan Carlos Pardo-Fern\'andez$^c$ \\
\small $^a$ Instituto de C\'alculo,  
\small Facultad de Ciencias Exactas y Naturales, \\
\small Universidad de Buenos Aires and CONICET, Argentina\\
\small $^a$ email: abianco@dm.uba.ar\\
\small $^b$ Departamento de Matem\'aticas and Instituto de C\'alculo,  
\small Facultad de Ciencias Exactas y Naturales,\\
 \small Universidad de Buenos Aires and CONICET, Argentina\\
\small $^c$  Centro de Investigaci\'{o}n e Tecnolox\'{\i}a Matem\'{a}tica de Galicia (CITMAga) and \\
\small Departamento de Estat{\'i}stica e Investigaci\'on Operativa,  Universidade de Vigo, Spain    
}
\date{}
\maketitle

\begin{abstract}
Functional markers have become a  frequent tool in medical diagnosis. In this paper, we aim to define an index allowing to discriminate between populations when the observations are functional data belonging to a Hilbert space. We discuss some of the problems arising when estimating  optimal directions defined to maximize the area under the curve of a projection index and we construct the corresponding ROC curve. We also go one step further and consider the case of possibly different covariance operators, for which we recommend a quadratic discrimination rule. Consistency results are derived for both linear and quadratic indexes under mild conditions. The results of our  numerical experiments allow to see the advantages of the quadratic rule when the populations have different covariance operators. We also illustrate the considered methods on a real data set on cardiotoxicity related to therapies on breast cancer patients.   
 \end{abstract}

\noi \textbf{Keywords:} Consistency; Discriminating index; Functional data; Optimal directions; ROC curve.

\section{Introduction}{\label{sec:intro}}
In applied sciences, there is a permanent search for better and better tools of diagnosis and screening of different diseases. A key--point is the evaluation of the performance of such developments. The Receiver Operating Characteristic curve (ROC curve) is a very well--accepted graphical technique to  assess  the accuracy of a diagnostic test based on a continuous marker. The  use of ROC curves  is extensive in medical and pharmacological investigations, but they are also employed in  completely different scenarios, such as for the evaluation of a machine learning process, see  \citet{Krzanowski:Hand:2009} where more applications can be found. 

For the sole purpose of introducing the concepts related to ROC curves, we will focus on medical diagnosis, where there are two groups, corresponding to diseased and healthy populations, and the aim is to classify a new subject in one of these groups according to the outcome of a continuous biomarker. In this context, two essential concepts appear concerning the errors one can make:  the \textsl{sensitivity}, related to the ability of correctly detecting diseased people, and the \textsl{specificity}, that involves the skill of correctly assigning a subject to the healthy group. Thus, a ROC curve for a test based on a continuous marker is a graphical representation  of the \textsl{sensitivity} against the complementary of the \textsl{specificity} (that is, $1-$specificity) computed from the classification rule that assigns a subject to the diseased group if the biomarker is greater than a critical value $c$ and to the healthy group, otherwise,   as  the threshold $c$ varies.

In order to compact the information about the discriminatory  performance of the diagnostic test several summary  indexes  were introduced. The classification accuracy is frequently measured through the area under the curve (AUC), which can be interpreted as the average sensitivity for all specificity values. The Youden index, $\YI$, is another  global measure that is extensively used in the literature and corresponds to the maximum difference between the ROC curve and the identity function. Estimation and inference methods regarding ROC curve and related summary measures are very well-studied in the univariate setting. \citet{pepe:2003} and \citet{zhou:etal:2011} provide a comprehensive review concerning both theoretical and practical aspects of ROC curves based on univariate markers.


For some diseases,  it is necessary to combine several biomarkers  in order to get a diagnosis tool that improves the performance of each single marker on its own. In such cases, a global diagnostic measure is desirable to achieve a better classification rule. \citet{perez:2020}  reviews some proposals given to summarize the joint information provided by several biomarkers, including methods based on linear and quadratic discrimination rules.  

  In recent years with the evolution of biomedical technology,  data with more and more complex structure are collected. This is the case of functional data that consist of curves varying over time or any other continuum and thus,  valued in an infinite--dimensional space, usually   a metric or semi--metric, and in some cases,  a Hilbert space.   Henceforth, we will assume that the functional data are measured over time. In fact, functional markers  have been increasingly used in clinical studies to diagnose diseases. To summarize the curves in a univariate biomarker usual practices are to consider the maximum or minimum values, the time to the maximum or the  integral of the curve over the time  range.  However, specifically designed techniques should be employed to analyse this kind of data in order to take advantage of their potential and, as extensively discussed, the infinite--dimensional structure  should be taken into account when considering any estimation procedure, see \citet{wang:etal:2016}.   For an overview on functional data analysis we refer among others to \citet{ramsay2005functional}, \citet{ferraty:vieu:2006}, \citet{horvath:kokoska:2012} and \citet{hsing:eubank:2015}. 

 Our contribution is oriented to situations such as the one described in Section \ref{sec:realdata}, where we address a study on breast cancer patients with high  levels of the protein human epidermal growth factor receptor 2 (HER2). These women have a better response to drugs that target the HER2 protein, but  this kind of therapies may have side effects such as  cardiotoxicity. In order to prevent therapy-related cardiac dysfunction (CTRCD),  it is recommended to follow--up the appearance of CTRCD through cardiac imaging tests such as the Tissue Doppler Imaging (TDI), an echocardiographic technique that reflects the myocardial motion.  TDI is processed so as to obtain  a functional datum used to study the heart status. The aim is to evaluate the performance of this functional biomarker, which is displayed in Figure \ref{fig:ciclos}, to distinguish between patients with CTRCD from those who do not suffer from this condition. Gray curves  correspond to  patients without CTRCD (CTRCD$=0$) and aquamarine ones to women with CTRCD (CTRCD$=1$). 
At a first glance, the two groups look different since the cycles of patients without CTRCD are more spread, while the data of women that experienced CTRCD are more concentrated in the central area. Besides, the means of the two groups are similar, except for cycle values between 0.4 and 0.8. It is worth mentioning that the data set  is available at \citet{Pineiro:etal:2023}, where a thorough description is given.

\begin{figure}[ht!]
	\begin{center}
		\footnotesize
		\includegraphics[scale=0.45]{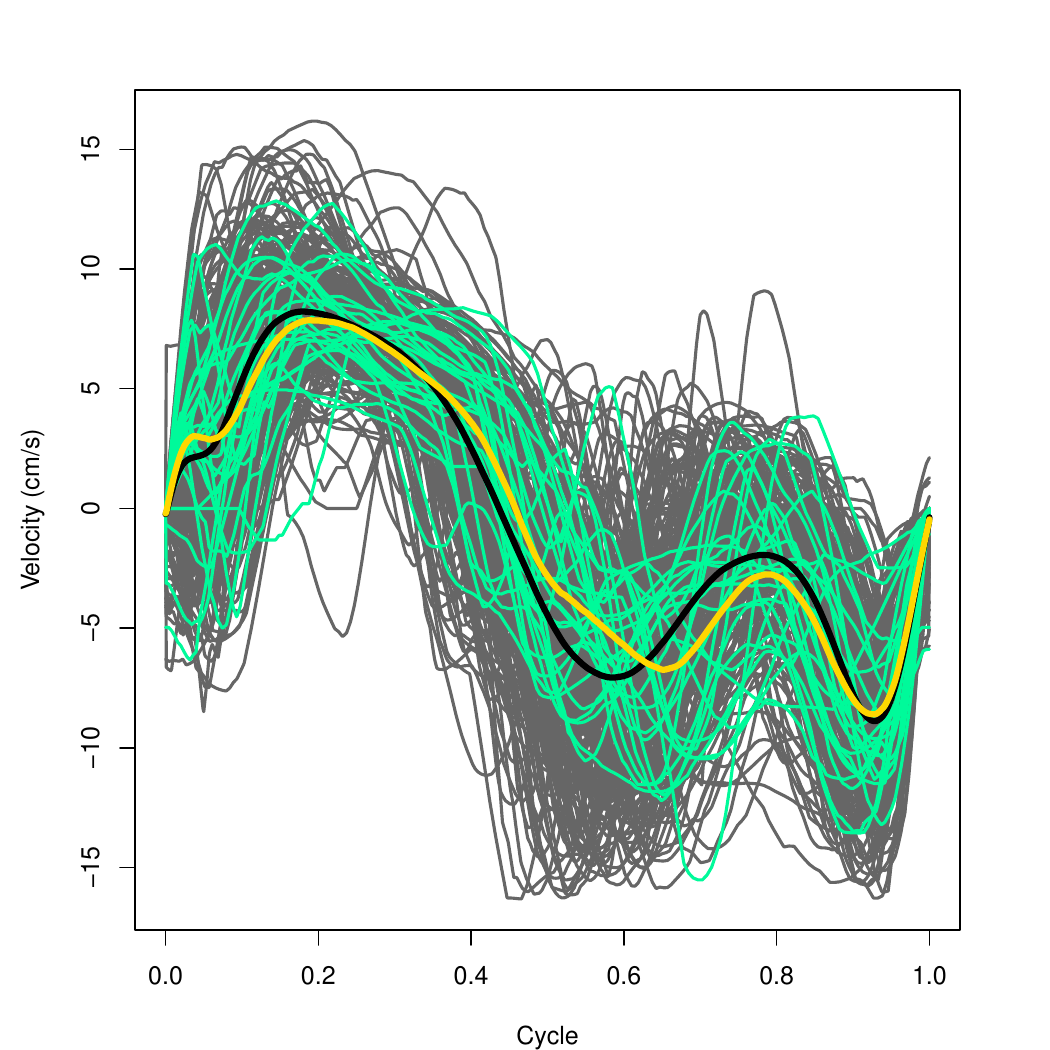}
		\vskip-0.1in 
		\caption{\label{fig:ciclos} Cardiotoxicity data. Cycles of 270 patients during the follow--up. The cycles of patients with   CTRCD$=0$ are displayed as gray lines, while those with CTRCD$=1$ are represented in aquamarine. The lines in black and gold correspond  to the mean of   CTRCD$=0$  and  CTRCD$=1$, respectively.}
		\end{center} 
\end{figure}

Functional data, that become more frequent every day in clinical research, pose  different challenges to Statistics, in particular,  to ROC curve estimation.  In this direction, some developments were done to extend the existing  methodology to the functional setting. For instance,  proposals for the induced ROC  considering a univariate marker and a functional covariate  are  considered by \citet{inacio:etal:2012} and \citet{inacio:etal:2016}. However, in some situations, the biomarker itself is a functional data usually discretely recorded. This is often the case in  longitudinal studies and we refer to \citet{liu:wu:2003} and \citet{liu:etal:2005}, who  propose  a generalized mixed model to  predict the condition (healthy or diseased) based on the observed values of the biomarker.  \citet{Haben:etal:2019} consider a dynamic scoring prediction rule and then several extensions of the ROC curve are introduced. When the biomarker is discretely recorded with some possible noise, smoothing techniques, such as smoothing splines and kernel smoothing, can be employed, see \citet{wang:etal:2016}.

 In the context of functional biomarker, \citet{estevezperez:vieu:2021}  define a functional version of the ROC curve by properly ranking  the sample of functional data via the projection over a selected subset $\itE$ of the functional space indexed by a real number $\theta\in [0,1]$. The functional ROC curve is then defined as the ROC of the projections over $\itE$. As mentioned therein, their proposal is appropriate when differences between healthy and diseased individuals arise in their mean, but not on the covariances. Later,  \citet{jang:manatunga:2022} propose a  wide class of  features to summarize the functional biomarker into an univariate one, including an integral-type that can also be viewed as the inner $L^2-$product between the functional data and the constant function that corresponds to the average value of the functional marker. Among these features, they consider the average velocity and average acceleration as well as the maximum and minimum of the biomarker curve mentioned above.  The ROC curves and the corresponding AUC are then constructed using the  summary functional  and the smoothed trajectories.

 In this paper, we follow a unifying approach.  
We first review in Section \ref{sec:multiva} the basic notions related to ROC curves and the situation of multivariate biomarkers, where methods based on searching for linear combinations maximizing the AUC were given. 
Later on, in Section \ref{sec:functional} we adapt these revisited ideas to the functional framework assuming that the trajectories belong to a Hilbert space and we discuss the issues involved in the estimation of the optimal directions. We also go one step forward, by considering the case of possibly different covariance operators. Section \ref{sec:consist} is devoted to the study of asymptotic properties, such as the uniform consistency of the proposed ROC curve estimators and the strong consistency of the related  estimators of the AUC and the Youden index. In Section \ref{sec:monte} a thorough numerical experiment is performed, while in Section \ref{sec:realdata} the analysis of the real data set mentioned above illustrates the application of the studied procedures. Some final comments are provided in Section \ref{sec:finalcoment}. Proofs are relegated to the Appendix.


\section{Preliminaries}{\label{sec:multiva}}

\subsection{Basic notions and multivariate biomarkers} 

 We begin by introducing basic concepts related to ROC curves in the simple case of a  univariate biomarker. Assume that $Y$ is a continuous biomarker  and let  $c$ be a threshold value. Thus, we consider the classification rule that assigns a subject to the  diseased group when $Y \ge c$ and to the healthy one, otherwise.  Moreover, denote $Y_{D}$ the marker in the diseased population and $F_{D}$ its distribution, while $Y_{H}$ and $F_{H}$ stand  for the marker and its distribution in the healthy population. As usual in this context, the populations are assumed to be independent. Our interest focuses on the evolution of the pairs $\{(1-F_{H}(c),1-F_{D}(c)) \}$ as the threshold $ c \in \real$ varies. This leads to the usual  ROC curve formula given by $\ROC(p)= 1-F_{D}\{F_{H}^{-1}(1-p)\}$, $p \in (0,1) $. 

 An extensively used model is the  binormal  one, that assumes that in both populations the marker is normally distributed, i.e., $F_j\sim N(\mu_j, \sigma_j)$, for $j=D,H$. In this case, the distributions are characterized by the means $\mu_{j}$ and the standard deviations $\sigma_{j}$, $j=D,H$, leading to
 $\ROC(p) = \Phi\left\{({\mu_{H}-\mu_{D}})/{\sigma_{D}}+ {\sigma_{H}}\Phi^{-1}(p)/{\sigma_{D}} \right\}$, where   $\Phi$ stands for the cumulative distribution function of a standard normal variable. 

 In order to have a global measure of the accuracy of the test, different indexes have been introduced. The  area under the ROC curve (AUC) is the most popular and is defined as
$\AUC= \int_0^1 \ROC(p) dp$. 
Straightforward calculus enables to prove that $\AUC= \prob(Y_{D} > Y_{H})$ and this is why values of AUC near to 1 are related to high diagnostic accuracy of the biomarker. The Youden index, $\YI$, is also very well known. It measures the proximity  of the ROC curve to the identity function, thinking of the identity as the ROC of useless marker, and is defined as $\YI=\max_{0 < p < 1 } \{\ROC(p)-p\}$. It is worth mentioning that when $Y_D$ is stochastically greater than $Y_H$, then $\ROC(p) \ge p$, so the area under the curve is greater or equal than 0.5 and  the Youden index is non--negative. 

 Now, let us consider the multivariate case where $\bx\in \real^k$ is the multivariate biomarker  used for the diagnostic of a given disease. From now on,  $\bx_D \sim F_D$  and    $\bx_{H} \sim F_{H}$ stand for the independent biomarkers over the diseased and healthy populations, respectively.  A well--known discrimination rule is the  linear one, which intends to project the data over a given direction chosen to differentiate both groups. For a given $\bbe\in \real^k$ denote $Y_{j,\bbech}=\bbe\trasp \bx_j$, for $j=D,H$, and let  $F_{j, \bbech}$ their respective distribution functions. Then, a ROC curve can be constructed for each fixed $\bbe\ne \bcero_k$ as ${\ROC}_{\bbech}(p)= 1-F_{D, \bbech}\left\{F_{H,\bbech}^{-1}\left(1-p\right)\right\}$, $p \in (0,1) $.
${\ROC}_{\bbech}$ allows to evaluate the capability of the projected biomarker to distinguish between the two groups. Among others,  \citet{ma:huang:2005} and \citet{pepe:cai:Longton:2006}  provide methods to choose the best direction by means of the  area under the curve $\AUC(\bbe)=\int_0^1 {\ROC}_{\bbech}(p) dp= \prob(Y_{D,\bbech} >Y_{H,\bbech})$. In Section \ref{sec:binormal}, we review the construction of the optimal direction on a population level for normally distributed biomarkers, to understand how these ideas can be extended to the functional case, as presented in Section \ref{sec:linealfun}. 
  
\subsection{The binormal case}{\label{sec:binormal}}
The binormal model has been extensively considered for univariate  biomarkers as a way to  supply  a simple  parametric approach
to  ROC curves estimation.  Hence, if the practitioner suspects that this model  gives  a suitable approximation, they can choose the threshold constants according to that belief, providing a semiparametric framework that will indeed keep consistency under the suspected model. In the multivariate setting, the binormal model also provides a simple expression for both the $\ROC_{\bbech}(p)$ and  the $\AUC(\bbe)$.

Let us assume that $\bx_j \sim N(\bmu_j, \bSi_j)$, hence $Y_{j,\bbech}\sim  N(\bbe\trasp\bmu_j, \bbe\trasp\bSi_j\bbe)$. Denote $z_p$  the value  such that $\Phi(z_p)=1-p$. As it is well known,  in such framework, the quantile and distribution functions of   $Y_{H,\bbech}$ and  $Y_{D,\bbech}$, respectively have an explicit expression given by
 $$F_{H,\bbech}^{-1}\left(1-p\right)=\bbe\trasp \bmu_{H}+ \left({\bbe\trasp \bSi_H \bbe}\right)^{1/2}  z_{p}\qquad \mbox{and}\qquad F_{D, \bbech}(u)= \Phi\left\{\frac{u-\bbe\trasp \bmu_{D}}{ \left({\bbe\trasp \bSi_D \bbe}\right)^{1/2}}\right\}\,,$$
which  allow  to express the related ${\ROC}_{\bbech}$ curve using the cumulative standard normal distribution as
$${\ROC}_{\bbech}(p)=  1-F_{D, \bbech}\left\{\bbe\trasp \bmu_{H}+ \left({\bbe\trasp \bSi_H \bbe}\right)^{1/2}  z_{p}\right\}=1- \Phi\left\{\frac{\bbe\trasp\left( \bmu_{H}-\bmu_{D}\right)+ \left({\bbe\trasp \bSi_H \bbe}\right)^{1/2}  z_{p} }{ \left({\bbe\trasp \bSi_D \bbe}\right)^{1/2}}\right\}\,.$$
Furthermore, taking into account that $Y_{D,\bbech}-Y_{H,\bbech} \sim N\left(\mu_{\bbech}, \sigma_{\bbech}^2\right)$, with $\mu_{\bbech}=\bbe\trasp (\bmu_D-\bmu_H)$ and $\sigma_{\bbech}^2=\bbe\trasp\left(\bSi_D+\bSi_H\right)\bbe$, an explicit expression may also be given for the $\AUC(\bbe)$ as
 $$\AUC(\bbe)=\prob(Y_{D,\bbech}-Y_{H,\bbech}>0)=\Phi\left[\frac{\bbe\trasp (\bmu_D-\bmu_H)}{\left\{\bbe\trasp\left(\bSi_D+\bSi_H\right)\bbe\right\}^{ 1/2}}\right]\,.$$
Note that when $\mu_D=\mu_H$ the $\AUC(\bbe)$ equals 1/2 for any $\bbe$ and the problem of maximizing $\AUC(\bbe)$ is vacuous. Furthermore, if $\Sigma_H=\Sigma_D$, then, the function  ${\ROC}_{\bbech}(p)$ is constant and equal to the identity function, since both populations cannot be distinguished.

\citet{ma:huang:2007} propose to estimate $\bbe$ as the value maximizing a smooth estimator, $\widehat{\AUC}(\bbe)$, of $\AUC(\bbe)$. Taking into account that the area under the curve and ${\ROC}_{\bbech}$ remain unchanged if $\bbe$ is multiplied by a positive constant, \citet{ma:huang:2007} suggest to maximize  $\widehat{\AUC}(\bbe)$ over the values   $\bbe$ such that the first coordinate equals 1. When constructing the smooth estimator  $\widehat{\AUC}(\bbe)$ the indicator function is approximated by the sigmoid function.  
 
It is clear that the population counterparts of the estimators defined in   \citet{ma:huang:2007} correspond to the value $\bbe$ maximizing $\AUC(\bbe)$ over the directions with its first component equal to 1, or equivalently to the maximizer of 
\begin{equation}
\label{eq:eleRp}
L(\bbe)=\frac{ \bbe\trasp (\bmu_D-\bmu_H) }{\left\{\bbe\trasp\left(\bSi_D+\bSi_H\right)\bbe\right\}^{1/2}}\,.
\end{equation}
The Cauchy--Schwartz inequality implies that  any scalar multiple of $\bbe_0=\left(\bSi_D+\bSi_H\right)^{-1}\left(\bmu_{D}-  \bmu_H\right)$ maximizes $|L(\bbe)|$, so $\bbe_0$ (and any its positive multiples) maximizes  $\AUC(\bbe)$ leading to
\begin{equation*}
\max_{\bbech}\AUC(\bbe)=\AUC(\bbe_0)=\Phi\left[ \left\{\left(\bmu_{D}-  \bmu_H\right)\trasp\left(\bSi_D+\bSi_H\right)^{-1}\left(\bmu_{D}-  \bmu_H\right)\right\}^{1/2}\right]\,.
\end{equation*} 
Note that when $\bSi_D=\bSi_H$, the Fisher discriminating direction is obtained.

 Define $\itA_{\bbech}(p)=\{\bx \in \real^k: \bbe\trasp \bx \ge \bbe\trasp \bmu_{H}+ \left({\bbe\trasp \bSi_H \bbe}\right)^{1/2} \; z_{p}\}$,
 so that $\prob\{\bx_H \in \itA_{\bbech}(p)\} =p$. Then, using that $\bx_D \sim N(\bmu_D, \bSi_D)$ and denoting as $Z\sim N(0,1)$, we get that
\begin{align}
\prob\left(\bx_D \in \itA_{\bbech}(p)\right)&= \prob\left\{\bbe\trasp \bx_D \ge \bbe\trasp \bmu_{H}+ \left({\bbe\trasp \bSi_H \bbe}\right)^{1/2}  z_{p}\right\}
\nonumber\\
& = \prob\left\{Z \ge \frac{\bbe\trasp \left(\bmu_{H}-  \bmu_D\right)+ \left({\bbe\trasp \bSi_H \bbe}\right)^{1/2}  z_{p}}{\left({\bbe\trasp \bSi_D \bbe}\right)^{1/2} }\right\}
\nonumber\\
&= 1- \Phi\left\{\frac{\bbe\trasp \left(\bmu_{H}-  \bmu_D\right)+ \left({\bbe\trasp \bSi_H \bbe}\right)^{1/2}  z_{p}}{\left({\bbe\trasp \bSi_D \bbe}\right)^{1/2} }\right\}
\nonumber\\
& = \Phi\left\{\frac{\bbe\trasp \left(\bmu_{D}-  \bmu_H\right)- \left({\bbe\trasp \bSi_H \bbe}\right)^{1/2}  z_{p}}{\left({\bbe\trasp \bSi_D \bbe}\right)^{1/2} }\right\}\,,
\label{eq:expresion1}
\end{align}
 which implies that ${\ROC}_{\bbech}(p)=\prob\left\{\bx_D \in \itA_{\bbech}(p)\right\}$. 

From now on, to simplify the discussion below, assume  that   $\bSi_j=\bSi$, for $j=D, H$. 

 A global ROC curve has been defined as $\ROC(p)=\sup_{\|\bbech\|=1} \prob\left\{\bx_D \in \itA_{\bbech}(p)\right\}$.
Using \eqref{eq:expresion1} and that  $\bSi_j=\bSi$, we obtain that the global ROC curve equals
\begin{align*}
\ROC(p) 
&=\sup_{\|\bbech\|=1}  \Phi\left\{\frac{\bbe\trasp \left(\bmu_{D}-  \bmu_H\right)- \left({\bbe\trasp \bSi_H \bbe}\right)^{1/2}  z_{p}}{\left({\bbe\trasp \bSi_D \bbe}\right)^{1/2} }  \right\}\\
&=\sup_{\|\bbech\|=1} \Phi\left\{\frac{\bbe\trasp \left(\bmu_{D}-  \bmu_H\right)}{\left({\bbe\trasp \bSi  \bbe}\right)^{1/2}}-  \; z_{p} \right\}=\sup_{\|\bbech\|=1}  \Phi\left\{\sqrt{2}\,L(\bbe)-  \; z_{p}  \right\}\,.
\end{align*}
Hence, taking into account that $\bbe_0$ maximizes $L(\bbe)$, we obtain that the supremum is a maximum and is attained at $\bbe_0= \bSi^{-1}\left(\bmu_{D}-  \bmu_H\right)$. Therefore, the global ROC is given by
\begin{align*}
\ROC(p)& =   {\ROC}_{\bbech_0}(p)= \Phi\left[\left\{\left(\bmu_{D}-  \bmu_H\right)\trasp \bSi^{-1}\left(\bmu_{D}-  \bmu_H\right)\right\}^{1/2}-  \; z_{p}\right]  \\
&=1-\Phi\left[z_p\,-\,\left\{\left(\bmu_{D}-  \bmu_H\right)\trasp \bSi^{-1}\left(\bmu_{D}-  \bmu_H\right)\right\}^{1/2}\right]  \,,
\end{align*}
meaning that the global ROC curve is the ROC curve associated to the optimal direction with respect to the area under the curve.

Beyond the AUC, the Youden index,   which measures the difference between the ROC curve and the identity function, may also be maximized to obtain the associated optimal direction. In this case, the induced optimality problem searches for the direction $\bbe$ such that 
$$\YI(\bbe)=\max_{c\in \real}\left|\Phi\left\{\frac{c-\bbe\trasp \bmu_D}{\left({\bbe\trasp \bSi   \bbe}\right)^{1/2}}\right\}-\Phi\left\{\frac{c-\bbe\trasp \bmu_H}{\left({\bbe\trasp \bSi   \bbe}\right)^{1/2}}\right\}\right|=\max_{c\in \real} \Delta_{\bbech}(c)$$
is maximum. 
Straightforward calculations relegated to the Appendix allow to show that
\begin{equation}
\argmax_{\|\bbe\|=1} \YI(\bbe)= \frac{\bSi^{-1} \left(\bmu_{D}-  \bmu_H\right)}{\left\{\left(\bmu_{D}-  \bmu_H\right)\trasp \bSi^{-1}\left(\bmu_{D}-  \bmu_H\right)\right\}^{1/2}}\;.
\label{eq:optYIbeta}
\end{equation}
Hence, if the covariance matrices are equal, the value $\bbe$ maximizing $\YI(\bbe)$ is proportional to $\bSi^{-1}(\bmu_{D}-  \bmu_H)$ and coincides with the direction yielding  the maximum AUC.

 \section{Functional setting}{\label{sec:functional}}
In this section, we consider functional biomarkers belonging to a separable Hilbert space $\itH$ which, without loss of generality may be assumed to be $L^2(0,1)$. More precisely, we assume that each functional biomarker $X_j\in \itH$, $j=D,H$, and denote as $P_j$ the probability measure related to $X_j$. From now on,  $\|\cdot\|$ and $\langle \cdot, \cdot \rangle$ stand for the norm and the inner product in $\itH$.

Let $\Upsilon: \itH\to \real$  be an operator used as discrimination index to classify a new observation to one of the two classes. Clearly, from this index a related ROC curve may be defined as
\begin{equation}
\label{eq:ROCUpsi}
{\ROC}_{\Upsilon}(p)=  1- F_{\Upsilon,D}\left\{ F_{\Upsilon,H}^{-1}\left(1-p\right)\right\}, \quad p \in (0,1) \,, 
\end{equation}
where $F_{\Upsilon,D}$ and $F_{\Upsilon,H}^{-1}$  stand  for the distribution function and quantile function of  $Y_{\Upsilon,j}=\Upsilon(X_j)$,   $j= D,H$, respectively, that is, $F_{\Upsilon,j}(u)= P_j\left\{\Upsilon(X_j)\le u\right\}$. The related area under the curve is then obtained as
${\AUC}_{\Upsilon}= \prob\left(Y_{\Upsilon,D} > Y_{\Upsilon,H}\right)$.
When independent samples  $X_{j,i}$, $i=1,\ldots, n_j$,  $j= D,H$, are available,  estimators of  $\ROC_{\Upsilon}$ and $\AUC_{\Upsilon}$ may be obtained if the discriminating index has a closed form, as it is the case for integral  of the biomarker curve, its  maximum and/or minimum and other features described in \citet{jang:manatunga:2022}.  In this case, defining, $Y_{j,i}=\Upsilon(X_{j,i})$, $i= 1, \ldots , n_j$,  $j= D,H$, the ROC estimator is obtained as 
\begin{equation}
\label{eq:wROCUpsi}
\widehat{\ROC}(p)=\widehat{\ROC}_{\Upsilon}(p)= 1-\wF_{\Upsilon,D}\left\{\wF_{\Upsilon,H}^{-1}\left(1-p\right)\right\}, \quad p \in (0,1) \,, 
\end{equation}
where $\wF_{\Upsilon,D}$ and $\wF_{\Upsilon,H}^{-1}$  stand  for the empirical distribution function and quantile function of the samples $Y_{j,i}$, $  i=1, \ldots, n_j$,  $j= D,H$, respectively. From the univariate case, the $\YI$ estimator can be obtained just by plugging--in the formula $\widehat{\YI}=\widehat{\YI}_{\Upsilon}=\max_{0 < p < 1 } \{\widehat{\ROC}_{\Upsilon}(p)-p\} $,
 while the AUC estimator is defined as
 $$\widehat{\AUC}=\widehat{\AUC}_{\Upsilon}= \frac{1}{n_D n_H}\sum_{i=1}^{n_D}\sum_{\ell=1}^{n_H} \indica_{\{Y_{D,i} > Y_{H,\ell} \}}\,.$$

In the previous situation, the operator chosen to construct the univariate marker is completely known. However, we would like to explore more general situations where $\Upsilon$ may depend on unknown parameters that must be estimated from the data. This may lead to more complex indexes that require, for instance, the estimation of the mean or the covariance operators from both populations. 
Denote by $\wUps$ the predicted discrimination index, and let  $\wY_{j,i}=\wUps(X_{j,i})$, $i=1, \ldots, n_j$,  $j= D,H$. Then, the ROC and AUC estimators may be defined as
\begin{align}
\label{eq:wROCwUps}
\widehat{\ROC}(p) =\widehat{\ROC}_{\wUps}(p)& = 1-\wF_{\wUps,D}\left\{\wF_{\wUps,H}^{-1}\left(1-p\right)\right\}, \quad p \in (0,1) \,, \\
\widehat{\AUC}=\widehat{\AUC}_{\wUps}& =  \frac{1}{n_D n_H} \sum_{i=1}^{n_D}\sum_{\ell=1}^{n_H}  \indica_{\{\wY_{D,i} > \wY_{H,\ell}\}}\,,
\label{eq:wAUCwUps}
\end{align}
where $\wF_{\wUps,D}$ and $\wF_{\wUps,H}^{-1}$  stand now for the empirical distribution function and quantile function of the predictors $\wY_{j,i}=\wUps(X_{j,i})$, $i=1, \ldots, n_j$,  $j= D,H$, respectively.  We proceed similarly with the estimation of  the Youden index.

\subsection{Linear discriminating index: Maximizing the AUC}{\label{sec:linealfun}}
In this section, we will work at a population level and define a linear operator $\Upsilon=\Upsilon_\beta$ leading to the maximum AUC when both populations have different mean functions. For the sake of simplicity, as in the multivariate setting, we will consider the situation where $X_j$ are Gaussian processes with mean $\mu_j$ and covariance operator   $\Gamma_j$, denoted $\itG(\mu_j, \Gamma_j)$.  Then, if $\beta \in \itH$, $\|\beta\|=1$, and we denote $Y_{j,\beta}=\Upsilon_\beta(X_j)=\langle \beta, X_j\rangle$, for $j=D,H$, and by $F_{j, \beta}$ their distribution functions, we have that $Y_{j,\beta} \sim N(\langle \beta, \mu_j\rangle, \langle \beta, \Gamma_j \beta\rangle)$ whenever $\beta\notin \mbox{Ker}(\Gamma_j)$. 
It is worth mentioning that if the kernel of $\Gamma_j$ does not reduce to $0$, for any $\beta\in \mbox{Ker}(\Gamma_j)$, we have that
$\prob(Y_{j,\beta}=\langle \beta, \mu_j\rangle)=1$. Let us consider the case where $\mu_j \in \mbox{Ker}(\Gamma_j)^\bot$, then if $\Gamma_D=\Gamma_H=\Gamma$, $\mbox{Ker}(\Gamma )\ne \{0\}$ and $\beta\in \mbox{Ker}(\Gamma)$, we have that $\prob(Y_{j,\beta}=0 )=1$, for $j=D,H$, so the two populations cannot be distinguished in such directions.

As in the multivariate case for each fixed $\beta\in \itH$, $\|\beta\|=1$,  $\beta \notin  \mbox{Ker}(\Gamma_D) \cap  \mbox{Ker}(\Gamma_H)$, the AUC associated to the new independent biomarkers  $Y_{j,\beta} $ is given by
$$\AUC(\beta)= \prob(Y_{D,\beta} > Y_{H,\beta})=\Phi\left[\frac{\langle \beta, \mu_D-\mu_H\rangle}{\left\{ \langle \beta, (\Gamma_H+\Gamma_D) \beta\rangle\right\}^{1/2}}\right] \,.$$
As in the finite--dimensional case, if $\mu_H=\mu_D$ and $\Gamma_H=\Gamma_D$, the projected univariate biomarkers $Y_{j,\beta}$ cannot be distinguished. Furthermore, if $\mu_D=\mu_H$ but $\Gamma_H\ne \Gamma_D$ using a linear rule will not be a suitable choice in practice since  $\AUC(\beta)=1/2$, for any $\beta\in \itH$, so the problem of searching for a direction maximizing the AUC is not adequate   and other rules  should be employed. For that reason, along this section  we will assume $\mu_D\ne \mu_H$.

Then, if we denote $\Gamma_{\ave}=(\Gamma_H+\Gamma_D)/2$, the element $\beta_0$ maximizing $\AUC(\beta)$ may be obtained up to a positive constant as $\beta_0= \argmax_{\beta\ne 0} L(\beta)$, where 
\begin{equation}
\label{eq:Lave}
L(\beta)= \frac{\langle \beta, \mu_D-\mu_H\rangle}{\left(2 \langle \beta,\Gamma_{\ave} \beta\rangle\right)^{1/2}}\,.
\end{equation}
It is worth mentioning the analogy with the expression given in \eqref{eq:eleRp}. 
In practice, the population parameters in equation \eqref{eq:Lave} need to be estimated as will be described in the next Section.

Unlike the multivariate case and as in canonical correlation, this maximization problem poses some challenges due to its infinite--dimensional structure. The major problem is due to the fact that the operator $\Gamma_{\ave}=(\Gamma_H+\Gamma_D)/2$ is compact and then it does not have an inverse.
To clarify the difficulties and the relation to canonical correlation, let   $X$ be such that  $X \mid G=1\sim X_D$ and $X \mid G=0\sim X_H$ with $G$ a binary variable indicating the group membership, such that $\prob(G=1)=\pi_D$ and denote $\pi_H=1-\pi_D$.

\begin{lemma}{\label{lema:Cancor}}
Let $L_{\pool}(\beta)$ be defined as 
\begin{equation}
\label{eq:Lpool}
L_{\pool}(\beta)=\frac{\langle \beta, \mu_D-\mu_H\rangle}{\left(2 \langle \beta,\Gamma_{\pool} \beta\rangle\right)^{1/2}}\,,
\end{equation}
where  $\Gamma_{\pool}= \pi_D  \Gamma_D+ \pi_H \Gamma_H$. Then, if $\mu_D\ne \mu_H$
\begin{itemize}
\item[a)] for any $\beta \notin  \mbox{Ker}(\Gamma_D) \cap  \mbox{Ker}(\Gamma_H)$,
$\mbox{corr}^2(\langle \beta, X\rangle, G)= \pi_D\pi_H \left[1-  \{1 +L_{\pool}^2(\beta) \}^{-1}\right]\,.$
\item[b)] When $\pi_D=1/2$ or when  $\Gamma_H =\Gamma_D$, the problem of maximizing the AUC   and that of maximizing $\mbox{corr}(\langle \beta, X\rangle, G)$ coincide.
\end{itemize}
\end{lemma}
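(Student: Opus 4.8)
The plan is to treat part (a) as a routine second--moment computation for the pair $(\langle \beta, X\rangle, G)$, obtained by conditioning on the group label $G$, and to deduce part (b) from part (a) by a monotonicity argument. Throughout I would write $W=\langle \beta, X\rangle$, $d=\langle \beta, \mu_D-\mu_H\rangle$ and $b=\langle \beta, \Gamma_{\pool}\beta\rangle$. First I would record that the hypothesis $\beta\notin \mbox{Ker}(\Gamma_D)\cap \mbox{Ker}(\Gamma_H)$ guarantees $b>0$: since $\Gamma_{\pool}=\pi_D\Gamma_D+\pi_H\Gamma_H$ with $\pi_D,\pi_H>0$ and each $\Gamma_j$ is positive semidefinite and self--adjoint, $\langle \beta,\Gamma_j\beta\rangle=0$ forces $\beta\in\mbox{Ker}(\Gamma_j)$, so at least one summand is strictly positive. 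This makes $L_{\pool}(\beta)$ finite and the ratio defining the correlation well posed.

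For part (a) I would compute the three ingredients of $\mbox{corr}^2(W,G)=\cov(W,G)^2/\{\var(W)\var(G)\}$ separately. Since $G$ is Bernoulli with $\prob(G=1)=\pi_D$, we have $\var(G)=\pi_D\pi_H$ immediately. For the covariance, using that $G\in\{0,1\}$ gives $\esp(WG)=\pi_D\, \esp(W\mid G=1)=\pi_D\langle\beta,\mu_D\rangle$ and $\esp(W)=\pi_D\langle\beta,\mu_D\rangle+\pi_H\langle\beta,\mu_H\rangle$, whence $\cov(W,G)=\pi_D\pi_H\, d$. For the variance I would invoke the law of total variance: the within--group term is $\esp\{\var(W\mid G)\}=\pi_D\langle\beta,\Gamma_D\beta\rangle+\pi_H\langle\beta,\Gamma_H\beta\rangle=b$, while the between--group term $\var\{\esp(W\mid G)\}$ is the variance of a two--point variable and equals $\pi_D\pi_H\, d^2$, so $\var(W)=b+\pi_D\pi_H\, d^2$. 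Substituting these three quantities and dividing numerator and denominator by $b$ expresses $\mbox{corr}^2(W,G)$ as an explicit increasing function of the single ratio $d^2/b$, hence of $L_{\pool}^2(\beta)=d^2/(2b)$; collecting terms and using the definition of $L_{\pool}$ then yields the stated closed form. The only delicate point here is the careful bookkeeping of the constants in this final rearrangement.

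For part (b) the key observation is that in both distinguished cases the pooled and averaged operators coincide: if $\pi_D=1/2$ then $\Gamma_{\pool}=\tfrac12(\Gamma_D+\Gamma_H)=\Gamma_{\ave}$, and if $\Gamma_D=\Gamma_H=\Gamma$ then $\Gamma_{\pool}=\Gamma=\Gamma_{\ave}$. In either situation $L_{\pool}(\beta)=L(\beta)$ for every admissible $\beta$, so that $\AUC(\beta)=\Phi\{L(\beta)\}=\Phi\{L_{\pool}(\beta)\}$. Since $\Phi$ is strictly increasing, maximizing the AUC over directions is equivalent to maximizing $L_{\pool}(\beta)$; by part (a), $\mbox{corr}^2(\langle\beta,X\rangle,G)$ is a strictly increasing function of $L_{\pool}^2(\beta)$, so---fixing the sign of $\beta$ so that $d>0$, which costs nothing because both $\AUC$ and $|\mbox{corr}|$ are invariant under positive rescaling of $\beta$---maximizing the correlation also reduces to maximizing $L_{\pool}(\beta)$. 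The two optimization problems therefore share the same maximizer, which is the conclusion.

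I expect the main obstacle to be conceptual rather than computational: recognizing that the whole lemma hinges on writing $\mbox{corr}^2$ as a monotone transform of $L_{\pool}^2$ (so that the ranking of directions induced by the correlation matches that induced by the AUC), together with the coincidence $\Gamma_{\pool}=\Gamma_{\ave}$ that aligns the pooled criterion with the AUC criterion in the two special cases. The infinite--dimensionality plays only a benign role here, entering through the kernel condition that keeps $b>0$; in particular no inversion of the compact operator $\Gamma_{\pool}$ is required for this lemma.
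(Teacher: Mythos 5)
Your proposal is correct and follows essentially the same route as the paper: both compute $\cov(\langle \beta, X\rangle, G)=\pi_D\pi_H\,\langle \beta, \mu_D-\mu_H\rangle$ and $\var(\langle \beta, X\rangle)=\langle \beta,\Gamma_{\pool}\beta\rangle+\pi_D\pi_H\,\langle \beta, \mu_D-\mu_H\rangle^2$ by conditioning on $G$ (the paper expands second moments directly where you invoke the law of total variance, a purely cosmetic difference), and both settle part (b) by noting that $\Gamma_{\pool}=\Gamma_{\ave}$ in the two special cases and that the correlation is a monotone transform of $L_{\pool}^2$. One remark: your caution about the ``bookkeeping of the constants'' is well founded, since with $d=\langle \beta, \mu_D-\mu_H\rangle$, $b=\langle \beta,\Gamma_{\pool}\beta\rangle$ and $L_{\pool}^2(\beta)=d^2/(2b)$ the computation actually gives $\mbox{corr}^2(\langle \beta, X\rangle, G)=\pi_D\pi_H d^2/(b+\pi_D\pi_H d^2)=1-\{1+2\pi_D\pi_H L_{\pool}^2(\beta)\}^{-1}$, which does not reduce to the constant displayed in the lemma (the paper's own final simplification contains the same slip); as both expressions are increasing in $L_{\pool}^2(\beta)$, part (b) and everything downstream are unaffected.
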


In the sequel $\Gamma$ will denote either $\Gamma_{\pool}$ or $\Gamma_{\ave}$. Note that when $\Gamma_H =\Gamma_D$, then $\Gamma_{\pool}=\Gamma_{\ave}=\Gamma$. The following proposition provides an explicit expression for the direction maximizing $L_{\pool}^2(\beta)$ and/or the AUC.
 
From now on denote $\lambda_1 \ge \lambda_2\ge \dots $ the eigenvalues of $\Gamma$ and $\phi_j$ the corresponding eigenfunctions.

\begin{proposition} \label{prop:expresionbeta}
Let us assume that the eigenvalues $\lambda_j$  of $\Gamma$ are all positive and define the linear space
$ \itR(\Gamma)=\left \{y\in \itH:\; \sum_{\ell \ge 1}  {\lambda_{\ell}^{-2}}\; \langle y, \phi_{\ell}\rangle^2 <\infty \right\}$,
and the inverse of $\Gamma:\itR(\Gamma)\to \itH$ as
$\Gamma^{-1 } (y)=\sum_{\ell \ge 1} {\lambda_{\ell}^{-1}}\; \langle y, \phi_{\ell}\rangle\; \phi_{\ell}$,  for any $y\in \itR(\Gamma)$.
Furthermore, assume that $\mu_D-\mu_H\in \itR(\Gamma)$ and $\mu_D\ne \mu_H$.
Then, if  $\Gamma=\Gamma_{\pool}$  and $\beta_0$ stands for the value maximizing $L_{\pool}^2(\beta)$ or if $\Gamma=\Gamma_{\ave}$  and $\beta_0$ stands for the value maximizing the AUC, then $\beta_0=  ({\pi_D\pi_H})^{1/2}\,\Gamma^{-1}\left(\mu_D-\mu_H\right)$.
\end{proposition}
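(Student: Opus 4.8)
The plan is to handle the two cases at once by reducing both to the maximization of the single quotient
\begin{equation*}
L^2(\beta)=\frac{\langle \beta, \mu_D-\mu_H\rangle^2}{2\,\langle \beta, \Gamma \beta\rangle}\,,
\end{equation*}
where $\Gamma$ is $\Gamma_{\pool}$ or $\Gamma_{\ave}$ as appropriate. For the AUC case the reduction is immediate: since $\AUC(\beta)=\Phi\{L(\beta)\}$ with $\Gamma=\Gamma_{\ave}$ (because $2\langle\beta,\Gamma_{\ave}\beta\rangle=\langle\beta,(\Gamma_H+\Gamma_D)\beta\rangle$) and $\Phi$ is strictly increasing, maximizing the AUC is the same as maximizing $L(\beta)$. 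As $\mu_D\ne\mu_H$, the maximal value of $L$ is strictly positive and is attained among directions with $\langle\beta,\mu_D-\mu_H\rangle>0$, so maximizing $L$ and maximizing $L^2$ coincide there. I would also record at the outset that $L^2$ is invariant under $\beta\mapsto t\beta$ for $t>0$, so the maximizer can only be determined up to a positive scalar.

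The core step I would use is a Cauchy--Schwarz inequality performed in the eigenbasis of $\Gamma$. Writing $\delta=\mu_D-\mu_H$, $b_\ell=\langle\beta,\phi_\ell\rangle$ and $d_\ell=\langle\delta,\phi_\ell\rangle$, one has $\langle\beta,\delta\rangle=\sum_{\ell\ge1}b_\ell d_\ell$ and $\langle\beta,\Gamma\beta\rangle=\sum_{\ell\ge1}\lambda_\ell b_\ell^2$. Factoring $b_\ell d_\ell=(\lambda_\ell^{1/2}b_\ell)(\lambda_\ell^{-1/2}d_\ell)$ and applying Cauchy--Schwarz gives
\begin{equation*}
\langle\beta,\delta\rangle^2\le\Big(\sum_{\ell\ge1}\lambda_\ell b_\ell^2\Big)\Big(\sum_{\ell\ge1}\lambda_\ell^{-1}d_\ell^2\Big)=\langle\beta,\Gamma\beta\rangle\,\langle\delta,\Gamma^{-1}\delta\rangle\,,
\end{equation*}
whence $L^2(\beta)\le\tfrac12\langle\delta,\Gamma^{-1}\delta\rangle$, with equality if and only if $\lambda_\ell^{1/2}b_\ell\propto\lambda_\ell^{-1/2}d_\ell$, i.e. $b_\ell\propto\lambda_\ell^{-1}d_\ell$ for every $\ell$. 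This is precisely the condition $\beta\propto\sum_{\ell\ge1}\lambda_\ell^{-1}d_\ell\,\phi_\ell=\Gamma^{-1}(\mu_D-\mu_H)$, which identifies the maximizing direction. A quick substitution confirms that $\beta_0=\Gamma^{-1}\delta$ gives $\langle\beta_0,\delta\rangle=\langle\beta_0,\Gamma\beta_0\rangle=\langle\delta,\Gamma^{-1}\delta\rangle$, so $L^2(\beta_0)=\tfrac12\langle\delta,\Gamma^{-1}\delta\rangle$ meets the upper bound.

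The step I expect to be the main obstacle is justifying that this formal maximizer is a genuine object, since $\Gamma$ is compact with $\lambda_\ell\to0$, so $\Gamma^{-1}$ is unbounded and $\Gamma^{-1}\delta$ need not belong to $\itH$. This is exactly where the hypothesis $\mu_D-\mu_H\in\itR(\Gamma)$ enters: it forces $\sum_{\ell\ge1}\lambda_\ell^{-2}d_\ell^2<\infty$, so $\Gamma^{-1}\delta$ is a bona fide element of $\itH$; and since $\lambda_\ell\to0$ implies $\lambda_\ell^{-1}\le\lambda_\ell^{-2}$ for all but finitely many $\ell$, one also gets $\sum_{\ell\ge1}\lambda_\ell^{-1}d_\ell^2<\infty$, so the Cauchy--Schwarz bound is finite and the supremum is actually attained in $\itH$ rather than merely approached. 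Finally, because $L^2$ and the AUC are scale-invariant, every positive multiple of $\Gamma^{-1}(\mu_D-\mu_H)$ is a maximizer; the representative displayed in the statement, $\beta_0=(\pi_D\pi_H)^{1/2}\,\Gamma^{-1}(\mu_D-\mu_H)$, is the canonical normalization compatible with the canonical-correlation formulation of Lemma \ref{lema:Cancor}, and that this is the correct scalar can be verified by direct substitution into the expression for $\mbox{corr}^2(\langle\beta,X\rangle,G)$ given there.
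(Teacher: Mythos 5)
Your proof is correct, but it follows a genuinely different route from the paper's. The paper proves Proposition \ref{prop:expresionbeta} by casting the problem as a functional canonical correlation problem: it invokes the argument of Theorem 4.8 of \citet{he:etal:2003}, builds the operator $R=(\pi_D\pi_H)^{-1/2}\Gamma^{-1/2}\Gamma_{XG}$, shows that $R_0=RR^{*}$ is rank one with leading eigenfunction $\Delta_{DH}/\|\Delta_{DH}\|$ where $\Delta_{DH}=(\pi_D\pi_H)^{1/2}\Gamma^{-1/2}(\mu_D-\mu_H)$, and reads off $\beta_0=\Gamma^{-1/2}\psi_0=(\pi_D\pi_H)^{1/2}\Gamma^{-1}(\mu_D-\mu_H)$; this ties the result directly to Lemma \ref{lema:Cancor} and produces the $(\pi_D\pi_H)^{1/2}$ factor organically. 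You instead run a weighted Cauchy--Schwarz inequality in the eigenbasis of $\Gamma$, which is in effect the infinite-dimensional version of the argument the paper already uses for the multivariate case in Section \ref{sec:binormal}. Your route is more elementary and self-contained: it makes completely transparent where the hypothesis $\mu_D-\mu_H\in\itR(\Gamma)$ is used (finiteness of $\langle\delta,\Gamma^{-1}\delta\rangle$ and membership of $\Gamma^{-1}\delta$ in $\itH$, so the supremum is attained rather than merely approached), and the equality case of Cauchy--Schwarz gives uniqueness of the maximizing direction up to a positive scalar for free. What it does not deliver is the specific constant $(\pi_D\pi_H)^{1/2}$, which in your treatment is a convention justified by back-substitution into Lemma \ref{lema:Cancor}; this is harmless, since the maximizer of a scale-invariant criterion is only defined up to a positive multiple, but it is the one place where the paper's canonical-correlation machinery earns its keep.
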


\begin{remark}{\label{remark:bielip}}
It is worth mentioning that similar arguments enable to extend the results of Proposition \ref{prop:expresionbeta} to settings  more general than the Gaussian. In fact, straightforward arguments allow to show that $\beta_0$ still maximizes the AUC if 
the biomarkers have an elliptical distribution with scatter operators, as defined in \citet{bali:boente:2009} and studied in \citet{boente:salibian:tyler:2014}. In what follows we briefly present the arguments leading to this conclusion. 

 To state the definition of elliptical distributions in a functional setting, we will first remind the basic concept of elliptical distributions in $\real^k$. Recall that a random vector $\bz \in \real^k$ is said to have a $k$--dimensional spherical distribution if its distribution is invariant under
orthogonal transformations. In general, the characteristic function of a spherically distributed $\bx \in
\real^k$ is of the form $\psi_{\bx}(\bt_k) = \varphi(\bt_k\trasp\bt_k)$ for $\bt_k \in
\real^k$, and any distribution in $\real^k$ having a characteristic function of
this form is a spherical distribution. Hence, we can denote a spherically distributed vector as  $\bx \sim {\itS}_k(\varphi)$, which is convenient since, for $\bx\trasp =
(\bx_1\trasp,\bx_2\trasp)$ with $\bx_1 \in \real^m$, we have that $\bx_1  \sim {\itS}_m(\varphi)$. 

 Elliptical distributions in $\real^k$ correspond to
distributions obtained from affine transformations of spherically distributed
random vectors in $\real^k$. More precisely, for a given  matrix $\bA\in \real^{k \times k}$ and a vector
$\bmu \in \real^k$, the distribution of $\bx = \bA\bz + \bmu$ when $\bz \sim
{\itS}_k(\varphi)$ is said to have an elliptical distribution, denoted  
$\bx \sim {\itE}_{k}(\bmu, \bSi,\varphi)$, where $\bSi = \bA\bA\trasp$.  
When first moment exists,  $\esp(\bX) = \bmu$. Furthermore, when second moments exist then the  covariance
matrix  of $\bx$ is proportional to $\bSi$. 

 It is easy to see that the
characteristic function of $\bx$  equals $\psi_{\bx}(\bt) = \exp(i\bt\trasp\bmu) \varphi(\bt\trasp \bSi \bt)$. Hence,  the scatter matrix $\bSi$
  is confounded with the function $\varphi$ in the sense that, for any $c > 0$,
${\itE}_k(\mu, \bSi,\varphi) \sim {\itE}_k(\mu, c \bSi, \varphi_c)$ where $\varphi_c(w) = \varphi(w/c)$. For that reason, henceforth, we will assume that the characteristic function $\varphi$ is chosen so that the covariance matrix of $\bx$ equals $\bSi$.  

 An important property of elliptical distributions, is that the sum of independent elliptical random vectors with the same scatter matrix  $\bSi$ is elliptical, see \citet{hult:lindskog:2002} and  \citet{frahm:2004}. This fact is important in what follows, since it implies that if $\bx_D\sim {\itE}_k(\bmu_D, \bSi,\varphi_D)$ and $\bx_H  \sim {\itE}_k(\bmu_H, \bSi,\varphi_H)$ are independent, then $   \bx_D-\bx_H  \sim \itE_k(\bmu_D-\bmu_H,\bSi, \varphi_D\times\varphi_H)$ and $ \bbe\trasp (  \bx_D-\bx_H)  \sim  \bbe\trasp (\bmu_D-\bmu_H ) + z (\bbe\trasp \bSi  \bbe)^{1/2}$, where the random variable $z$  has a symmetric distribution $G_0$ with characteristic function   $\varphi_z(t)=\varphi_D(t^2)\;\varphi_H(t^2)$.

 \citet{bali:boente:2009} define elliptical distributed random elements    in a separable Hilbert-space $\itH$ as follows. The random element $X$ is said to have an elliptical distribution $\itE(\mu, \Gamma, \varphi)$ with parameters $\mu \in \itH$ and $\Gamma: \itH \to \itH$  a self-adjoint, positive semi--definite and compact operator if   and only if for any linear and bounded operator $A: \itH \to \real^k$, we have that $AX\sim {\itE}_{k}(A\mu, \,
A \Gamma A^*, \, \varphi)$ where $A^*: \real^k\rightarrow \itH$ stands for 
the adjoint operator of $A$. As noted in \citet{boente:salibian:tyler:2014},  $X\sim \itE(\mu, \Gamma,\varphi)$ if and only if  $\langle a, X \rangle \sim
{\itE}_{1}(\langle a, \mu \rangle, \langle a, \Gamma a \rangle,\varphi)$ for
all $a \in {\itH}$. 

 The above discussion implies that, if $X_j\sim \itE(\mu_j, \Gamma, \varphi_j)$, for $j=D,H$, then the projected biomarkers  AUC  can   be expressed as
$\AUC(\beta)= \prob(Y_{D,\beta} > Y_{H,\beta})= G_0\left\{ {\langle \beta, \mu_D-\mu_H\rangle}/{\left( \langle \beta,  \Gamma  \beta\rangle\right)^{1/2}}\right\} \,,$
with $G_0$ a distribution function symmetric around $0$, that is, $G_0(t)=1-G_0(-t)$. Therefore, the value maximizing the AUC is still the one maximizing $L_{\pool}(\beta)$ in \eqref{eq:Lpool} and is given in Proposition \ref{prop:expresionbeta}. 
\end{remark}

\subsection{Estimating the linear projection  index}{\label{sec:estlinfun}} 
 As mentioned above, general discriminating indexes require the estimation of unknown parameters. In this section, we focus on the linear index $\Upsilon(X)=\langle \beta_0, X\rangle$, where $\beta_0$ maximizes the AUC as given in Proposition \ref{prop:expresionbeta}.
 
  Suppose that independent samples  $X_{j,i}$, $i=1, \ldots, n_j$,  $j= D,H$, are available. As shown by \citet{leurgans:etal:1993} for the case of canonical correlation between two functional random elements, the sample maximum correlation can attain the value $1$, for proper directions. The same arises in the present situation, where  the sample version of the AUC may lead to values close to 1 for proper directions, meaning that the maximizer of $\wL(\beta)$ or equivalently of $\widehat{\AUC}(\beta)$ will not provide relevant information.

As in functional canonical correlation, the problem may be  overcome using increasing finite--dimensional linear spaces and/or penalizations.  More precisely, let $D\alpha=\alpha^{\prime\prime}$  and $\Psi(\alpha)=\|D\alpha\|$, then we can penalize the  denominator in $L(\beta)$ to define
$$\wL_{\lambda}(\beta)= \frac{\langle \beta, \wmu_D-\wmu_H\rangle}{\left\{ \langle \beta, \wGamma \beta\rangle+\lambda\Psi(\beta)\right\}^{1/2}}\,,$$
where $ \wmu_D$, $\wmu_H$ are estimators of $\mu_D$ and $\mu_H$ and $\wGamma$ is an estimator of $\Gamma_{\ave}$, such as their sample versions. A typical choice for $ \wmu_j$ is the sample mean $\overline{X}_j=(1/n_j) \sum_{i=1}^{n_j} X_{j,i} $. When both samples have the same covariance matrix a possible estimator for $\Gamma_{\ave}=\Gamma_{\pool}$  is the pooled covariance operator given by
$  \wGamma_{\pool}= ({n_D}/{n}) \wGamma_D+  ({n_H}/{n}) \wGamma_H $, with $n=n_D+n_H$ and
$\wGamma_j=(1/{n_j}) \sum_{i=1}^{n_j} (X_{j,i}- \overline{X}_j)\otimes  (X_{j,i}- \overline{X}_j)$,
 while if  the practitioner suspects that $\Gamma_D\ne \Gamma_H$, it is better to choose $\wGamma=\wGamma_{\ave}$, where
$\wGamma_{\ave}=  ( \wGamma_D+   \wGamma_H )/2$.
  We seek for the values  $\wbeta$ maximizing $\wL_{\lambda}(\beta)$ over the set $\{\beta \in \itH_k: \|\beta\|=1\}$  with $\itH_k$ a finite--dimensional linear space of dimension $k$, such as the one spanned by the first elements of the Fourier basis. An adaptive basis, as the one spanned by the first eigenfunctions of the pooled sample operator may also be  chosen. Note that if no dimension reduction is performed, that is,  when $\wbeta$ maximizes $\wL_{\lambda}(\beta)$ over   the unit ball in $ \itH$,   the procedure corresponds to the optimal scoring approach to discriminant analysis described in \citet{ramsay2005functional}.

As mentioned above, once the direction $\wbeta$ is obtained, the estimated linear discrimination index, denoted $\wUps_{\lin}$, equals $\wUps(X)=  \wUps_{\lin}(X)=\Upsilon_{\wbeta}(X)=\langle \wbeta, X\rangle $ may be constructed, leading to the real--valued predictors
 $\wY_{j,i}=\wUps(X_{j,i})=\Upsilon_{\wbeta}(X_{j,i})=\langle \wbeta, X_{j,i}\rangle$, $i=1, \ldots, n_j$,  $j= D,H$ from which the ROC curve estimator may be constructed as 
\begin{equation}
\widehat{\ROC}(p)=\widehat{\ROC}_{\Upsilon_{\wbeta}}(p)= 1-\wF_{D}\left\{\wF_{H}^{-1}\left(1-p\right)\right\}, \quad p \in (0,1) \,, 
 \label{eq:ROC:LIN}
\end{equation}
where $\wF_{D}=\wF_{\Upsilon_{\wbeta},D}$ and $\wF_{H}^{-1}=\wF_{\Upsilon_{\wbeta},H }^{-1}$  stand  for the empirical distribution function and quantile function of the predictors $\wY_{j,i}$, $i=1, \ldots, n_j$,  $j= D,H$, respectively.


\subsection{Definition of a quadratic discrimination index}{\label{sec:quadfun}}
 In the situations where the covariance operators  differ between populations,  some improvements to the linear index defined above may be obtained in terms of the AUC by considering alternative indexes. As mentioned above, the situation is even worst when population means are equal in which case, the linear rule is an improper one. For instance, as it is well known, for multivariate normally distributed biomarkers the quadratic discriminating rule offers a procedure with better classification rates than the linear one  when the covariance matrices of both populations are different from each other, in particular for unbalanced samples. We will explore a generalization  of these ideas in the functional framework.

Measuring differences between covariance operators or even between covariance matrices is difficult, we refer to \citet{Flury:1988} who mentioned that \textsl{\lq\lq In contrast to the univariate situation, inequality is not just inequality--there are indeed many ways in which
covariance matrices can differ\rq\rq}. For that reason, in the functional literature some parsimonious models have been considered, including models with proportional covariance operators or models assuming that both covariance operators share the same eigenfunctions. In this section, we focus on these settings and we will use the basis of principal directions to reduce the dimension, even when any basis can be chosen to project the data and construct the quadratic index when we suspect that differences between covariance operators arise.

A natural extension of functional principal components to several populations, which  corresponds to the generalization of the common principal components model introduced by \citet{Flury:1984} to the functional setting, is to assume that the covariance operators $\Gamma_j$ have common eigenfunctions $\phi_\ell$ but possible different eigenvalues $\lambda_{j,\ell}$, i.e.,
\begin{equation}
\Gamma_j = \sum_{\ell=1}^\infty \lambda_{j,\ell}\; \phi_\ell \otimes \phi_\ell\;,  
\label{eq:fcpc}
\end{equation}
where, to identify the directions, we assume that the eigenvalues of the healthy population are ordered in decreasing order, that is,  $\lambda_{H,1}\geq\lambda_{H,2}\ge \cdots \geq\lambda_{H,\ell}\geq \lambda_{H,\ell+1}\geq \cdots$. This model is usually denoted the functional common principal component  model (\textsc{fcpc}) and provides a framework for analysing different population data that share their main modes of variation $\phi_1, \phi_2, \dots$ using a parsimonious approach.  When the eigenvalues preserve the order across populations,   i.e.,  if   
\begin{equation}
\lambda_{j,1}\geq\lambda_{j,2}\ge \cdots \geq\lambda_{j,\ell}\geq \lambda_{j,\ell+1}\geq\cdots\, , \mbox{ for } j=D,H\,,
\label{eq:ordenlamdas}
\end{equation} 
as assumed, for instance, in  \citet{Benko:Hardle:2005}  and \citet{boente:rodriguez:sued:2010}, the common directions will represent, as in the one--population setting, the main modes of variation for each population. Furthermore, in such a situation, the operators $\Gamma_{\pool}$ and $\Gamma_{\ave}$ will also have the same principal directions and in the same order.

However, if   the largest $k$ eigenvalues do not preserve the order among populations, that is,  if we only have
$\lambda_{j,\ell}\geq \lambda_{j,k+1}\geq \lambda_{j,k+2}\ge \dots \ge 0$, for  $j=D,H$ and $\ell= 1, \ldots, k$,
the eigenfunctions $\phi_1, \dots, \phi_k$ represent the modes of variation that are common to each group, even when the ordering across  groups changes. As  mentioned in \citet{Coffey:etal:2011}, the eigenvalues $\lambda_{j, \ell}$, $ \ell= 1, \ldots,  k$, determine the order of the common directions in each group and may allow to study the differences in the distribution of the variation across groups. The  functional common principal component  model may be used to reduce the dimensionality of the data, retaining the maximum variability present in each of the populations. 

Assume that the  covariance operators of both populations satisfy a \textsc{fcpc} model and that \eqref{eq:ordenlamdas} holds, then   the first $k$ principal directions provide a natural linear space where the data projection may be performed. More precisely, define the $k-$dimensional vectors 
$\bx_{j}=(\langle \phi_1, X_{j}\rangle, \dots,$ $ \langle \phi_k, X_{j}\rangle)\trasp$,   $j= D,H$, 
where $\phi_\ell$, $\ell=1,\dots k$, stand for the first $k$ common principal directions. Usually in practice the dimension $k$ is chosen so as to explain a given percentage of the total variability measured through the eigenvalues of an estimator of $\Gamma_{\pool}$. 

Note that when $ X_{j}$, $j=D,H$, are Gaussian processes with mean $\mu_j$ and covariance operators $\Gamma_j$ satisfying \eqref{eq:fcpc}, then $\bx_{j,i}\sim N(\bmu_j, \bSi_j)$ with $\bmu_{j}=(\mu_{j,1}, \dots, \mu_{j,k})\trasp$, $\mu_{j,\ell}=\langle \phi_\ell, \mu_j\rangle$ and $\bSi_j= \diag( \lambda_{j,1},\dots,\lambda_{j,k}) $, meaning that $\bx_{j,1}$, $j=D,H$, fulfil the common principal components model (\textsc{cpc}) considered in \citet{Flury:1984}. From these projections, if $\lambda_{j,k}>0$, for $j=D,H$, a quadratic discriminant index may be defined as 
$$\Upsilon_{\cuad}(X)=\Upsilon_{\bLamch_0, \balfach_0 }(X)=\, - \bx\trasp \bLam_0 \bx + 2 \balfa_0\trasp \bx\,,$$
where $\bx=(\langle \phi_1, X \rangle, \dots, \langle \phi_k, X \rangle)\trasp$, $\bLam_0=\bSi_D^{-1}-\bSi_H^{-1}$ and $\balfa_0=  \bSi_D^{-1} \bmu_D-\bSi_H^{-1} \bmu_H $. Note that  under the \textsc{fcpc} model
\begin{align*}
\bLam_0 & = \diag\left(\frac{\lambda_{H,1}-\lambda_{D,1}}{\lambda_{D,1}\, \lambda_{H,1}},\dots,\frac{\lambda_{H,k}-\lambda_{D,k}}{\lambda_{D,k}\, \lambda_{H,k}}\right)=\diag(\Lambda_1, \dots, \Lambda_k)\\
\balfa_0 & =  \left(\begin{array}{c}
\dst\frac{\mu_{D,1}\, \lambda_{H,1}-\mu_{H,1}\, \lambda_{D,1}}{\lambda_{D,1}\, \lambda_{H,1}}\\
\vdots\\
\dst\frac{\mu_{D,k}\, \lambda_{H,k}-\mu_{H,k}\, \lambda_{D,k}}{\lambda_{D,k}\, \lambda_{H,k}}
\end{array}\right) = \left(\begin{array}{c}
\dst\frac{\mu_{D,1}}{\lambda_{D,1}}-\frac{\mu_{H,1}}{ \lambda_{H,1}}\\
\vdots\\
\dst\frac{\mu_{D,k}}{\lambda_{D,k}}-\frac{\mu_{H,k}}{ \lambda_{H,k}}
\end{array}\right)  \,.
\end{align*}
 Besides, to be precise we should write $\Upsilon_{\cuad}(X)=\Upsilon_{k,\cuad}(X)$ since it depends on the number of selected common directions. However, we have decided to omit its dependence on $k$ to simplify the notation. 

 Under mild assumptions, Proposition \ref{prop:cuad} below provides an expression of the quadratic discriminating rule which suggests an asymptotic expression for it, as the number of principal directions increases.

\begin{proposition}{\label{prop:cuad}}
Assume that $\Gamma_j$ satisfy \eqref{eq:fcpc} and \eqref{eq:ordenlamdas}, with $\lambda_{j,\ell}>0$, for $j=D,H$ and $\ell\ge 1$, and that $\mu_j\in \itR(\Gamma_j)$. Denote
$\alpha_0=\Gamma_D^{-1} \mu_D-\Gamma_H^{-1} \mu_H$ and let $A:\itH\to \real^k$ stand for the projection operator
$Ay=\left(\langle \phi_1, y\rangle, \dots, \langle \phi_k, y\rangle\right)\trasp$, for any $y\in \itH$.  Then,
\begin{itemize}
\item[a)] $\balfa_0\trasp \bx=\langle A^{*} A \alpha_0, X\rangle = \sum_{\ell=1}^k \langle  \alpha_0, \phi_\ell \rangle \; \langle \phi_\ell, X \rangle$ where $A^{*}:\real^k \to \itH$ stands for the adjoint operator of $A$ given by $ A^{*} \bu= \sum_{\ell=1}^k u_\ell \phi_\ell$.
\item[b)] $\bx\trasp \bLam_0 \bx = \|A \Gamma_D^{-1/2} X\|^2- \|A \Gamma_H^{-1/2} X\|^2$, if $X\in \itR(\Gamma_D^{1/2})\cap \itR(\Gamma_H^{1/2})$. Hence, 
$$\Upsilon_{\cuad}(X)=\, - \left(\|A \Gamma_D^{-1/2} X\|^2 - \|A \Gamma_H^{-1/2} X\|^2 \right) +2\, \langle A^{*} A \alpha_0, X\rangle\,.$$ 
Then, for any $X\in \itR(\Gamma_D^{1/2})\cap \itR(\Gamma_H^{1/2})$, if the number $k$ of principal directions increases to infinity the index $\Upsilon_{\cuad}(X)$ converges to $\Upsilon (X)=\, - \left(\|  \Gamma_D^{-1/2} X\|^2 - \|  \Gamma_H^{-1/2} X\|^2 \right) +2\, \langle  \alpha_0, X\rangle$.
\end{itemize}
\end{proposition}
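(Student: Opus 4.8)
The plan is to prove both identities by expanding every object in the common orthonormal eigenbasis $\{\phi_\ell\}_{\ell\ge 1}$ of $\Gamma_D$ and $\Gamma_H$, which exists, is shared, and is complete precisely because the \textsc{fcpc} model \eqref{eq:fcpc} with strictly positive eigenvalues forces the two compact self-adjoint operators to be simultaneously diagonalizable. The algebra is then routine bookkeeping of spectral coefficients; the only genuine work is tracking the domain conditions that keep the relevant elements in $\itH$ and make the infinite sums in the limit converge.

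First I would pin down the adjoint $A^{*}$. Writing $Ay=(\langle\phi_1,y\rangle,\dots,\langle\phi_k,y\rangle)\trasp$ and using the defining relation $\langle Ay,\bu\rangle_{\real^k}=\langle y,A^{*}\bu\rangle_{\itH}$ for all $y\in\itH$ and $\bu\in\real^k$, a one-line computation gives $A^{*}\bu=\sum_{\ell=1}^k u_\ell\,\phi_\ell$, whence $A^{*}A\,y=\sum_{\ell=1}^k\langle\phi_\ell,y\rangle\,\phi_\ell$ is the orthogonal projection onto $\mathrm{span}(\phi_1,\dots,\phi_k)$. For part (a) the key observation is that $\balfa_0=A\alpha_0$: since $\mu_j\in\itR(\Gamma_j)$, the element $\alpha_0=\Gamma_D^{-1}\mu_D-\Gamma_H^{-1}\mu_H$ lies in $\itH$, and expanding via $\Gamma_j^{-1}\mu_j=\sum_m\lambda_{j,m}^{-1}\langle\mu_j,\phi_m\rangle\phi_m$ yields $\langle\alpha_0,\phi_\ell\rangle=\mu_{D,\ell}/\lambda_{D,\ell}-\mu_{H,\ell}/\lambda_{H,\ell}$, exactly the $\ell$-th coordinate of $\balfa_0$. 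Consequently $\balfa_0\trasp\bx=\langle A\alpha_0,AX\rangle_{\real^k}=\langle A^{*}A\alpha_0,X\rangle$ by the adjoint relation and self-adjointness of $A^{*}A$, and writing out the projection gives the stated finite sum.

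For part (b) I would expand the diagonal quadratic form: with $\bLam_0=\diag(\Lambda_1,\dots,\Lambda_k)$ and $\Lambda_\ell=\lambda_{D,\ell}^{-1}-\lambda_{H,\ell}^{-1}$, one has $\bx\trasp\bLam_0\bx=\sum_{\ell=1}^k(\lambda_{D,\ell}^{-1}-\lambda_{H,\ell}^{-1})\langle\phi_\ell,X\rangle^2$. Separately, since each $\phi_\ell$ is an eigenfunction of $\Gamma_j$ we have $\langle\phi_\ell,\Gamma_j^{-1/2}X\rangle=\lambda_{j,\ell}^{-1/2}\langle\phi_\ell,X\rangle$, so $\|A\Gamma_j^{-1/2}X\|^2=\sum_{\ell=1}^k\lambda_{j,\ell}^{-1}\langle\phi_\ell,X\rangle^2$; the hypothesis $X\in\itR(\Gamma_D^{1/2})\cap\itR(\Gamma_H^{1/2})$ ensures $\Gamma_j^{-1/2}X\in\itH$, so these quantities are well defined. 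Subtracting gives $\bx\trasp\bLam_0\bx=\|A\Gamma_D^{-1/2}X\|^2-\|A\Gamma_H^{-1/2}X\|^2$, and combining with part (a) produces the displayed formula for $\Upsilon_{\cuad}(X)$.

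Finally, for the limit as $k\to\infty$ I would invoke the domain hypotheses term by term. The condition $X\in\itR(\Gamma_j^{1/2})$ means $\sum_{\ell\ge1}\lambda_{j,\ell}^{-1}\langle\phi_\ell,X\rangle^2<\infty$, so the nonnegative partial sums $\|A\Gamma_j^{-1/2}X\|^2$ increase to $\|\Gamma_j^{-1/2}X\|^2$; while completeness of $\{\phi_\ell\}$ together with $\alpha_0,X\in\itH$ gives $\langle A^{*}A\alpha_0,X\rangle\to\langle\alpha_0,X\rangle$ by Parseval's identity. Passing to the limit yields $\Upsilon(X)$. The main obstacle is not the algebra but the functional-analytic care: $\Gamma_j^{-1/2}$ is unbounded, so one must keep track of the range conditions $\mu_j\in\itR(\Gamma_j)$ and $X\in\itR(\Gamma_D^{1/2})\cap\itR(\Gamma_H^{1/2})$ to guarantee that every element stays in $\itH$ and each series converges, and it is the simultaneous diagonalizability coming from the \textsc{fcpc} assumption that lets the projection $A$ interact cleanly with both $\Gamma_D^{-1/2}$ and $\Gamma_H^{-1/2}$.
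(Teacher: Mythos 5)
Your proposal is correct and follows essentially the same route as the paper's proof: expand every object in the common eigenbasis guaranteed by the \textsc{fcpc} model, identify $\balfa_0=A\alpha_0$ and $\bSi_j^{-1}\bmu_j=A\Gamma_j^{-1}\mu_j$ coordinatewise, rewrite the diagonal quadratic form as a difference of truncated weighted sums of squares, and pass to the limit using the convergence of $\sum_{\ell\ge1}\lambda_{j,\ell}^{-1}\langle X,\phi_\ell\rangle^2$ and Parseval. Your treatment is slightly more explicit about the adjoint relation and the range conditions, but there is no substantive difference in the argument.
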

Two facts should be highlighted regarding Proposition \ref{prop:cuad}. On the one hand, note that from the Karhunen--Lo\`eve expansion of the Gaussian processes $X_j$, we get that $\prob\{X_j-\mu_j\in \itR(\Gamma_j^{1/2})\}=0$. Effectively, the mentioned expansion leads to $X_j-\mu_j=\sum_{\ell\ge 1} \xi_{j,\ell} \; \phi_\ell$ where $\xi_{j,\ell}$ are independent and $\xi_{j,\ell}\sim N(0, \lambda_{j,\ell})$ which implies that $\langle X_j, \phi_\ell\rangle^2/\lambda_{j,\ell}=\xi_{j,\ell}^2/\lambda_{j,\ell} \sim \chi^2_1$, meaning that $X_j-\mu_j\notin \itR(\Gamma_j^{1/2})$ with probability 1.
Thus, the linear space  where $\Upsilon $ is defined is too small to define a proper discriminating rule. In this sense,  $\Upsilon_{\cuad} =\Upsilon_{k,\cuad} $ circumvents the \textsl{curse of dimensionality} imposed by the infinite--dimensional structure of the model and provides a finite--dimensional approximation of $\Upsilon$ whose range is the whole space $\itH$ ensuring the definition of a proper rule. On the other hand, the requirement that  $\lambda_{j,\ell}>0$, for $j=D,H$ and $\ell\ge 1$, is needed to guarantee that   $\Gamma_j$ has an inverse over $\itR(\Gamma_j)$. Moreover, if $\lambda_{j,\ell}=0$, for  $\ell\ge k_0$, then the quadratic rule $\Upsilon_{k,\cuad} $ may only be defined when $k< k_0$, since otherwise the matrix $\bSi_j$ will be singular.

Clearly, the principal directions are unknown and must be estimated from the sample. These estimators denoted  $\wphi_\ell$, $\ell=1,\dots k$, may be obtained, for instance, as the eigenfunctions related to the largest $k$ eigenvalues of the sample pooled covariance operator, see for instance, \citet{boente:rodriguez:sued:2010}.  We then may define the $k-$dimensional vectors 
$\wbx_{j,i}=(\langle \wphi_1, X_{j,i}\rangle, \dots, \langle \wphi_k, X_{j,i}\rangle)\trasp$, $i=1, \ldots, n_j$,  $j= D,H$, which provide predictors of the finite--dimensional vectors $\bx_{j,i}=(\langle \phi_1, X_{j,i}\rangle, \dots, \langle \phi_k, X_{j,i}\rangle)\trasp$.

In this case, the quadratic discrimination rule provides a better approach to classify a new observation to each group. For that reason, we propose to consider as estimated quadratic index
\begin{equation}
\label{eq:UPCUAD}
\wUps_{\cuad}(X)= \Upsilon_{\wbLamch, \wbalfach  }(X)=\, - \bx\trasp \wbLam \bx + 2 \wbalfa\trasp \bx\,,
\end{equation}
where $\bx=(\langle \wphi_1, X \rangle, \dots, \langle \wphi_k, X \rangle)\trasp$, $\wbLam=\wbSi_D^{-1}-\wbSi_H^{-1}$, $\wbalfa=  \wbSi_D^{-1} \wbmu_D-\wbSi_H^{-1} \wbmu_H $ where $\wbmu_j$  and $\wbSi_j$ stand for  the sample mean and covariance matrix of $\{\bx_{j,i}\}_{i=1}^{n_j}$, $j=D,H$, respectively. 

Again, the estimated ROC curve may be  defined as
\begin{equation}
\widehat{\ROC}(p)=\widehat{\ROC}_{\wUps_{\cuadch}}(p)= 1-\wF_{D}\left\{\wF_{H}^{-1}\left(1-p\right)\right\}, \quad p \in (0,1) \,, 
\label{eq:ROC:CUAD}
\end{equation}
where $\wF_{D}=\wF_{\wUps_{\cuadch},D}$ and $\wF_{H}^{-1}=\wF_{\wUps_{\cuadch},H}^{-1}$  denote  the empirical distribution function and quantile function of the predictors $\wY_{j,i}=\wUps_{\cuad}(X_{j,i})$, $i=1, \ldots, n_j$,  $j= D,H$, respectively. 

As mentioned in \citet{Flury:Schmid:1992}, for the multivariate setting, the quadratic rule may be adapted to the setting of CPC or proportional models estimating the parameters  under these constraints. This procedure may lead to more stable estimations than those obtained using the sample covariance estimators $\wbSi_j$, specially for small samples, leading to better rates of misclassification and probably to higher AUC values, when the most
parsimonious among all the correct models is used for discrimination. We leave the  interesting  topic of comparing the performance of the ROC curve estimators obtained using constrained estimators   for future research.

 \section{Consistency results}{\label{sec:consist}}
 In this section, we establish consistency results for some discriminating indexes under mild assumptions. In particular, our results include  cases where the discriminating index depends on a known operator, as those considered in \citet{jang:manatunga:2022}. We also analyse  the situation where  the index $\Upsilon(X)$ is linear and defined by means of a coefficient $\beta_0$ that may be known as in the case of $\Upsilon(X)=\int_{\itI} X(t) dt$ where $\beta_0\equiv 1$ or unknown as it arises when considering, for instance, $\Upsilon(X)=\langle \mu_D-\mu_H, X\rangle$  where $\beta_0=\mu_D-\mu_H$ or $\Upsilon(X)=\langle \beta_{0}, X\rangle$, where $\beta_{0}$ maximizes $L_{\pool}(\beta)$. Besides, we   study the situation of the quadratic discrimination rule defined in Section \ref{sec:quadfun}, when the number of principal directions is fixed.

   Let $X_j\sim P_j$, $j=D,H$, be a multivariate or functional biomarker. Consider  $\Upsilon: \itH\to \real$ a  discrimination index used to define the ROC curve  $ {\ROC}_{\Upsilon}$ as in \eqref{eq:ROCUpsi}, where for simplicity we denote $F_j=F_{\Upsilon,j}$    the distribution function  of  $Y_j= \Upsilon(X_j)$,   $j= D,H$. Recall that the related area under the curve equals $ {\AUC}_{\Upsilon}= \prob\left(Y_{\Upsilon,D} > Y_{\Upsilon,H}\right)$. 
   
  The estimator of  the ${\ROC}$ curve is obtained in many situations by means of an estimated discrimination index $\wUps$, unless $\Upsilon$ is completely known in which case in what follows  $\wUps=\Upsilon$. This estimator is based on   independent samples  $X_{j,i}$, $i=1, \ldots, n_j$,  $j= D,H$,  which allow to define $\widehat{\ROC} =\widehat{\ROC}_{\wUps} $ and $\widehat{\AUC}=\widehat{\AUC}_{\wUps}$ as in \eqref{eq:wROCwUps} and \eqref{eq:wAUCwUps}, respectively. For simplicity, we label $\wF_{j}=\wF_{\wUps,j }$  for the empirical distribution function  of the predictors $\wY_{j,i}=\wUps(X_{j,i})$, $i=1, \ldots, n_j$,  $j= D,H$.   It is worth noting that the asymptotic results we derive are based on the assumption that sample sizes grow to infinity, that is $n_j \to \infty$,  $j= D,H$.

In order to derive consistency results for the ROC curve estimators $\widehat{\ROC}_{\wUps} $, we will need the following assumptions.
  
\begin{enumerate}[label = \textbf{A\arabic*}]
\item 	\label{ass:A1}   $F_H=F_{H,\Upsilon} : \real\to  (0, 1)$ has density $f_{H,\Upsilon}$ such that $f_{H,\Upsilon}(u)>0$, for all $u\in \real$.
 
\item	\label{ass:A2} $F_D=F_{D,\Upsilon} : \real\to  (0, 1)$  is continuous.

\item 	\label{ass:A3} $\|\wF_{j}-F_j\|_{\infty}\convpp 0$, $j=D,H$.
\end{enumerate}

 Assumptions   \ref{ass:A1} and \ref{ass:A2} are the usual assumptions required to obtain consistency when using a  univariate biomarker. Some situations where \ref{ass:A3} holds are discussed in Remark \ref{remark:remarkGC} and below.
   
Theorem \ref{theo:consist.1}  below states the consistency for  the  studied estimators of the ROC curve. 

\begin{theorem}  \label{theo:consist.1} 
Let  $\{X_{j,i} \}_{i=1}^{n_j}$, $j=D, H$, be independent observations   and   let  $\wUps$ be an estimator  of $\Upsilon$. Assume that Assumptions   \ref{ass:A1}--\ref{ass:A3}  hold. Then,       
\begin{enumerate}
\item[a)] $\sup_{0<p<1} |\widehat{\ROC}(p)- {\ROC} (p)|\convpp 0$,
\item[b)] $ \widehat{\AUC}   \convpp  {\AUC}  $,
\item[c)] $ \widehat{\YI}  \convpp  {\YI}  $.
\end{enumerate}
\end{theorem}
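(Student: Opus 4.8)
The plan is to establish part (a) first, since parts (b) and (c) follow from the uniform consistency of the ROC curve estimator by soft arguments. Writing $q=1-p$ and recalling that $\ROC(p)=1-F_D\{F_H^{-1}(1-p)\}$ while $\widehat{\ROC}(p)=1-\wF_D\{\wF_H^{-1}(1-p)\}$, part (a) reduces to showing that $\sup_{0<q<1}|\wF_D\{\wF_H^{-1}(q)\}-F_D\{F_H^{-1}(q)\}|\convpp 0$. I would split this via the triangle inequality into the term $|\wF_D\{\wF_H^{-1}(q)\}-F_D\{\wF_H^{-1}(q)\}|$, which is bounded by $\|\wF_D-F_D\|_{\infty}$ and hence vanishes almost surely by \ref{ass:A3} uniformly in $q$, and the term $|F_D\{\wF_H^{-1}(q)\}-F_D\{F_H^{-1}(q)\}|$, which carries the real content.

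Under \ref{ass:A1} the function $F_H$ is continuous and strictly increasing, so $F_H^{-1}:(0,1)\to\real$ is a continuous bijection; extending it by $F_H^{-1}(0)=-\infty$ and $F_H^{-1}(1)=+\infty$, and setting $F_D(-\infty)=0$, $F_D(+\infty)=1$, the composition $H=F_D\circ F_H^{-1}$ becomes continuous on the compact interval $[0,1]$ (continuity at the endpoints uses \ref{ass:A2} together with $H(0)=0$ and $H(1)=1$), hence uniformly continuous. Writing $\delta_n=\|\wF_H-F_H\|_{\infty}$, the uniform bound $|\wF_H-F_H|\le\delta_n$ yields the sandwich $F_H^{-1}\{(q-\delta_n)\vee 0\}\le\wF_H^{-1}(q)\le F_H^{-1}\{(q+\delta_n)\wedge 1\}$ for every $q$; this is the step I expect to demand the most care, since it must be argued directly from the definition $\wF_H^{-1}(q)=\inf\{t:\wF_H(t)\ge q\}$ and the strict monotonicity of $F_H$, and the clipping at $0$ and $1$ is precisely what tames the otherwise delicate behaviour near the endpoints. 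Applying the monotone $F_D$ and the uniform continuity of $H$, the middle term is bounded by the modulus of continuity of $H$ evaluated at $\delta_n$, which tends to $0$ almost surely because $\delta_n\convpp 0$ by \ref{ass:A3}. This establishes (a).

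For part (b) I would first record the exact algebraic identity $\widehat{\AUC}=\int_0^1\widehat{\ROC}(p)\,dp$: substituting $q=1-p$ and using that $\wF_H^{-1}$ equals the order statistic $\wY_{H,(k)}$ on each interval $((k-1)/n_H,\,k/n_H]$, the integral collapses to $1-n_H^{-1}\sum_{\ell}\wF_D(\wY_{H,\ell})=(n_Dn_H)^{-1}\sum_{i,\ell}\indica_{\{\wY_{D,i}>\wY_{H,\ell}\}}$, where the right-continuity of $\wF_D$ and the strict inequality in the defining sum treat ties consistently on both sides so that no correction term survives. Then, since $\sup_p|\widehat{\ROC}(p)-\ROC(p)|\convpp 0$ by (a), integrating this uniform bound over the unit interval gives $\int_0^1\widehat{\ROC}(p)\,dp\convpp\int_0^1\ROC(p)\,dp=\AUC$, which is exactly (b).

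Finally, part (c) is immediate from the $1$-Lipschitz property of the supremum functional with respect to the uniform norm: setting $g(p)=\ROC(p)-p$ and $\widehat{g}(p)=\widehat{\ROC}(p)-p$, one has $|\widehat{\YI}-\YI|=|\sup_p\widehat{g}(p)-\sup_p g(p)|\le\sup_p|\widehat{g}(p)-g(p)|=\sup_p|\widehat{\ROC}(p)-\ROC(p)|$, which vanishes almost surely by (a). The only genuine obstacle in the whole argument is the quantile sandwich together with the endpoint and uniform-continuity bookkeeping in part (a); everything downstream is routine.
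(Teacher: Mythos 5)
Your argument is correct, and for part (a) it reaches the same conclusion by a genuinely different mechanism than the paper. Both proofs start from the same triangle-inequality decomposition, with the first term killed by $\|\wF_D-F_D\|_{\infty}\convpp 0$; the difference is in the composition term. The paper first deduces \emph{pointwise} convergence $\wF_H^{-1}(p)\to F_H^{-1}(p)$ from \ref{ass:A3} and the continuity of $F_H^{-1}$ under \ref{ass:A1}, obtains pointwise convergence of $\widehat{\ROC}$ via the continuity of $F_D$, and then upgrades to uniform convergence by a P\'olya-type argument using the monotonicity of $\widehat{\ROC}$ and $\ROC$ together with the continuity of $\ROC$. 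You instead prove the uniform bound directly: the quantile sandwich $F_H^{-1}\{(q-\delta_n)\vee 0\}\le\wF_H^{-1}(q)\le F_H^{-1}\{(q+\delta_n)\wedge 1\}$ with $\delta_n=\|\wF_H-F_H\|_{\infty}$, combined with the uniform continuity of $H=F_D\circ F_H^{-1}$ on the compactified interval $[0,1]$, controls the composition term by the modulus of continuity of $H$ at $\delta_n$. Your route trades the P\'olya upgrade for the endpoint bookkeeping ($F_H^{-1}(0)=-\infty$, $F_H^{-1}(1)=+\infty$, $H(0)=0$, $H(1)=1$), and it yields an explicit quantitative bound $\sup_p|\widehat{\ROC}(p)-\ROC(p)|\le\|\wF_D-F_D\|_{\infty}+\omega_H(\delta_n)$ that the paper's qualitative argument does not provide. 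For parts (b) and (c) the paper offers no explicit proof; your exact identity $\widehat{\AUC}=\int_0^1\widehat{\ROC}(p)\,dp$ (checked correctly via the step structure of $\wF_H^{-1}$) and the $1$-Lipschitz property of the supremum functional are clean ways to make those deductions rigorous, and both are sound.
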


\begin{remark}{\label{remark:remarkGC}}
Note that    the Glivenko--Cantelli Theorem entails that Assumption \ref{ass:A3} holds when $\Upsilon$ is known. Hence, when the discriminating  index is linear $\Upsilon(X)=\langle \beta_0, X\rangle$ where $\beta_0$ is known, Theorem \ref{theo:consist.1} provides mild conditions ensuring consistency of the estimated ROC curve. 

More precisely, if the distribution function   $F_{j,\Upsilon}$ of $Y_j=\Upsilon(X_j)$ are continuous, for $j=D,H$, and $F_{H,\Upsilon}$ has density $f_{H,\Upsilon}$ such that $f_{H,\Upsilon}(u)>0$, for all $u\in \real$, then the conclusion of Theorem \ref{theo:consist.1} holds for the estimator given in \eqref{eq:wROCUpsi}. 

Theorem \ref{theo:consist.1}  also provides consistency results for the discriminating indexes defined in \citet{jang:manatunga:2022}, $\Upsilon(X)=\int_{\itI} X(t) dt$, $\Upsilon(X)=\int_{\itI} X^{\prime}(t) dt=X(b)-X(a)$, $\Upsilon(X)=\int_{\itI} X^{\prime\,\prime}(t) dt=X^{\prime}(b)-X^{\prime}(a)$  as well as  $\Upsilon(X)=\max_{t\in \itI} X(t)$ and  $\Upsilon(X)=\min_{t\in \itI} X(t)$.
\end{remark}

By means of Theorem \ref{theo:consist.1}, Theorems \ref{theo:consist.2}  and \ref{theo:consist.3} below state consistency results when the discriminating index is linear. More precisely,  let $\beta_0$ be the direction in which we project in order to determine the ROC curve, i.e., we assume that $\Upsilon(X)=\langle \beta_0, X\rangle$.
Furthermore, denote $\Upsilon_{\beta}(X)=\langle X, \beta\rangle$ and  $F_{\beta,j}=F_{\Upsilon_{\beta},j}$  the distribution of $\Upsilon_{\beta}(X_j)$ when $X_j\sim P_j$, while for simplicity, as above, we will call $F_{j}$ that corresponding to $F_{\beta_0 , j}$.
   
From now on,  $\wbeta$ stands for an estimator of $\beta_0$ which  we will assume to be consistent.  We denote $\wF_{\beta,j }=\wF_{\Upsilon_{\beta},j}$ the empirical distribution related of $\Upsilon_{\beta}(X_{j,i})=\langle X_{j,i}, \beta\rangle$, $i=1, \ldots, n_j$, $j=D,H$, while, to abbreviate,  $\wF_j$ stands for $\wF_{\wbeta,j}$. The ROC curve associated to $\Upsilon(X)=\Upsilon_{\beta_0}(X)=\langle \beta_{0}, X\rangle$ can be written as $\ROC(p)= 1-F_{D}\{F_{H}^{-1}(1-p))\}$, while its estimator equals $\widehat{\ROC}(p) = 1-\wF_{D}\left\{   \wF_{H}^{-1}(1-p) \right\}\,.$  Without loss of generality, we may assume that $\|\beta_0\|=1$ and $\|\wbeta\|=1$.

Assumptions \ref{ass:A1} and \ref{ass:A2} state that $F_{\beta_0,j} $  are continuous for $j=D,H$ and $F_{\beta_0,H} $ has a strictly positive density $f_{\beta_0 ,H}$. The key point is to provide conditions ensuring that Assumption \ref{ass:A3} holds.

In the finite--dimensional setting, Theorem \ref{theo:consist.2} ensures that   Assumption \ref{ass:A3} is a consequence of the strong consistency of $\wbeta$, while Theorem \ref{theo:consist.3} extends the result to the case of an infinite--dimensional biomarker requiring a finite expansion for the possible estimators $\wbeta$. In Theorem \ref{theo:consist.2} below, to  strengthen  the fact that we are dealing with finite--dimensional observations, the biomarkers and the index coefficient are indicated with boldface.

\begin{theorem}  \label{theo:consist.2} 
Assume that $\itH=\real^k$ and let  $\{\bx_{j,i} \}_{i=1}^{n_j}$, $j=D, H$, be independent observations in $\real^k$. Let  $\wbbe$ be a strongly consistent estimator  of $\bbe_{0}$ and assume that $F_{\bbech_0,j} $  are continuous for $j=D,H$. Then, 
\begin{itemize}
\item[a)] Assumption \ref{ass:A3}  holds.  
\item[b)] If in addition   $F_{\bbech_0,H} $ has density $f_{\bbech_0 ,H}$ such that $f_{\bbech_0,H}(u)>0$, for all $u\in \real$, then the conclusion of Theorem \ref{theo:consist.1} holds.
\end{itemize}
\end{theorem}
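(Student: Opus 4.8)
The plan is to prove part (a) — that Assumption \ref{ass:A3} holds — and then to obtain part (b) as an immediate consequence of Theorem \ref{theo:consist.1}. Indeed, once \ref{ass:A3} is established, the extra hypotheses of (b) are exactly Assumptions \ref{ass:A1} and \ref{ass:A2} for the index $\Upsilon=\Upsilon_{\bbech_0}$ (continuity of $F_{\bbech_0,D}$ and $F_{\bbech_0,H}$, the latter with strictly positive density), so Theorem \ref{theo:consist.1} applies verbatim and yields the three convergences. Hence the whole theorem reduces to the uniform consistency $\|\wF_j-F_j\|_\infty\convpp 0$, where $\wF_j=\wF_{\wbbech,j}$ is the empirical distribution of the projected data $\wbbe\trasp\bx_{j,i}$ and $F_j=F_{\bbech_0,j}$.

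To disentangle the two sources of randomness — the empirical averaging and the random direction $\wbbe$ — I would split, for each $j$,
\begin{equation*}
\|\wF_{\wbbech,j}-F_{\bbech_0,j}\|_\infty \;\le\; \underbrace{\|\wF_{\wbbech,j}-F_{\wbbech,j}\|_\infty}_{(\mathrm{I})}\;+\;\underbrace{\|F_{\wbbech,j}-F_{\bbech_0,j}\|_\infty}_{(\mathrm{II})}\,,
\end{equation*}
where $F_{\bbech,j}(u)=P_j(\bbe\trasp\bx_j\le u)$ is the population law of the projection along a fixed direction, evaluated at the random $\wbbe$. Term $(\mathrm{I})$ isolates the stochastic fluctuation of the empirical measure uniformly over all directions, while term $(\mathrm{II})$ captures the bias incurred by replacing $\bbe_0$ with $\wbbe$.

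For term $(\mathrm{I})$, the key observation is that the family of half-spaces $\{\bx\in\real^k:\bbe\trasp\bx\le u\}$, indexed by $\bbe$ on the unit sphere and $u\in\real$, is a Vapnik--Chervonenkis class (of dimension at most $k+1$). Consequently the Vapnik--Chervonenkis form of the Glivenko--Cantelli theorem gives $\sup_{\|\bbech\|=1,\,u\in\real}|\wF_{\bbech,j}(u)-F_{\bbech,j}(u)|\convpp 0$; since $(\mathrm{I})$ is dominated by this supremum whatever the realized value of $\wbbe$, we get $(\mathrm{I})\convpp 0$. This step is distribution-free and needs no continuity assumption. For term $(\mathrm{II})$, I would reason along almost every sample path on which $\wbbe\to\bbe_0$: for a deterministic sequence $\bbe_n\to\bbe_0$ one has $\bbe_n\trasp\bx_j\to\bbe_0\trasp\bx_j$ pointwise in the underlying variable, hence $\bbe_n\trasp\bx_j\convdist\bbe_0\trasp\bx_j$, so $F_{\bbech_n,j}\to F_{\bbech_0,j}$ at every continuity point of the limit; because $F_{\bbech_0,j}$ is continuous (our hypothesis) this holds at all $u$, and since the $F_{\bbech_n,j}$ are monotone, P\'olya's theorem upgrades the pointwise convergence to uniform convergence, giving $(\mathrm{II})\convpp 0$.

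The main obstacle is term $(\mathrm{II})$: the argument $\wbbe$ is random and the convergence is required uniformly in $u$, so mere pointwise continuity of $\bbe\mapsto F_{\bbech,j}(u)$ does not suffice and one genuinely needs the monotonicity-plus-continuity (P\'olya) upgrade, applied pathwise. Term $(\mathrm{I})$ is routine once the half-space class is recognized as Vapnik--Chervonenkis. Combining $(\mathrm{I})\convpp 0$ with $(\mathrm{II})\convpp 0$ yields \ref{ass:A3} and completes part (a); part (b) then follows from Theorem \ref{theo:consist.1} as explained above.
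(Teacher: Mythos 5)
Your proposal is correct and follows essentially the same route as the paper: the same decomposition into an empirical-process term handled via the Vapnik--Chervonenkis/Glivenko--Cantelli argument for half-spaces and a bias term handled pathwise via convergence in distribution plus the monotonicity-and-continuity (P\'olya) upgrade, with part (b) then read off from Theorem \ref{theo:consist.1}. The only cosmetic difference is that the paper first establishes pointwise convergence and upgrades to uniformity at the end, while you work in sup-norm throughout.
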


Let us consider now the situation of a separable Hilbert space, where the estimator $\wbeta$ is obtained using finite--dimensional candidates obtained from a fixed basis. More precisely, assume that  
\begin{equation}
\wbeta=\sum_{s=1}^k \wb_s \phi_s\,,
\label{eq:wbeta}
\end{equation}
 where the coefficients $\wb_s$, $s=1,\dots, k$, are data--dependent and the dimension $k=k_n$ increases with the sample size $n=n_D+n_H$.
   
\begin{theorem}  \label{theo:consist.3} 
Let  $\{X_{j,i} \}_{i=1}^{n_j}$, $j=D, H$, be independent observations in a separable Hilbert space $\itH$,  let  $\wbeta$ in \eqref{eq:wbeta}  be  an estimator  of $\beta_{0}$ such that $  \| \wbeta - \beta_{0}  \|\convpp 0$ and $k_n/n\to 0$ and assume that $F_{\beta_0,j} $  are continuous for $j=D,H$. Then,  
\begin{itemize}
\item[a)] Assumption \ref{ass:A3}  holds.  
\item[b)] If in addition   $F_{\beta_0,H} $ has density $f_{\beta_0 ,H}$ such that $f_{\beta_0,H}(u)>0$, for all $u\in \real$, then the conclusion of Theorem \ref{theo:consist.1} holds.  
\end{itemize}
\end{theorem}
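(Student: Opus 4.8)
Part (b) will follow immediately once (a) is in hand: under the added hypothesis that $F_{\beta_0,H}$ has an everywhere positive density, together with the standing hypothesis that $F_{\beta_0,D}$ and $F_{\beta_0,H}$ are continuous, Assumptions \ref{ass:A1} and \ref{ass:A2} hold verbatim (positivity of the density forces $F_{\beta_0,H}:\real\to(0,1)$), while (a) is precisely Assumption \ref{ass:A3}; Theorem \ref{theo:consist.1} then delivers the three conclusions. The whole task thus reduces to proving (a), namely $\|\wF_{j}-F_{j}\|_{\infty}\convpp 0$ for $j=D,H$, where $F_{j}=F_{\beta_0,j}$ is the law of $\langle\beta_0,X_j\rangle$ and $\wF_{j}=\wF_{\wbeta,j}$ is the empirical distribution of the plug--in projections $\langle\wbeta,X_{j,i}\rangle$.

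The plan is to compare $\wF_{\wbeta,j}$ with the \emph{oracle} empirical distribution built from the true direction,
\[
\wG_{j}(u)=\frac{1}{n_j}\sum_{i=1}^{n_j}\indica_{\{\langle \beta_0,X_{j,i}\rangle\le u\}}.
\]
Since $\langle\beta_0,X_{j,1}\rangle,\dots,\langle\beta_0,X_{j,n_j}\rangle$ are i.i.d.\ real random variables with distribution $F_j$, the classical Glivenko--Cantelli theorem gives $\|\wG_{j}-F_{j}\|_{\infty}\convpp 0$, so by the triangle inequality it suffices to prove $\|\wF_{\wbeta,j}-\wG_{j}\|_{\infty}\convpp 0$. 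Write $\delta_n=\|\wbeta-\beta_0\|$, so $\delta_n\convpp 0$ by hypothesis. Cauchy--Schwarz gives, pathwise, $|\langle\wbeta,X_{j,i}\rangle-\langle\beta_0,X_{j,i}\rangle|\le\|X_{j,i}\|\,\delta_n$, whence for any $M>0$ the indicators $\indica_{\{\langle\wbeta,X_{j,i}\rangle\le u\}}$ and $\indica_{\{\langle\beta_0,X_{j,i}\rangle\le u\}}$ can differ only if either $\|X_{j,i}\|>M$ or $\langle\beta_0,X_{j,i}\rangle$ falls in the strip $(u-M\delta_n,\,u+M\delta_n]$.

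This dichotomy yields, uniformly in $u$,
\[
|\wF_{\wbeta,j}(u)-\wG_{j}(u)|\le \frac{1}{n_j}\#\{i:\|X_{j,i}\|>M\}+\bigl[\wG_{j}(u+M\delta_n)-\wG_{j}(u-M\delta_n)\bigr].
\]
The first term converges a.s.\ to $P_j(\|X_j\|>M)$ by the strong law of large numbers. For the second, inserting $F_j$ and using $\|\wG_{j}-F_{j}\|_{\infty}\to 0$ together with the uniform continuity of $F_j$ on the extended line gives
\[
\sup_u\bigl[\wG_{j}(u+M\delta_n)-\wG_{j}(u-M\delta_n)\bigr]\le 2\|\wG_{j}-F_{j}\|_{\infty}+\omega_{F_j}(2M\delta_n),
\]
where $\omega_{F_j}$ denotes the modulus of continuity of $F_j$. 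For fixed $M$, letting $n\to\infty$ sends both $\|\wG_{j}-F_{j}\|_{\infty}$ and $\omega_{F_j}(2M\delta_n)$ to $0$ a.s., so $\limsup_n\|\wF_{\wbeta,j}-\wG_{j}\|_{\infty}\le P_j(\|X_j\|>M)$ a.s.; intersecting the a.s.\ events over $M\in\natu$ and using that $P_j(\|X_j\|<\infty)=1$ (so $P_j(\|X_j\|>M)\to 0$) closes the argument.

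The crux is that $\wbeta$ is computed from the very sample on which $\wF_{\wbeta,j}$ is evaluated, so the direction and the evaluation points are coupled; a direct uniform Glivenko--Cantelli bound over \emph{all} unit directions is hopeless in infinite dimensions, which is exactly the overfitting phenomenon of \citet{leurgans:etal:1993}. The device above neutralises the coupling by comparing pathwise to the fixed--direction oracle $\wG_j$ and absorbing the discrepancy into a vanishing strip whose mass is controlled by the continuity of $F_j$; the only inputs used are $\|\wbeta-\beta_0\|\convpp 0$, the continuity of $F_j$, and $P_j(\|X_j\|<\infty)=1$. I expect this coupling to be the main obstacle, and the finite expansion \eqref{eq:wbeta} enters a heavier alternative route rather than the one above: the half--spaces $\{x:\langle\beta,x\rangle\le u\}$ with $\beta\in\mathrm{span}(\phi_1,\dots,\phi_{k_n})$ form a Vapnik--Chervonenkis class of index $O(k_n)$, and $k_n/n\to 0$ then forces the uniform deviation over this class to vanish, at the cost of the extra logarithmic factors needed to upgrade in--probability to almost sure control.
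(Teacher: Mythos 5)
Your proof of part (a) is correct, but it follows a genuinely different route from the paper's. The paper decomposes $|\wF_{\wbeta,j}(t)-F_{\beta_0,j}(t)|$ into a term $\sup_{t}\sup_{\beta\in\itH_{k}}|\wF_{\beta,j}(t)-F_{\beta,j}(t)|$, controlled by empirical-process machinery (the half-spaces indexed by $\beta$ in the $k_n$-dimensional span of $\phi_1,\dots,\phi_{k_n}$ form a VC class of index $O(k_n)$, and the covering-number bound together with $k_n/n\to 0$ yields a uniform Glivenko--Cantelli statement over that growing class), plus a bias term $|F_{\wbeta,j}(t)-F_{\beta_0,j}(t)|$ handled by convergence in distribution of $\langle\wbeta,X_j\rangle$ to $\langle\beta_0,X_j\rangle$ along almost every sample path. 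This is exactly the ``heavier alternative'' you sketch in your closing paragraph. Your main argument instead compares $\wF_{\wbeta,j}$ pathwise to the oracle empirical distribution $\wG_j$ at the true direction, using Cauchy--Schwarz, a truncation at level $M$, and the uniform continuity of the continuous cdf $F_j$ to absorb the perturbation into a vanishing strip; the only inputs are $\|\wbeta-\beta_0\|\convpp 0$, the Glivenko--Cantelli theorem at $\beta_0$, and the strong law for $\indica_{\{\|X_{j,i}\|>M\}}$. The trade-off is instructive: your argument is more elementary and strictly more general, since it never uses the finite expansion \eqref{eq:wbeta} nor the rate condition $k_n/n\to 0$, so it proves the theorem under weaker hypotheses; the paper's VC route is the one that generalizes to settings where one must be uniform over a whole sieve of candidate directions (e.g.\ when the argmax character of $\wbeta$ matters), which is presumably why the authors keep the structural assumptions in the statement. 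Part (b) is handled identically in both: the added density condition supplies Assumptions \ref{ass:A1}--\ref{ass:A2} and Theorem \ref{theo:consist.1} does the rest.
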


To derive consistency results for the ROC curve associated to $\wUps_{\cuad}$  given in \eqref{eq:UPCUAD}, we will consider general quadratic indexes defined as  $\Upsilon_{\bLamch, \balfach} (X)= \,-\, \bx\trasp \bLam  \bx +  \balfa \trasp \bx$ where $\bx=A(X)=(\langle X, \phi_1\rangle, \dots,\langle X, \phi_k\rangle)\trasp$, $\balfa \in \real^k$ and $\bLam \in \real^{k\times k}$ and $\{\phi_\ell\}_{\ell \ge 1}$ is an orthonormal basis of $\itH$.   To avoid complicating the notation, we have omitted the index $k$ related to the number of principal directions chosen which will be assumed to be fixed. As above, we assume that 
the index $\Upsilon$ used to construct the ROC curve equals  $\Upsilon=\Upsilon_{\bLamch_0, \balfach_0}$ for some squared matrix $\bLam_0$ and vector $\balfa_0$. We also assume that estimators $(\wbLam, \wbalfa)$ of $(\bLam_0, \balfa_0)$ are available and that  the estimated ROC curve is defined  as in \eqref{eq:wROCwUps} through  the predictors $\wY_{j,i}=\wUps(X_{j,i})$, $i=1, \ldots, n_j$,  $j= D,H$, where   $\wUps=\Upsilon_{\wbLamch, \wbalfach}$. As for the linear index,   Assumption \ref{ass:A3} will follow from the consistency of $(\wbLam, \wbalfa)$. 

Again, Assumptions \ref{ass:A1} and \ref{ass:A2} state conditions on the behaviour of the distribution functions $F_j=F_{j, \Upsilon_{\bLamch_0, \balfach_0}}$ of $Y_j=\Upsilon(X_j)=\Upsilon_{\bLamch_0, \balfach_0}(X_j)$ which are the usual requirements to establish consistency for univariate biomarkers.
Hence, we have the following result.

\begin{theorem}  \label{theo:consist.4} 
Let  $\{X_{j,i} \}_{i=1}^{n_j}$, $j=D, H$, be independent observations in a separable Hilbert space $\itH$,   let  $(\wbLam, \wbalfa)$ be strongly consistent estimators   of  $(\bLam_0, \balfa_0)$ and assume that $F_j$ are continuous. Then,  
\begin{itemize}
\item[a)] Assumption \ref{ass:A3}  holds.  
\item[b)] If in addition  $F_{H}= F_{H, \Upsilon_{\bLamch_0, \balfach_0}}$ has density $f_{H}$ such that $f_{H}(u)>0$, for all $u\in \real$,   the conclusion of Theorem \ref{theo:consist.1} holds.  
\end{itemize}
\end{theorem}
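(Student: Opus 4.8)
\noi\textbf{Proof plan.} Since part b) is an immediate consequence of part a) together with Theorem \ref{theo:consist.1} (Assumption \ref{ass:A3} is exactly part a), Assumption \ref{ass:A2} is the continuity of $F_D$ included in the hypotheses, and Assumption \ref{ass:A1} is the extra positive--density requirement imposed in b)), the whole task reduces to verifying \ref{ass:A3}, namely $\|\wF_j-F_j\|_\infty\convpp 0$ for $j=D,H$. The crucial structural feature is that the orthonormal system $\{\phi_\ell\}_{\ell\ge 1}$, and hence the projection operator $A$, are fixed and only the pair $(\bLam_0,\balfa_0)$ is estimated; thus the projected vectors $\bx_{j,i}=A(X_{j,i})$ are i.i.d.\ elements of $\real^k$ with $k$ fixed, and the two indices differ only through the parameters,
\begin{equation*}
\wY_{j,i}-Y_{j,i}=-\bx_{j,i}\trasp(\wbLam-\bLam_0)\bx_{j,i}+(\wbalfa-\balfa_0)\trasp\bx_{j,i}\,.
\end{equation*}

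I would begin, for each fixed $j$, with the split
\begin{equation*}
\|\wF_j-F_j\|_\infty\;\le\;\|\wF_{\wUps,j}-\wF_{\Upsilon,j}\|_\infty+\|\wF_{\Upsilon,j}-F_j\|_\infty\,,
\end{equation*}
where $\wF_{\Upsilon,j}$ is the empirical distribution of the unperturbed values $Y_{j,i}=\Upsilon(X_{j,i})$. The second summand tends to $0$ a.s.\ by the Glivenko--Cantelli theorem, since the $Y_{j,i}$ are i.i.d.\ with continuous distribution $F_j$. For the first summand I would exploit the pointwise bound $|\wY_{j,i}-Y_{j,i}|\le\delta_n(\|\bx_{j,i}\|^2+\|\bx_{j,i}\|)$, where $\delta_n=\|\wbLam-\bLam_0\|+\|\wbalfa-\balfa_0\|\convpp 0$, the norm of $\wbLam-\bLam_0$ being any matrix norm (all equivalent in finite dimension).

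The heart of the argument is a truncation-and-sandwich step absorbing the unboundedness of $\bx$. Fix $M>0$ and set $\eta_n(M)=\delta_n(M^2+M)$, so that on $\{\|\bx_{j,i}\|\le M\}$ one has $|\wY_{j,i}-Y_{j,i}|\le\eta_n(M)$. Splitting the sum defining $\wF_{\wUps,j}(u)$ according to whether $\|\bx_{j,i}\|\le M$ or not, and using monotonicity of indicators, I would obtain the pathwise sandwich
\begin{equation*}
\wF_{\Upsilon,j}\!\left(u-\eta_n(M)\right)-R_{n,j}(M)\;\le\;\wF_{\wUps,j}(u)\;\le\;\wF_{\Upsilon,j}\!\left(u+\eta_n(M)\right)+R_{n,j}(M)\,,
\end{equation*}
uniformly in $u$, where $R_{n,j}(M)=n_j^{-1}\sum_i\indica_{\{\|\bx_{j,i}\|>M\}}\convpp P_j(\|\bx_j\|>M)$ by the strong law of large numbers. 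Letting $n\to\infty$ with $M$ fixed, $\eta_n(M)\to0$ a.s.; since $F_j$ is a continuous, hence uniformly continuous, distribution function, combining Glivenko--Cantelli with the modulus of continuity of $F_j$ gives $\wF_{\Upsilon,j}(u\pm\eta_n(M))\to F_j(u)$ uniformly in $u$. Consequently $\limsup_n\|\wF_{\wUps,j}-F_j\|_\infty\le P_j(\|\bx_j\|>M)$ a.s., and finally letting $M\to\infty$, with $\bx_j$ a fixed $\real^k$-valued random vector so that $P_j(\|\bx_j\|>M)\to0$, yields $\|\wF_{\wUps,j}-F_j\|_\infty\convpp0$, which is Assumption \ref{ass:A3}.

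I expect the main obstacle to be precisely this unboundedness: the quadratic form scales like $\|\bx\|^2$, so an arbitrarily small error in $\wbLam$ produces an arbitrarily large deviation in $\wY_{j,i}$ for observations of large norm, and no uniform bound on $|\wY_{j,i}-Y_{j,i}|$ is available. The truncation device circumvents this, and since all the inequalities hold pathwise, the fact that the single estimator $(\wbLam,\wbalfa)$ is shared across observations (so the $\wY_{j,i}$ are dependent) causes no difficulty: the only limiting statements invoked---Glivenko--Cantelli for $\wF_{\Upsilon,j}$ and the law of large numbers for $R_{n,j}(M)$---concern the genuinely i.i.d.\ quantities $Y_{j,i}$ and $\bx_{j,i}$. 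It is here that the assumption of a \emph{fixed} number $k$ of principal directions enters essentially, guaranteeing the tightness $P_j(\|\bx_j\|>M)\to0$.
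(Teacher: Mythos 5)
Your proof is correct, but it takes a genuinely different route from the paper's. The paper's argument mirrors its Theorems \ref{theo:consist.2} and \ref{theo:consist.3}: it writes $|\wF_j(t)-F_j(t)|\le |\wF_j(t)-L_t^{(j)}(\wbLam,\wbalfa)|+|L_t^{(j)}(\wbLam,\wbalfa)-L_t^{(j)}(\bLam_0,\balfa_0)|$ with $L_t^{(j)}(\bLam,\balfa)=\prob\{\Upsilon_{\bLamch,\balfach}(X_j)\le t\}$, kills the first term by a uniform Glivenko--Cantelli result over the class $\itF=\{\indica_{\{-\bx\trasp\bLam\bx+\balfa\trasp\bx\le t\}}\}$ (a VC class, since the functions $-\bx\trasp\bLam\bx+\balfa\trasp\bx-t$ live in a finite-dimensional vector space of dimension depending only on the fixed $k$), and kills the second by pointwise convergence of the index values plus a P\'olya-type argument using monotonicity and continuity of $F_j$. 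You instead compare $\wF_{\wUps,j}$ with the empirical distribution of the \emph{unperturbed} values $Y_{j,i}=\Upsilon(X_{j,i})$ via a pathwise truncation-and-sandwich inequality, so that only the classical Glivenko--Cantelli theorem and the strong law are invoked, together with the uniform continuity of the continuous cdf $F_j$. Your route is more elementary (no empirical-process/VC machinery) and makes explicit where the unboundedness of the quadratic form is absorbed; the price is the extra double limit in $n$ and $M$ and the need for uniform continuity of $F_j$, which plays the role of the paper's second term. The paper's route is more uniform in spirit (it controls the empirical process simultaneously over all $(\bLam,\balfa,t)$, not just along the estimator sequence) and reuses the machinery already set up for the linear index; both arguments rely essentially on $k$ being fixed. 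Your treatment of part b) coincides with the paper's.
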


\section{Monte Carlo Study}{\label{sec:monte}}
In this section, we report the results of a numerical study performed with the aim of comparing different rules used to define an estimator of the ROC curve. In particular, we consider the estimators defined through \eqref{eq:ROC:LIN} and \eqref{eq:ROC:CUAD}. The estimator defined in \eqref{eq:ROC:LIN} may also be used when $\beta(t)=1$ for all $t$, which leads to the ROC curve based on the  integral-type  index labelled $\Upsilon_{\inte}(X)=\int_{\itI} X(t) dt$.  Beyond these discriminating indexes, the maximum and the minimum of each trajectory are also used   as discriminating  rule, that is, the  ROC curve  is based on  $\Upsilon_{\maxi}(X_{j,i})$  or  $\Upsilon_{\mini}(X_{j,i})$,   $i=1, \ldots, n_j$,  $j= D,H$, where $\Upsilon_{\maxi}(X)=\max_{t\in \itI} X(t)$ and  $\Upsilon_{\mini}(X)=\min_{t\in \itI} X(t)$. The operators $\Upsilon_{\inte}$,  $\Upsilon_{\maxi}$ and $\Upsilon_{\mini}$ used in \citet{jang:manatunga:2022} do not involve unknown parameters. 

In contrast, we include in our simulations some operators that require estimation of population quantities.
First, we consider the non--necessarily optimal but natural linear rule defined as $\wUps_{\media}(X)=  \langle \overline{X}_D-\overline{X}_H, X\rangle$ which corresponds to an estimator of the linear discriminating rule based on  the mean difference $\Upsilon_{\media}(X)=\langle \mu_D-\mu_H, X\rangle$.
Secondly, we   consider the estimator of the optimal linear discriminating rule defined in Section \ref{sec:estlinfun} labelled $\wUps_{\lin}(X)=\langle \wbeta , X\rangle$, where the   $\wbeta $ is computed as follows.  
Let $\wGamma_{\ave}$ be the operator  $\wGamma_{\ave}=(\wGamma_D+   \wGamma_H)/2$ and denote $\itH_k$   the linear space spanned by the first $k$ eigenfunctions of the pooled covariance operator
$\wGamma_{\pool}=({n_D}/{n})  \wGamma_D + ({n_H}/{n}) \wGamma_H$, related to its largest eigenvalues, with $\wGamma_j$  the sample covariance operators of the $j-$th sample, $j=D,H$ and $n=n_D+n_H$.   The estimator $\wbeta$ is obtained by maximizing $\wL (\beta)= {\langle \beta, \wmu_D-\wmu_H\rangle}/{\left( \langle \beta, \wGamma_{\ave} \,\beta\rangle \right)^{1/2}}$,
over $\{\beta \in \itH_k: \|\beta\|=1\}$.
In all cases, the dimension $k$ is chosen as the smallest dimension ensuring that at least a 95\% of the total variability is explained, measured through the eigenvalues of $\wGamma_{\pool}$. 

Finally, we also include  in the comparison the quadratic discriminating rule $\wUps_{\cuad}$ defined in \eqref{eq:UPCUAD}.
The results for the AUC and ROC curve estimators obtained using each discriminating index are indicated in all Tables and Figures through the considered index.
We  consider  several frameworks including equal or different covariance operators as well as equal or different mean functions.

Sections \ref{sec:monte-gaussian}  and \ref{sec:desbalance} report the obtained results for different Gaussian processes,   when differences between mean curves are present and for equal mean functions. For all scenarios, we performed $1000$ replications where the trajectories were recorded over a grid of $100$ points equally spaced on $[0,1]$.  The estimated  ROC curves were  computed over a grid of $101$ points equally spaced on  $[0,1]$. In Section  \ref{sec:monte-gaussian}, we consider equal sample sizes for both populations, $n_D=n_H=300$, while Section \ref{sec:desbalance} is concerned with an unbalanced design, $n_D=30$ and $n_H=250$ and a setting with proportional covariance operators.


\subsection{Numerical results for balanced designs}{\label{sec:monte-gaussian}}
In this section we analyse the performance of the discriminating indexes mentioned above for different Gaussian processes.  In all settings $\mu_H(t)=0$, so that the possible differences between mean functions is given through $\mu_D$, which varies across scenarios.
We describe below the different scenarios considered which are labelled \textbf{PROP}, \textbf{CPC} and \textbf{DIFF}.

\begin{itemize}
\item  \textbf{PROP}: This scenario corresponds to the  case  of proportional covariance operators $\Gamma_D=\rho \Gamma_H$. When  $\rho=1$ it corresponds to the situation of equal covariance operators. In this last setting, the linear discriminating rule $\wUps_{\lin}$  usually improves the quadratic one in the multivariate setting for different mean functions. 
We consider the case of equal and different mean functions, that will be labelled \textbf{P0} and \textbf{P1}. 
\begin{itemize}
\item[$\star$]\textbf{P0}: Under this setting, $\mu_D(t)=\mu_H(t)=0$ and $\Gamma_D=\rho \Gamma_H$ with $\rho=2$.
\item[$\star$]\textbf{P1}: In this case, $\mu_D(t)=2 \,\sin(\pi\; t)$ and $\Gamma_D=\rho \Gamma_H$ with $\rho=1$ and $2$.
\end{itemize}

In these two scenarios both populations have the same underlying Gaussian distribution up to changes in the mean and/or covariance operators.

Two possible Gaussian processes were selected: the Brownian motion  and   a random Gaussian process with exponential kernel, labelled Exponential Variogram in all Tables and Figures. For the former,   covariance kernel for the healthy population equals $\gamma_H(s,t)=\min(s,t)$, while for the latter it corresponds to  $\gamma_H(s,t)=\exp(-|s-t|/\theta)$ with $\theta=0.2$.   

\item  \textbf{CPC}: This scenario corresponds to the situation where the covariance operators satisfy a \textsc{fcpc} model. The sample $X_{H,i}$, $i=1, \ldots, n_H$, was generated as a Brownian motion with kernel $\gamma_H(s,t)=\min(s,t)$. Recall that the eigenfunctions of the Brownian covariance operator are  $\phi_\ell(t)=\sqrt{2}\,\sin\{(2\,\ell-1)\,\pi\,t/2\}$ with related principal values $\lambda_{H,\ell} = 10\left[ {2}/\{(2\,\ell-1)\pi\}\right]^2$.

The diseased population is  a finite--range one, generated as $X_{D,i} = \mu_D + Z_{1,i} \phi_1 + Z_{2, i} \phi_2 + Z_{3,i}\phi_3$, 
where   $Z_{\ell,i} \sim N(0, \lambda_{D,\ell})$, with two possible choices for the variances of the scores.
\begin{itemize}
\item[$\star$]\textbf{C1}: In the first one, $\lambda_{D,1} = 2$, $\lambda_{D,2}=0.3$ and $\lambda_{D,3}=0.05$. In this situation the order between eigenvalues is preserved across populations.
\item[$\star$]\textbf{C2}: For the second choice, $\lambda_{D,1} = 0.3$, $\lambda_{D,2}=2$ and $\lambda_{D,3}=0.05$. Note that, in this setting, the order between eigenvalues is not preserved, meaning that the vectors $\bx_j=(\langle \phi_1,X_{j,1}\rangle, \langle \phi_2,X_{j,1}\rangle,\langle \phi_3,X_{j,1}\rangle)\trasp$ will be normally distributed with diagonal covariance matrices $\diag (\lambda_{j,1},\lambda_{j,2},\lambda_{j,3})$, but the order between the first two principal axes is reversed between populations.
\end{itemize} 

For each of the above described schemes, we allow for two different mean settings. On the one hand, we labelled with a $0$ after its identifier, that is,  according to the choice of $\lambda_{D,\ell}$, for $\ell=1,2,3$,  as \textbf{C10} or \textbf{C20} the situation where  $\mu_D(t)=0$, for all $t\in (0,1)$. On the other hand, the cases \textbf{C11} or \textbf{C21} correspond to the situation where the diseased population has   mean $\mu_D(t)=3 \,\sin(\pi\; t)$.

\item  \textbf{DIFF}: We also  consider  a situation  where the processes have different covariance operators and do not share their eigenfunctions. Hence, this framework is not included in the proportional or \textsc{fcpc} models described above. Scheme  \textbf{DIFF} includes two different settings, but in both of them, the sample $X_{D,i}$, $i=1, \ldots, n_D$, was generated as a Brownian motion with kernel $\gamma_H(s,t)=\min(s,t)$, whereas the sample  $X_{H,i}$, $i=1, \ldots, n_H$, was generated as described below.
\begin{itemize}
\item[$\star$]\textbf{D1}: Under \textbf{D1}, the healthy population corresponds to an Ornstein Uhlenbeck process with mean  $\mu_H\equiv 0$ and covariance kernel $\gamma_H(s,t)=\{{1}/({2\theta})\}\, \exp\{-\theta(s+t)\} \left[\exp\{2\,\theta(s+t)\}-1\right]\,,$   
with $\theta=1/3$. 
\item[$\star$]\textbf{D2}:  In this framework, $X_{H,i}$ is distributed as  a centered  Exponential Variogram, that is,  $\gamma_H(s,t)=\exp(-|s-t|/\theta)$ with $\theta=0.2$ and $\mu_H(t)=0$, for all $t$.   
\end{itemize} 
As above, we include two different mean scenarios: the one labelled a $0$ after its identifier, that is, as \textbf{D10} or \textbf{D20} corresponds to  the situation where  $\mu_H(t)=\mu_D(t)=0$, for all $t\in (0,1)$. In contrast, the cases \textbf{D11} or \textbf{D21} correspond to the situation where $\mu_H(t)=0$ and  the diseased population has   mean $\mu_D(t)=2 \,\sin(\pi\; t)$. 

\end{itemize} 

\begin{figure}[ht]
	\begin{center}
		\footnotesize
\renewcommand{\arraystretch}{0.1}
		\begin{tabular}{ccc}
		\textbf{P1}, $\rho=1$  & \textbf{P1}, $\rho=2$  &	\textbf{P0}\\
	\multicolumn{3}{c}{Brownian Motion} \\[-2ex]
			\includegraphics[scale=0.25]{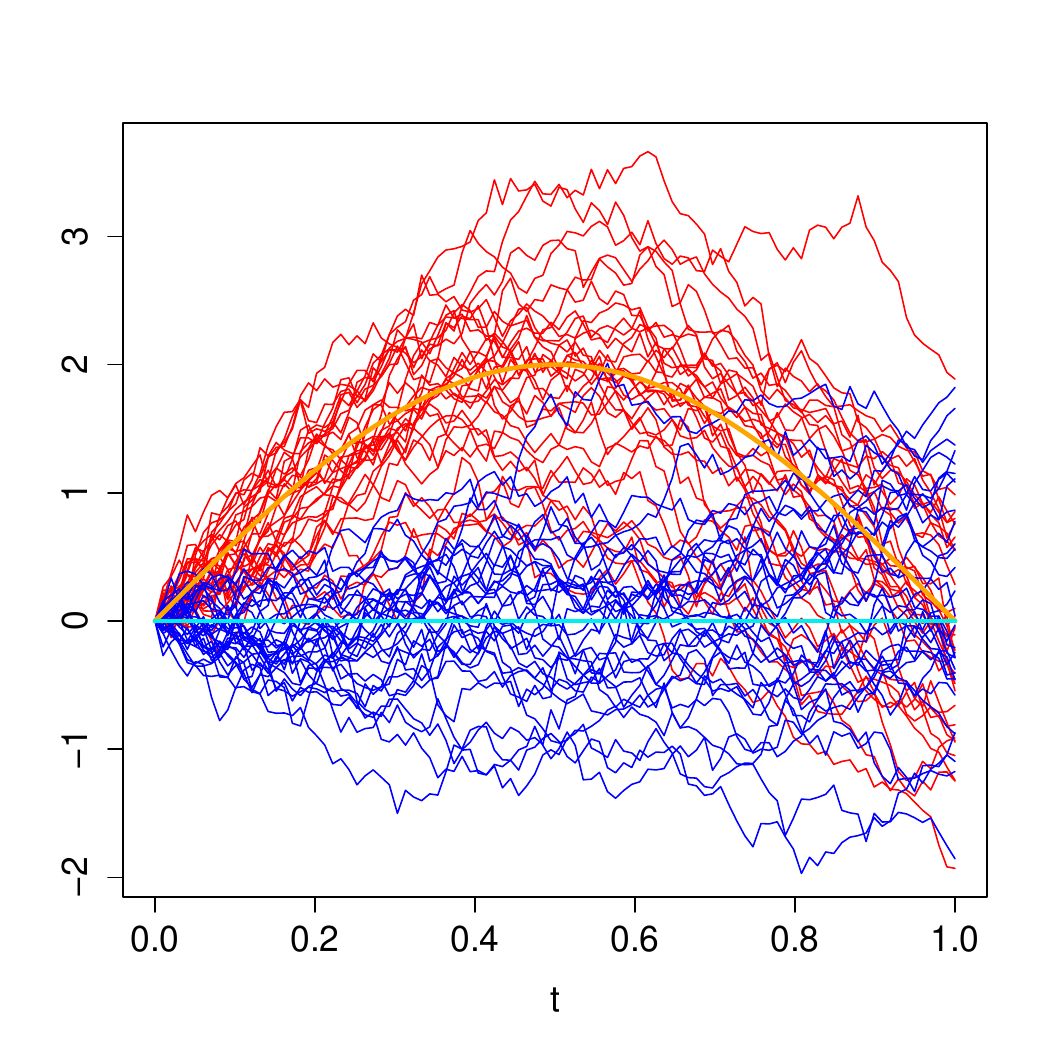} & 
			\includegraphics[scale=0.25]{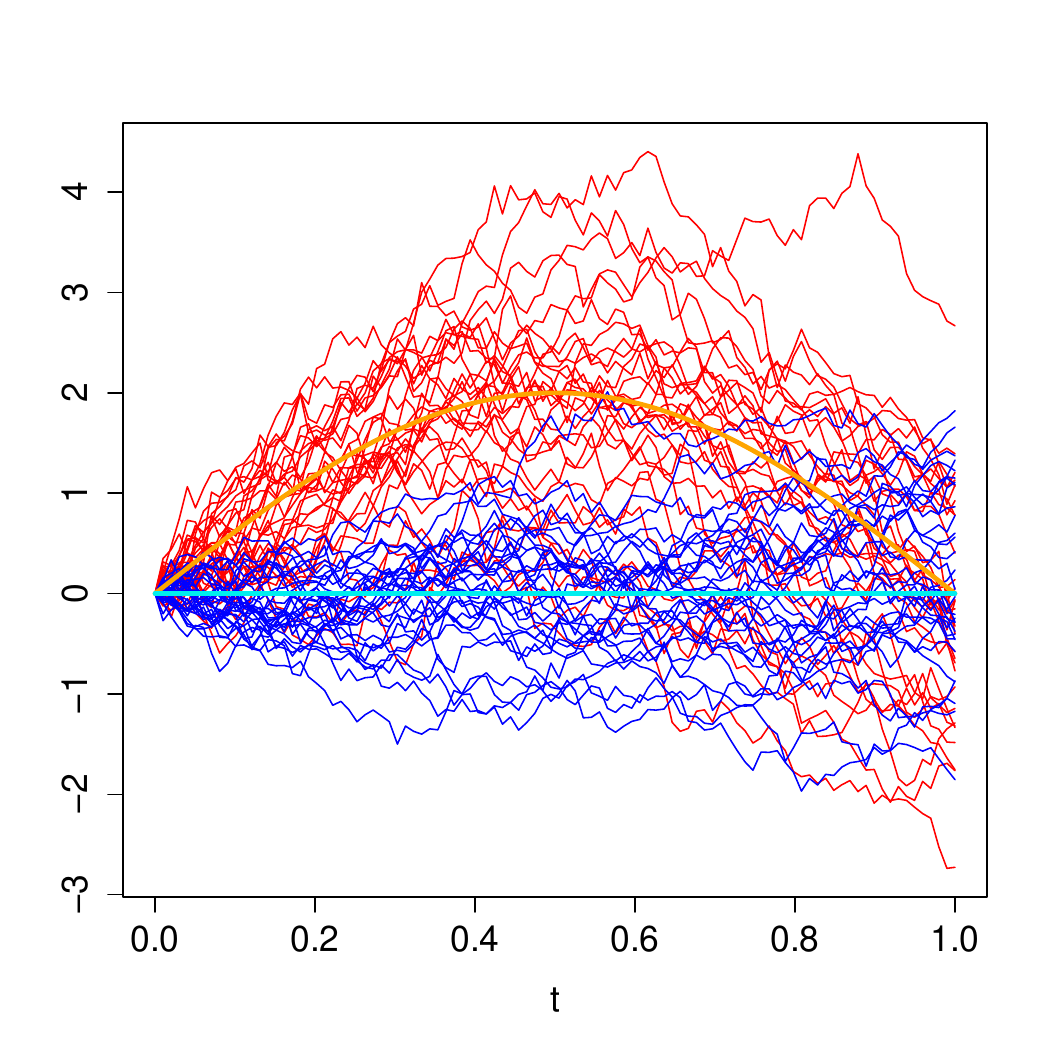} & 
			\includegraphics[scale=0.25]{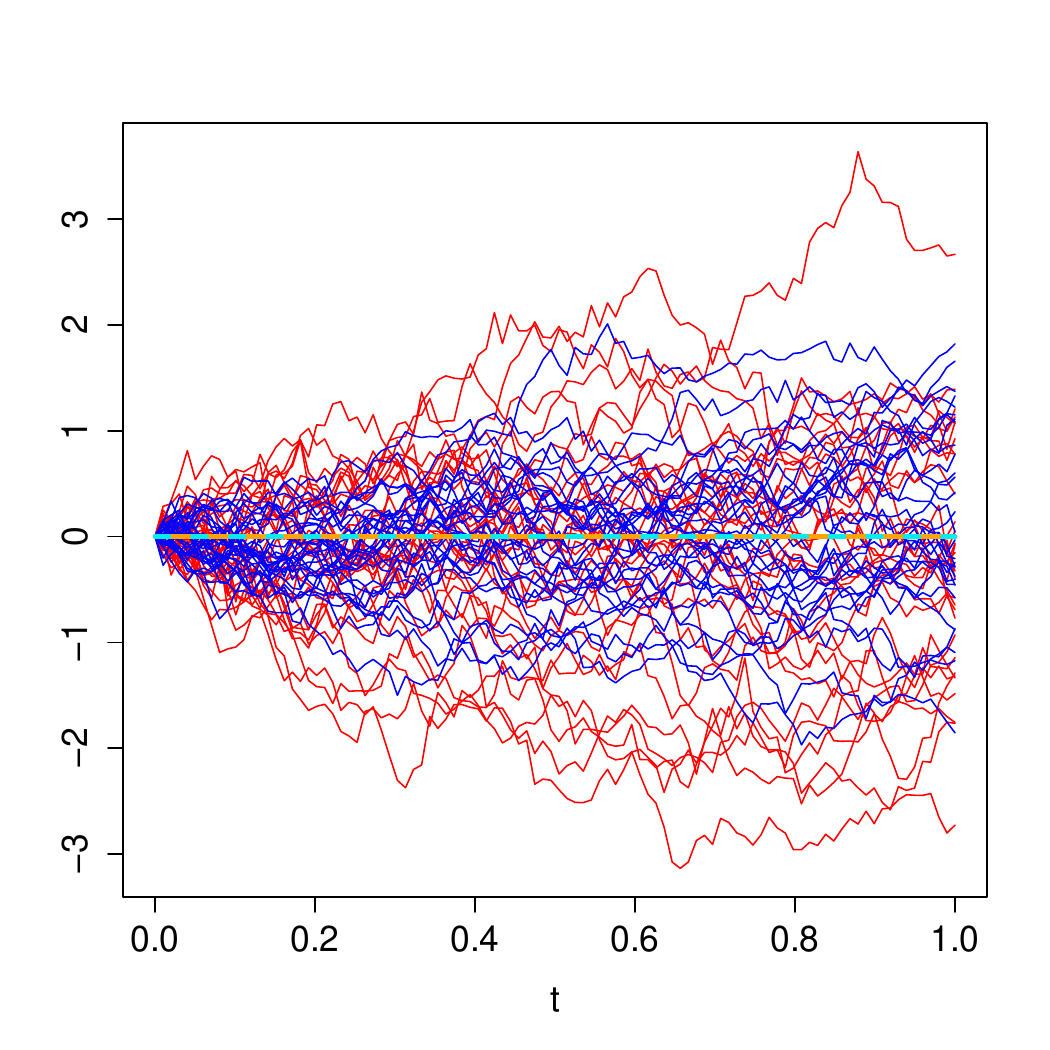}\\
			\multicolumn{3}{c}{Exponential Variogram}\\[-2ex]
			\includegraphics[scale=0.25]{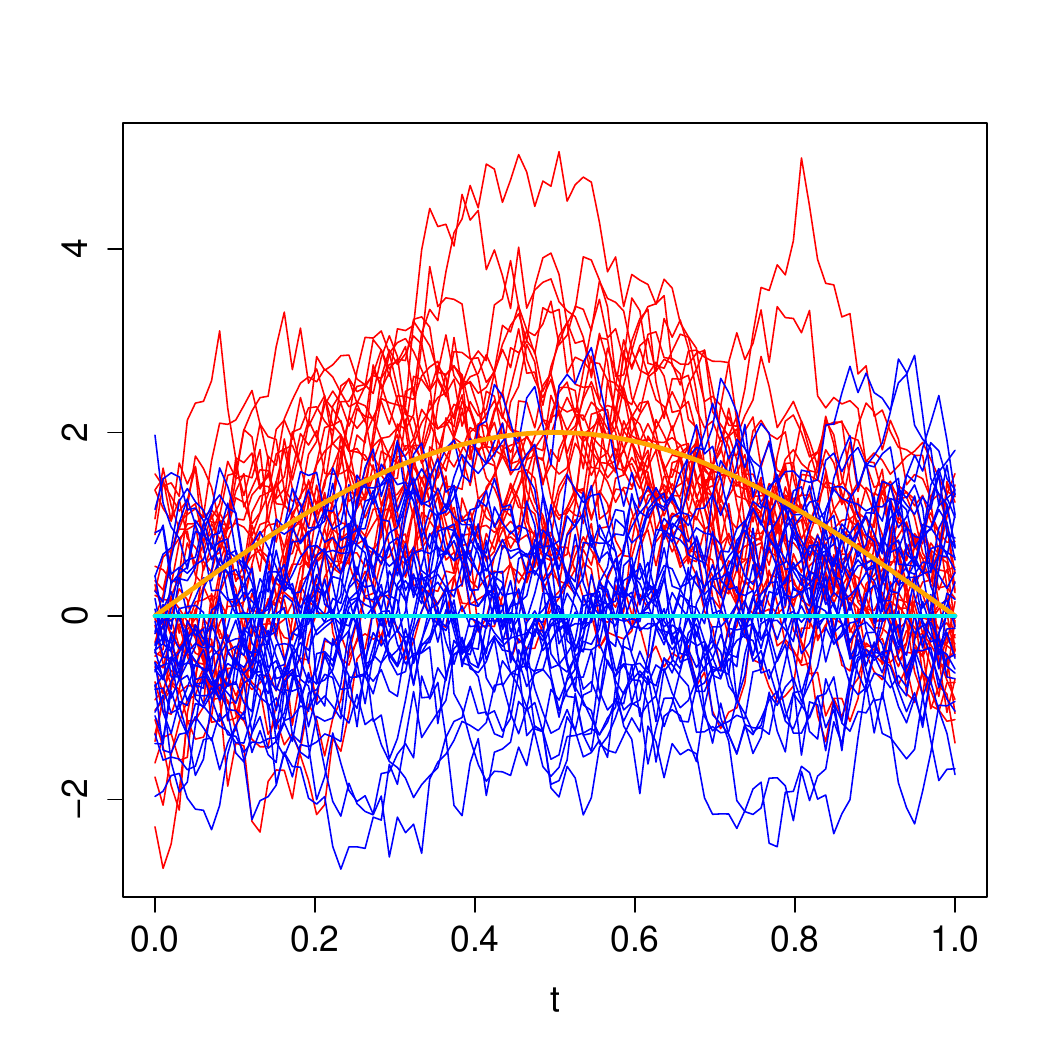} & 
			\includegraphics[scale=0.25]{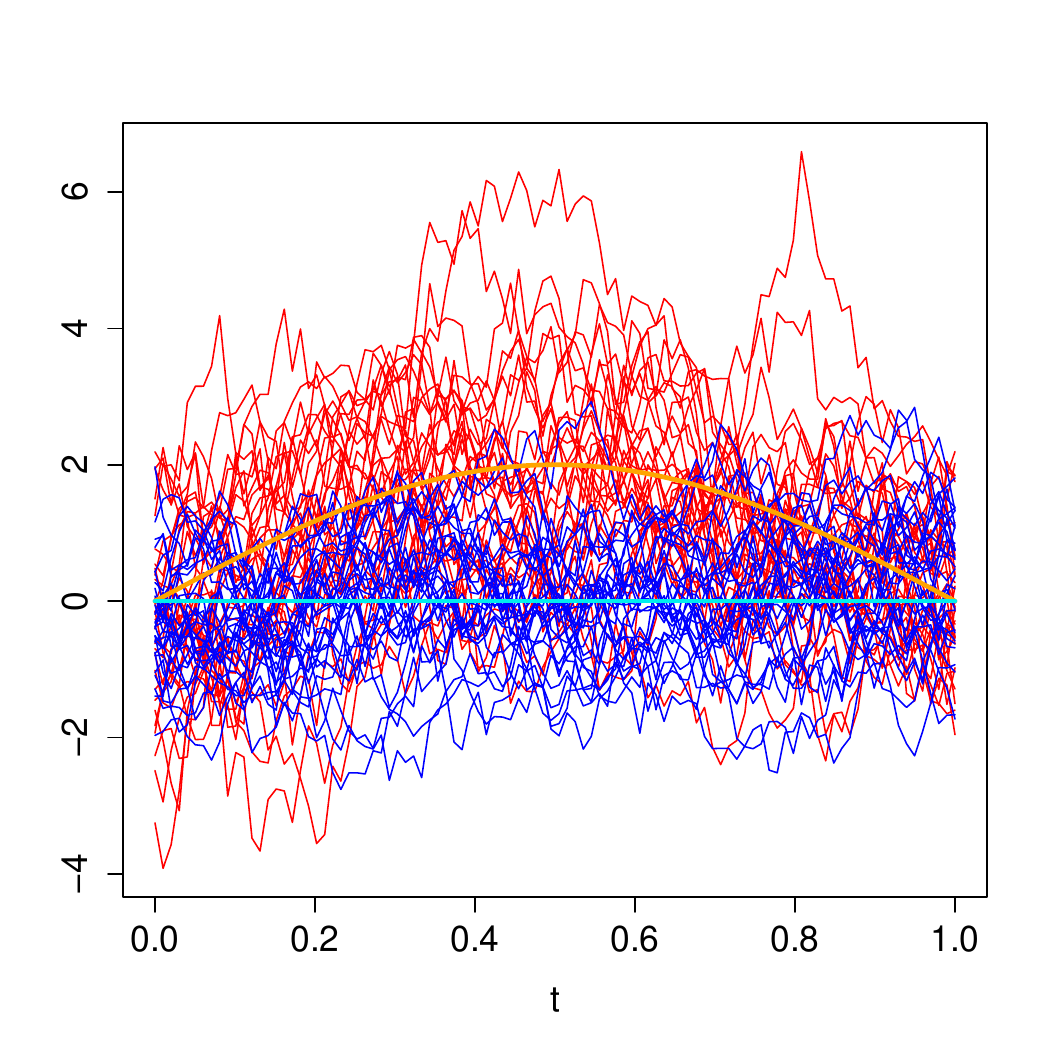}   & 
			\includegraphics[scale=0.25]{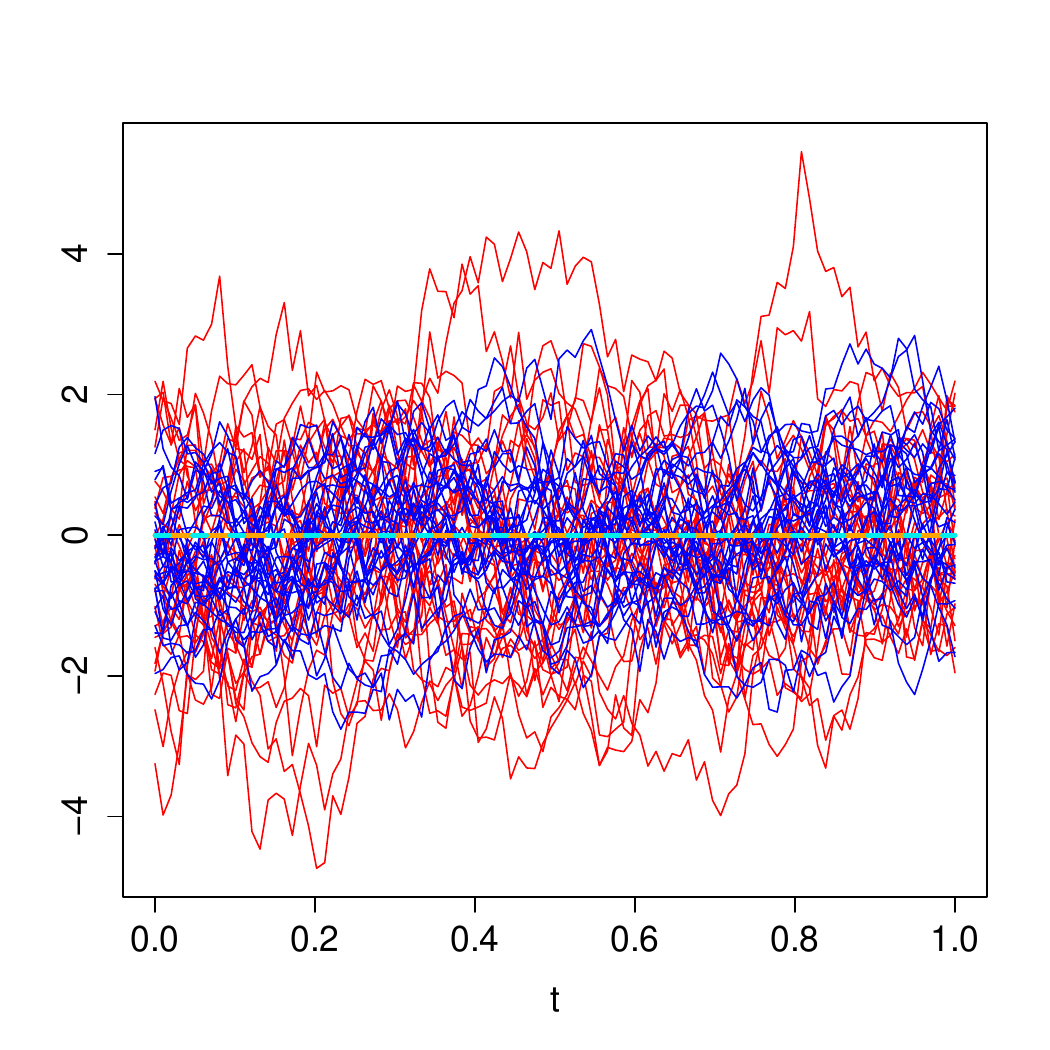}   
		\end{tabular}
		\vskip-0.1in 
		\caption{\label{fig:datos-propor} Data sets for proportional models (scheme \textbf{PROP}). Blue and red lines correspond to $X_{H,i}$ and $X_{D,i}$, respectively, while the true mean functions $\mu_H$ and $\mu_D$ are depicted in cyan and orange lines, respectively.}
	\end{center} 
\end{figure}

Figures \ref{fig:datos-propor} and \ref{fig:datos-cpc} depict 30 trajectories generated for the healthy and diseased populations in blue and red lines, respectively, under the schemes \textbf{PROP} and \textbf{CPC}, while Figure \ref{fig:datos-distintas} displays some of the generated trajectories for each scenario in \textbf{DIFF}. 
Figures \ref{fig:datos-propor} and \ref{fig:datos-cpc}  reveal that when  $\mu_H=\mu_D=0$, the differences between the two underlying distributions are more difficult to detect under the proportional model with $\rho=2$ than under the \textsc{fcpc} model. Under the considered   \textsc{fcpc} model, the smoothness and the differences in the range of the trajectories allow to distinguish the two populations. We do not consider the scheme,  $\mu_D(t)=\mu_H(t)=0$ and $\Gamma_D=\rho\; \Gamma_H$ with $\rho=1$, since in this case both populations have the same distribution. For that reason in Table  \ref{tab:summary-propor} below the cells corresponding to \textbf{P0} and $\rho=1$ are empty. Besides, when both population means are equal, under scenario \textbf{D20} the populations may be easily discriminated by looking at their behaviour at $t=0$, in contrast scheme  \textbf{D10} seems more challenging.

\begin{figure}[ht!]
	\begin{center}
		\footnotesize
\renewcommand{\arraystretch}{0.1}
\begin{tabular}{cc} 
 \textbf{C11} &  \textbf{C10} \\[-2ex]
\includegraphics[scale=0.25]{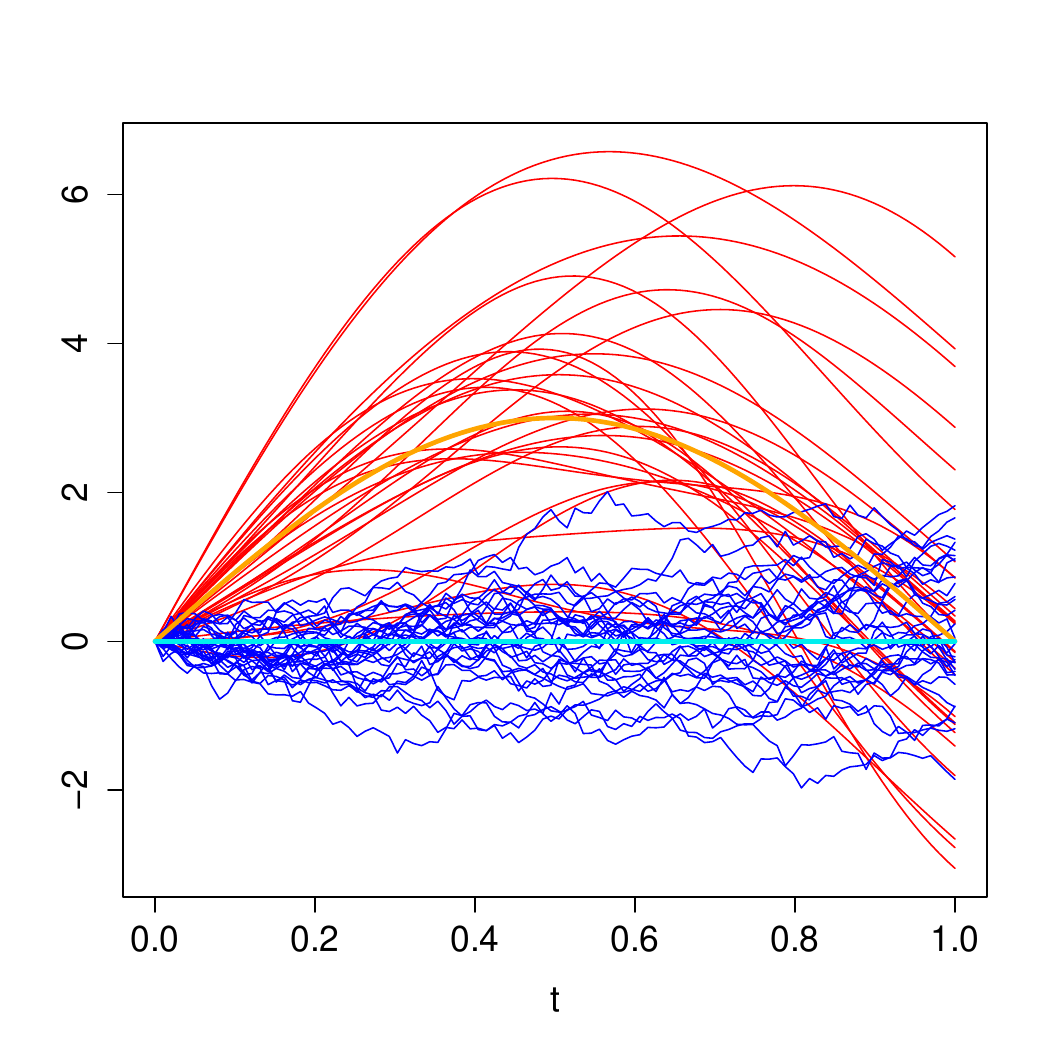} & 
\includegraphics[scale=0.25]{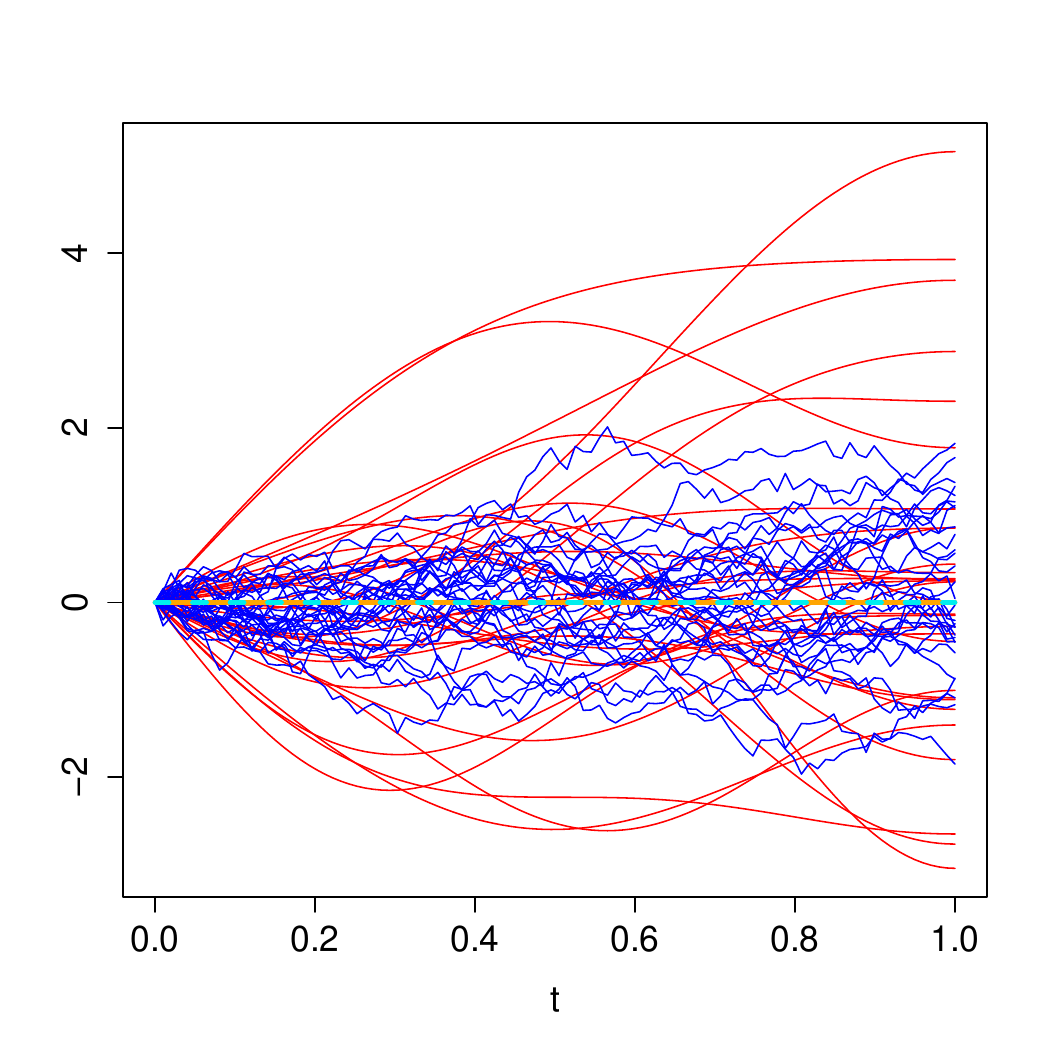} 
				\\
 \textbf{C21} &  \textbf{C20} \\[-2ex]
\includegraphics[scale=0.25]{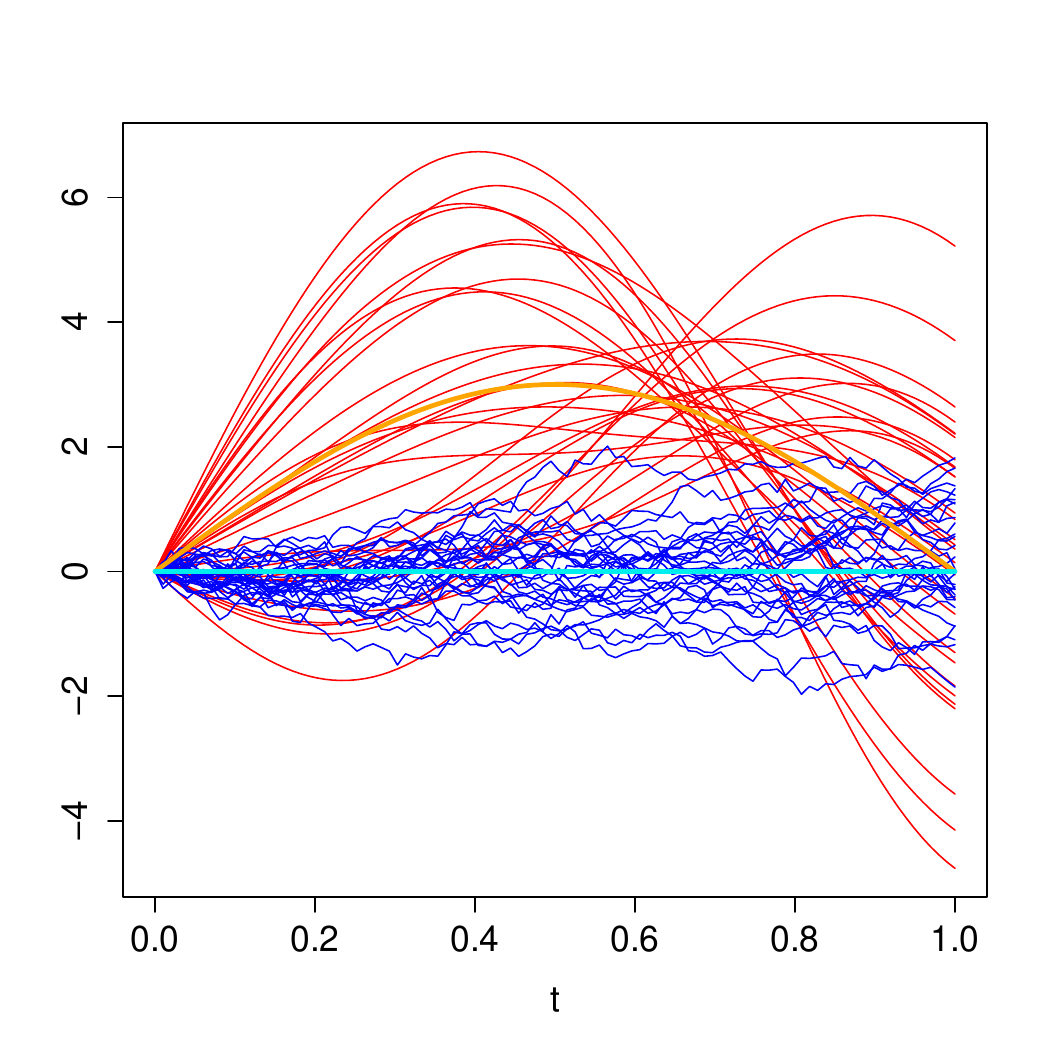} & 
\includegraphics[scale=0.25]{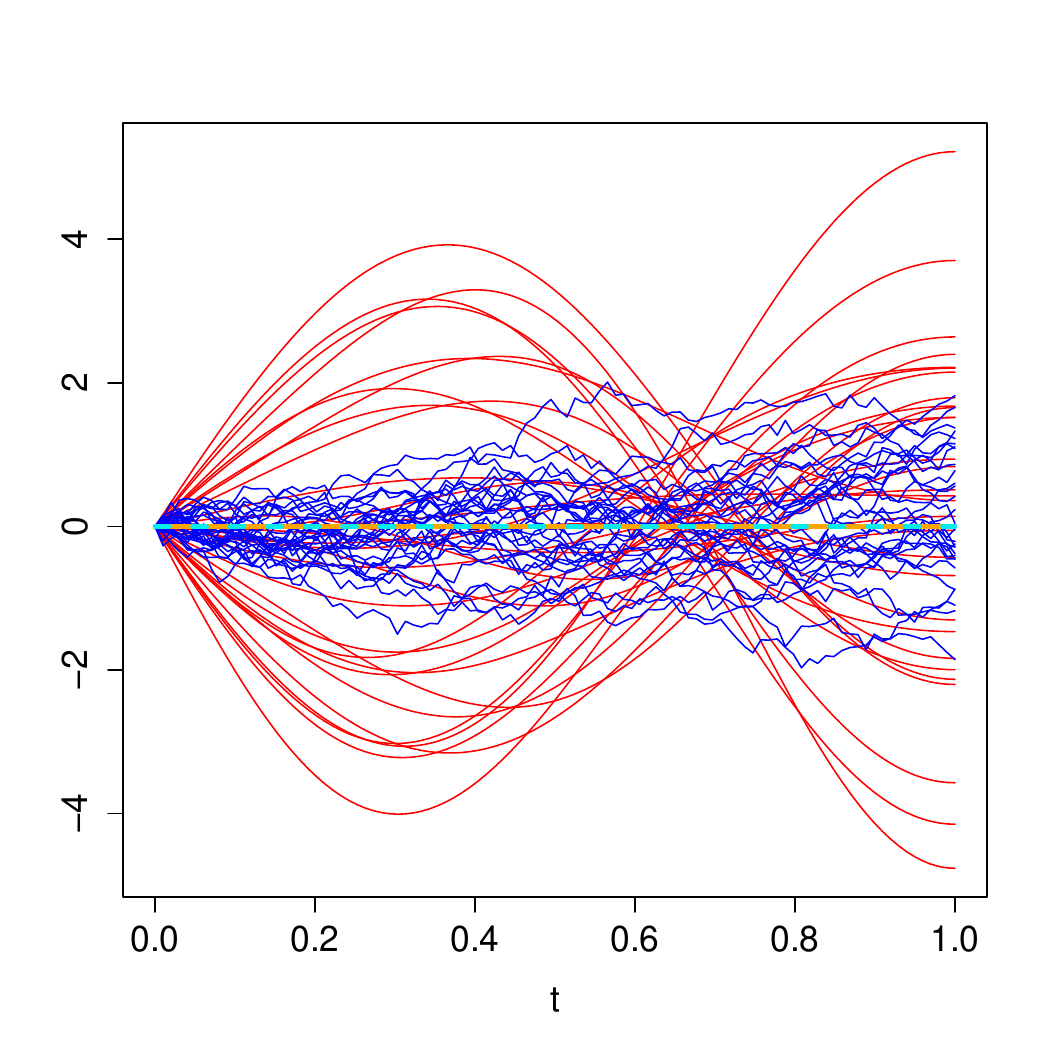}
\end{tabular}
\vskip-0.1in 
\caption{\label{fig:datos-cpc} Data sets for under a \textsc{fcpc} model (scheme \textbf{CPC}). Blue and red lines correspond to $X_{H,i}$ and $X_{D,i}$, respectively, while the true mean functions $\mu_H$ and $\mu_D$ are depicted in cyan and orange lines, respectively.} 
	\end{center} 
\end{figure}

\begin{figure}[ht!]
	\begin{center}
		\footnotesize
\renewcommand{\arraystretch}{0.1}
		\begin{tabular}{cc}
	\textbf{D11} & \textbf{D10} \\[-2ex]
			\includegraphics[scale=0.25]{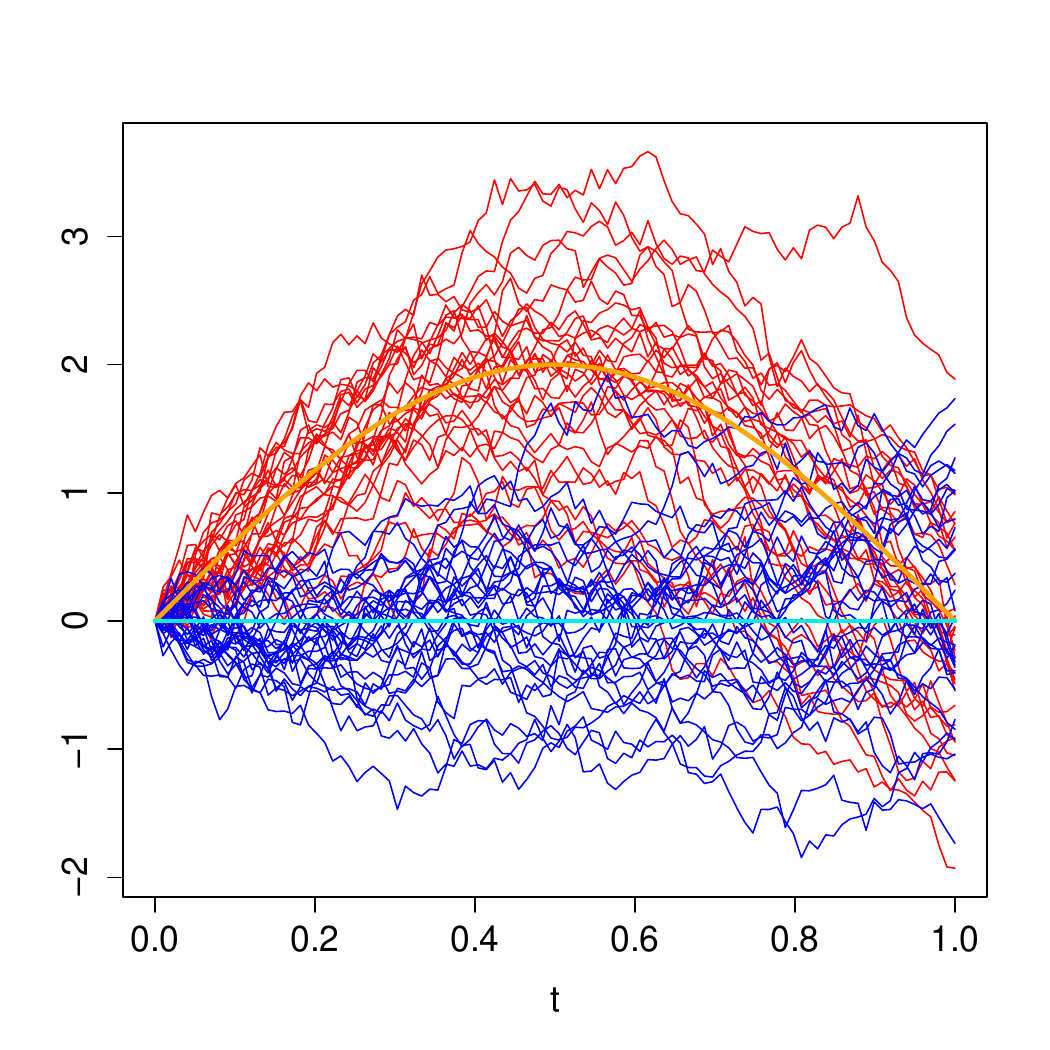} & 
 			\includegraphics[scale=0.25]{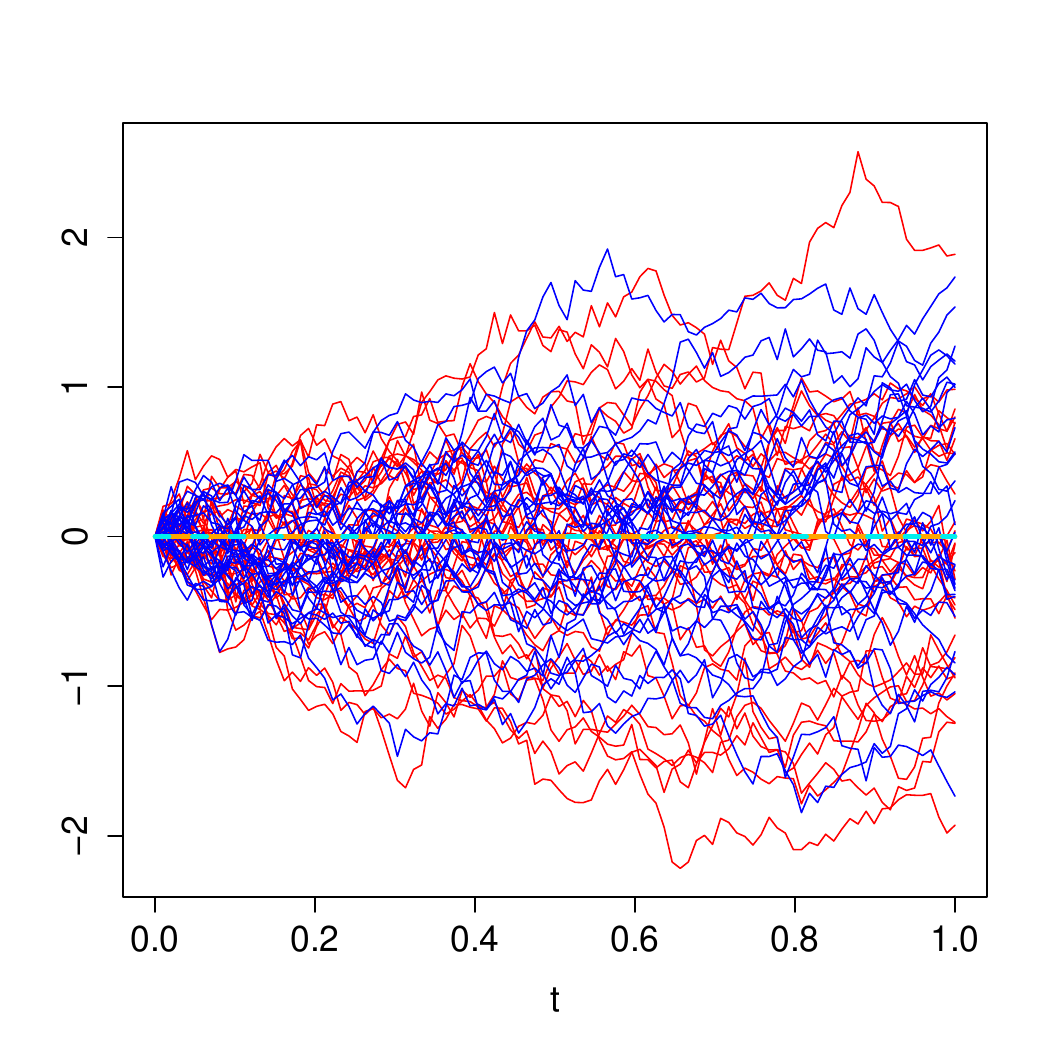}\\
 		\textbf{D21} & \textbf{D20} \\[-2ex]
		\includegraphics[scale=0.25]{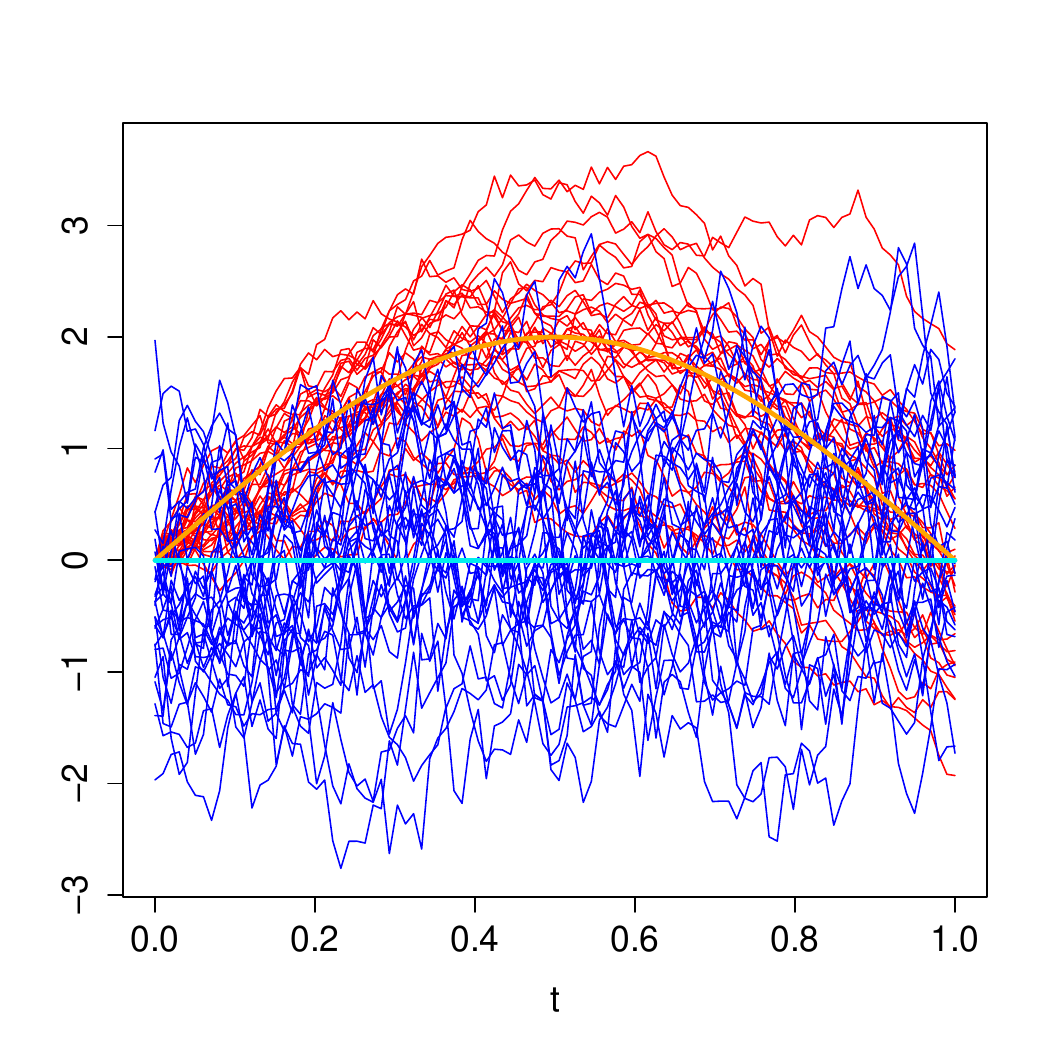}  & 
 		\includegraphics[scale=0.25]{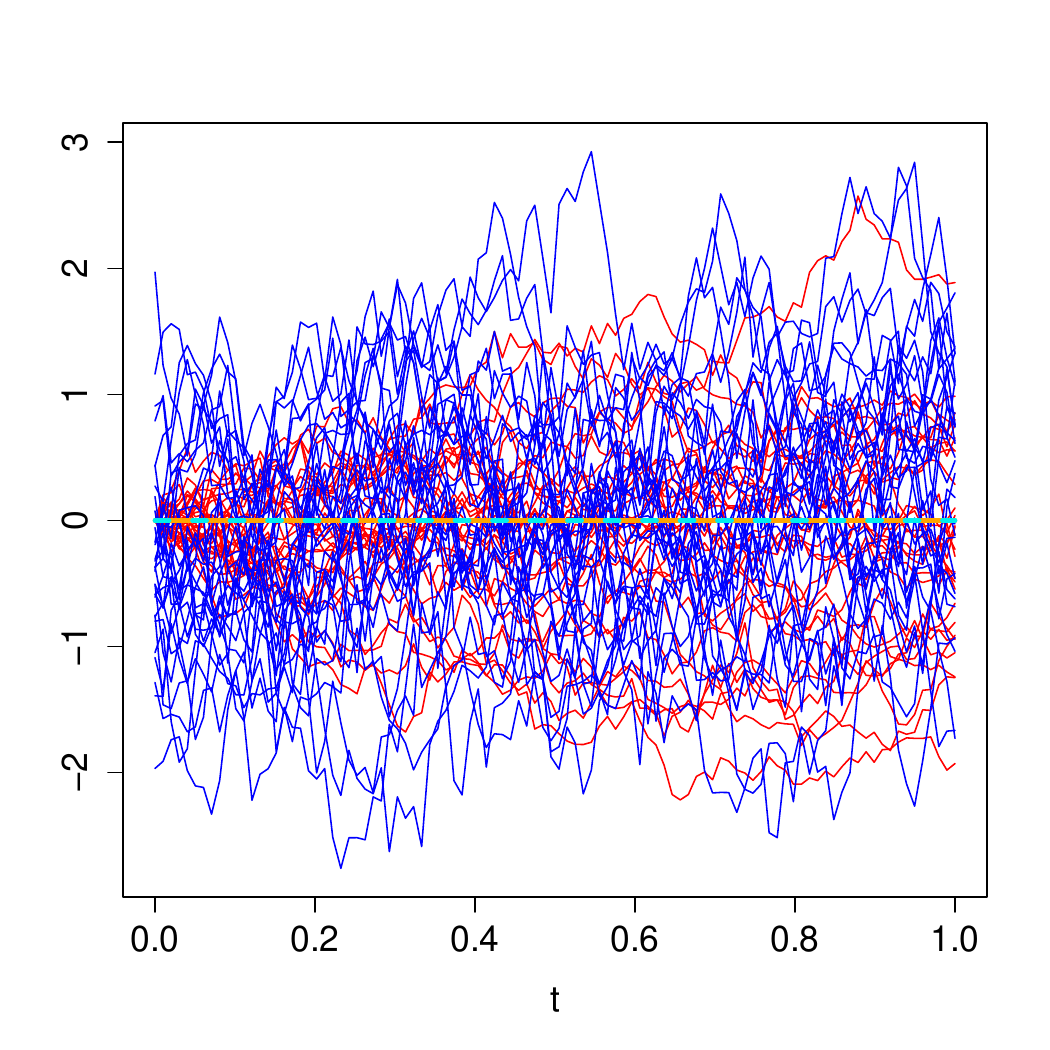}  
 		\end{tabular}
		\vskip-0.1in 
		\caption{\label{fig:datos-distintas} Data sets for under scheme \textbf{DIFF}. The left panel corresponds to the case $\mu_H(t)=0$ and $\mu_D(t)=3 \,\sin(\pi  t)$ and the right one to $\mu_D(t)=\mu_H(t)=0$. Blue and red lines correspond to $X_{H,i}$ and $X_{D,i}$, respectively, while the true mean functions $\mu_H$ and $\mu_D$ are depicted in cyan and orange lines, respectively.}
			\end{center} 
\end{figure}

Tables  \ref{tab:summary-propor} and \ref{tab:summary-cpc} report  the mean and standard deviations over 1000 replications  of the AUC  estimators  under scenarios \textbf{PROP} and \textbf{CPC},  while Table  \ref{tab:summary-distintas} displays the same summary measures under scenario \textbf{DIFF}. To facilitate the reading we indicate in boldface the largest value attained for the mean    of the AUC and in italic, the second largest. The obtained results reveal that in all cases, the best performance is obtained by the quadratic rule, $\wUps_{\cuad}$, followed in most situations by the linear one induced by the coefficient maximizing $\AUC(\beta)$ as defined in Section \ref{sec:linealfun}, that is, by $\wUps_{\lin}$. It is worth mentioning that under the proportional model   when the means of both populations are equal or when means are different but the underlying process is an Exponential Variogram,  the rule  $\Upsilon_{\maxi}$  based on the maximum value of the trajectory achieves a larger mean value of the AUC estimators  than $\wUps_{\lin}$, a fact that is clearly revealed in Figure \ref{fig:boxplot-PROP2}   that presents the  boxplots of the AUC estimators. When the mean functions are equal, the discriminating indexes based on a linear rule, $\Upsilon_{\inte}$, $\wUps_{\media}$ and $\wUps_{\lin}$, barely exceed an average estimated AUC of 0.5. The rule based on the maximum has a much better performance under \textbf{C20} than under \textbf{C10}, while under a proportional model with equal means (\textbf{P0}),  the differences between populations seem  to be more easily detected by   $\Upsilon_{\maxi}$ for the Exponential Variogram process than for the Brownian one.

\begin{table}[ht!]
\begin{center}
\footnotesize
\renewcommand{\arraystretch}{1.2}
\setlength{\tabcolsep}{2pt}
\caption{ \label{tab:summary-propor} Mean and standard deviation of the $\widehat{\AUC}$, under scenario \textbf{PROP}, that is, under a proportional model  $\gamma_D(t,s)=\rho\;\gamma_H(t,s)$ with equal (\textbf{P0}) or different mean functions (\textbf{P1}), $\mu_H(t)=0$ and $\mu_D(t)=2\, \sin(\pi  t)$). In all cases, $n_H=n_D=300$.}
  \begin{tabular}{c c cccccc@{\extracolsep{1cm}}    c@{\extracolsep{3pt}}ccccc}
   \hline \\[-2ex]
$\hskip0.1in\rho\hskip0.1in$  & & $\Upsilon_{\maxi}$ & $\Upsilon_{\mini}$ & $\Upsilon_{\inte}$ & $\wUps_{\media}$ & $\wUps_{\lin}$  & $\wUps_{\cuad}$
& $\Upsilon_{\maxi}$ & $\Upsilon_{\mini}$ & $\Upsilon_{\inte}$ & $\wUps_{\media}$ & $\wUps_{\lin}$  & $\wUps_{\cuad}$
\\
\hline
& & \multicolumn{6}{c}{\textbf{P1}} & \multicolumn{6}{c}{\textbf{P0}}\\ \hline 
& & \multicolumn{12}{c}{Brownian Motion}\\
\hline 
 1 & Mean & 0.9520 & 0.6963 & 0.9389 & 0.9653 & \textit{0.9892} & \textbf{0.9987} 
 		&  & &  &  &  & 
 \\
  & SD & 0.0083 & 0.0219 & 0.0094 & 0.0057 & 0.0004 & 0.0007 
  		&  & &  &   &  & \\
\hline 
 2 & Mean & 0.9434 & 0.6183 & 0.8965 & 0.9309 & \textit{0.9845} & \textbf{0.9945} 
          & \textit{0.5977} & 0.4025 & 0.4998 & 0.5307 & 0.5465 & \textbf{0.7648} 
 \\
  & SD & 0.0091 & 0.0237 & 0.0133 & 0.0090 & 0.0017 & 0.0022 
       & 0.0239 & 0.0233 & 0.0244 & 0.0136 & 0.0165 & 0.0229
 \\
\hline 
& & \multicolumn{12}{c}{Exponential Variogram}\\
\hline 
 1 & Mean & 0.9200 & 0.7653 & 0.9426 & 0.9616 & \textit{0.9644} &  \textbf{0.9809} 
 			&  & &  &   &  & \\
  & SD & 0.0108 & 0.0191 & 0.0090 & 0.0068 & 0.0051 &  0.0045 
  		&  & &  &   &  & \\
\hline 
 2 & Mean & \textit{0.9520} & 0.5511 & 0.9011 & 0.9259 & 0.9349 &  \textbf{0.9905} 
 	      & \textit{0.7179} & 0.2823 & 0.5000 & 0.5575 & 0.6005 & \textbf{0.9627}
 \\
  & SD & 0.0082 & 0.0242 & 0.0128 & 0.0107 & 0.0088 & 0.0031 
       & 0.0212 & 0.0211 & 0.0244 & 0.0134 & 0.0165 & 0.0069
 \\
\hline
\end{tabular} 
\end{center}
\end{table}


\begin{table}[ht!]
\begin{center}
\footnotesize
\renewcommand{\arraystretch}{1.2}
\setlength{\tabcolsep}{3pt}
\caption{\label{tab:summary-cpc}  Mean and standard deviation of the $\widehat{\AUC}$, under scenario   \textbf{CPC}, which corresponds to processes with different mean functions $\mu_H(t)=0$ and $\mu_D(t)=3\, \sin(\pi  t)$ (\textbf{C11} and \textbf{C21}) or equal means   (\textbf{C10} and \textbf{C20}). In all cases,  $n_H=n_D=300$.} 
\begin{tabular}{ c  cccccc@{\extracolsep{1cm}}    c@{\extracolsep{3pt}}ccccc}
 \hline \\[-2ex]
 & $\Upsilon_{\maxi}$ & $\Upsilon_{\mini}$ & $\Upsilon_{\inte}$ & $\wUps_{\media}$ & $\wUps_{\lin}$  & $\wUps_{\cuad}$
& $\Upsilon_{\maxi}$ & $\Upsilon_{\mini}$ & $\Upsilon_{\inte}$ & $\wUps_{\media}$ & $\wUps_{\lin}$  & $\wUps_{\cuad}$
\\\hline
   & \multicolumn{12}{c}{$\lambda_D=(2,0.30,0.05)\trasp$}\\
\hline 
& \multicolumn{6}{c}{\textbf{C11}} & \multicolumn{6}{c}{\textbf{C10}} \\
\hline
 Mean & 0.9417 & 0.6103 & 0.9099 & 0.9417 & \textit{0.9853} & \textbf{0.9905} 
      & 0.5248 & 0.4723 & 0.4985 & 0.5266 & \textit{0.5291} & \textbf{0.8531}
  \\
 SD & 0.0101 & 0.0257 & 0.0132 & 0.0093 & 0.0049 & 0.0038 
    & 0.0248 & 0.0253 & 0.0247 & 0.0154 & 0.0174 & 0.0160
\\
\hline 
   & \multicolumn{12}{c}{$\lambda_D=(0.30,2, 0.05)\trasp$}\\
\hline
& \multicolumn{6}{c}{\textbf{C21}} & \multicolumn{6}{c}{\textbf{C20}} \\
\hline
 Mean & \textit{0.9922} & 0.5355 & 0.9856 & 0.9810 & 0.9881 & \textbf{0.9966} 
      & \textit{0.7340} & 0.2647 & 0.4991 & 0.5296 & 0.5295 & \textbf{0.9090}
 \\
 SD & 0.0024 & 0.0250 & 0.0036 & 0.0046 & 0.0031 & 0.0013 
    & 0.0214 & 0.0207 & 0.0239 & 0.0166 & 0.0166 & 0.0126
 \\
\hline
\end{tabular} 
\end{center}
\end{table}

\begin{table}[ht!]
\begin{center}
\footnotesize
\renewcommand{\arraystretch}{1.2}
\setlength{\tabcolsep}{3pt}
\caption{\label{tab:summary-distintas}  Mean and standard deviation of the $\widehat{\AUC}$, under scenario   \textbf{DIFF},  when different covariance operators are considered. Under \textbf{D11} and \textbf{D21},  $\mu_H(t)=0$ and $\mu_D(t)=2\, \sin(\pi  t)$, while for the schemes \textbf{D10} and \textbf{D20} $\mu_D=\mu_H\equiv 0$. In all cases,  $n_H=n_D=300$.} 
 \begin{tabular}{c  cccccc@{\extracolsep{1cm}}    c@{\extracolsep{3pt}}ccccc}
  \hline \\[-2ex]
 & $\Upsilon_{\maxi}$ & $\Upsilon_{\mini}$ & $\Upsilon_{\inte}$ & $\wUps_{\media}$ & $\wUps_{\lin}$  & $\wUps_{\cuad}$
& $\Upsilon_{\maxi}$ & $\Upsilon_{\mini}$ & $\Upsilon_{\inte}$ & $\wUps_{\media}$ & $\wUps_{\lin}$  & $\wUps_{\cuad}$
\\\hline
 & \multicolumn{6}{c}{\textbf{D11}} & \multicolumn{6}{c}{\textbf{D10}}\\
   \hline
 Mean &  0.9607 & 0.6956 & 0.9454 & \textit{0.9695} & \textbf{0.9988}   & \textbf{0.9988} 
  	  &  0.5048 & 0.4953 & 0.4996 & 0.5314 & \textit{0.5486} & \textbf{0.5989}
 \\  
 SD &   0.0072 & 0.0221 & 0.0088 & 0.0053 & 0.0007   & 0.0007
    &   0.0244 & 0.0238 & 0.0242 & 0.0130 & 0.0163 & 0.0187 
 \\ 
\hline
  & \multicolumn{6}{c}{\textbf{D21}} & \multicolumn{6}{c}{\textbf{D20}}\\
\hline
 Mean &   0.7370 & 0.9123 & 0.9407 & 0.9634 & \textit{0.9844}   & \textbf{1.0000}
      &   0.1486 & 0.8512 & 0.4996 & 0.5491 & \textit{0.5864} & \textbf{1.0000} 
  \\ 
 SD &   0.0207 & 0.0118 & 0.0092 & 0.0064 & 0.0042    & 0.0000
    & 0.0154 & 0.0159 & 0.0242 & 0.0140 & 0.0180 & 0.0002
\\
\hline
\end{tabular} 
\end{center}
\end{table}

\begin{figure}[ht!]
 \begin{center}
 \footnotesize
 \renewcommand{\arraystretch}{0.2}
\begin{tabular}{cc}
   Brownian Motion  &  Exponential Variogram \\[-2ex]
    \multicolumn{2}{c}{\textbf{P1}}\\[-2ex]
 \includegraphics[scale=0.35]{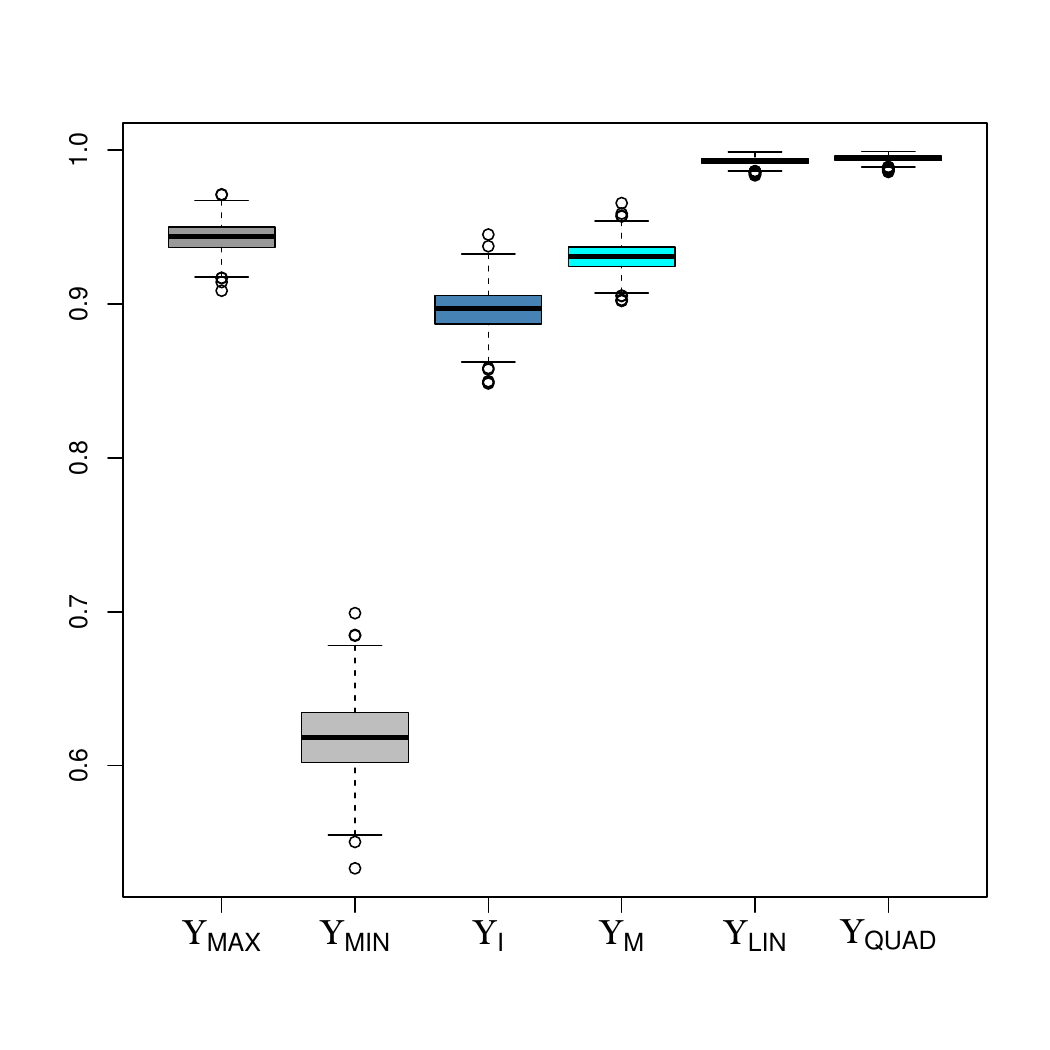}
& \includegraphics[scale=0.35]{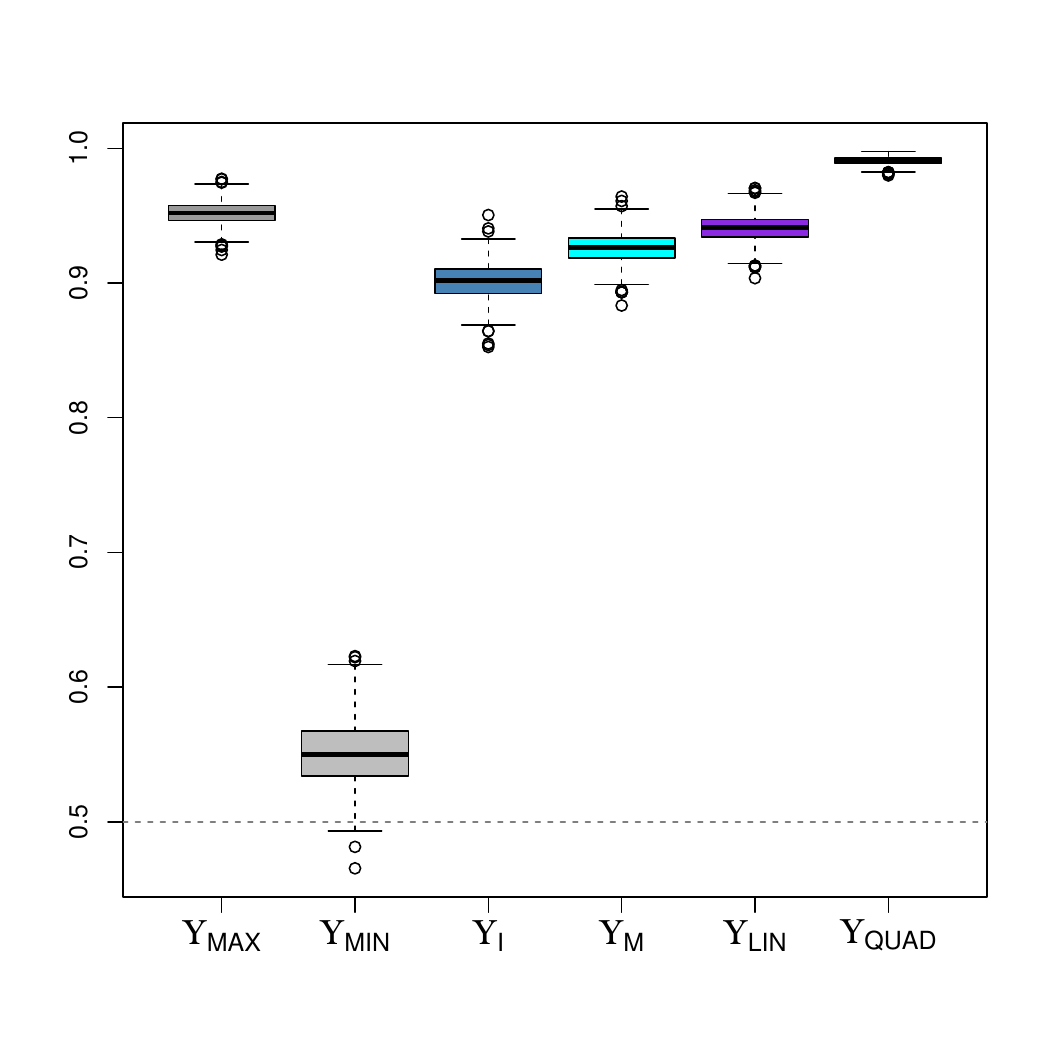}\\[-2ex]

   \multicolumn{2}{c}{\textbf{P0}}\\[-2ex]
 \includegraphics[scale=0.35]{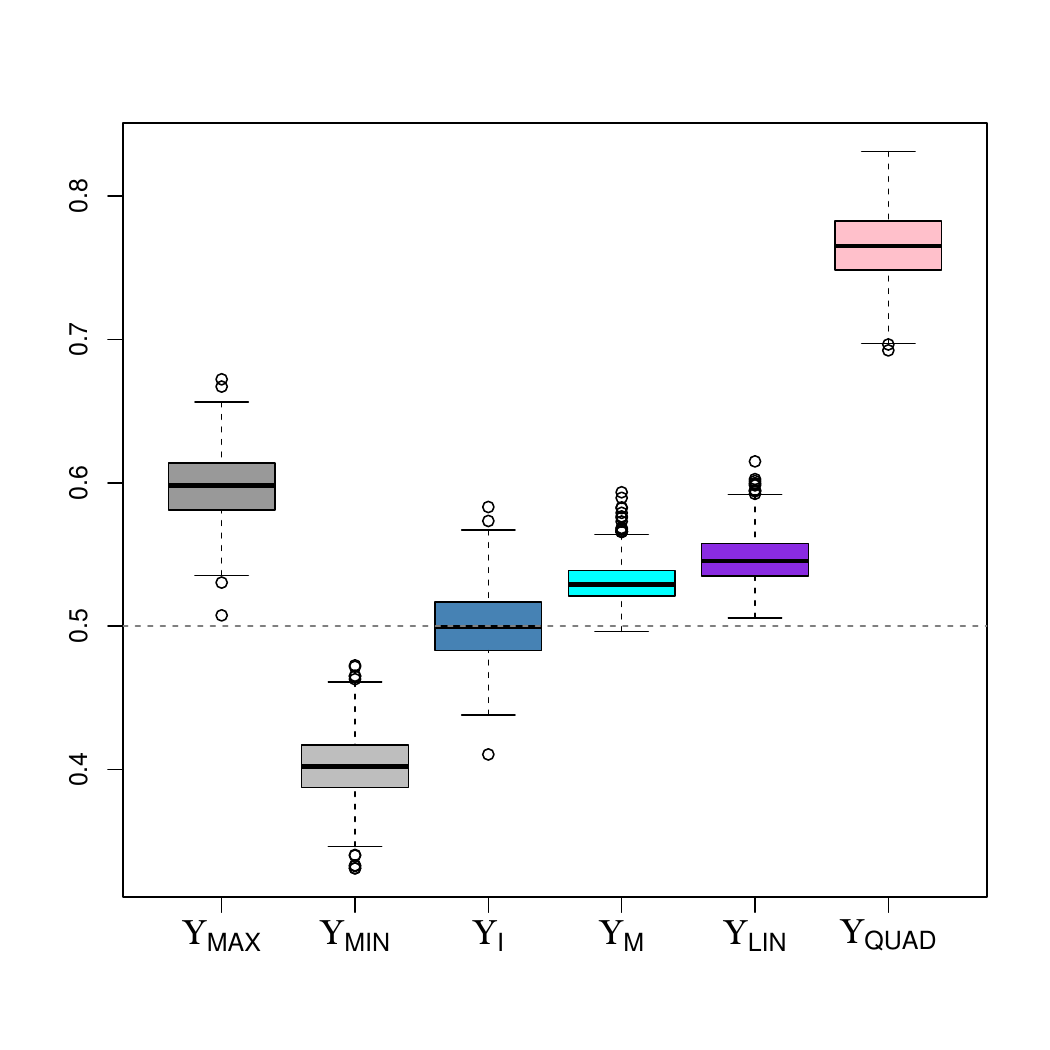}
& \includegraphics[scale=0.35]{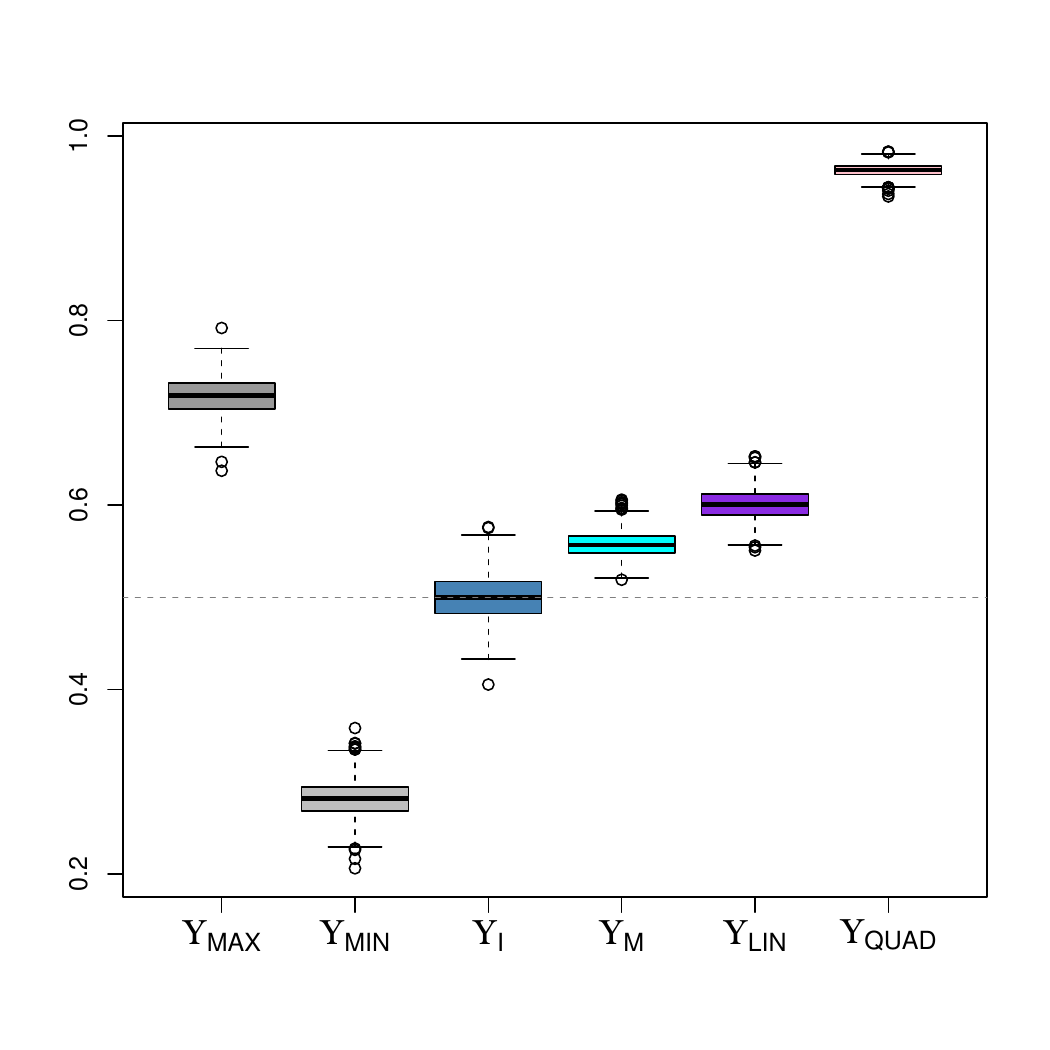}
\end{tabular}
\vskip-0.1in
\caption{Boxplots of the estimators of the AUC under scenario  \textbf{PROP} with  $\rho=2$.  The horizontal dashed line, when appearing, indicates 0.5.}
\label{fig:boxplot-PROP2}
\end{center} 
\end{figure}

  Figures \ref{fig:boxplot-PROP2}, \ref{fig:boxplot-CPC} and  \ref{fig:boxplot-DIFF}   present  the  boxplots of the AUC estimators, under the proportional model, the \textsc{fcpc} one and under scheme \textbf{DIFF}, respectively. In particular, Figures \ref{fig:boxplot-CPC} and  \ref{fig:boxplot-DIFF}   highlight the performance differences as the models vary. It is evident from these plots that scheme \textbf{D10} is the more challenging one and only for $\wUps_{\cuad}$ most estimators are larger than 0.55.    There are several simulation scenarios where the obtained AUCs are clearly below 0.5. For example, $\Upsilon_{\mini}$ achieves $0.2823$ and $0.2647$ under \textbf{P0}  (Exponential Variogram, $\rho=2$) and \textbf{C20}, respectively. This means in fact that, if the roles of the healthy and diseased populations are interchanged, the corresponding rule would achieve AUCs of $1-0.2823$ and $1-0.2647$, respectively. These values are similar to the ones of $\Upsilon_{\maxi}$ under the same simulation scenarios. Analogous comments can be done for $\Upsilon_{\maxi}$ under \textbf{D20}, which yields an AUC of $0.1486$.   In summary, in most scenarios, the quadratic rule outperforms the considered competitors.

\begin{figure}[ht!]
 \begin{center}
 \footnotesize
 \renewcommand{\arraystretch}{0.2}
\begin{tabular}{cc}
 $\lambda_D=(2,0.30,0.05)\trasp$ & $\lambda_D=(0.30,2,0.05)\trasp$\\ 
 
  \textbf{C11}  &   \textbf{C21}   \\[-2ex]
 
 \includegraphics[scale=0.35]{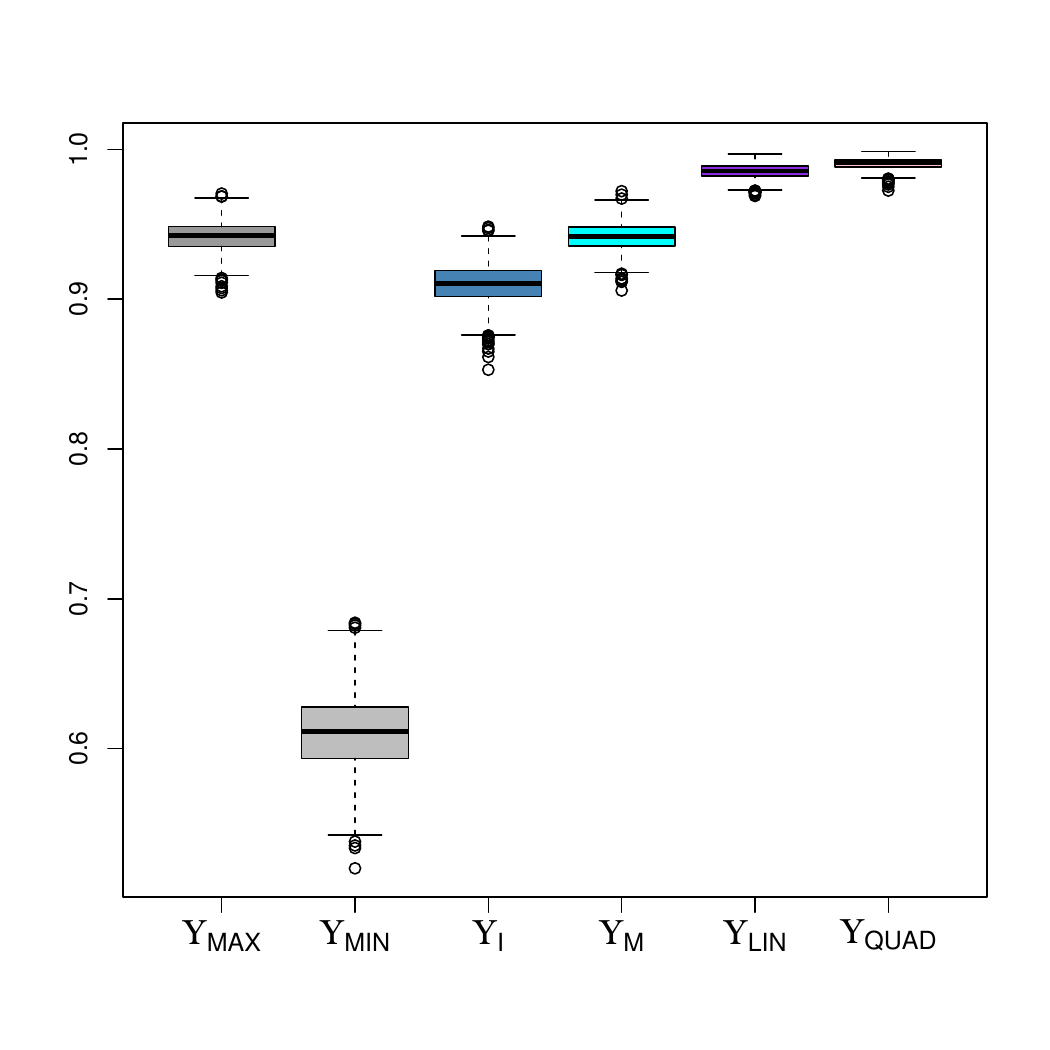}
& 
 \includegraphics[scale=0.35]{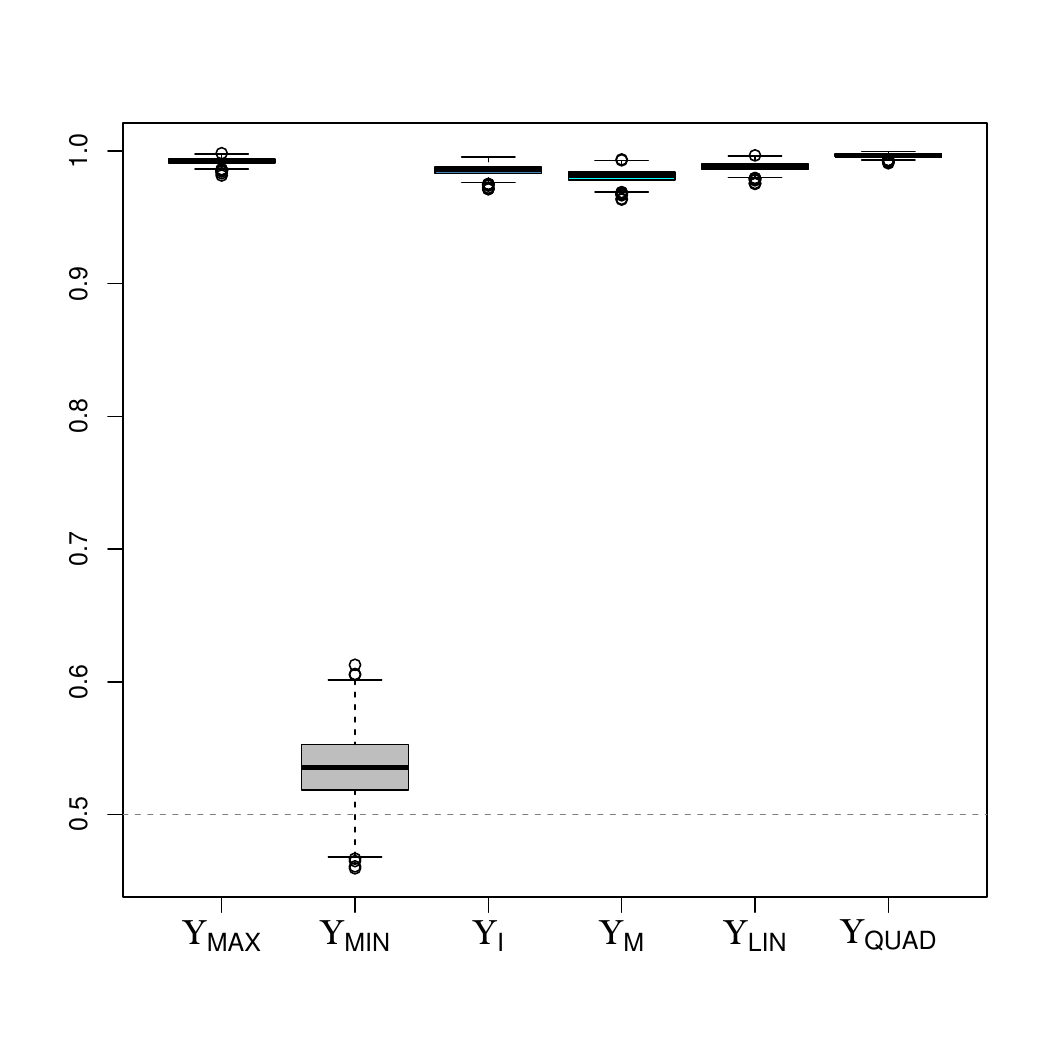}\\[-2ex]

 \textbf{C10} & \textbf{C20} \\[-2ex]
 \includegraphics[scale=0.35]{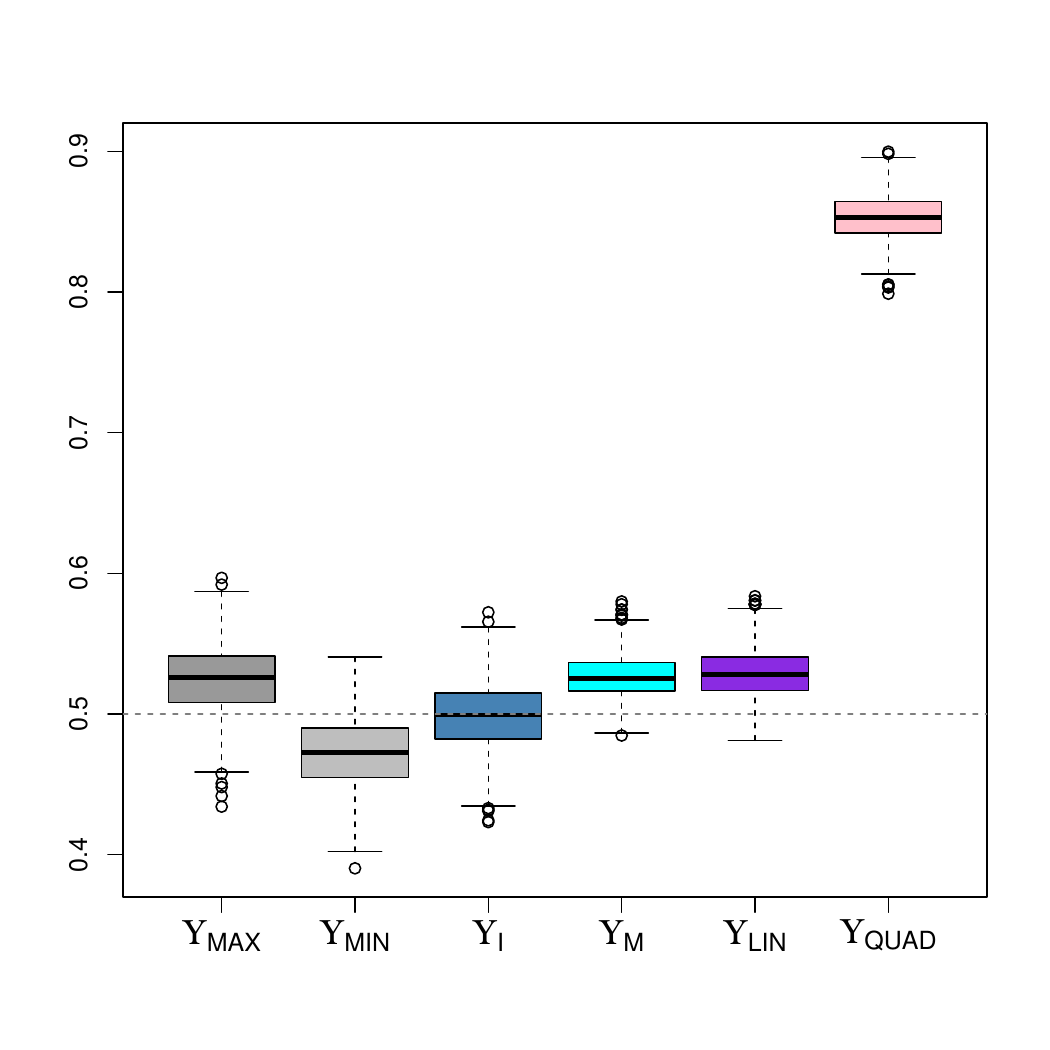}
& \includegraphics[scale=0.35]{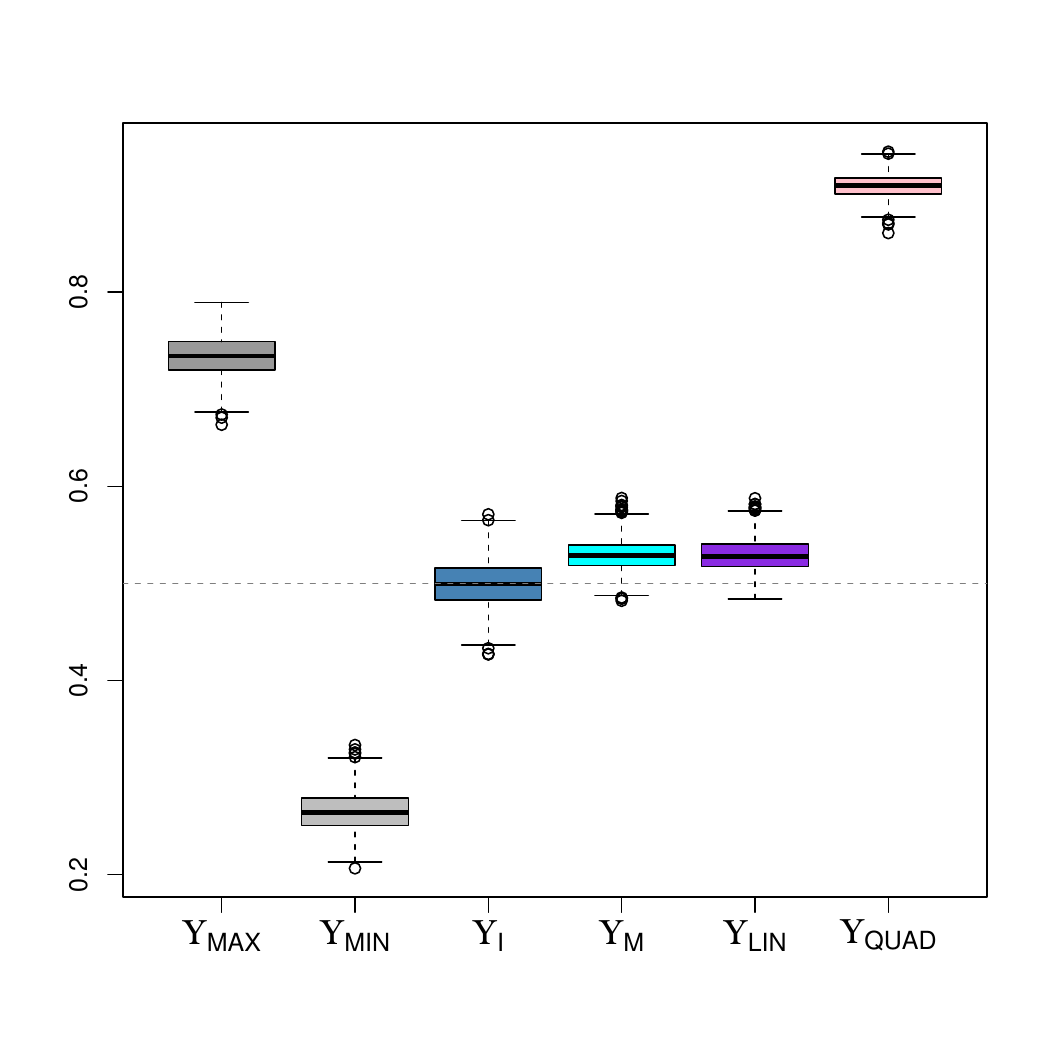}
\end{tabular}
\vskip-0.1in
\caption{Boxplots of the estimators of the AUC under scenario \textbf{CPC}. The horizontal dashed line, when appearing, indicates 0.5.}
\label{fig:boxplot-CPC}
\end{center} 
\end{figure}

\begin{figure}[ht!]
 \begin{center}
 \footnotesize
 \renewcommand{\arraystretch}{0.2}
\begin{tabular}{cc}
  
  \textbf{D11}  &   \textbf{D21}   \\[-2ex]
 
 \includegraphics[scale=0.35]{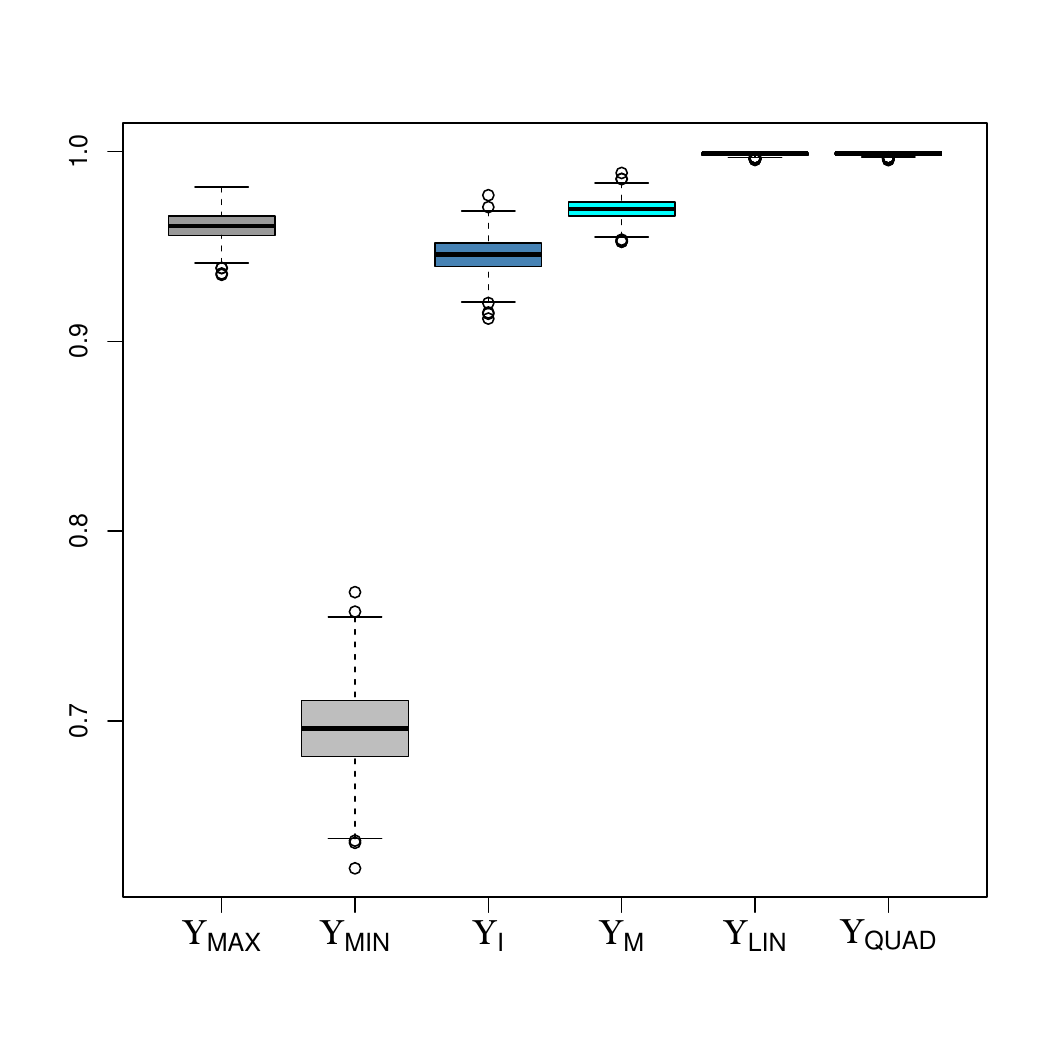}
& 
 \includegraphics[scale=0.35]{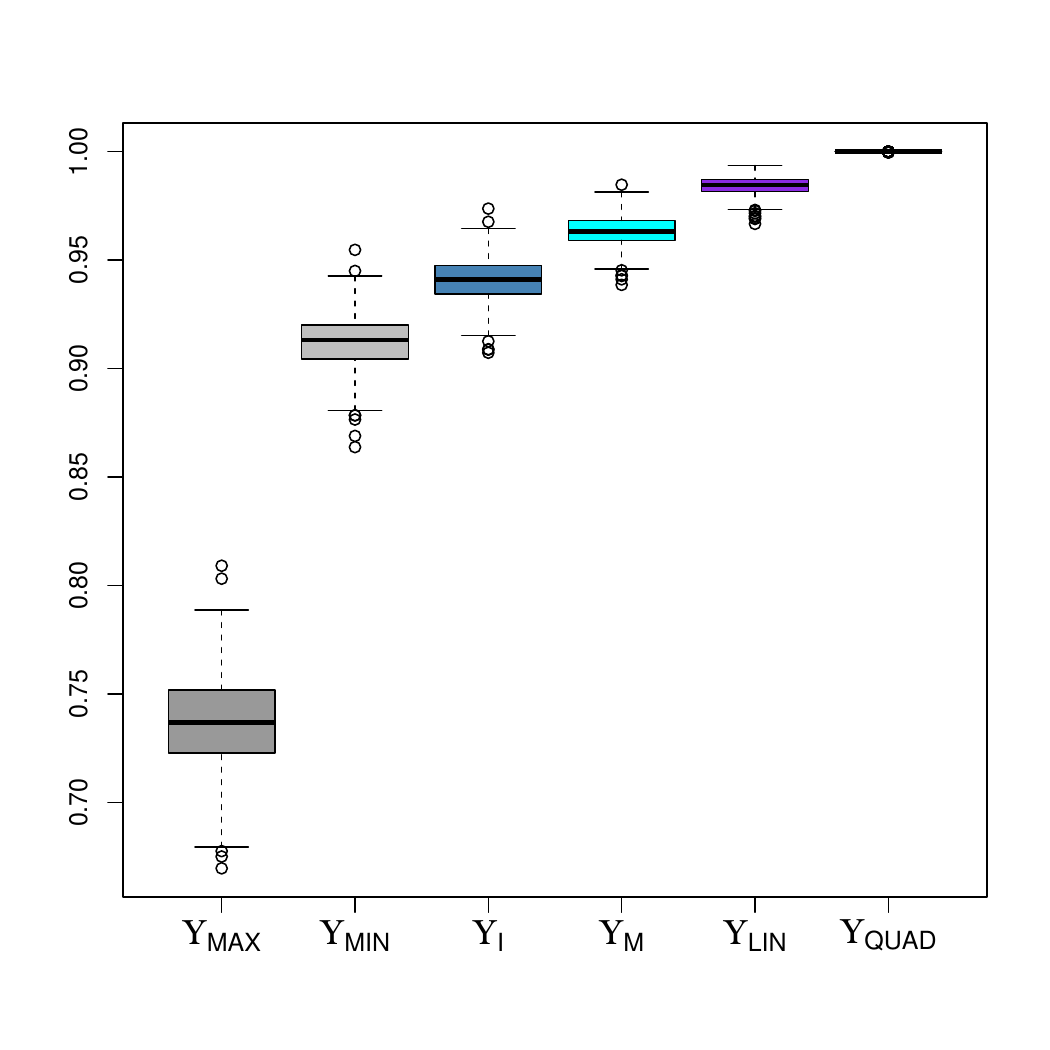}\\[-2ex]

 \textbf{D10} & \textbf{D20} \\[-2ex]
 \includegraphics[scale=0.35]{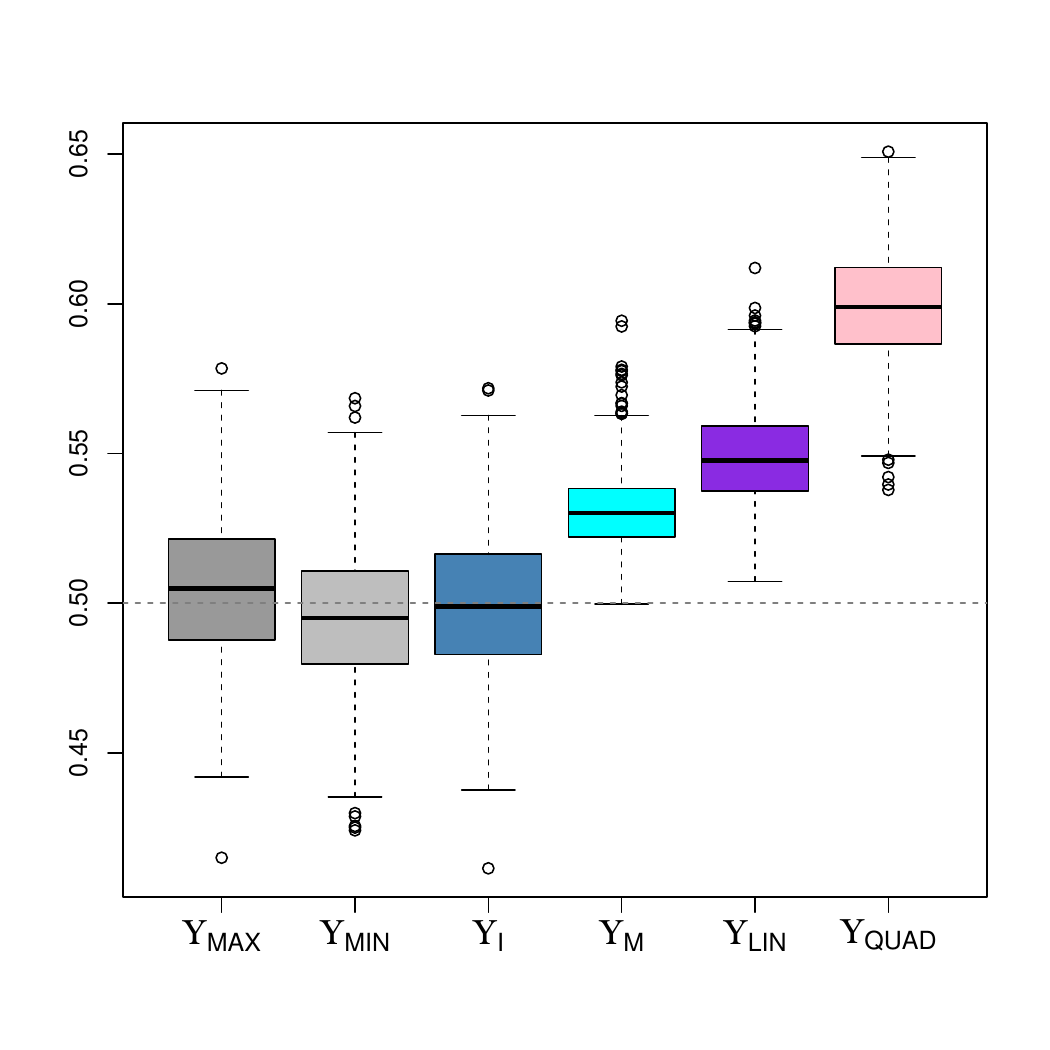}
& \includegraphics[scale=0.35]{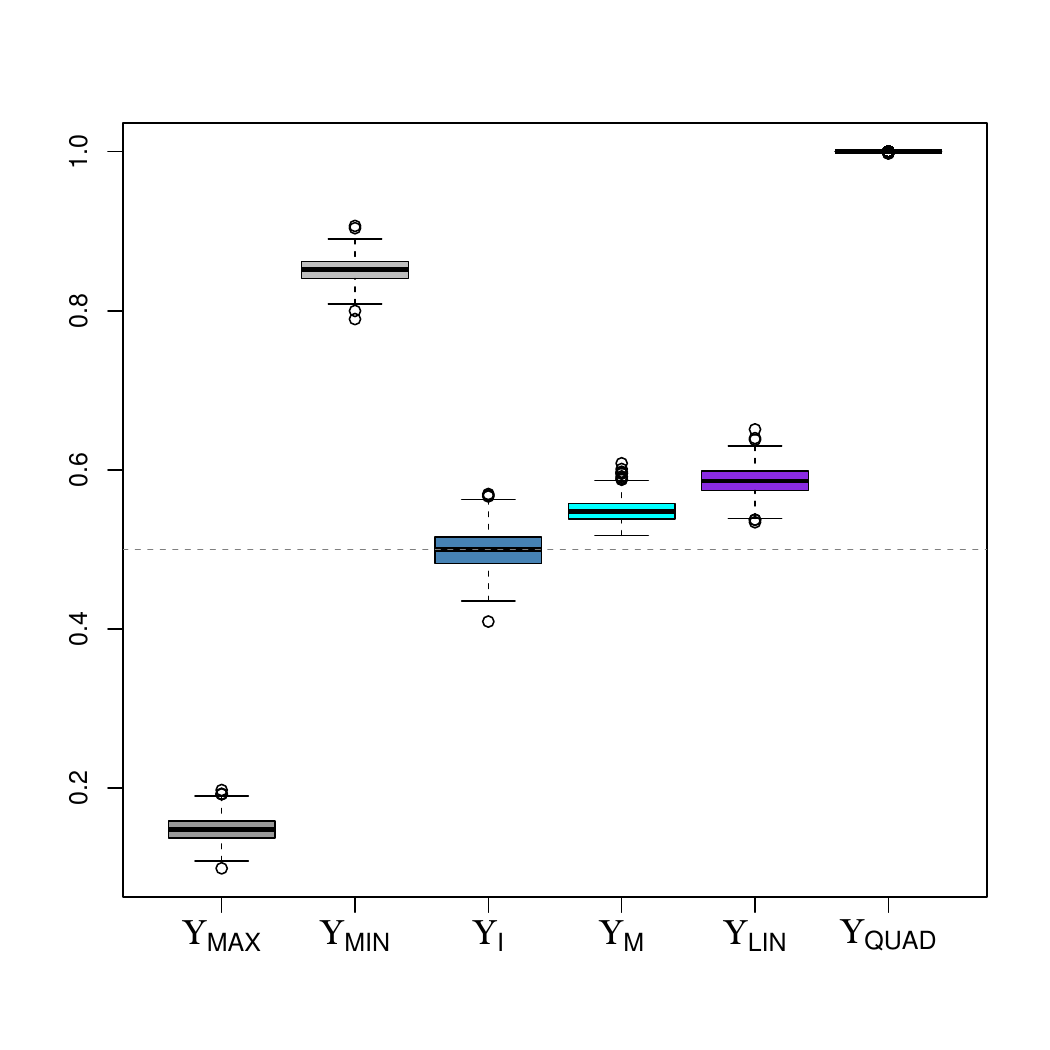}
\end{tabular}
\vskip-0.1in
\caption{Boxplots of the estimators of the AUC under scheme \textbf{DIFF}. The horizontal dashed line, when appearing, indicates 0.5.}
\label{fig:boxplot-DIFF}
\end{center} 
\end{figure}

To visualize the performance of the different estimators of the ROC curve, we used  the functional boxplots, as defined in \citet{sun:genton:2011}. The functional boxplots of the $n_R = 1000$ realizations of the different estimators of the ROC curve   under models \textbf{P0} and \textbf{P1} with $\rho=2$ are displayed in Figures \ref{fig:propor:Brownian} and \ref{fig:propor:varexp}.   In these plots, the magenta central box represents the 50\% inner band of curves, the solid black line indicates the central (deepest) function and the dotted red lines indicate outlying curves (in this case: outlying estimates $\widehat{\ROC}_j$  for some $1 \le j \le 1000$). The blue lines correspond  to the envelopes, that is, the whiskers in the univariate boxplot and demarcate the limits for a curve to be identified as atypical.  The diagonal, in gold color, is shown for comparison purposes.
Similarly, Figures \ref{fig:cpc:C1} and \ref{fig:cpc:C2}   display  the functional boxplots under the \textsc{fcpc} model and  Figures \ref{fig:DIFF-D1} and \ref{fig:DIFF-D2} depict the corresponding ones under schemes  \textbf{D1} and \textbf{D2}, respectively.   We do not show the results for $\Upsilon_{\mini}$ since it corresponds to the procedure with the worst performance.  

The behaviour observed in the functional boxplots is consistent with that of the AUC estimators. When the populations have equal mean, for the linear indexes,  $\Upsilon_{\inte}$,  $\wUps_{\media}$ and $\wUps_{\lin}$, the central region containing the 50\% deepest ROC curve estimators includes or crosses the identity function, except under the proportional model when considering the Exponential Variogram where the ROC curve estimators associated to $\wUps_{\media}$ and $\wUps_{\lin}$ exceed the diagonal for values of $p$ smaller than 0.4. The worst scenarios for these rules seem to be the \textsc{fcpc} model  under \textbf{C10}, when $\lambda_D=(2,0.30,0.05)\trasp$, and under scheme \textbf{D20} for which $X_D$  follows   a Brownian motion and $X_H$  an Exponential Variogram. The quadratic index results in the best discriminating index for the considered simulation schemes, providing a perfect rule under \textbf{D2}.

\begin{figure}[ht!]
 \begin{center}
 \footnotesize
 \renewcommand{\arraystretch}{0.2}
\begin{tabular}{p{2cm} cc}
 & \textbf{P1} &  \textbf{P0} \\[-2ex]
$\Upsilon_{\maxi}$  &
 \raisebox{-.5\height}{\includegraphics[scale=0.25]{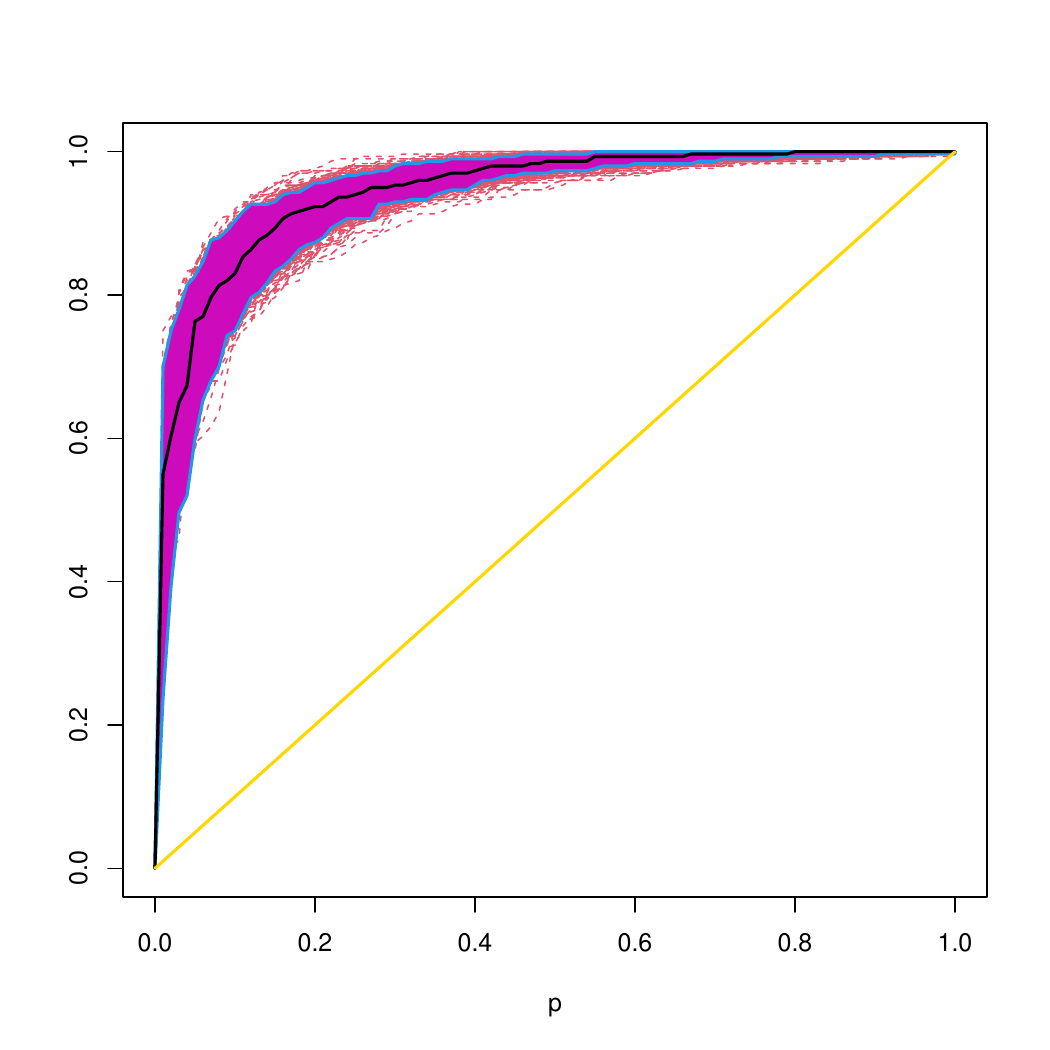}}
& \raisebox{-.5\height}{\includegraphics[scale=0.25]{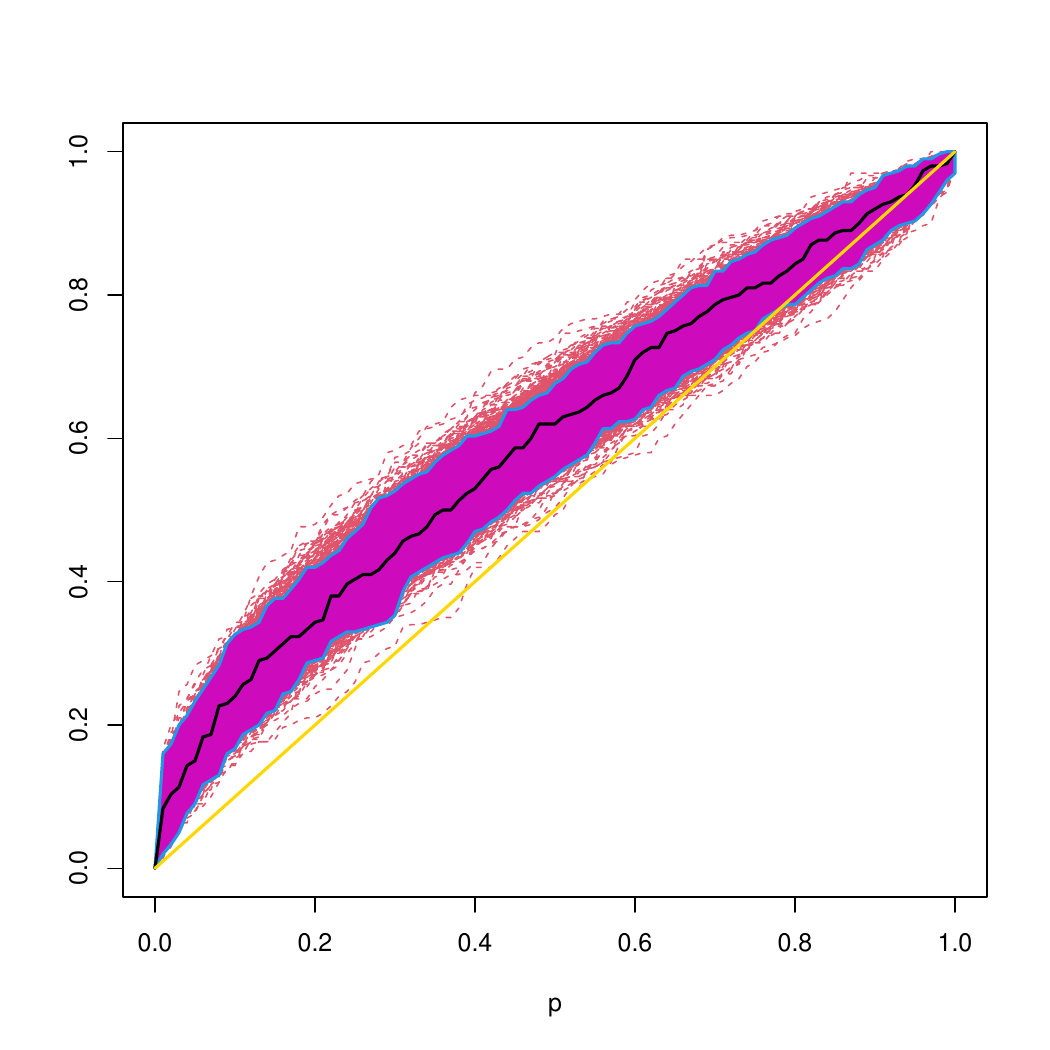}}
\\[-4ex]
$\Upsilon_{\inte}$  &
 \raisebox{-.5\height}{\includegraphics[scale=0.25]{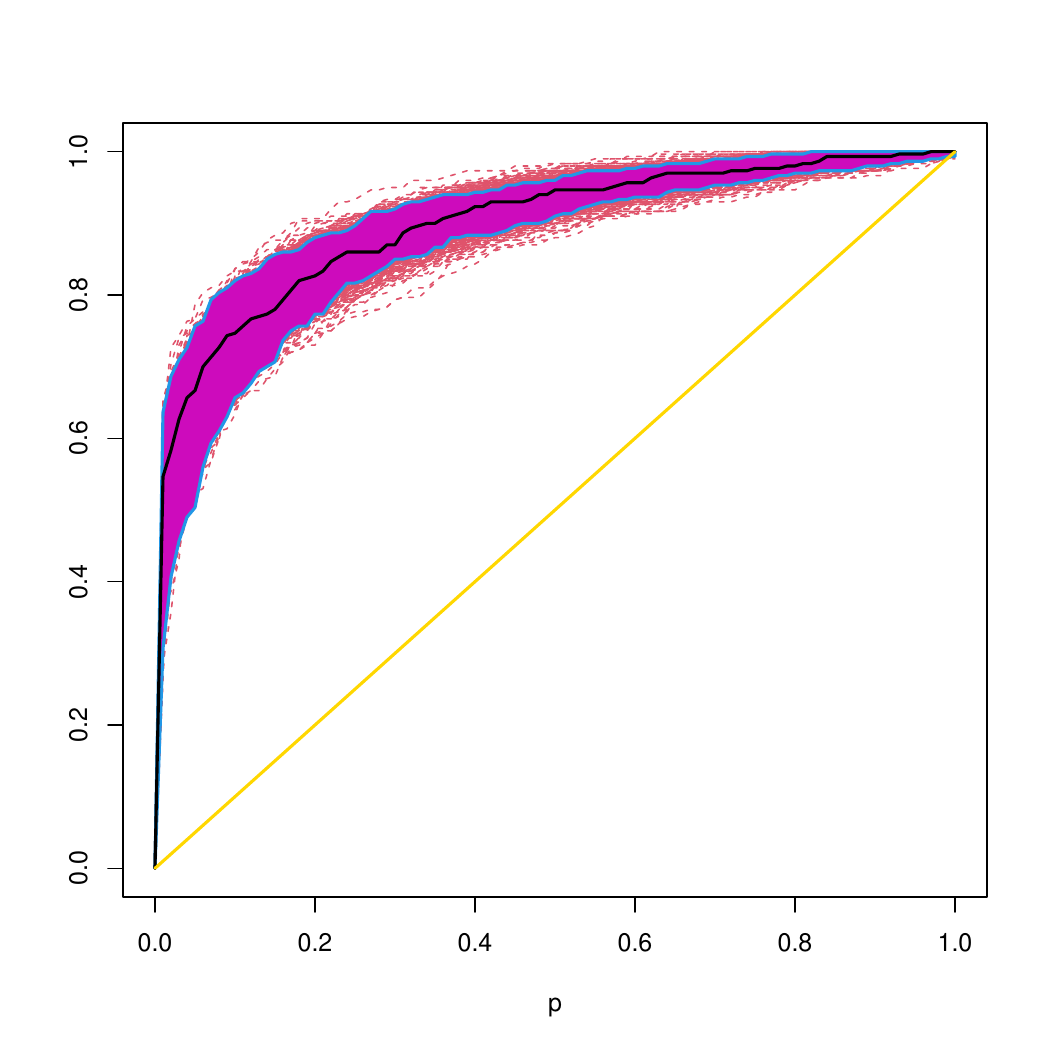}}
& \raisebox{-.5\height}{\includegraphics[scale=0.25]{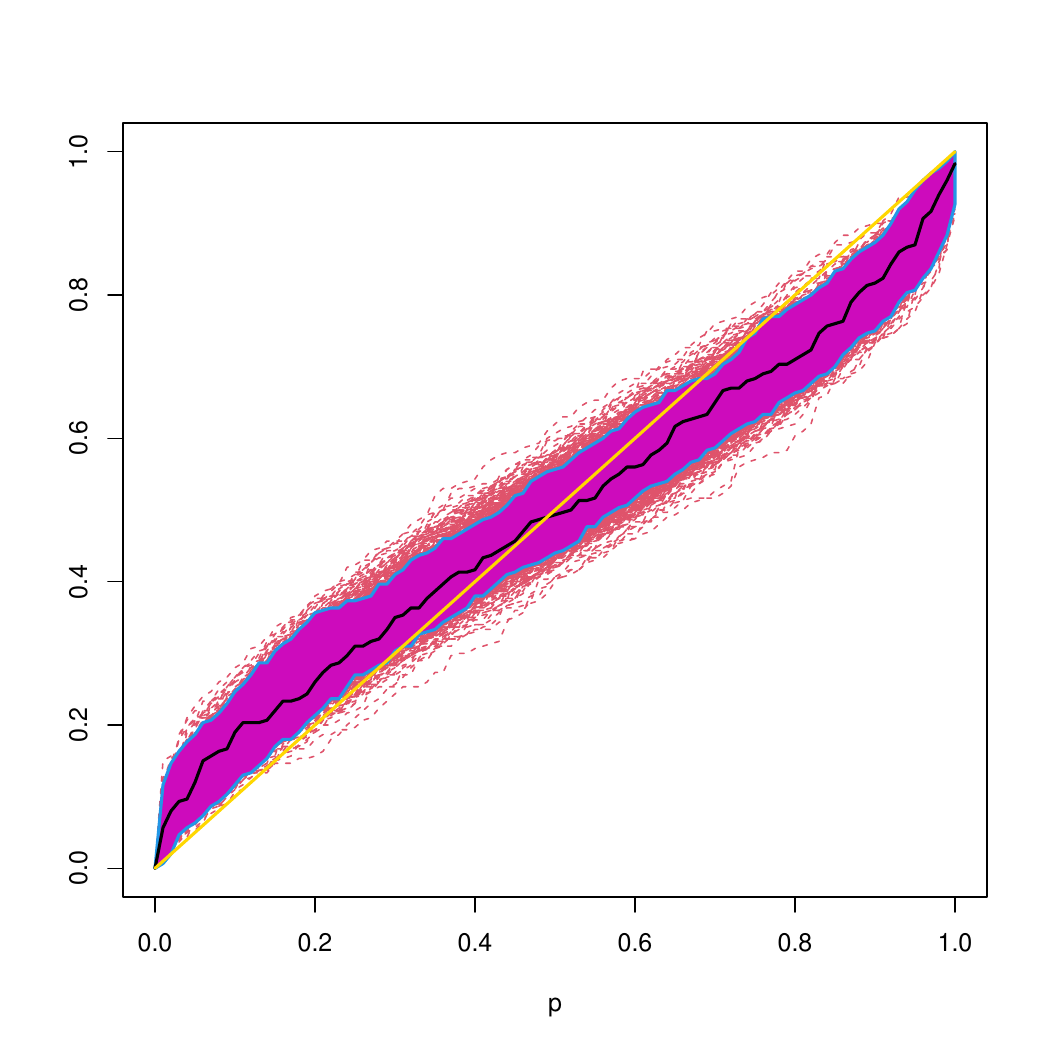}}
 \\[-4ex]
    
$\wUps_{\media}$ &
\raisebox{-.5\height}{\includegraphics[scale=0.25]{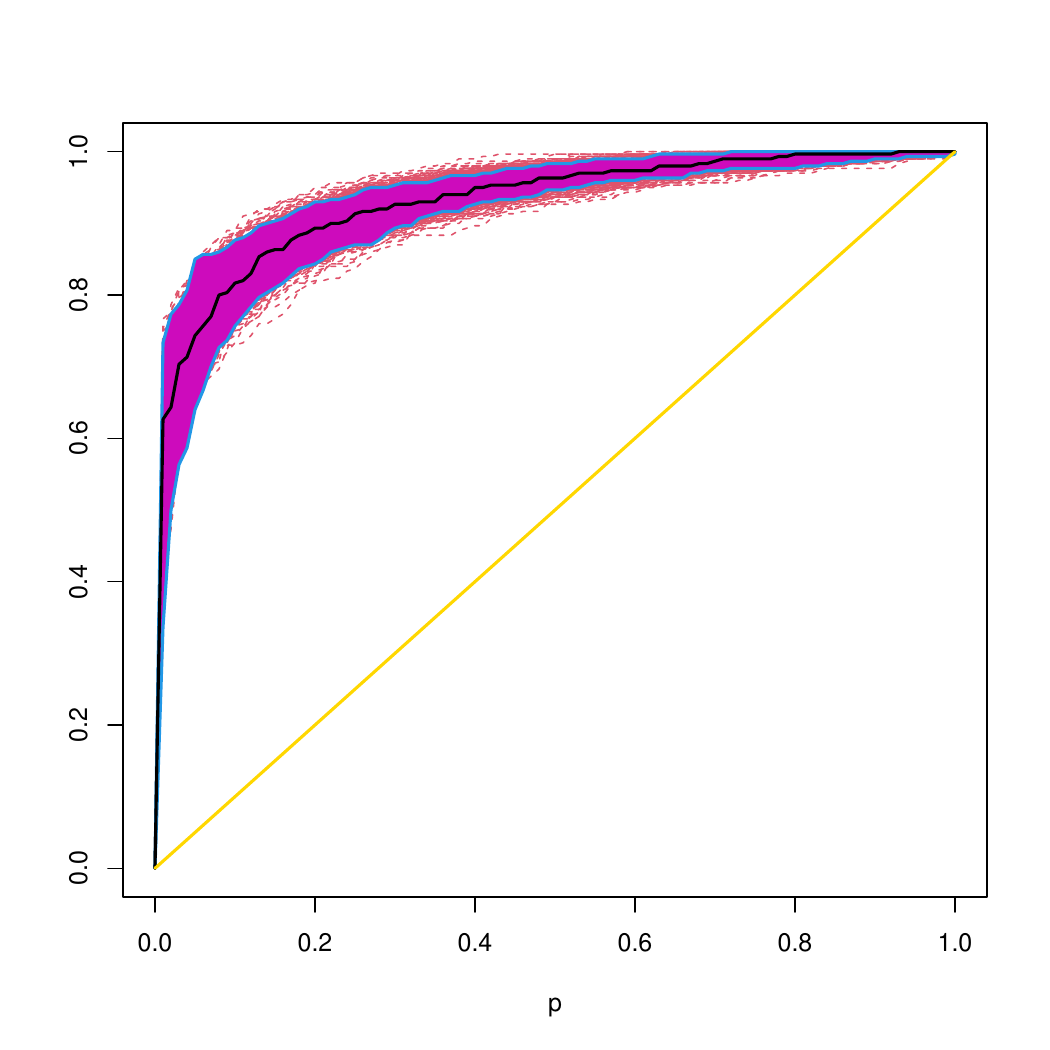}}
& \raisebox{-.5\height}{\includegraphics[scale=0.25]{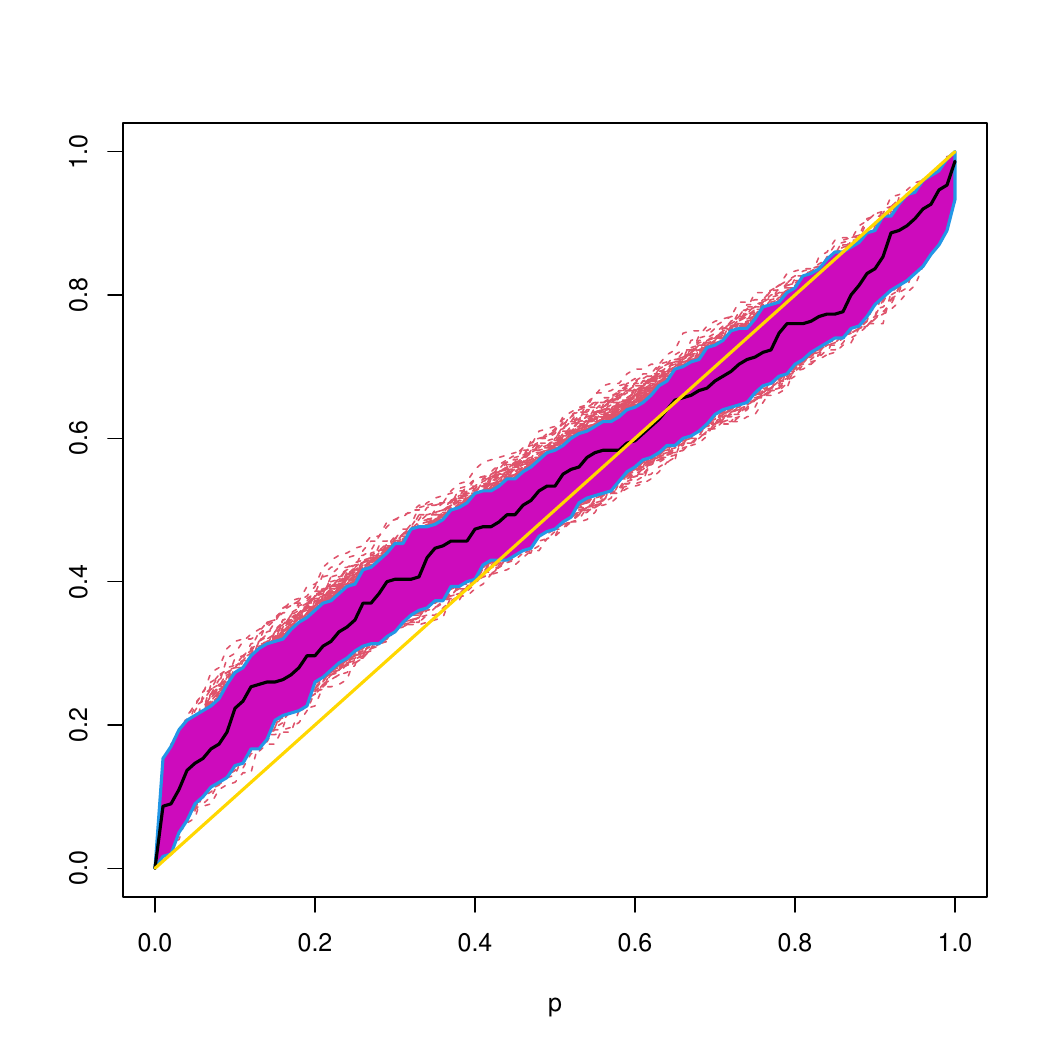}}
\\[-4ex]

$\wUps_{\lin}$  & 
\raisebox{-.5\height}{\includegraphics[scale=0.25]{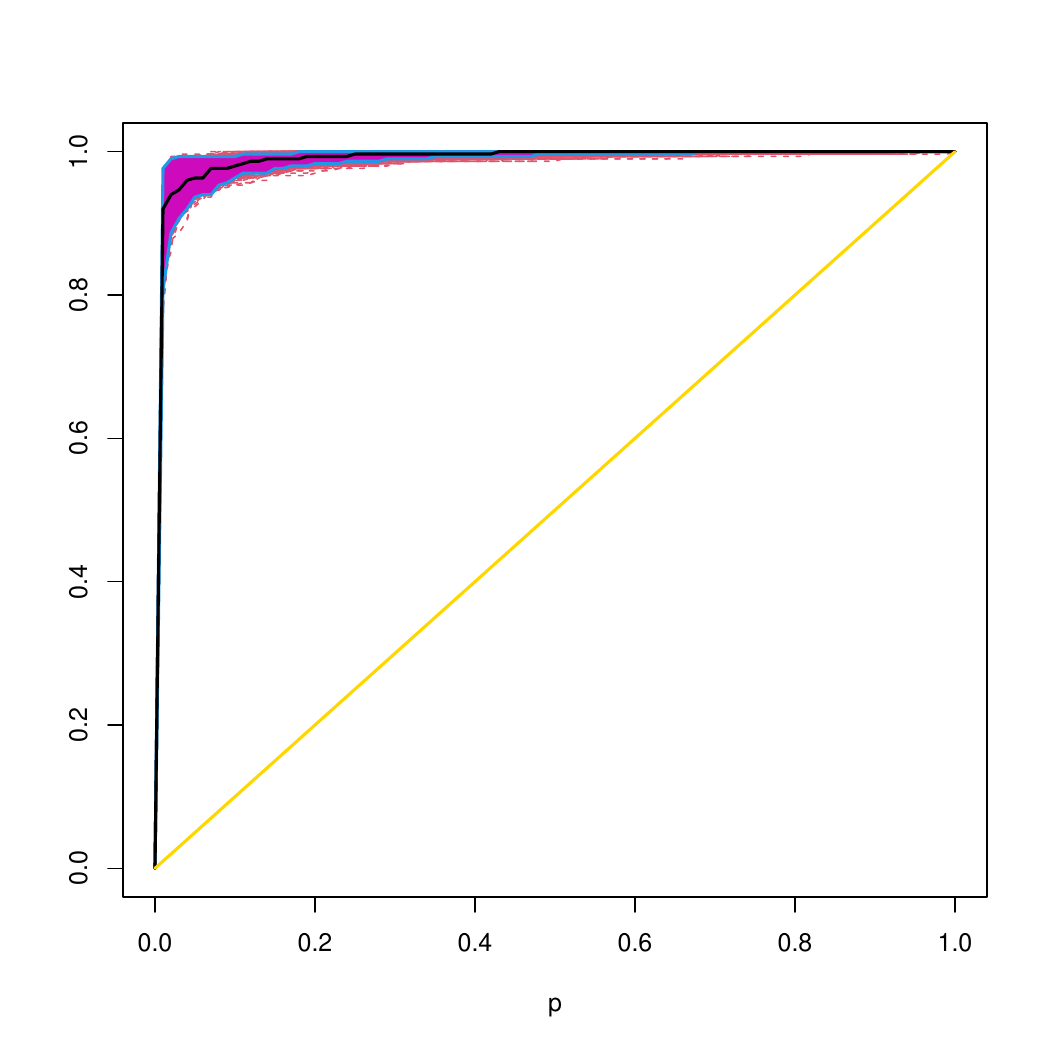}}
& \raisebox{-.5\height}{\includegraphics[scale=0.25]{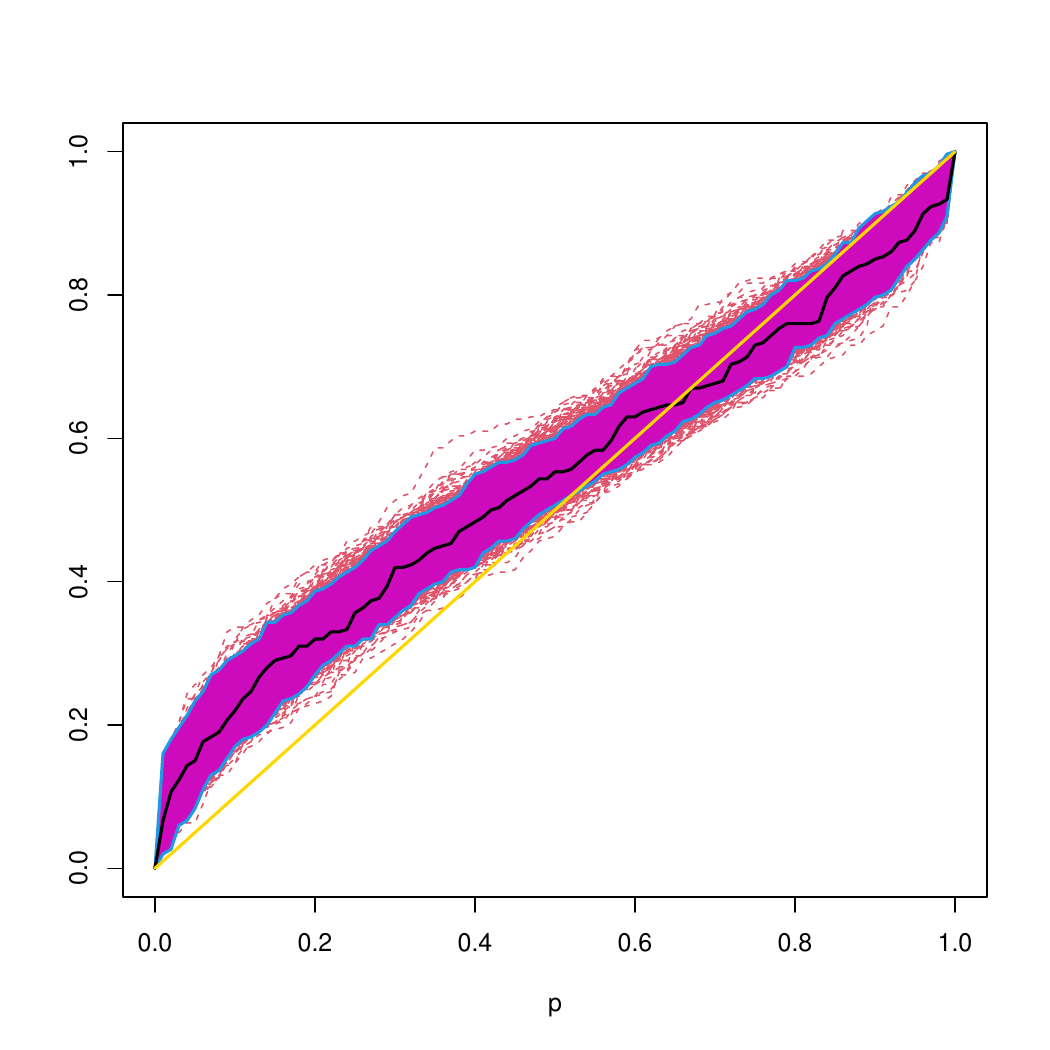}}
\\[-4ex]

$\wUps_{\cuad}$ & 
\raisebox{-.5\height}{\includegraphics[scale=0.25]{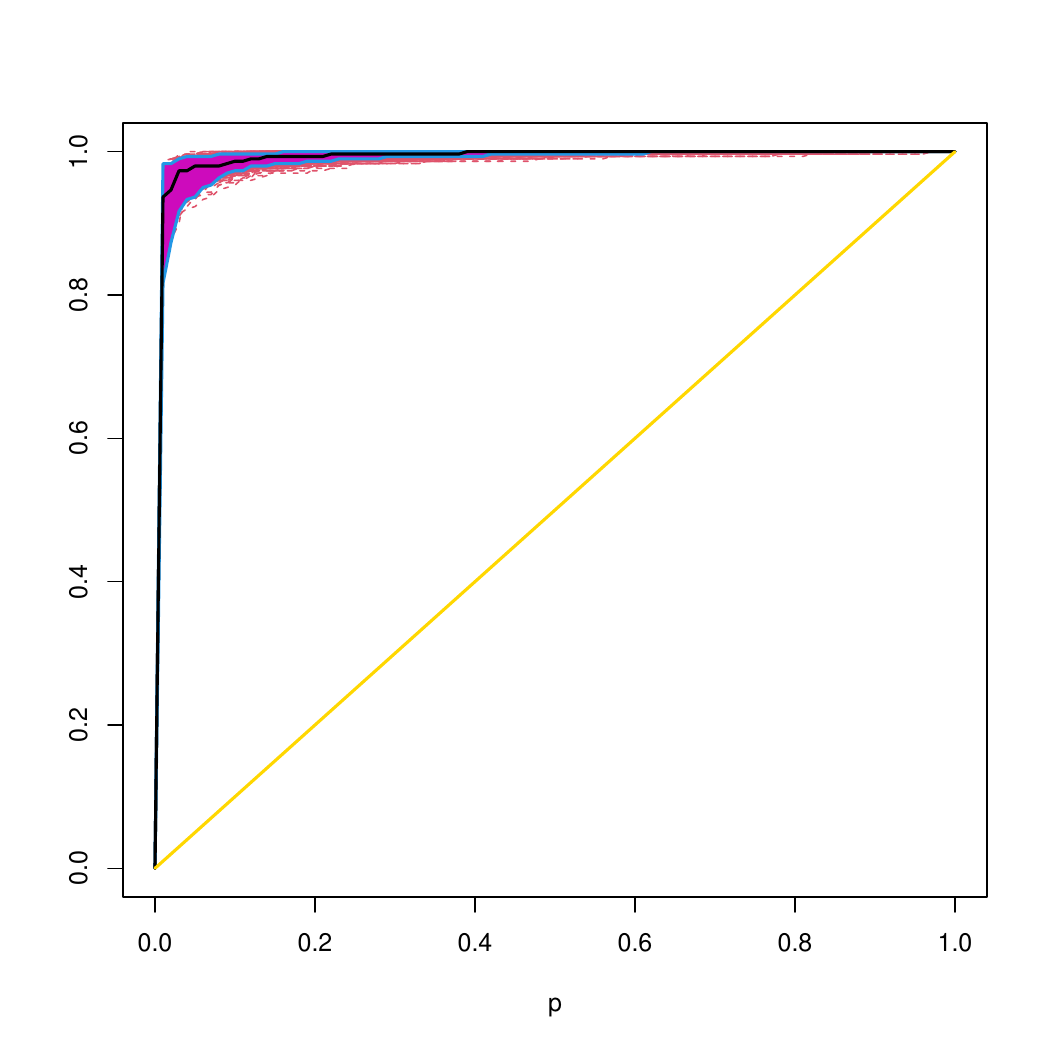}}
& \raisebox{-.5\height}{\includegraphics[scale=0.25]{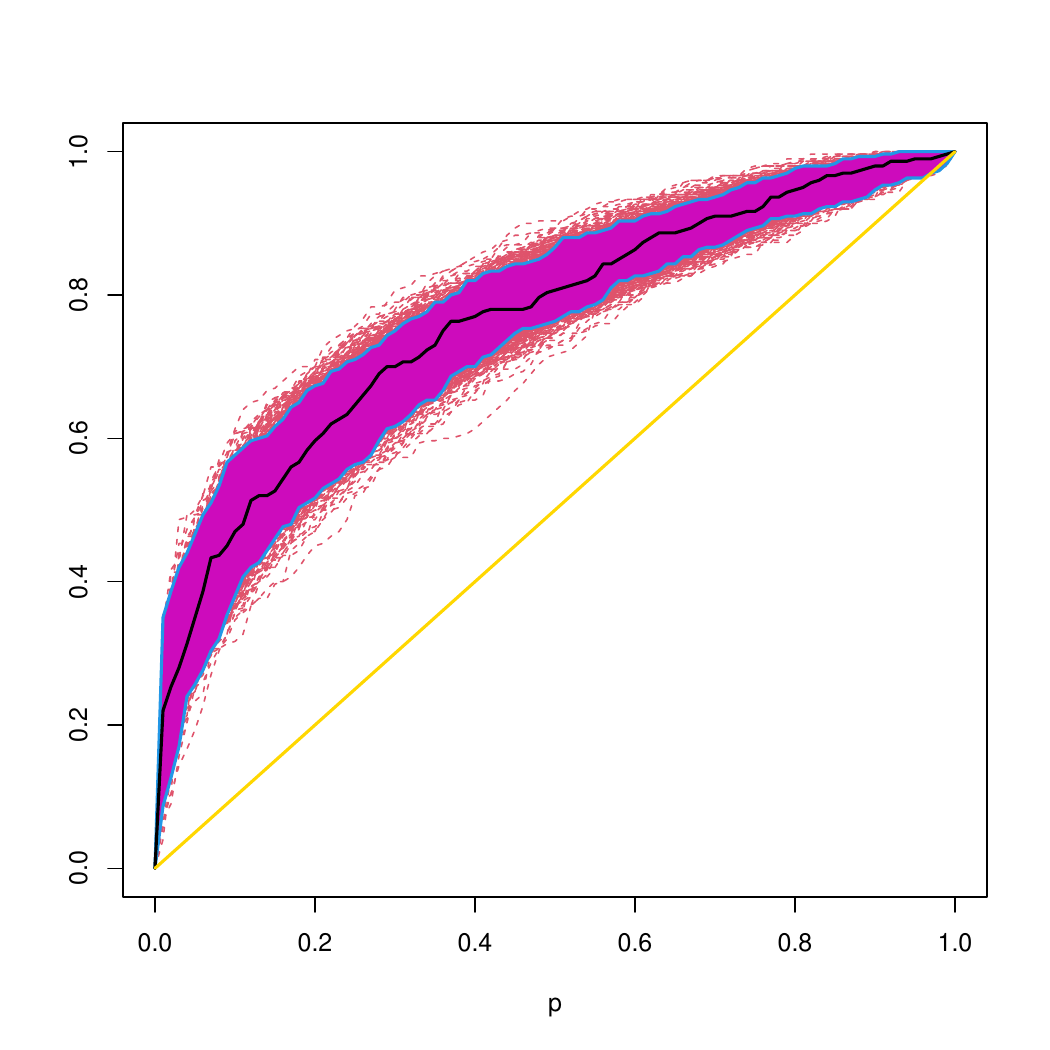}}

\end{tabular}
\caption{Functional boxplots of the estimators $\widehat{\ROC}$ under scenario \textbf{PROP} with  $\rho=2$ for the Brownian motion setting. Rows correspond to discriminating indexes, while columns to  $\mu_D(t)=2\, \sin(\pi  t)$ and $\mu_H=0$.}
\label{fig:propor:Brownian}
\end{center} 
\end{figure}

\begin{figure}[ht!]
 \begin{center}
 \footnotesize
 \renewcommand{\arraystretch}{0.2}

\begin{tabular}{p{2cm} cc}
 & \textbf{P1} &  \textbf{P0} \\[-2ex]  
$\Upsilon_{\maxi}$ &
 \raisebox{-.5\height}{\includegraphics[scale=0.25]{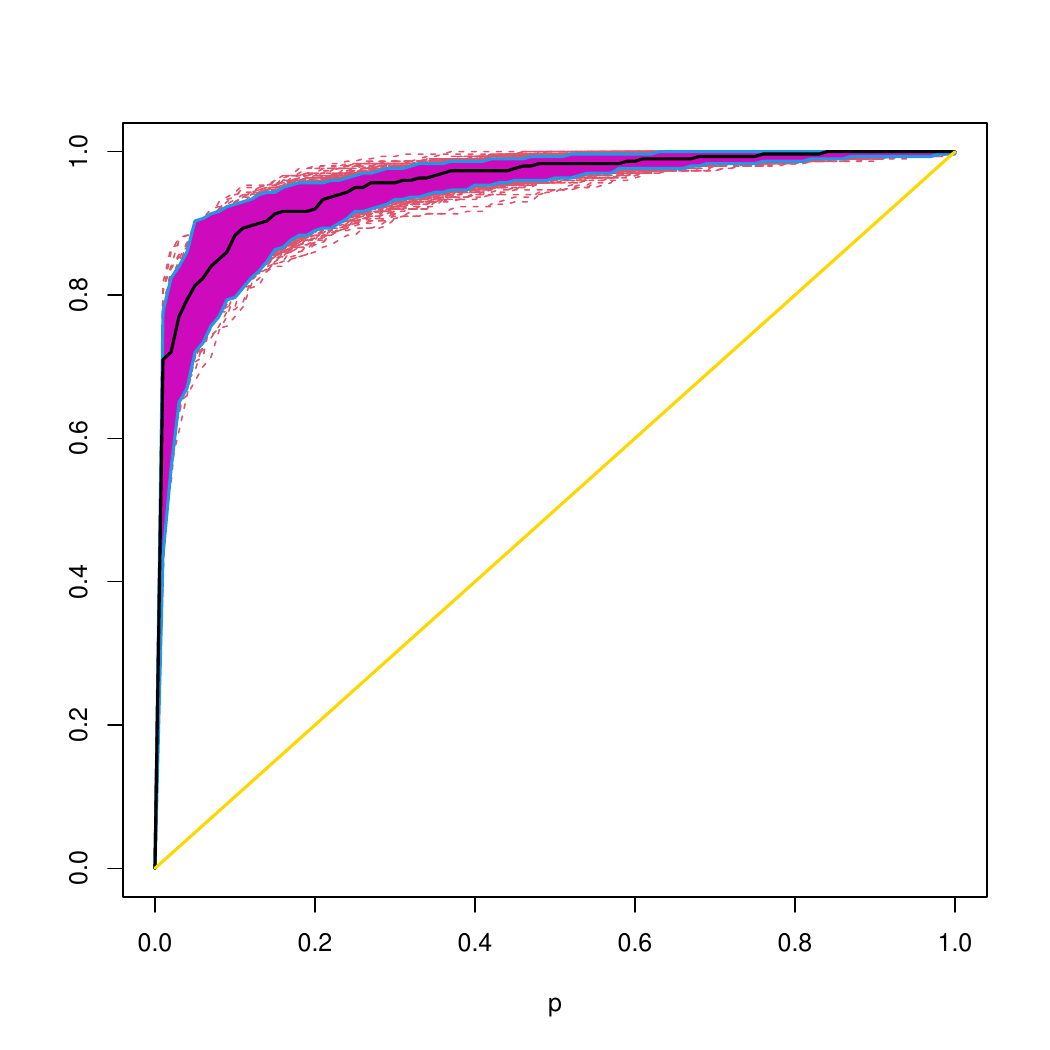}}
& \raisebox{-.5\height}{\includegraphics[scale=0.25]{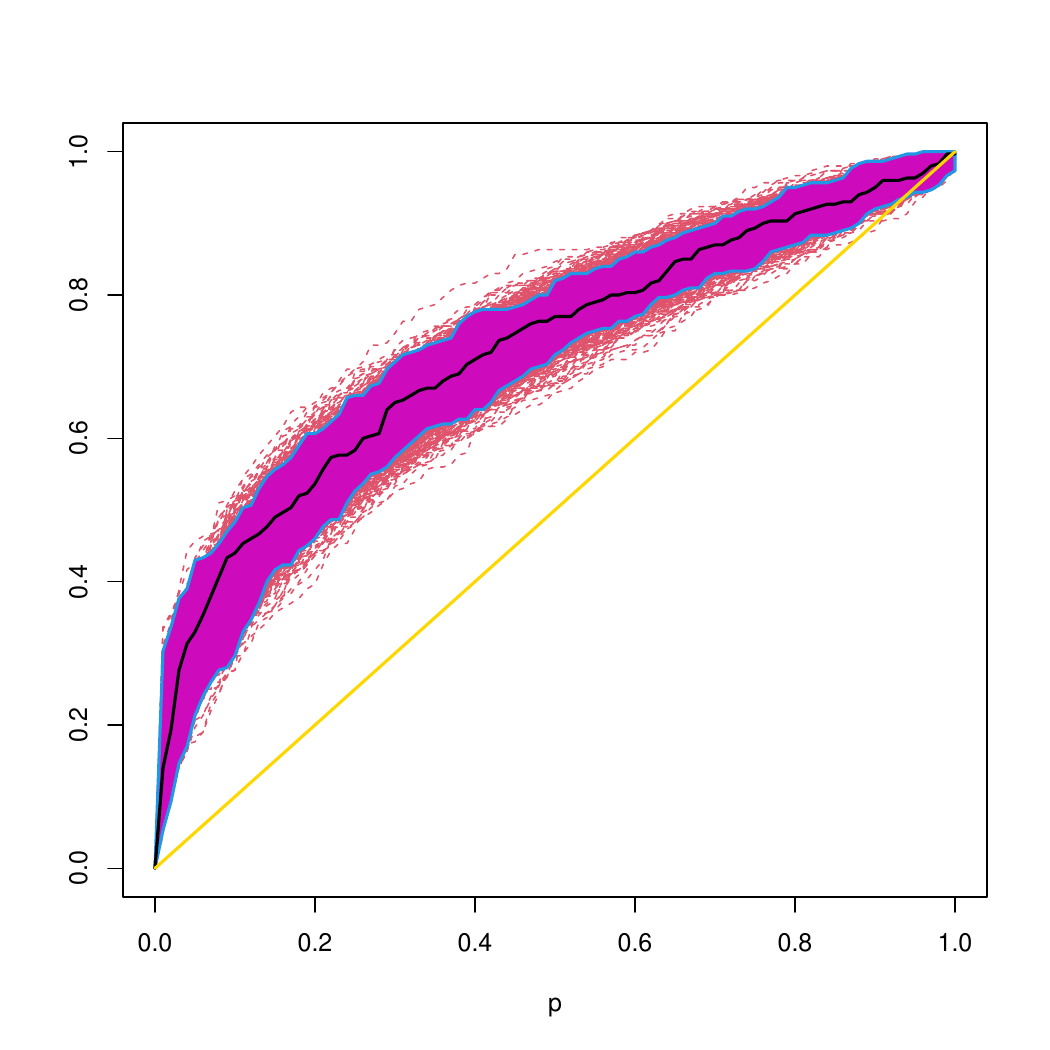}}
\\[-4ex]
 
$\Upsilon_{\inte}$  &
 \raisebox{-.5\height}{\includegraphics[scale=0.25]{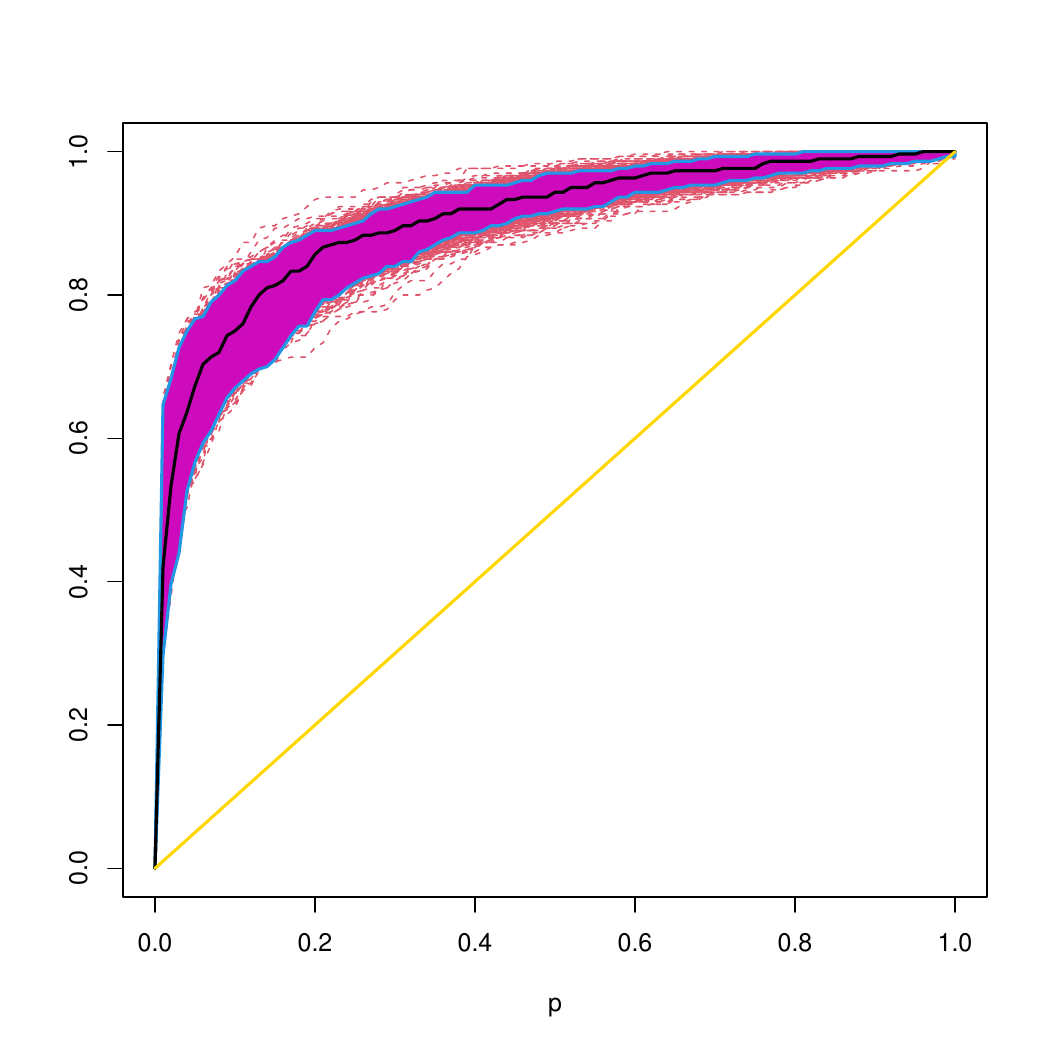}}
& \raisebox{-.5\height}{\includegraphics[scale=0.25]{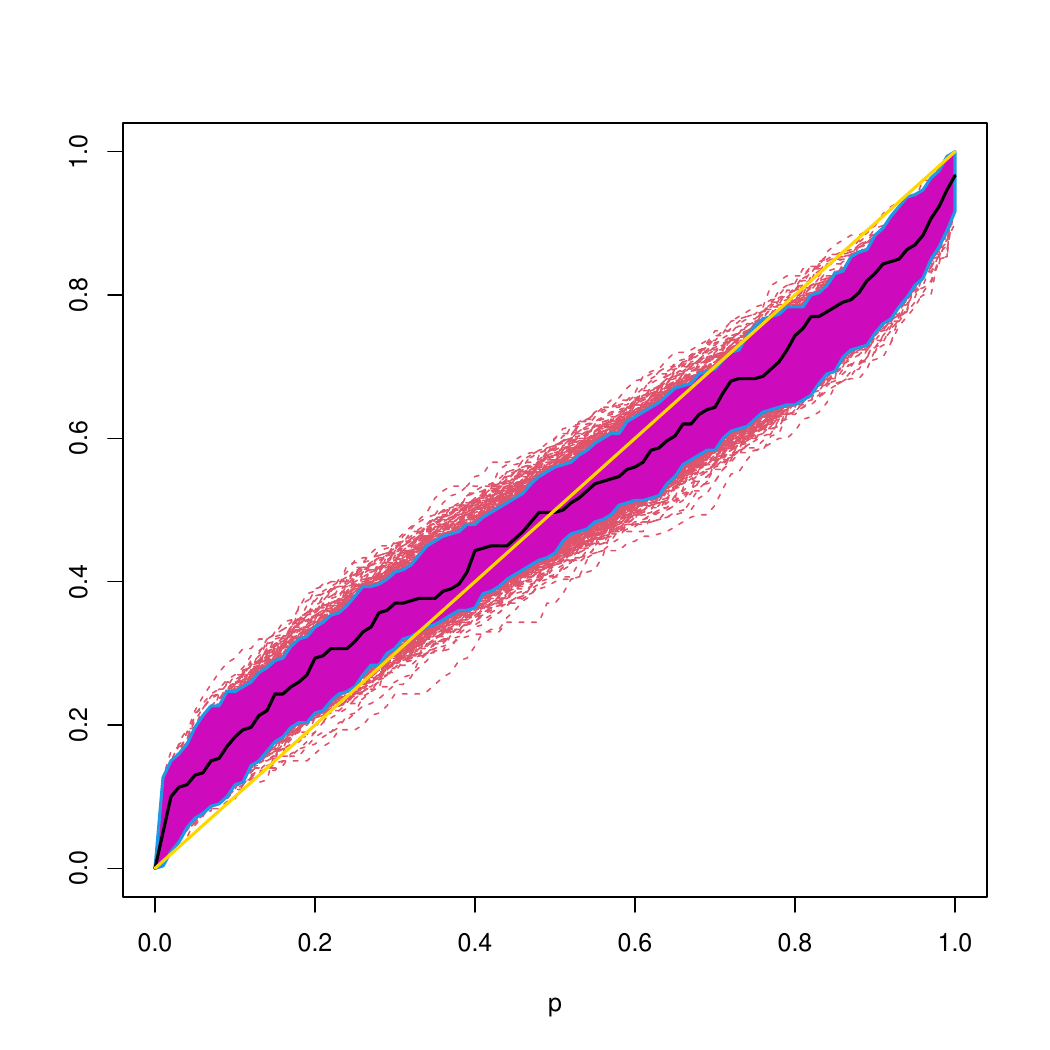}}
 \\[-4ex]
    
$\wUps_{\media}$ &
\raisebox{-.5\height}{\includegraphics[scale=0.25]{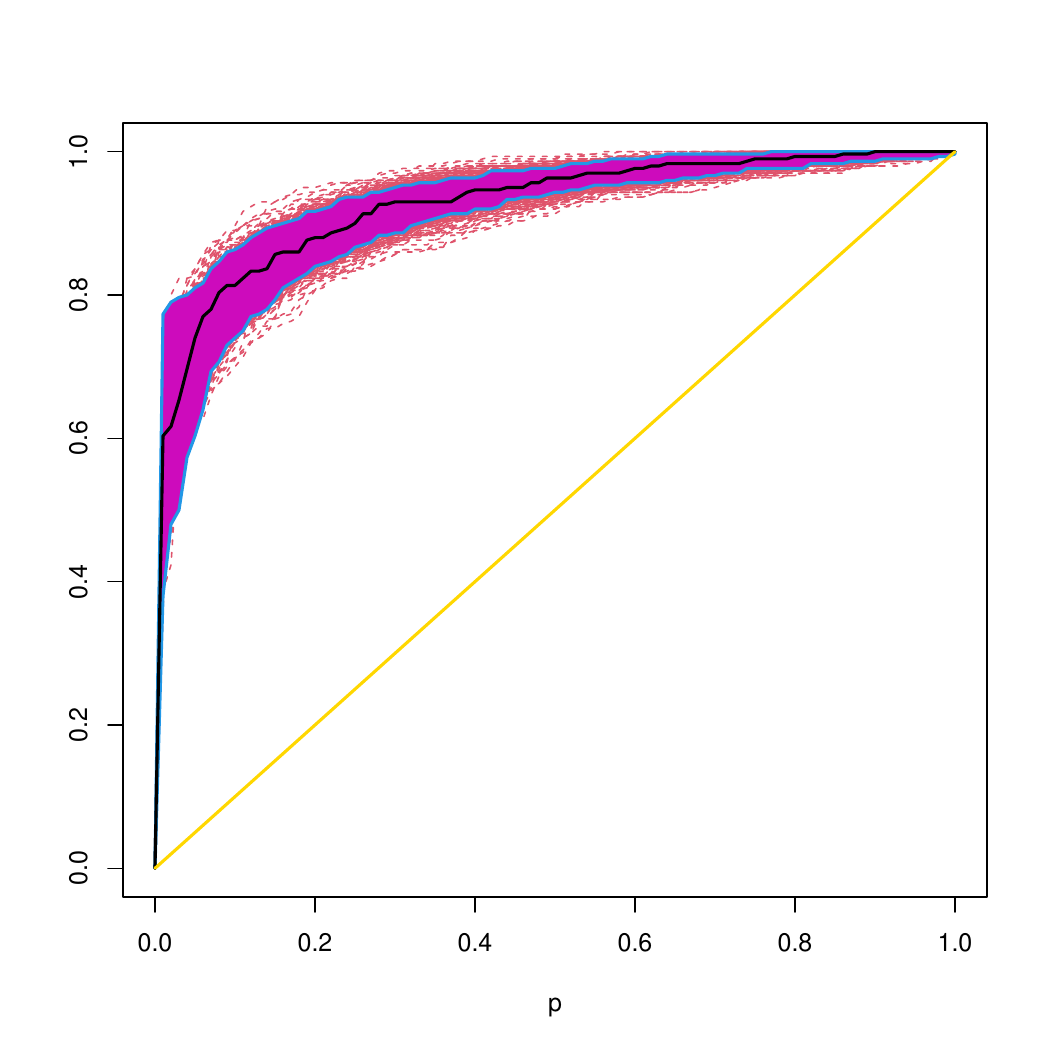}}
& \raisebox{-.5\height}{\includegraphics[scale=0.25]{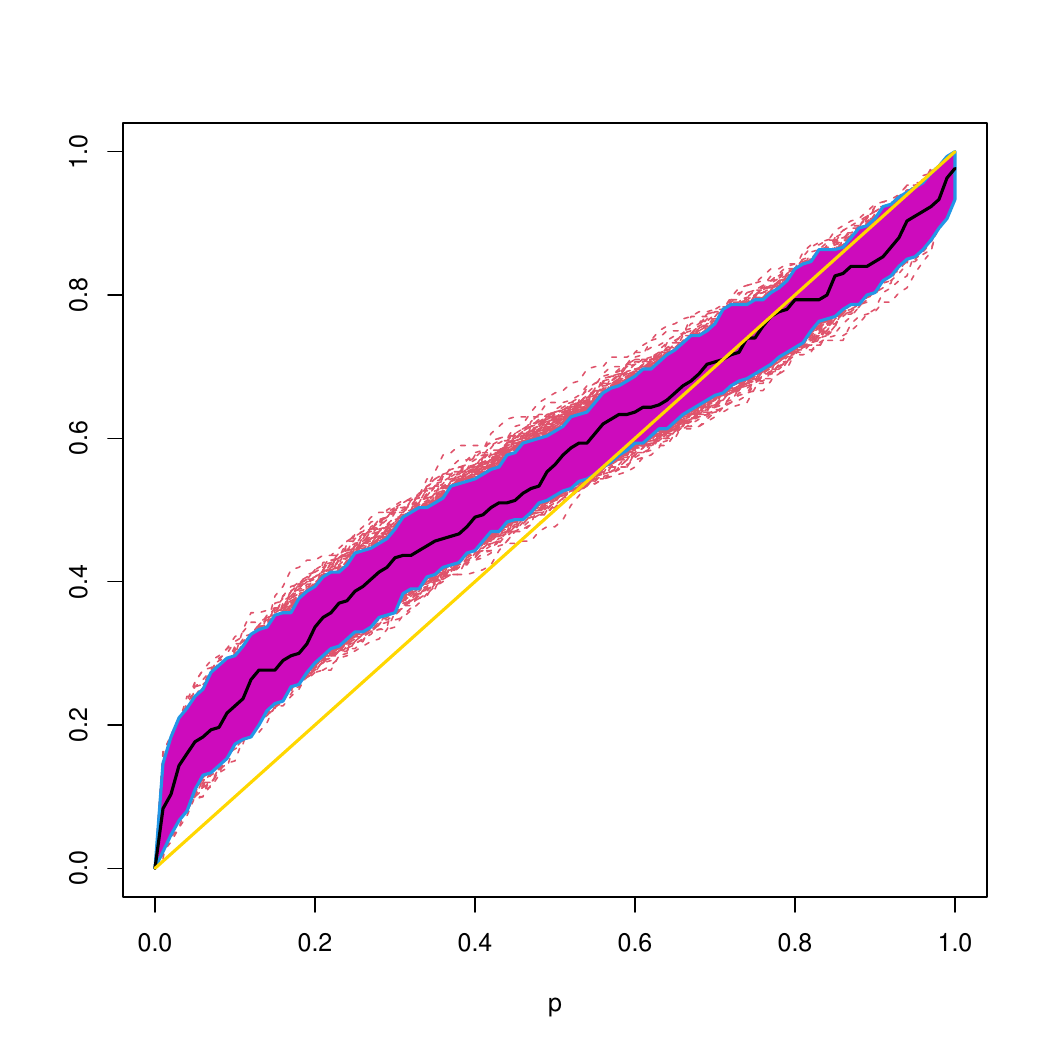}}
\\[-4ex]

$\wUps_{\lin}$  & 
\raisebox{-.5\height}{\includegraphics[scale=0.25]{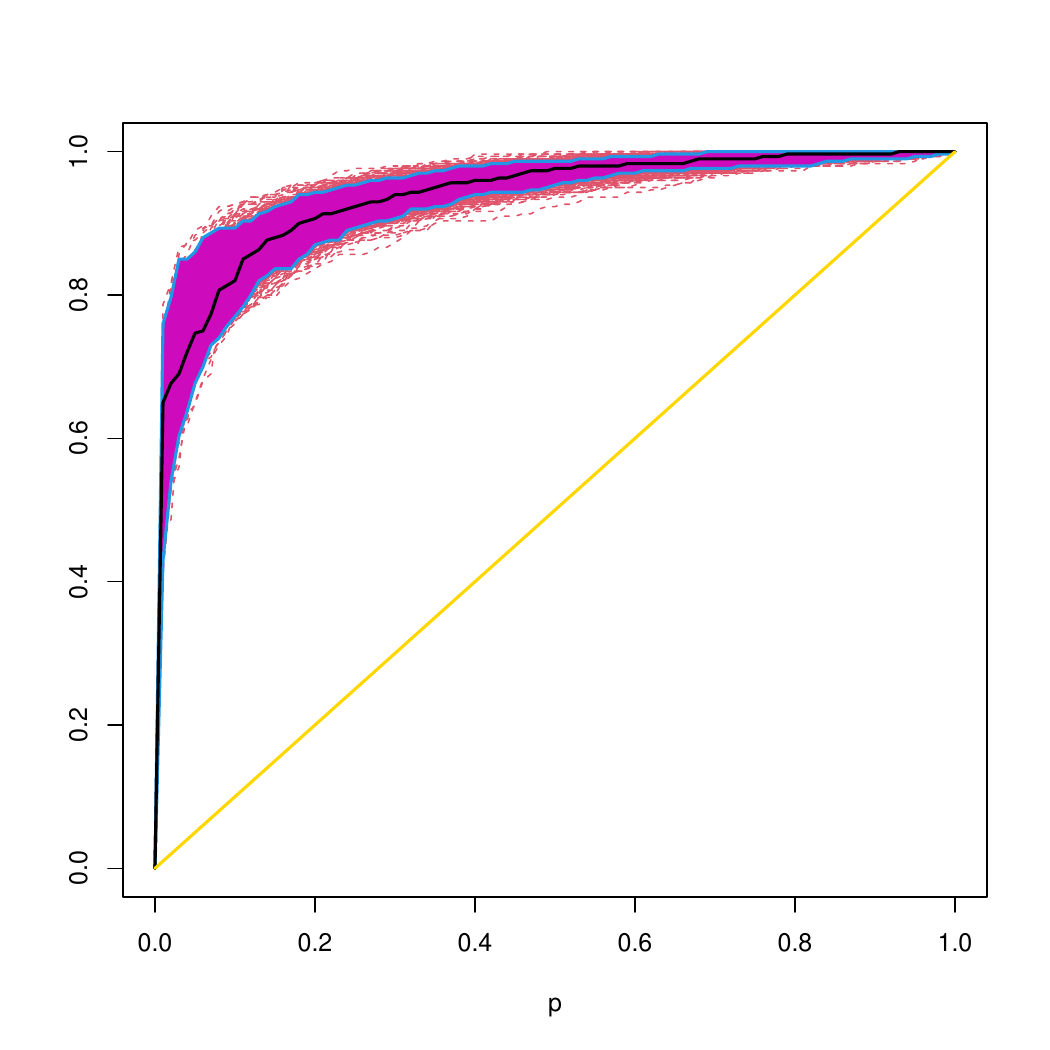}}
& \raisebox{-.5\height}{\includegraphics[scale=0.25]{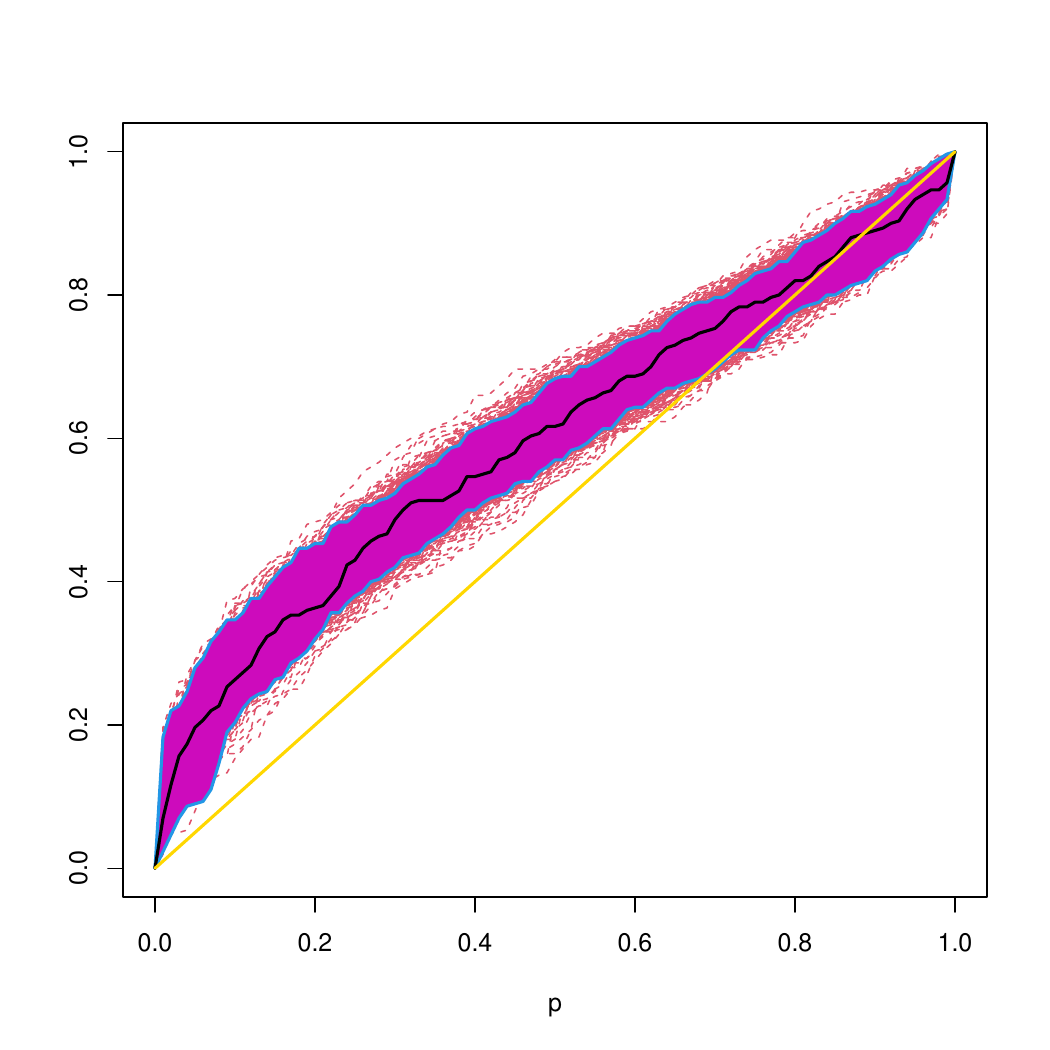}}
\\[-4ex]

$\wUps_{\cuad}$ 
& \raisebox{-.5\height}{\includegraphics[scale=0.25]{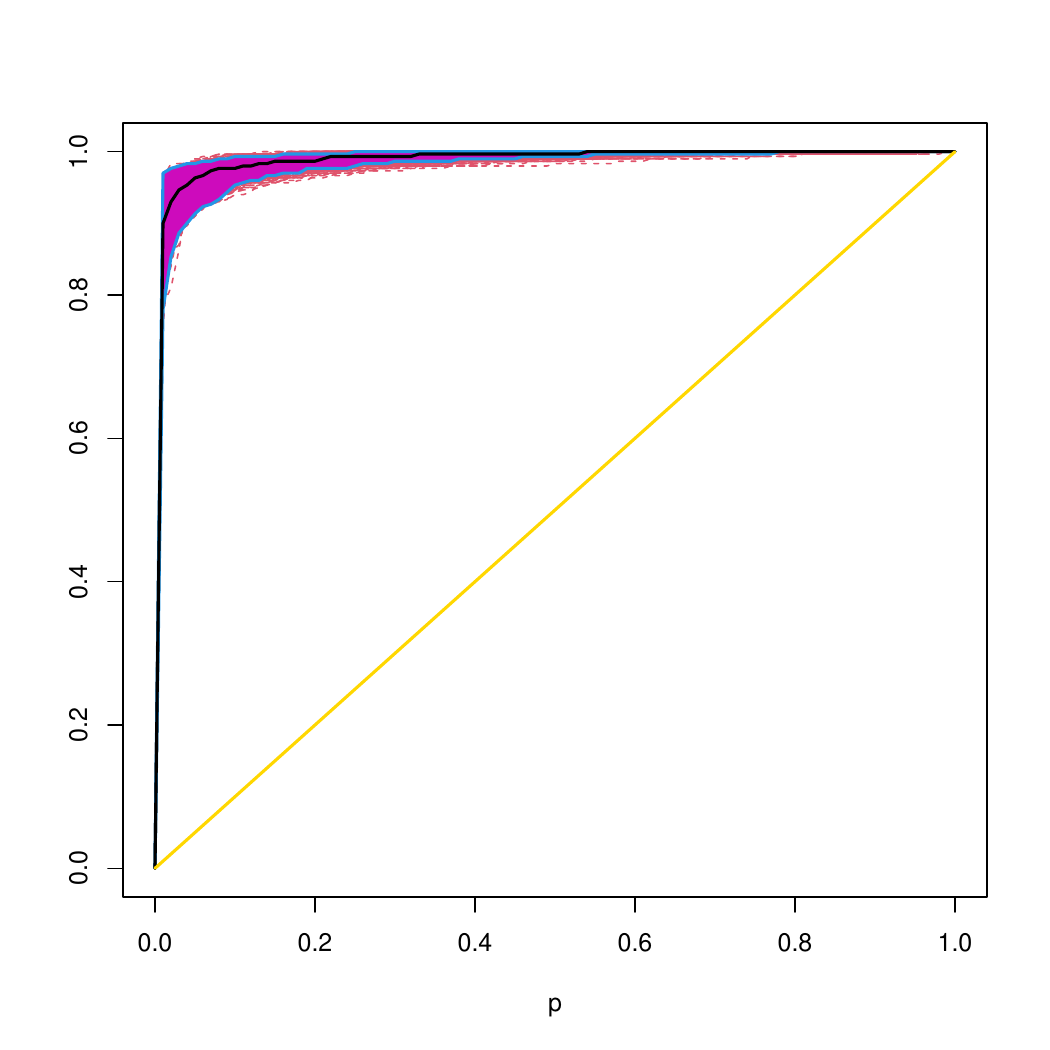}}
& \raisebox{-.5\height}{\includegraphics[scale=0.25]{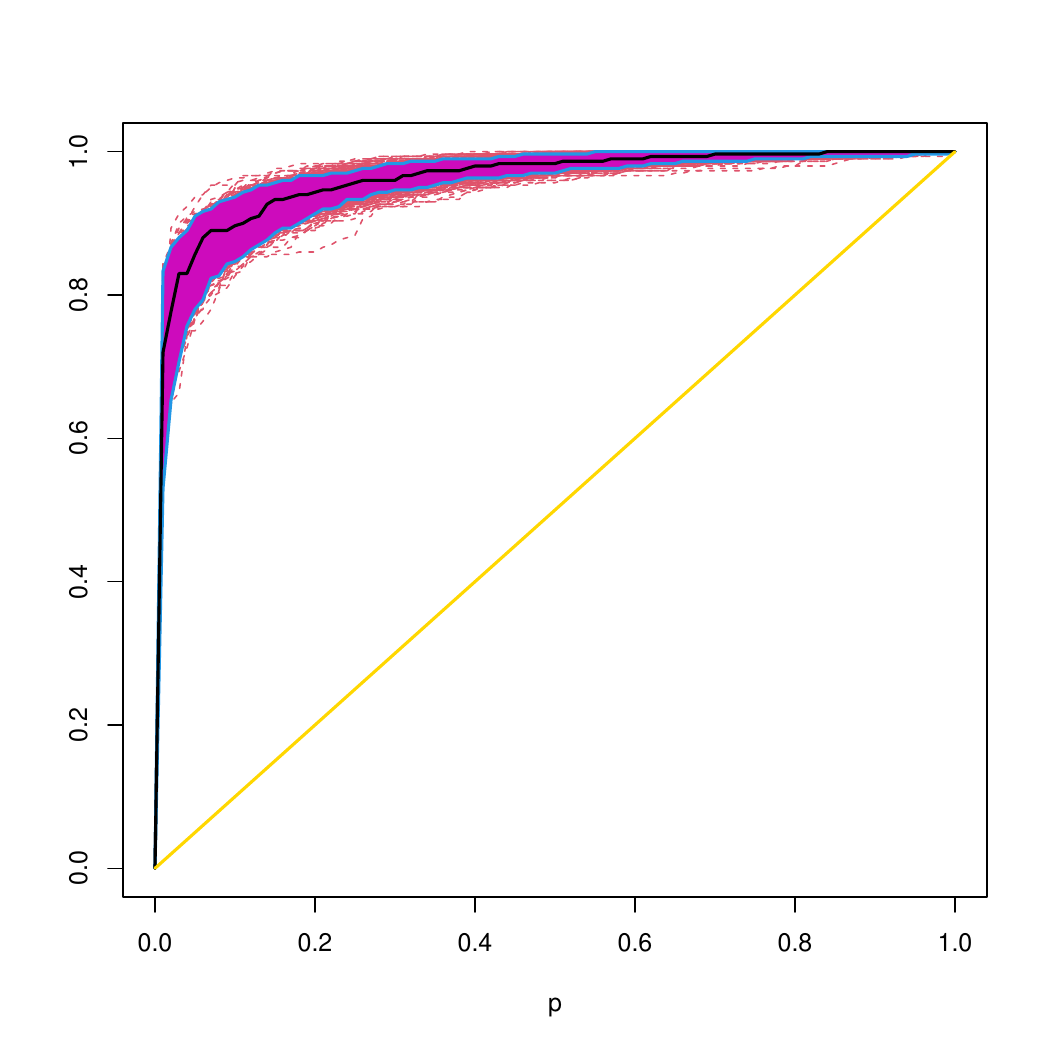}}

\end{tabular}
\caption{Functional boxplots of the estimators $\widehat{\ROC}$ under scenario \textbf{PROP} with  $\rho=2$ for the Exponential Variogram  process. Rows correspond to discriminating indexes, while columns to $\mu_D(t)=2\, \sin(\pi t)$ and $\mu_H=0$.}
\label{fig:propor:varexp} 
\end{center} 
\end{figure}


\begin{figure}[ht!]
 \begin{center}
 \footnotesize
 \renewcommand{\arraystretch}{0.2}
 \begin{tabular}{p{2cm} cc}
 & \textbf{C11} &  \textbf{C10} \\[-2ex]
$\Upsilon_{\maxi}$  &
 \raisebox{-.5\height}{\includegraphics[scale=0.25]{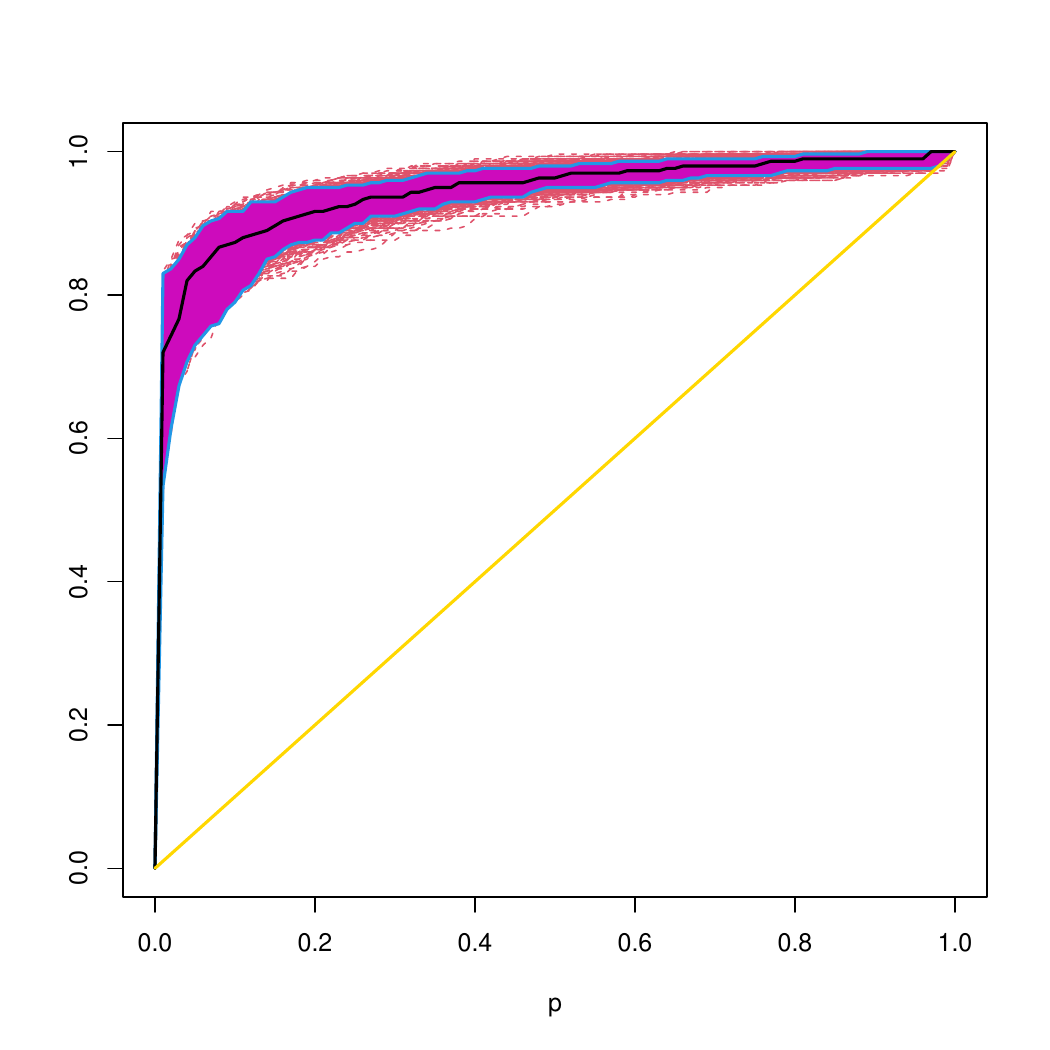}}
& \raisebox{-.5\height}{\includegraphics[scale=0.25]{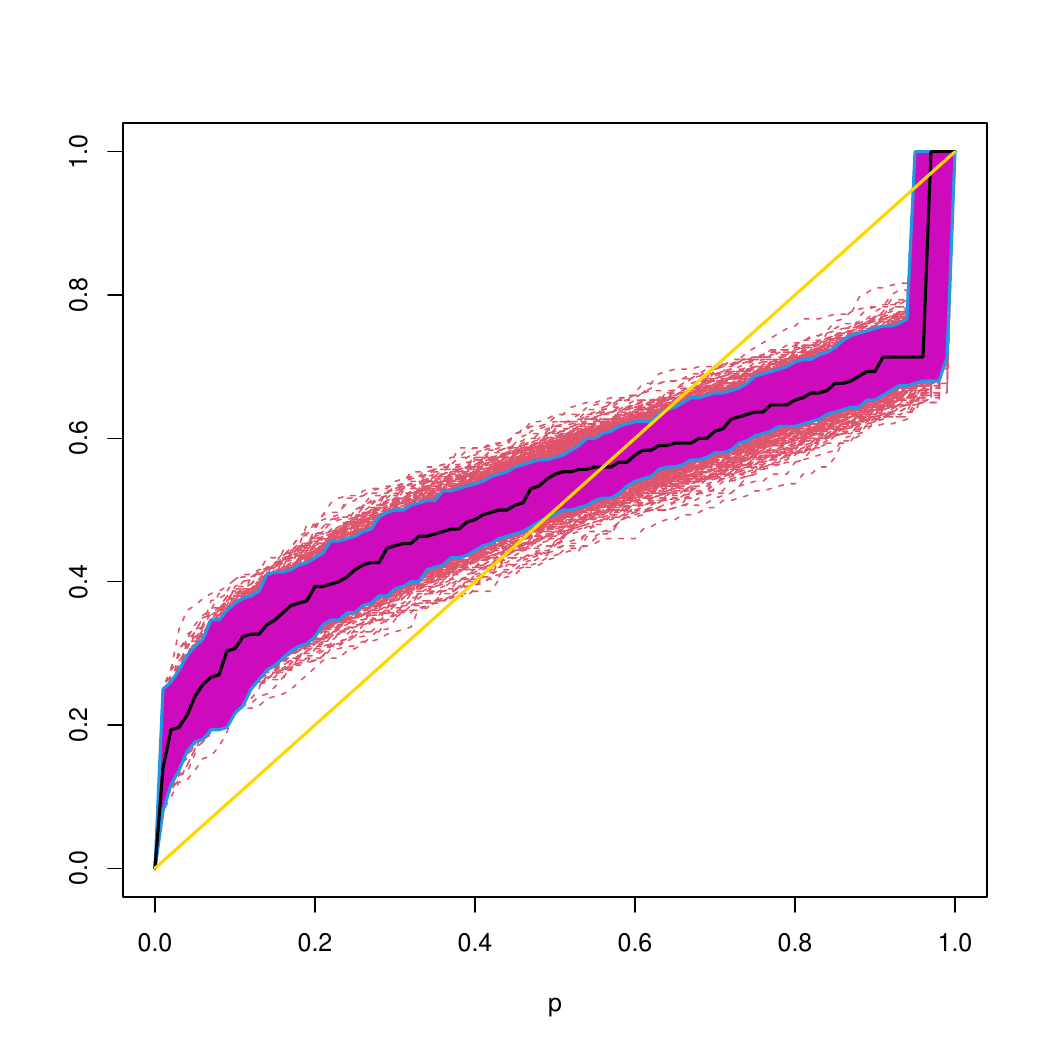}}
\\[-4ex]
 
$\Upsilon_{\inte}$  &
 \raisebox{-.5\height}{\includegraphics[scale=0.25]{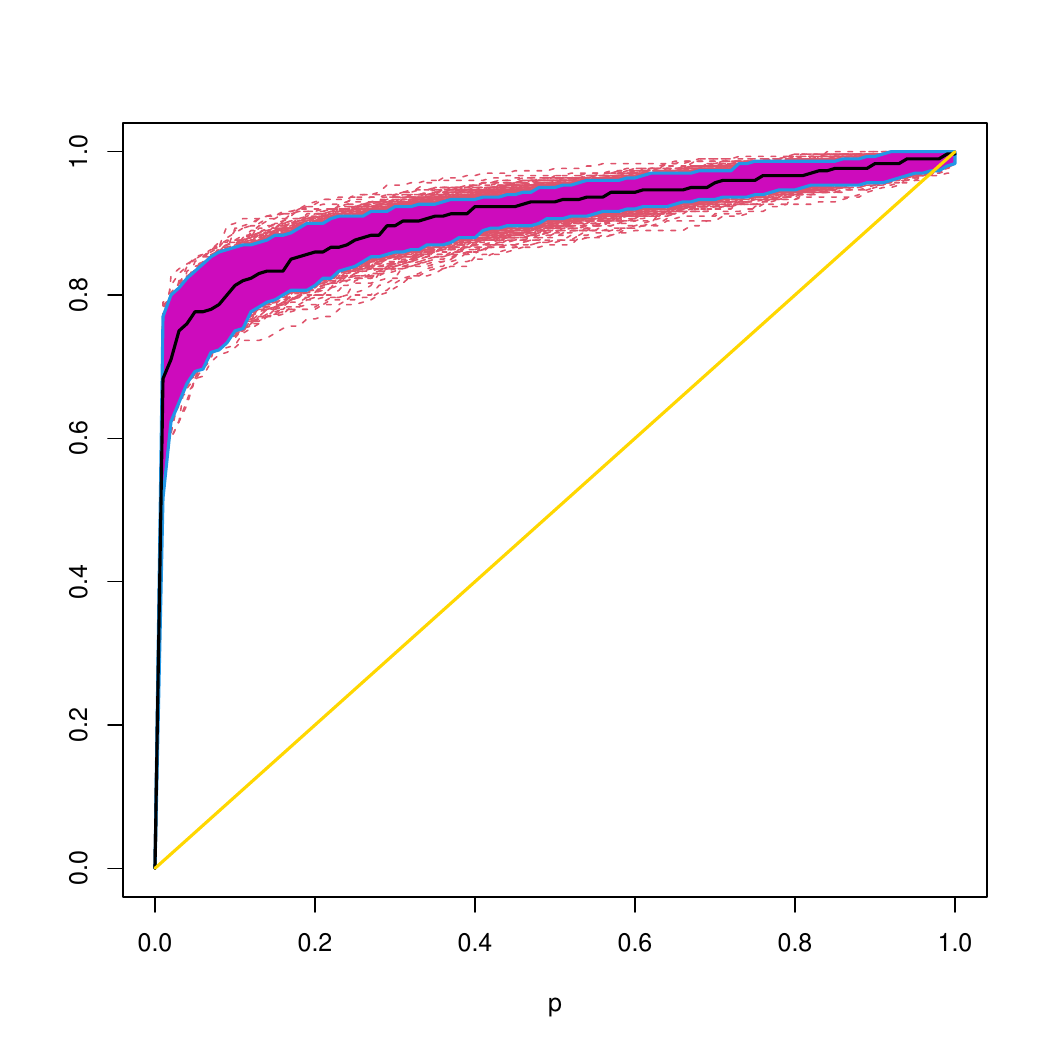}}
& \raisebox{-.5\height}{\includegraphics[scale=0.25]{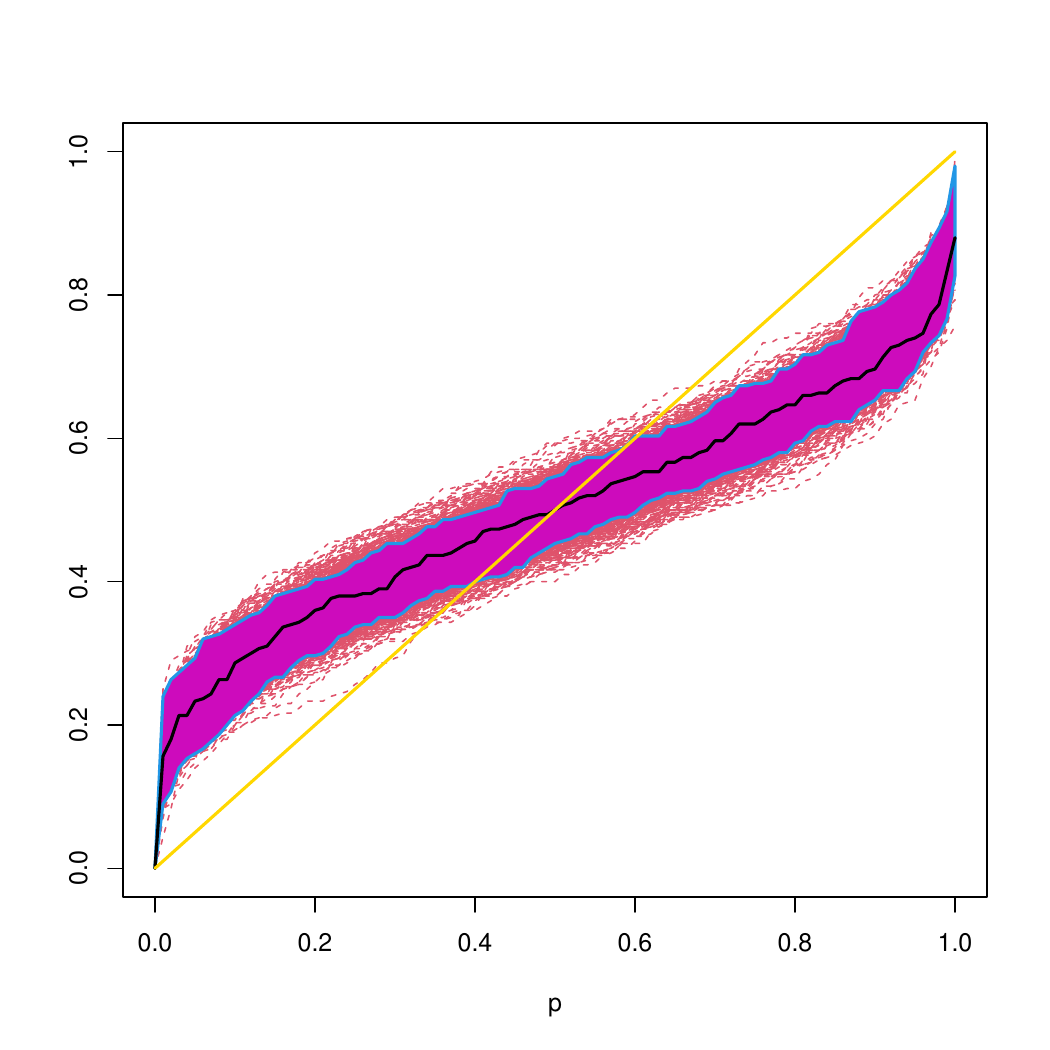}}
 \\[-4ex]
    
$\wUps_{\media}$ &
\raisebox{-.5\height}{\includegraphics[scale=0.25]{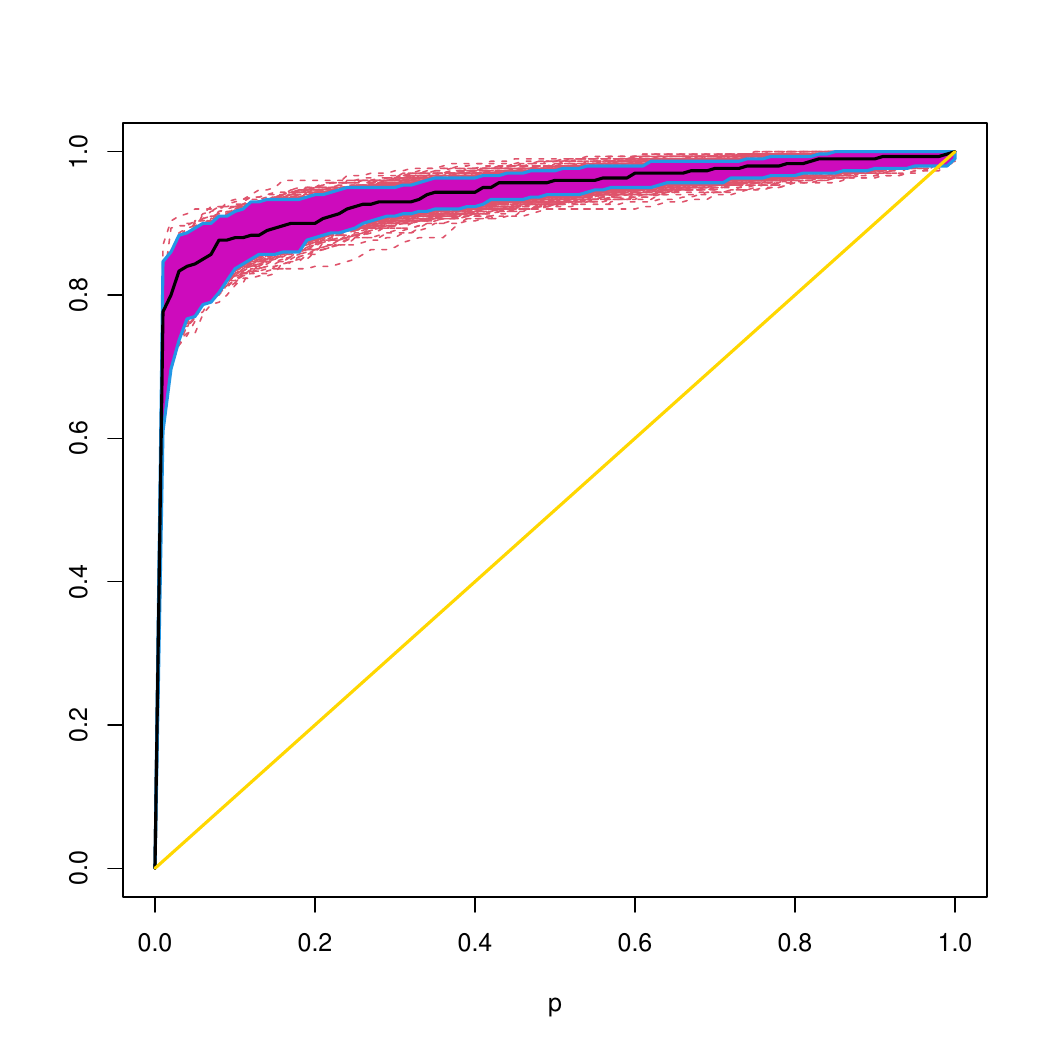}}
& \raisebox{-.5\height}{\includegraphics[scale=0.25]{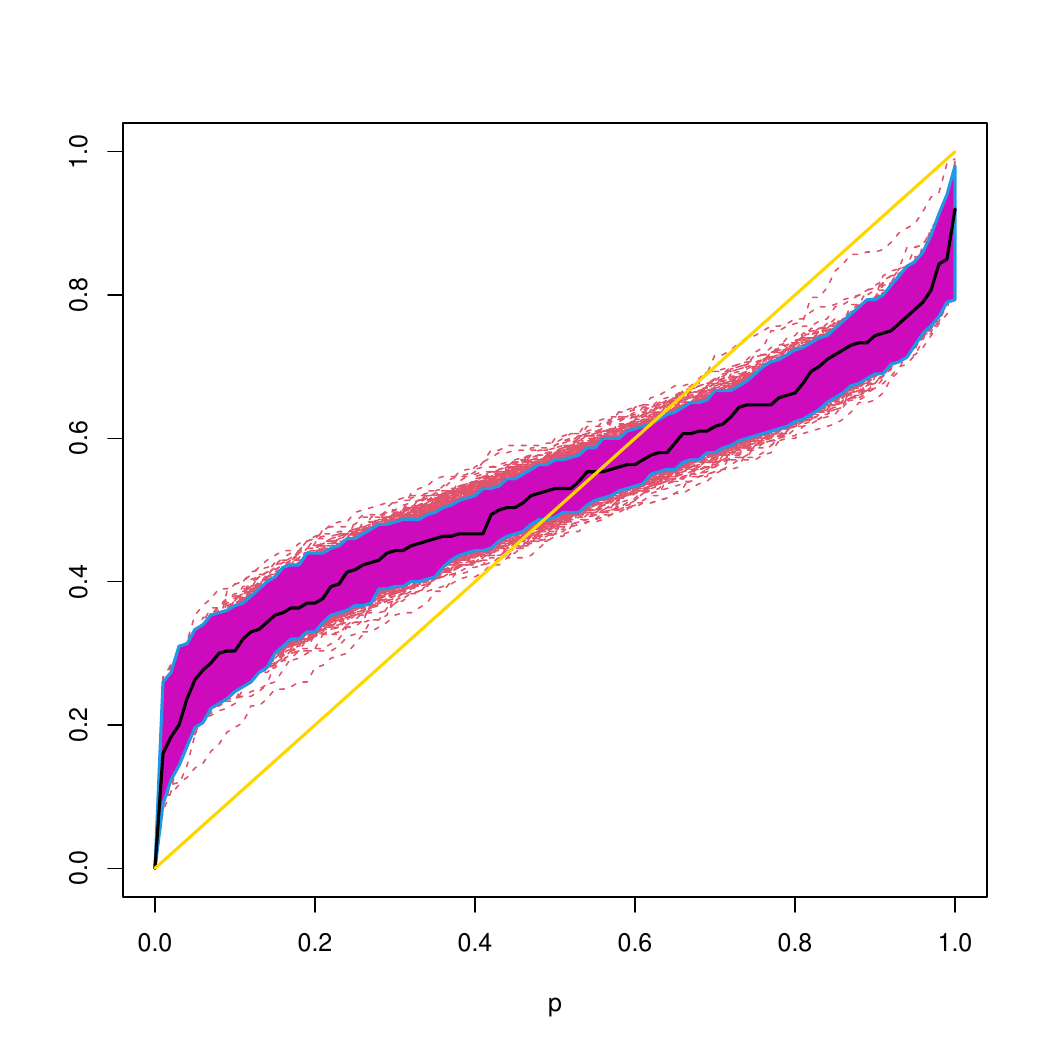}}
\\[-4ex]

$\wUps_{\lin}$  & 
\raisebox{-.5\height}{\includegraphics[scale=0.25]{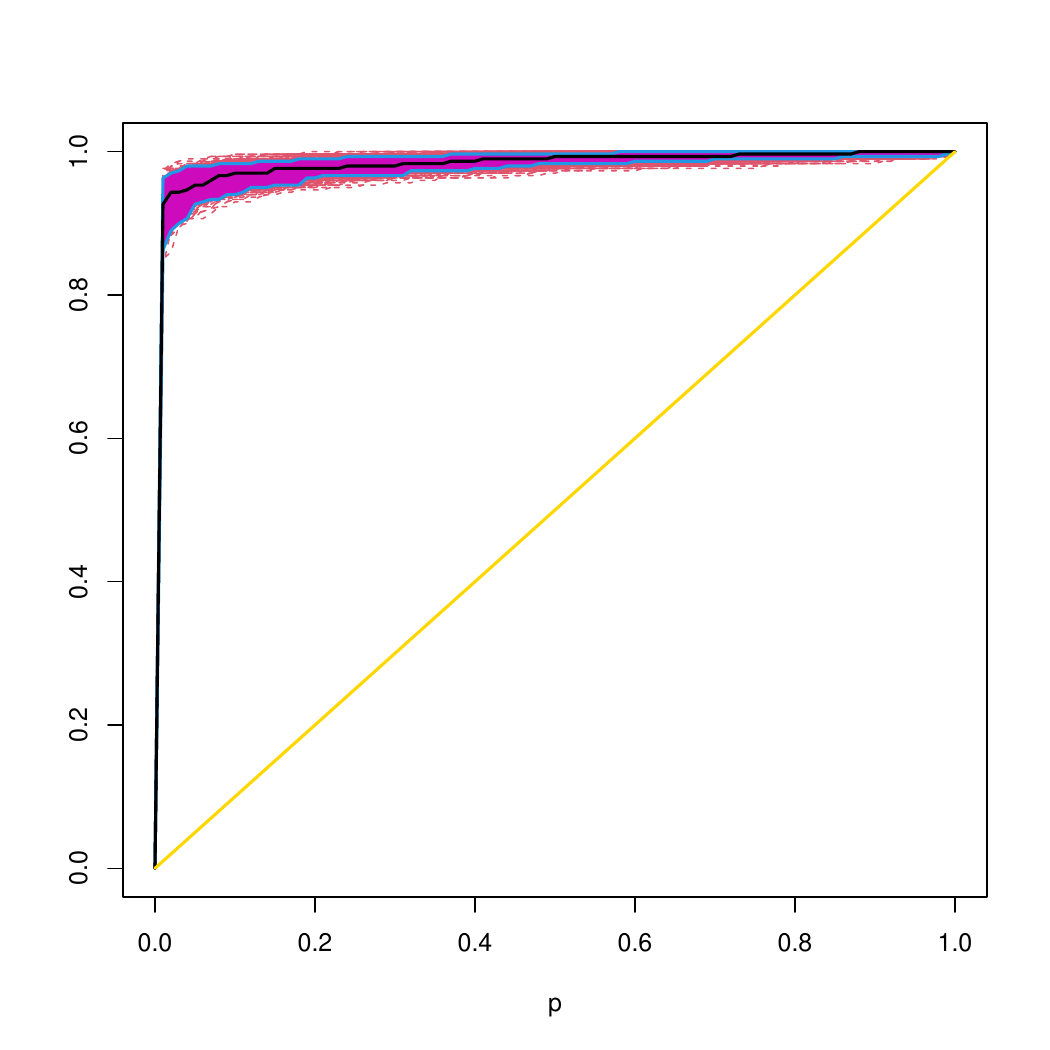}}
& \raisebox{-.5\height}{\includegraphics[scale=0.25]{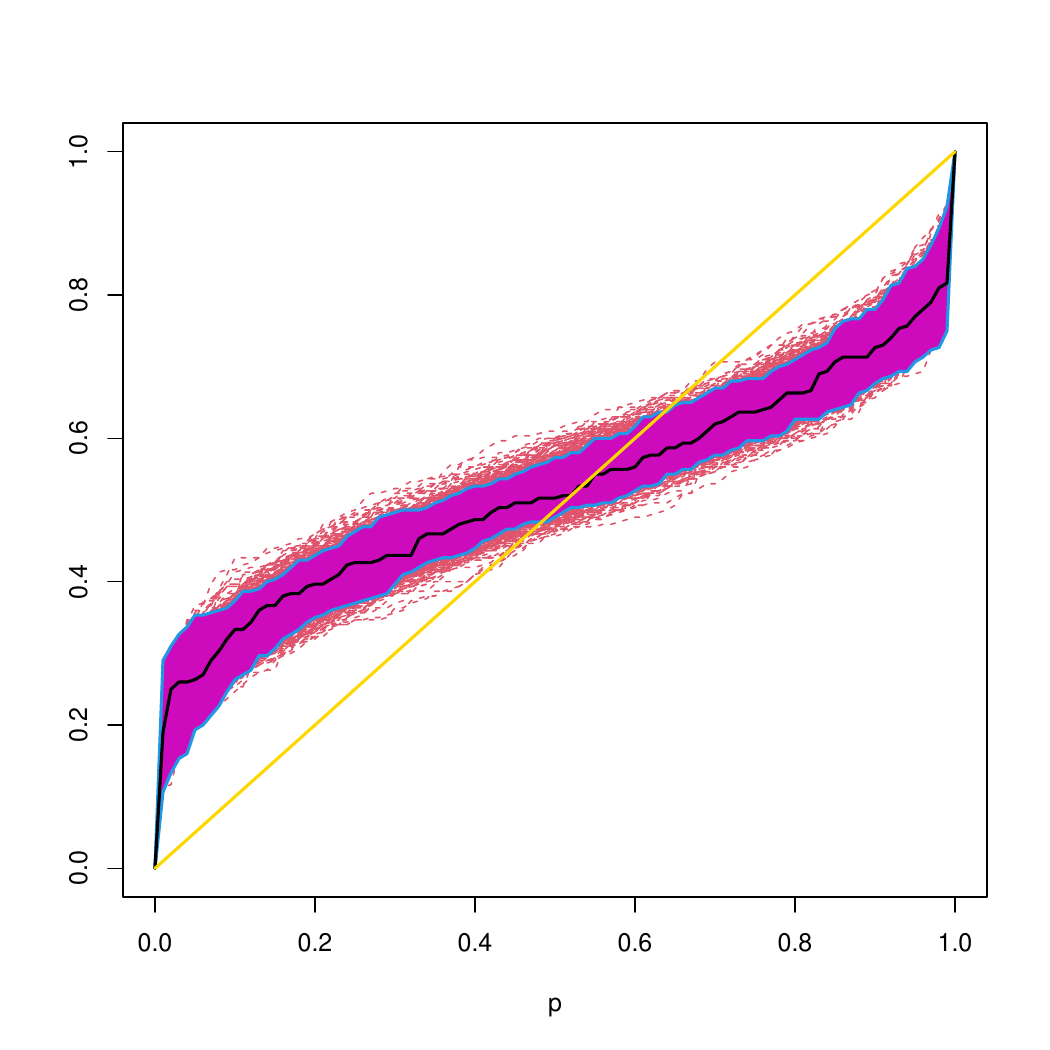}}
\\[-4ex]

$\wUps_{\cuad}$ & 
\raisebox{-.5\height}{\includegraphics[scale=0.25]{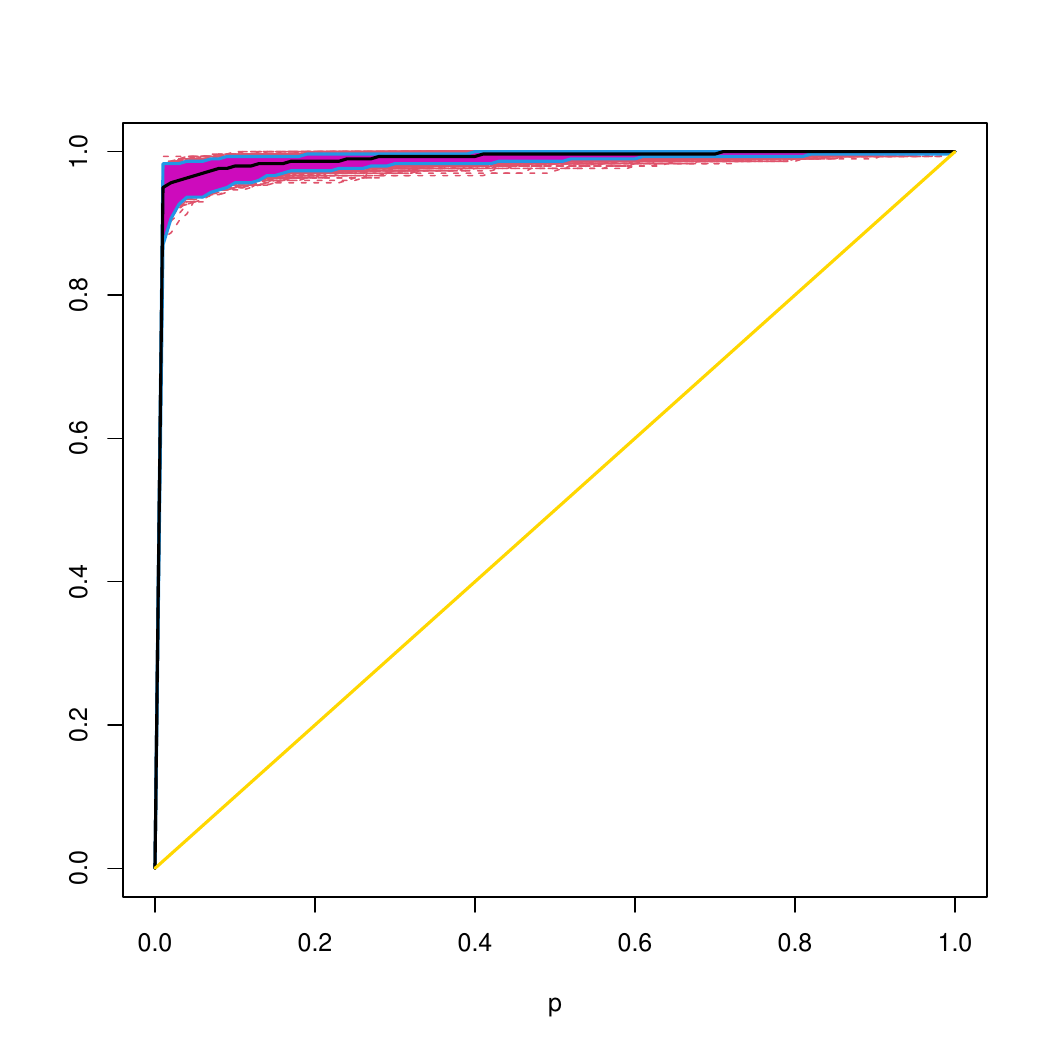}}
& \raisebox{-.5\height}{\includegraphics[scale=0.25]{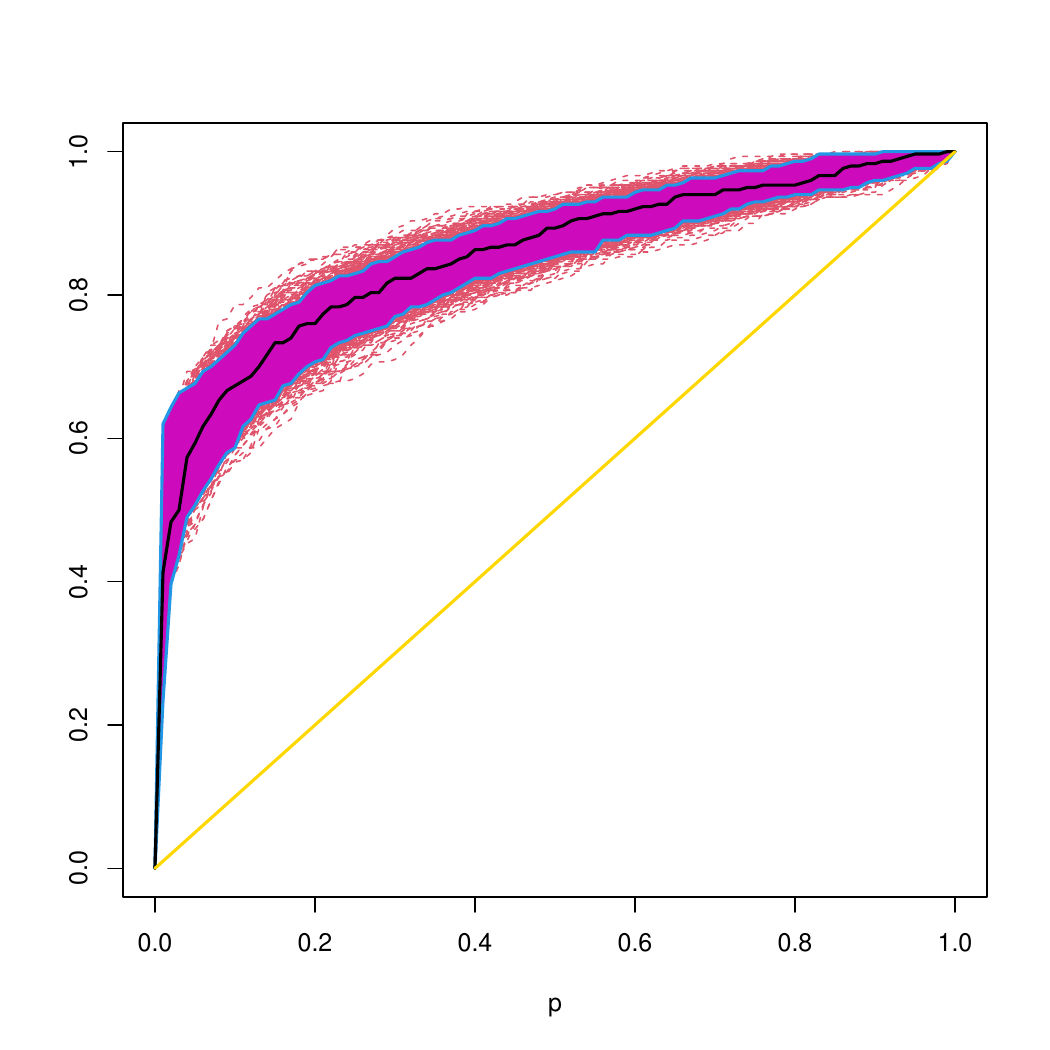}}

\end{tabular}
\caption{Functional boxplots of the estimators $\widehat{\ROC}$ under a \textsc{fcpc} model with  $\lambda_{D,1} = 2$, $\lambda_{D,2}=0.3$ and $\lambda_{D,3}=0.05$. Rows correspond to discriminating indexes, while columns to  $\mu_D(t)=2\, \sin(\pi  t)$ and $\mu_H=0$.}
\label{fig:cpc:C1} 
\end{center} 
\end{figure}

\begin{figure}[ht!]
 \begin{center}
 \footnotesize
 \renewcommand{\arraystretch}{0.2}
\begin{tabular}{p{2cm} cc}
 & \textbf{C21} &  \textbf{C20} \\[-2ex]  
$\Upsilon_{\maxi}$ &
 \raisebox{-.5\height}{\includegraphics[scale=0.25]{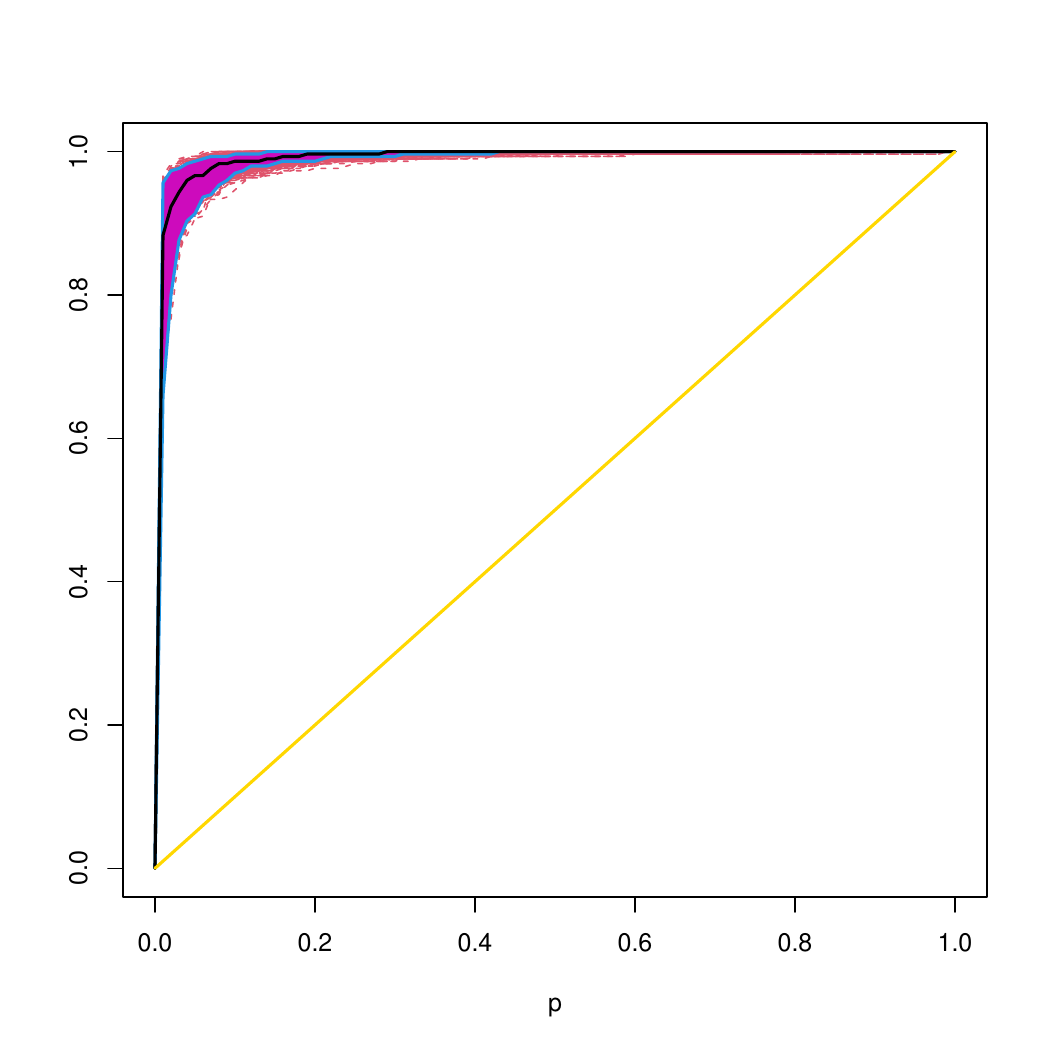}}
& \raisebox{-.5\height}{\includegraphics[scale=0.25]{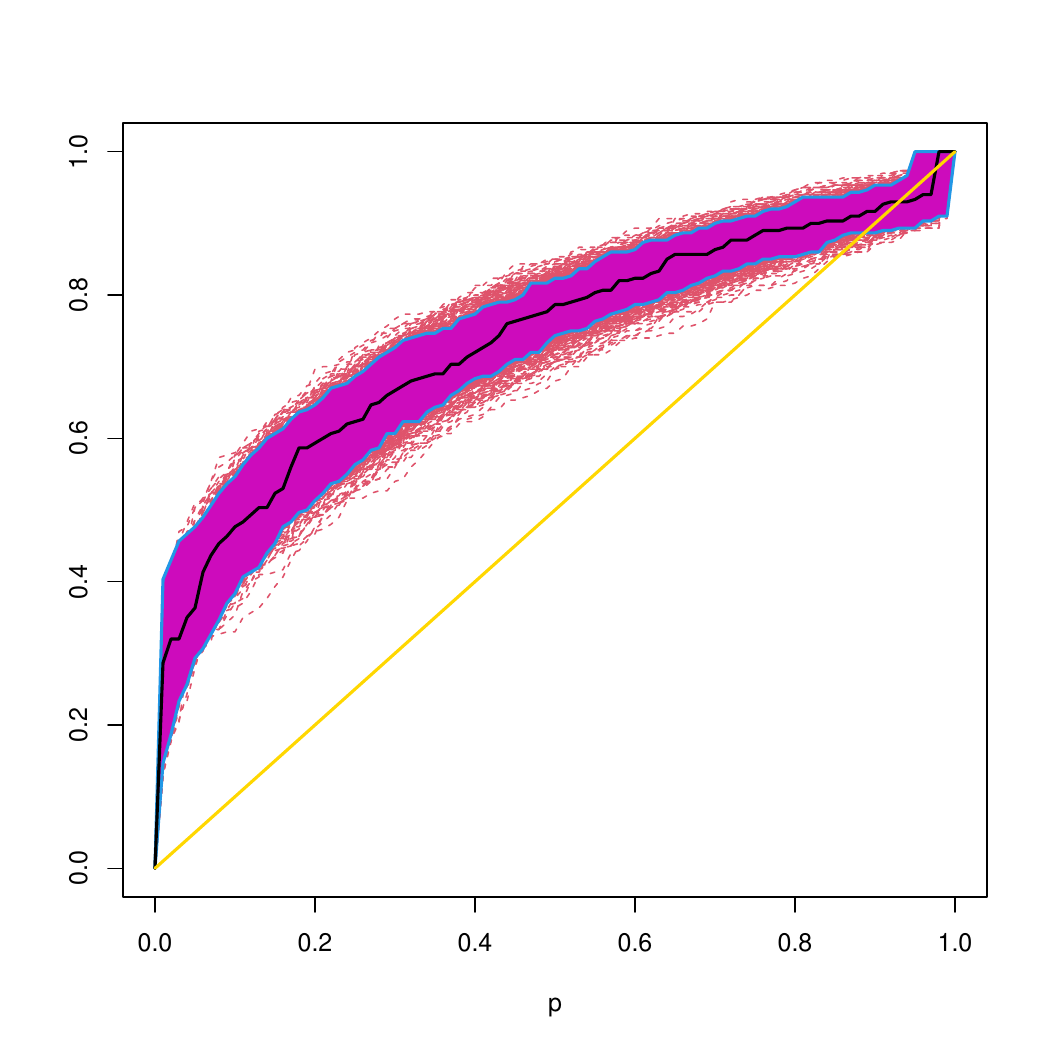}}
\\[-4ex]
    
$\Upsilon_{\inte}$  &
 \raisebox{-.5\height}{\includegraphics[scale=0.25]{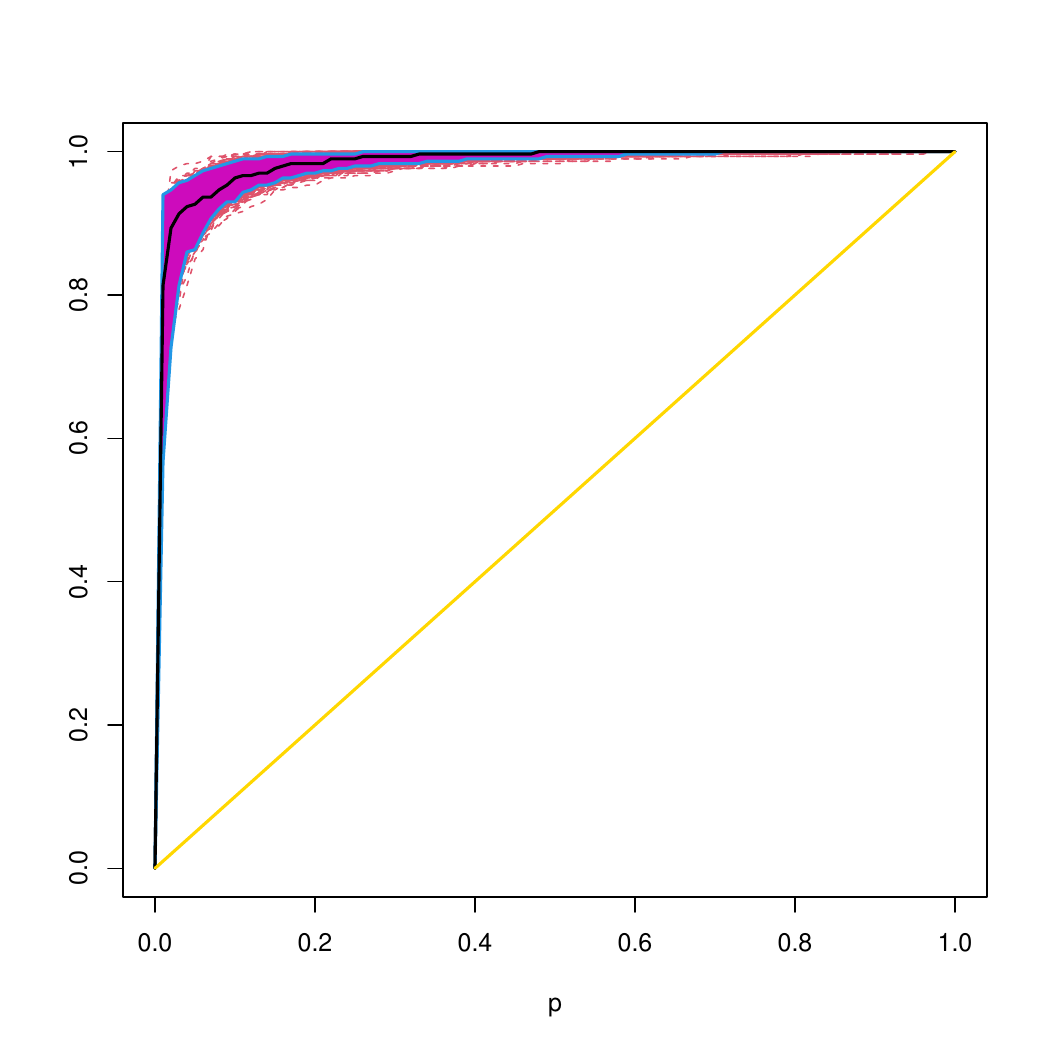}}
& \raisebox{-.5\height}{\includegraphics[scale=0.25]{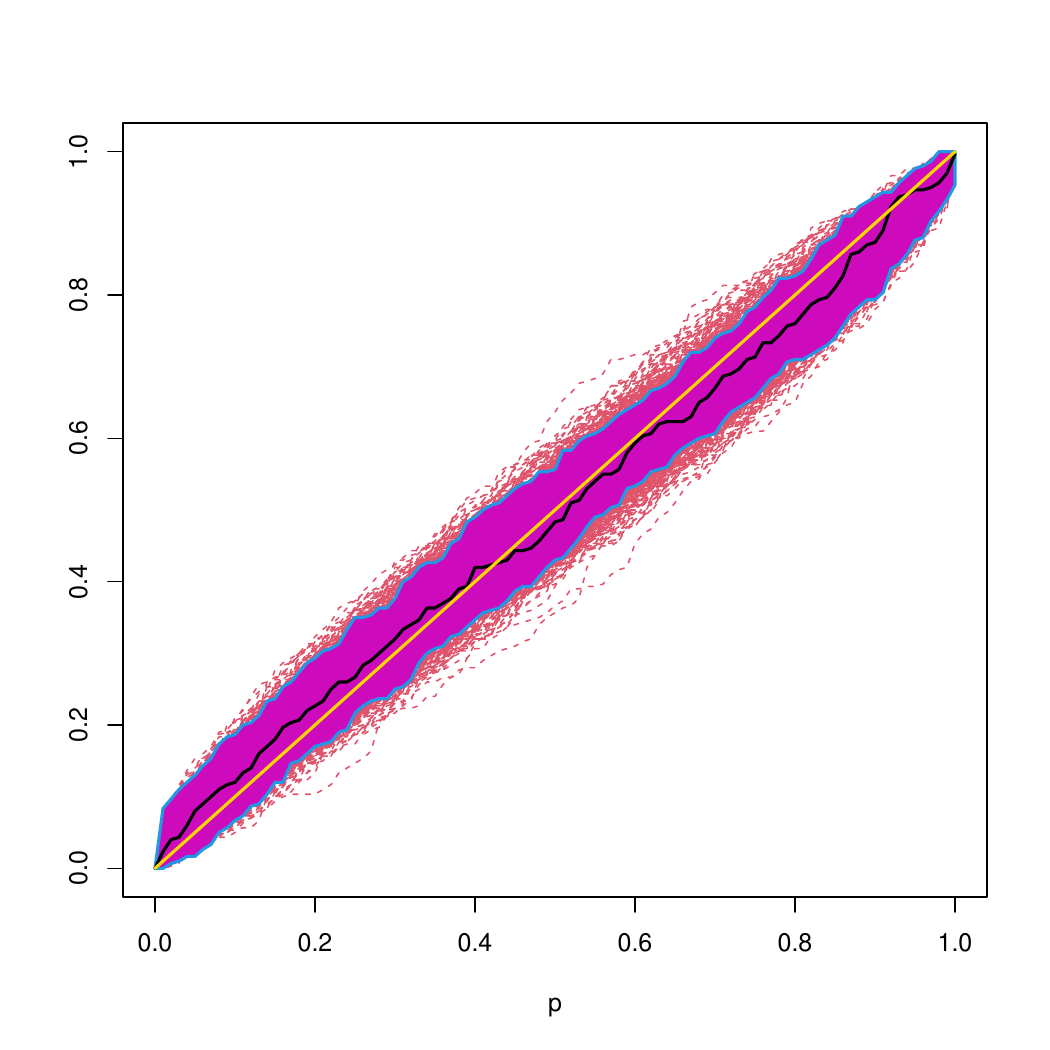}}
 \\[-4ex]
    
$\wUps_{\media}$ &
\raisebox{-.5\height}{\includegraphics[scale=0.25]{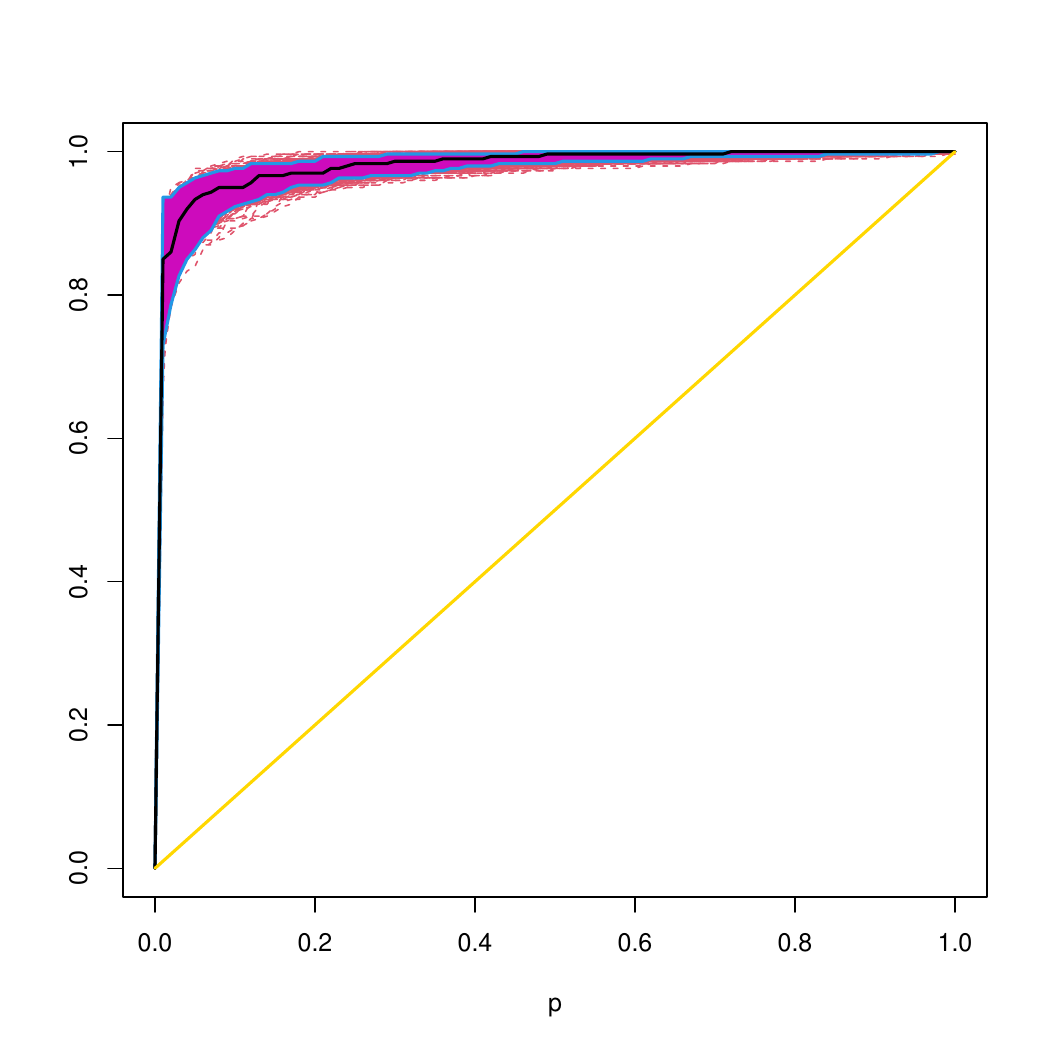}}
& \raisebox{-.5\height}{\includegraphics[scale=0.25]{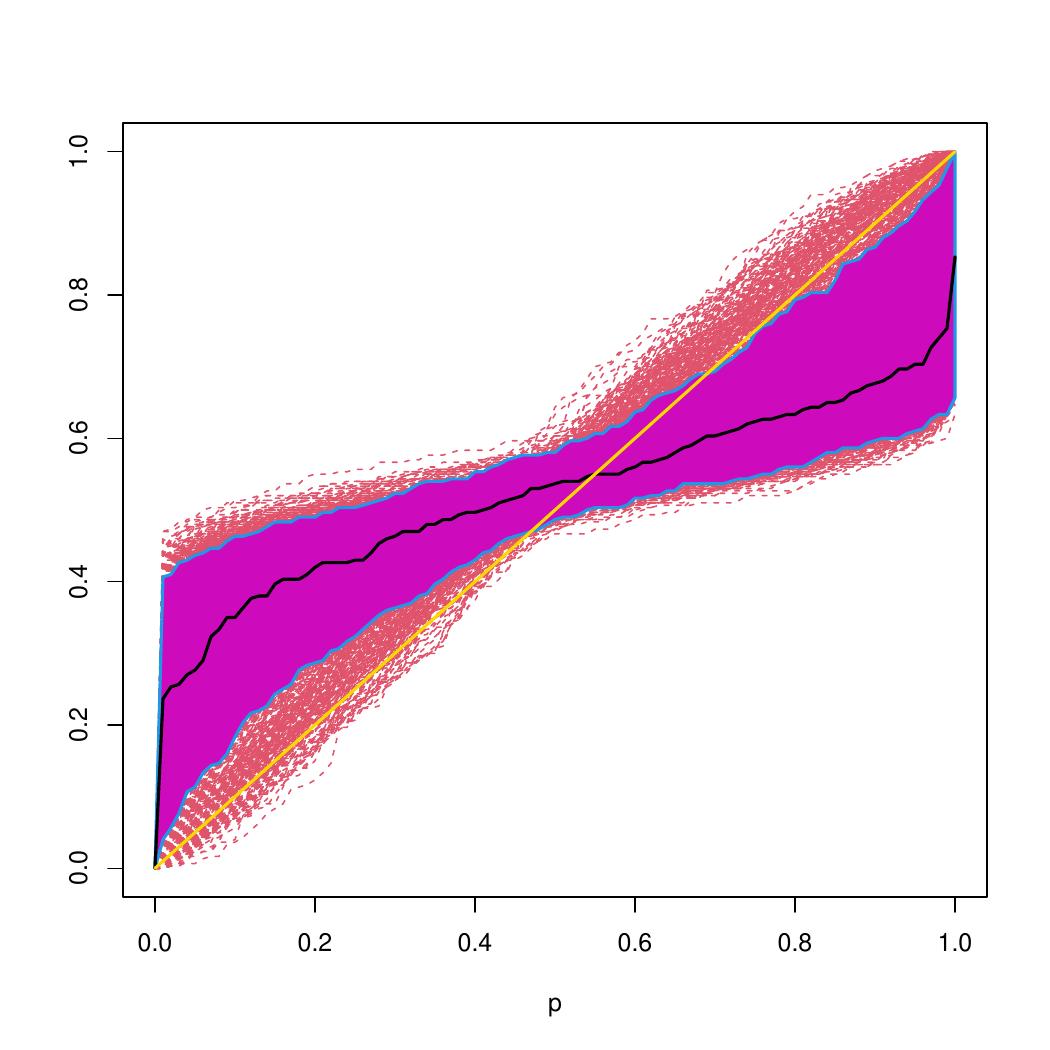}}
\\[-4ex]

$\wUps_{\lin}$  & 
\raisebox{-.5\height}{\includegraphics[scale=0.25]{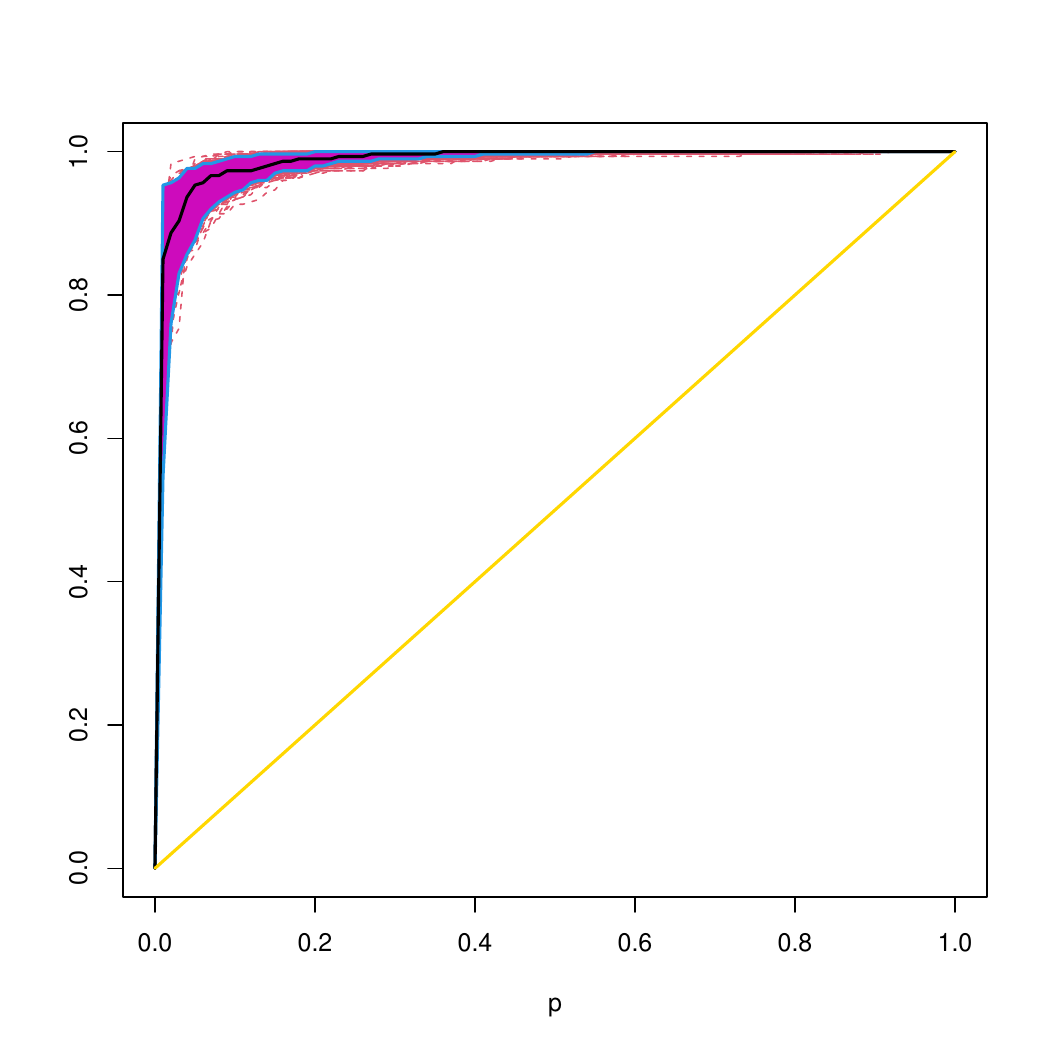}}
& \raisebox{-.5\height}{\includegraphics[scale=0.25]{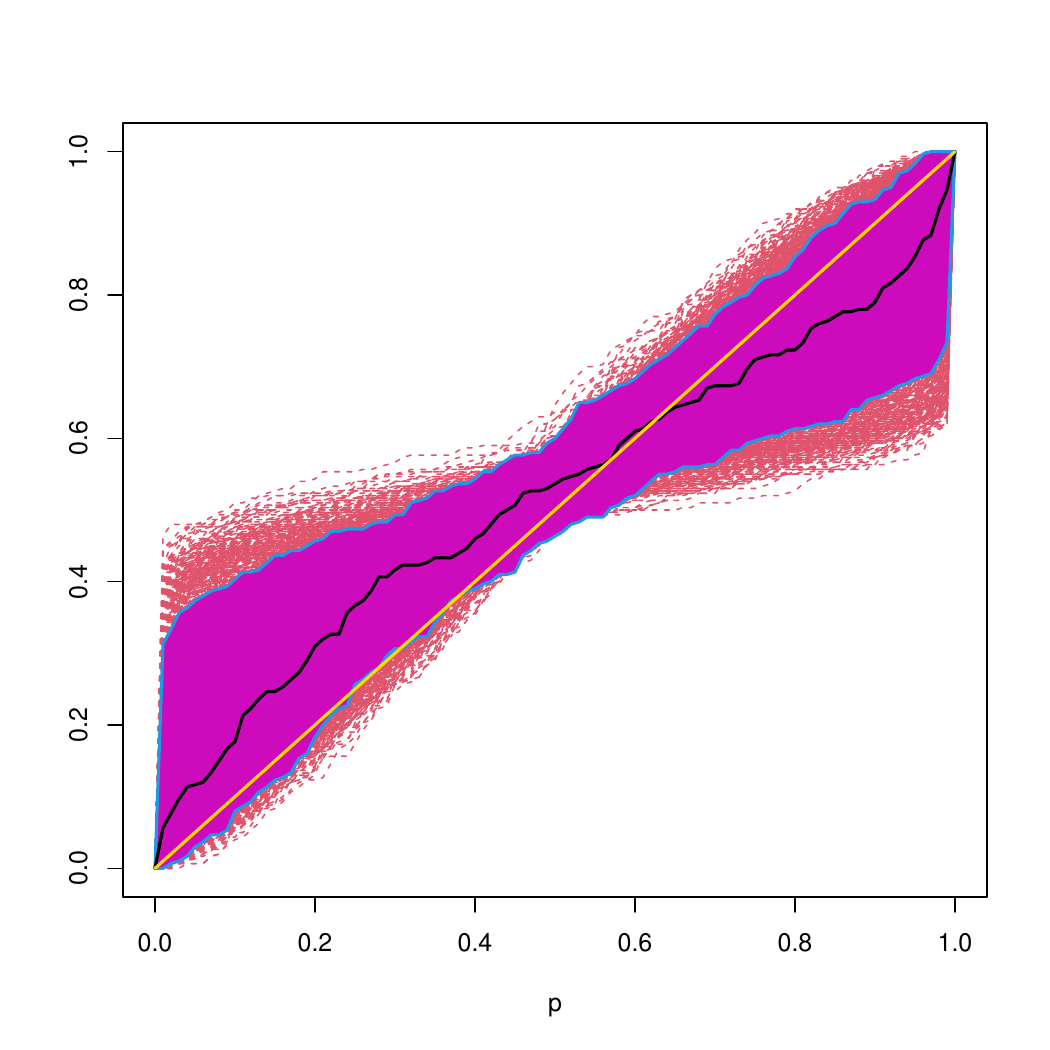}}
\\[-4ex]

$\wUps_{\cuad}$ 
& \raisebox{-.5\height}{\includegraphics[scale=0.25]{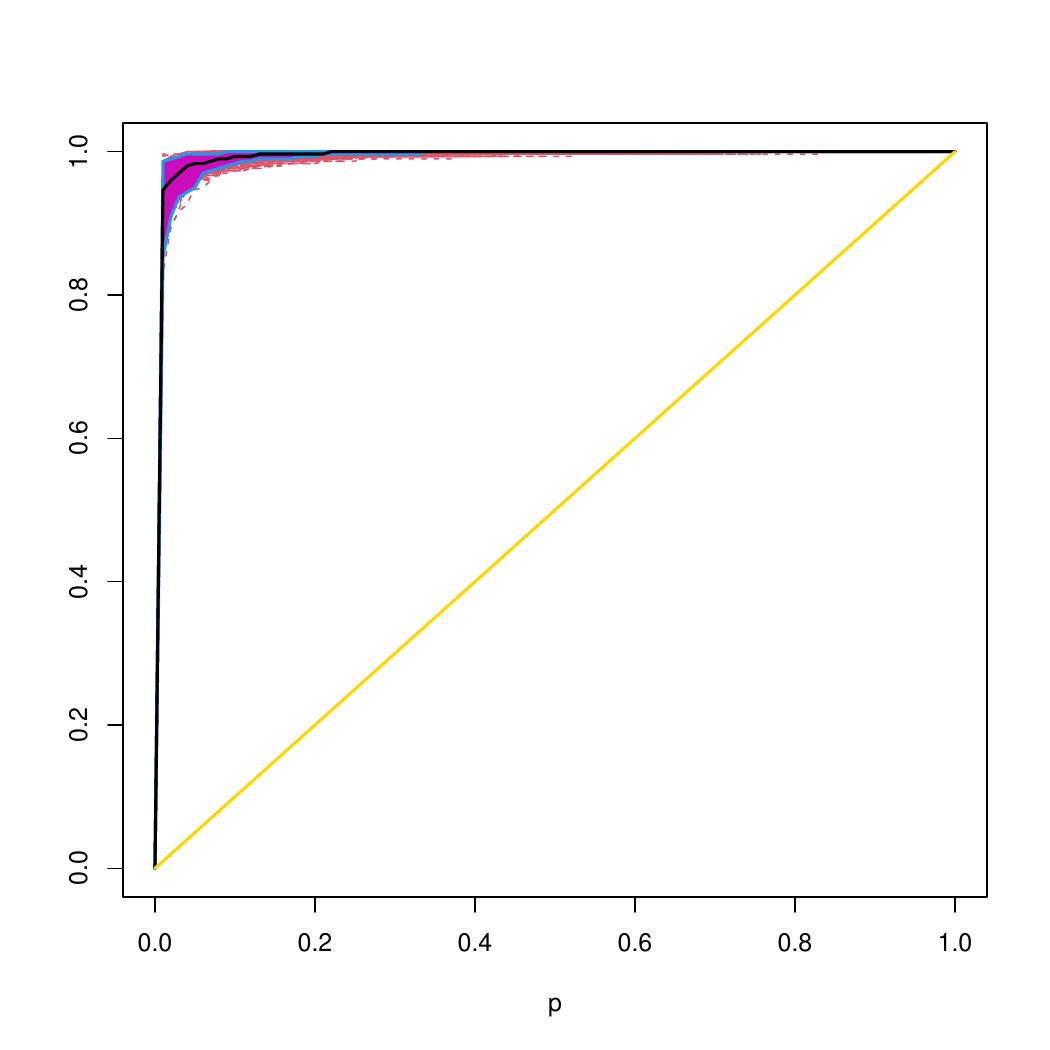}}
& \raisebox{-.5\height}{\includegraphics[scale=0.25]{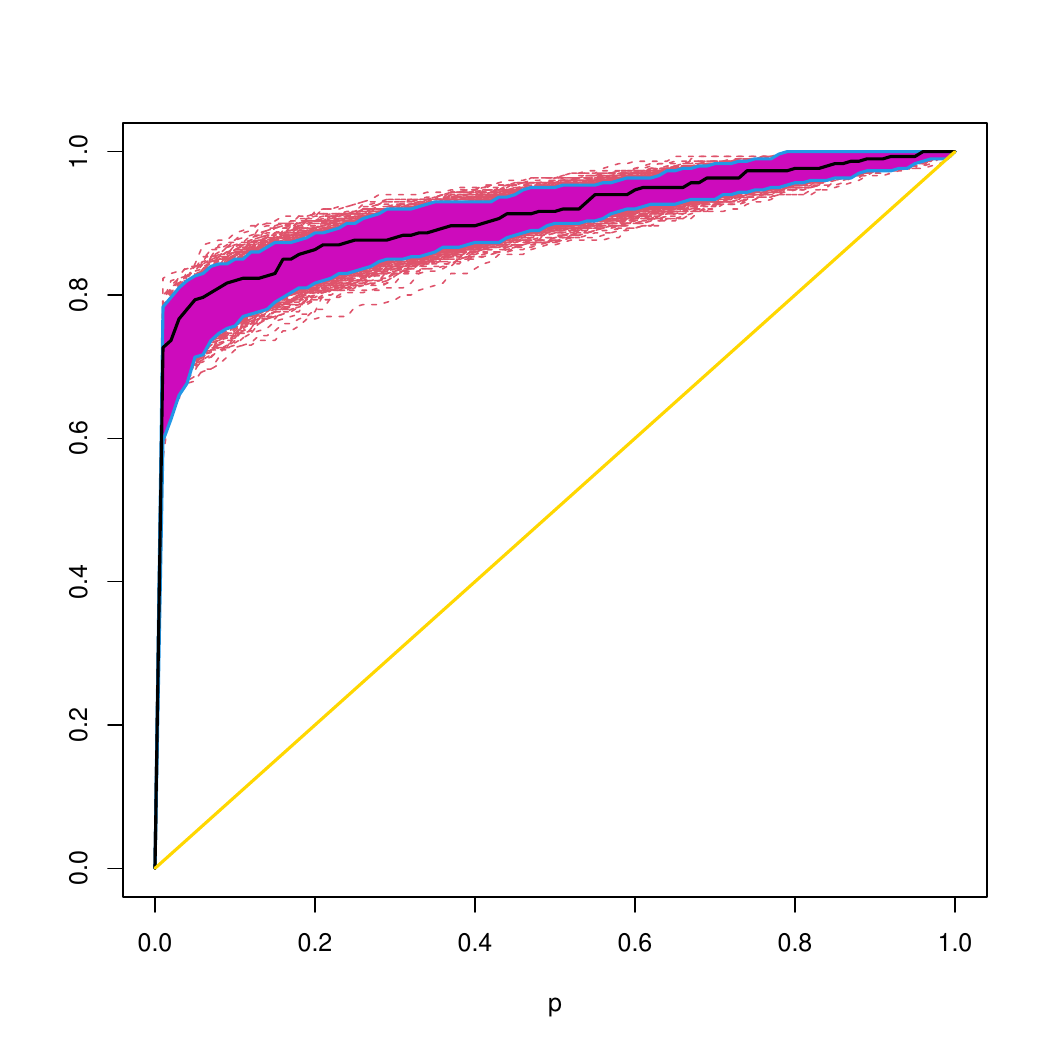}}

\end{tabular}
\caption{Functional boxplots of the estimators $\widehat{\ROC}$ under a \textsc{fcpc} model with  $\lambda_{D,1} = 0.3$, $\lambda_{D,2}=2$ and $\lambda_{D,3}=0.05$.  Rows correspond to discriminating indexes, while columns to $\mu_D(t)=2\, \sin(\pi  t)$ and $\mu_H=0$.}
\label{fig:cpc:C2} 
\end{center} 
\end{figure}

\clearpage

\begin{figure}[ht!]
 \begin{center}
 \footnotesize
 \renewcommand{\arraystretch}{0.2}

\begin{tabular}{p{2cm} cc}
 & \textbf{D11} &  \textbf{D10} \\[-2ex]
$\Upsilon_{\maxi}$ &
\raisebox{-.5\height}{ \includegraphics[scale=0.25]{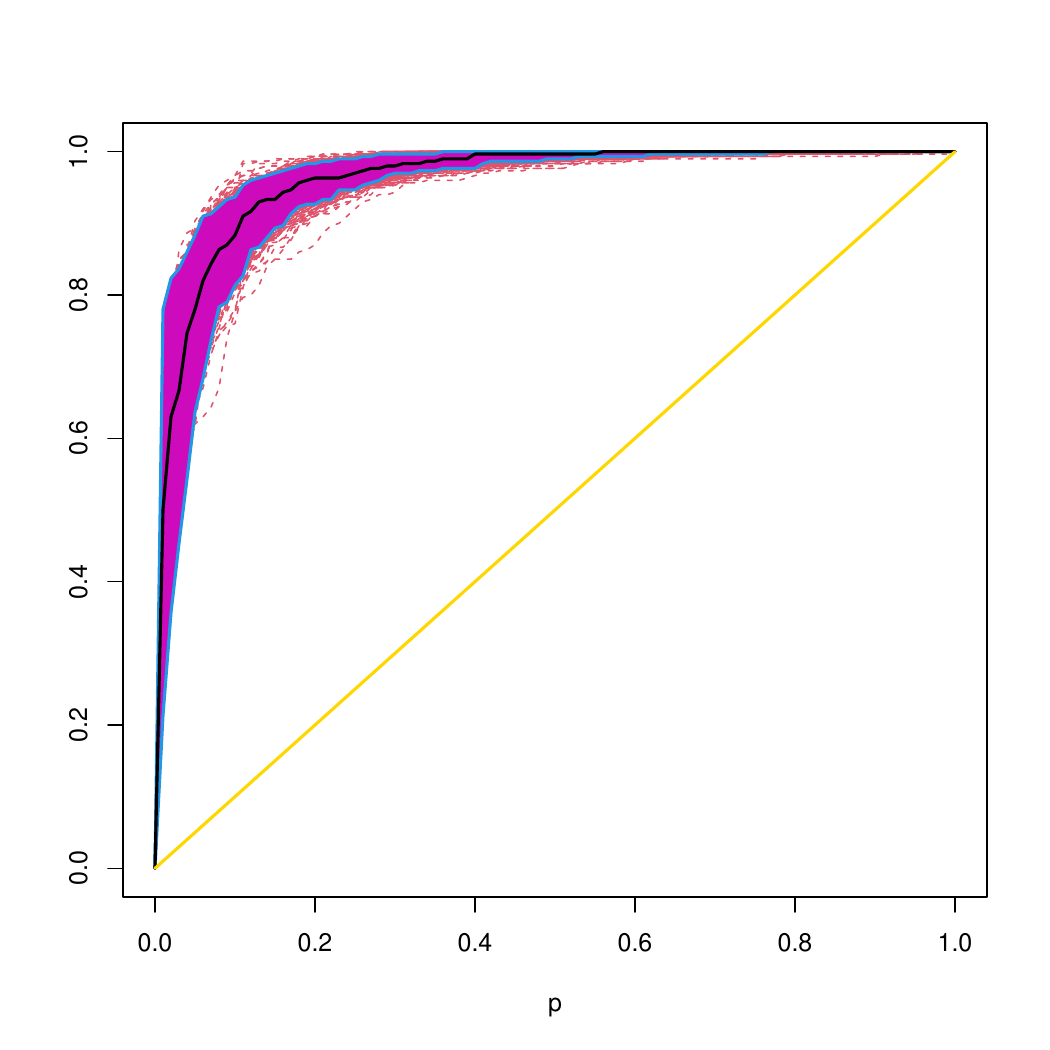}}
& \raisebox{-.5\height}{\includegraphics[scale=0.25]{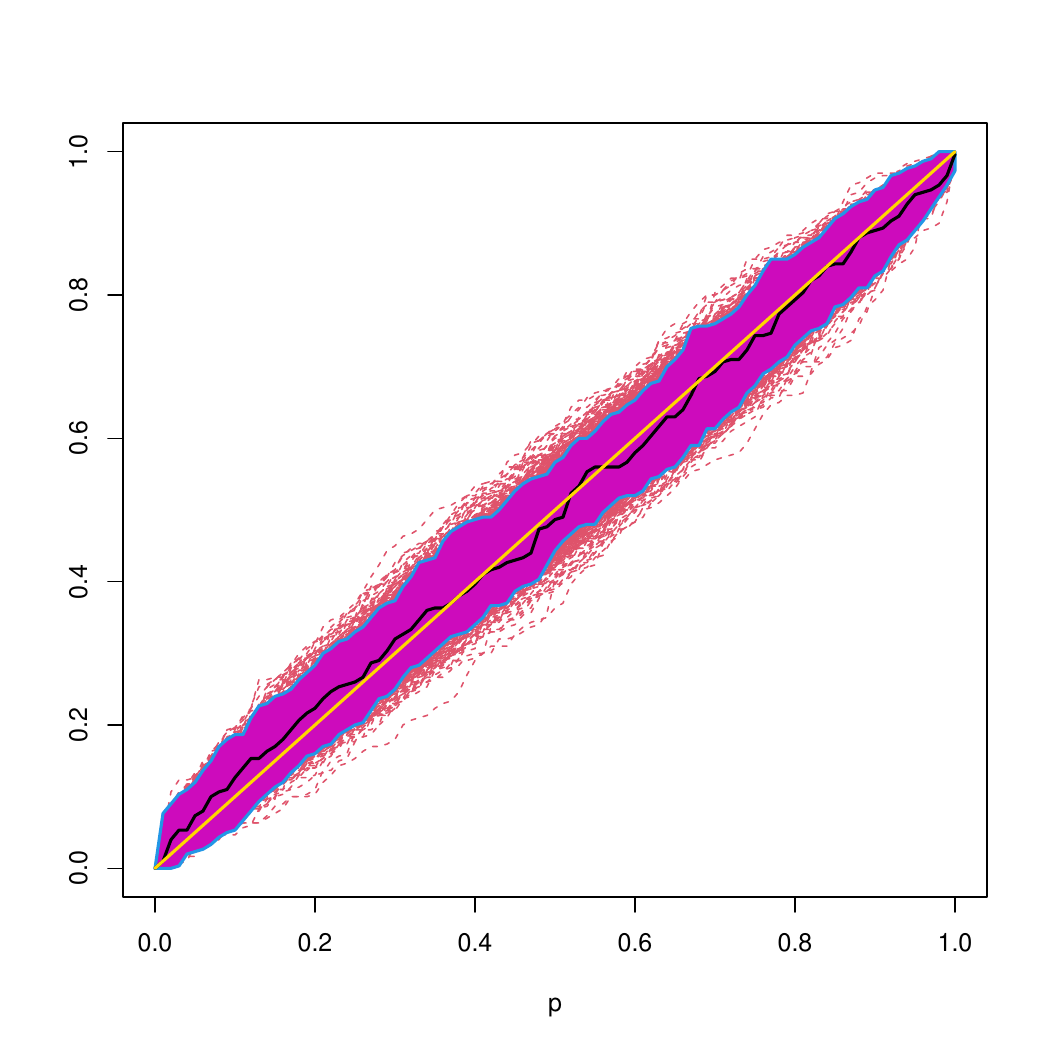}}
\\[-4ex]
    
$\Upsilon_{\inte}$  &
 \raisebox{-.5\height}{\includegraphics[scale=0.25]{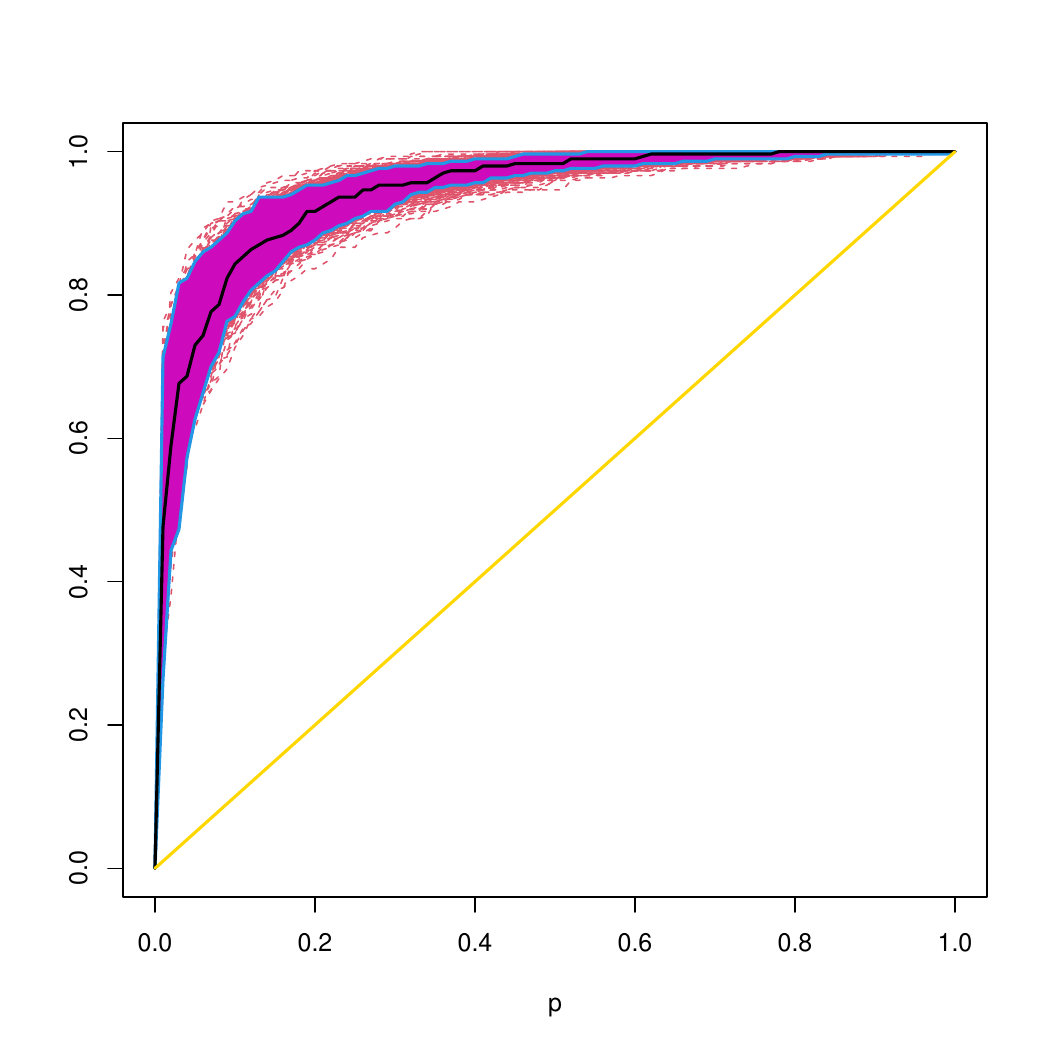}}
& \raisebox{-.5\height}{\includegraphics[scale=0.25]{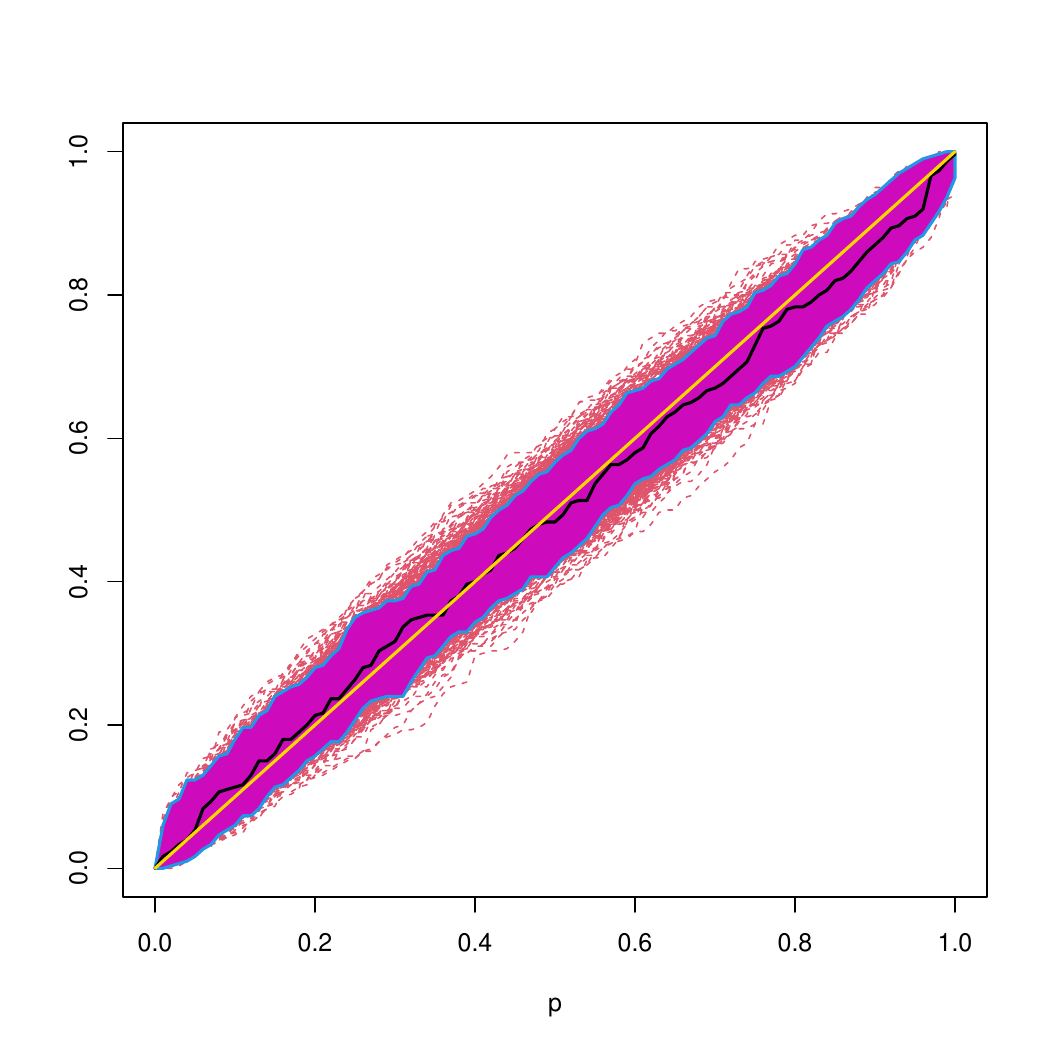}}
 \\[-4ex]
    
$\wUps_{\media}$ &
\raisebox{-.5\height}{\includegraphics[scale=0.25]{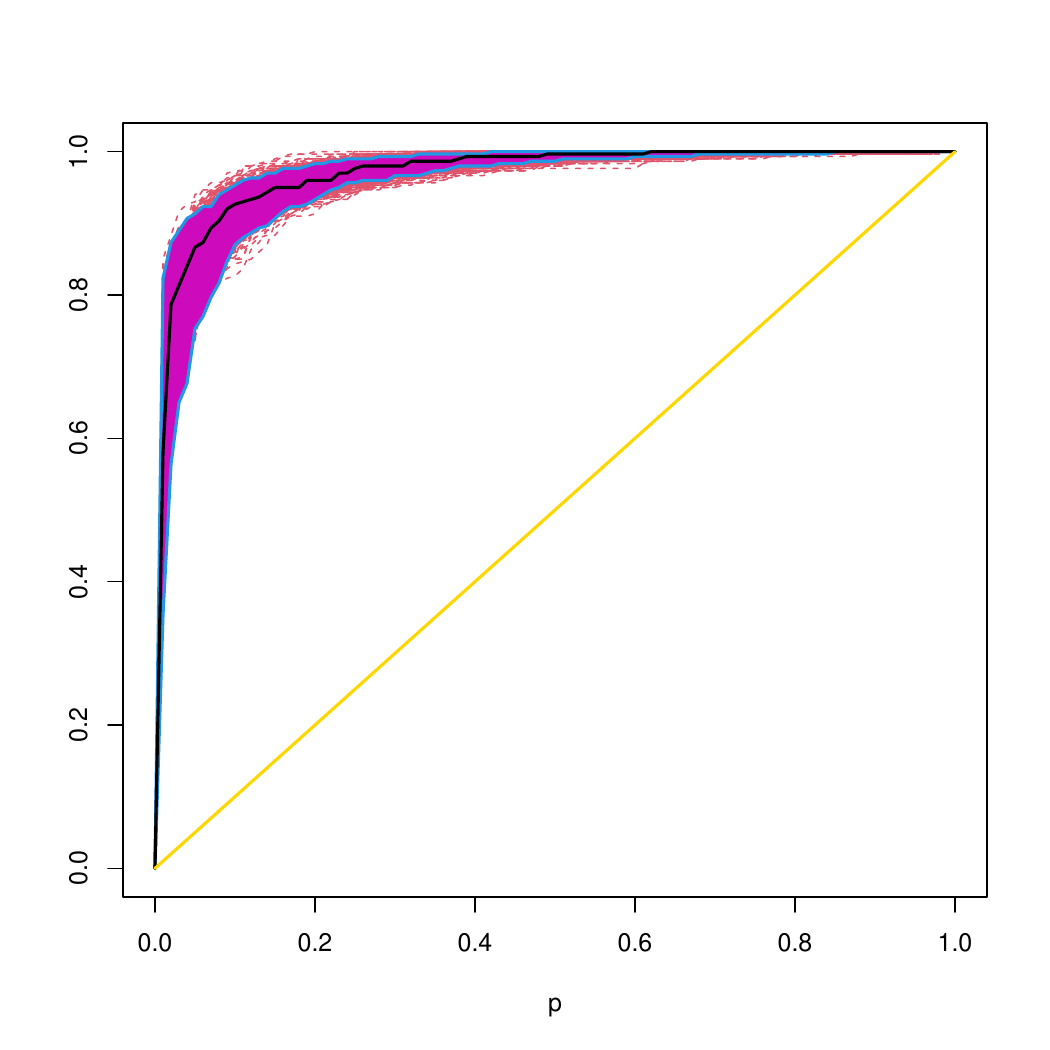}}
& \raisebox{-.5\height}{\includegraphics[scale=0.25]{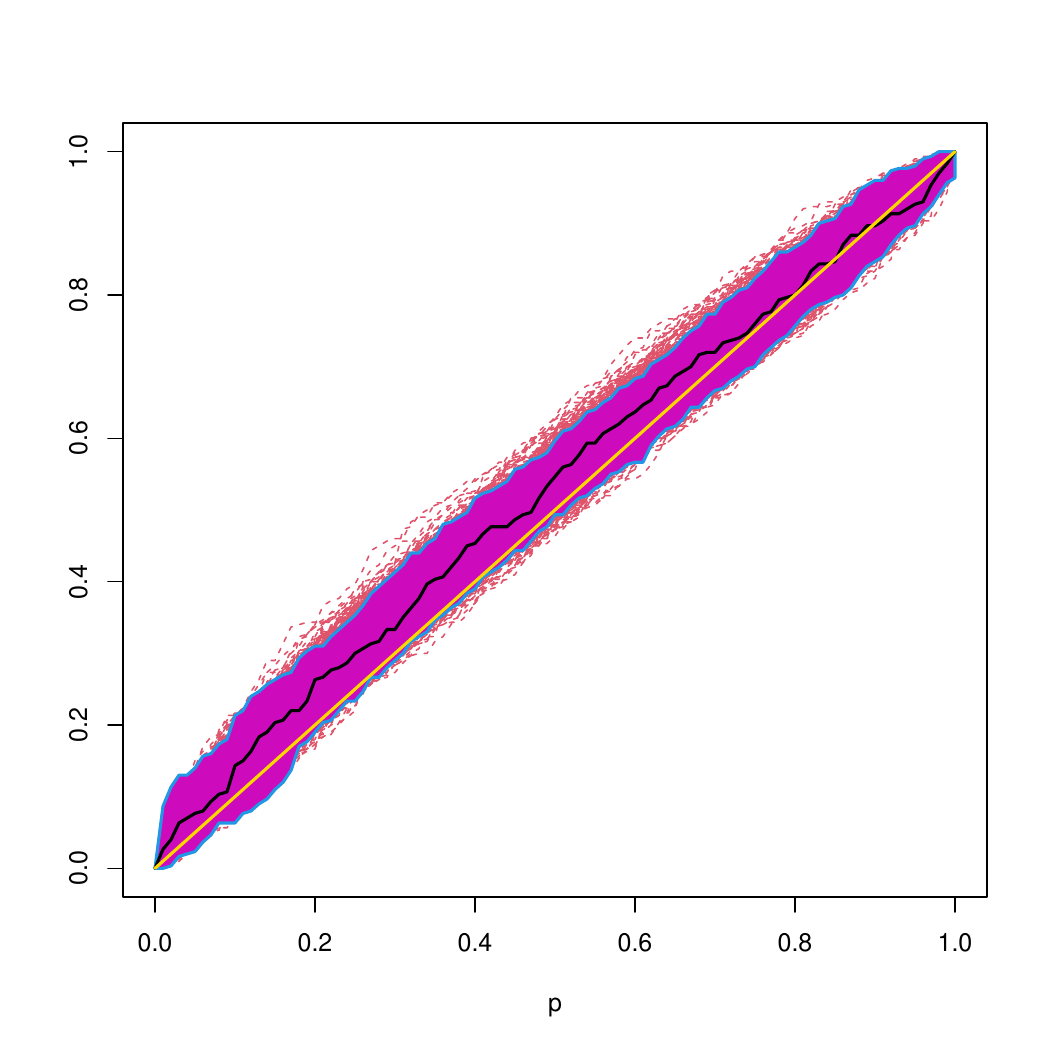}}
\\[-4ex]

$\wUps_{\lin}$  & 
\raisebox{-.5\height}{\includegraphics[scale=0.25]{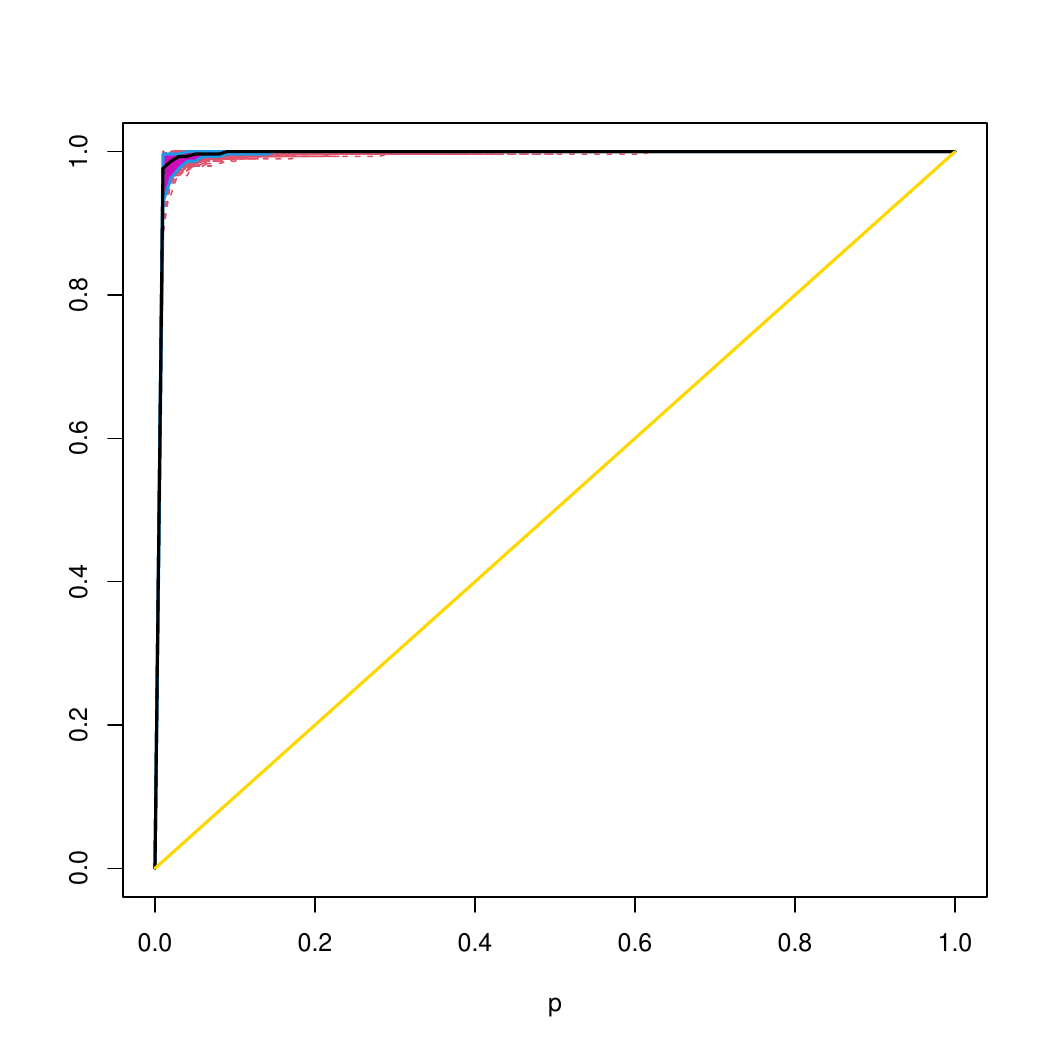}}
& \raisebox{-.5\height}{\includegraphics[scale=0.25]{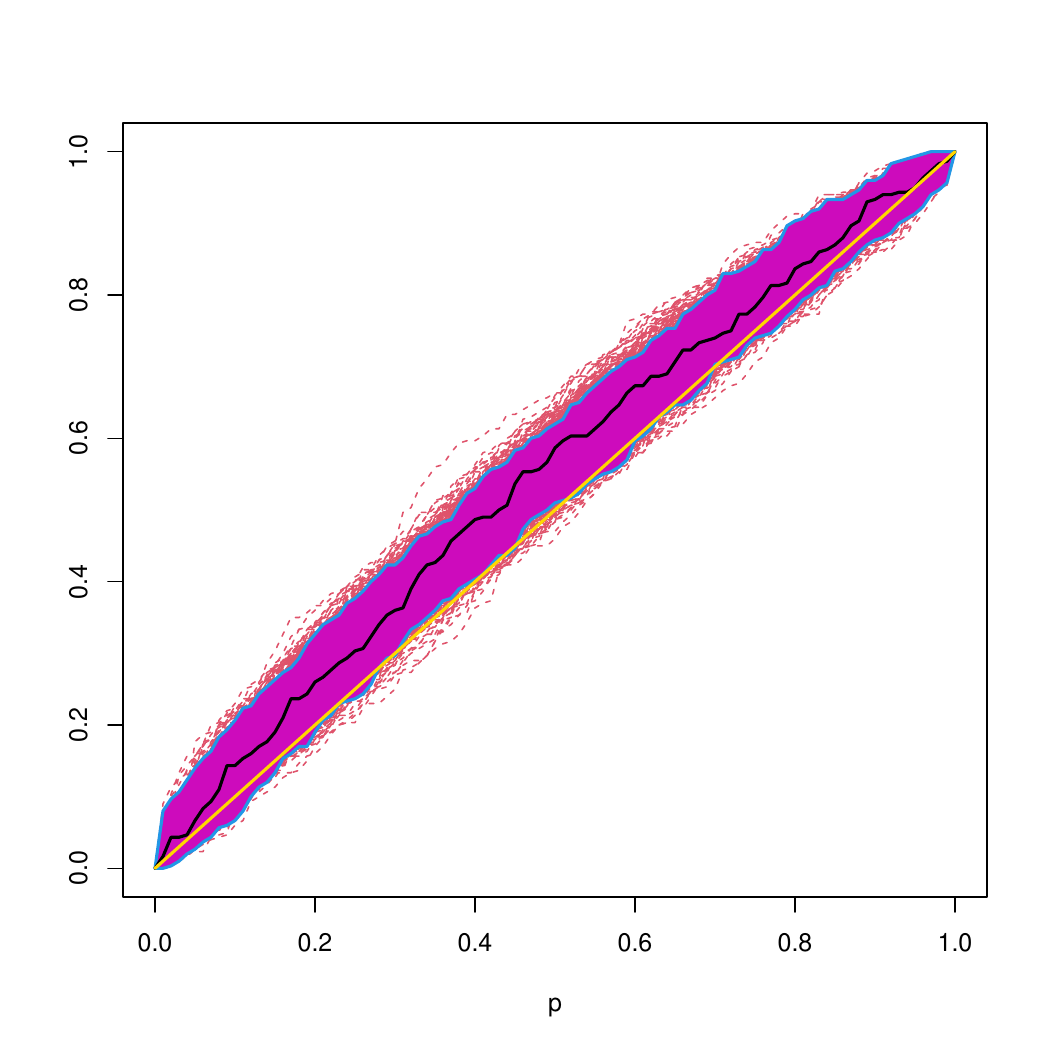}}
\\[-4ex]

$\wUps_{\cuad}$  
& \raisebox{-.5\height}{\includegraphics[scale=0.25]{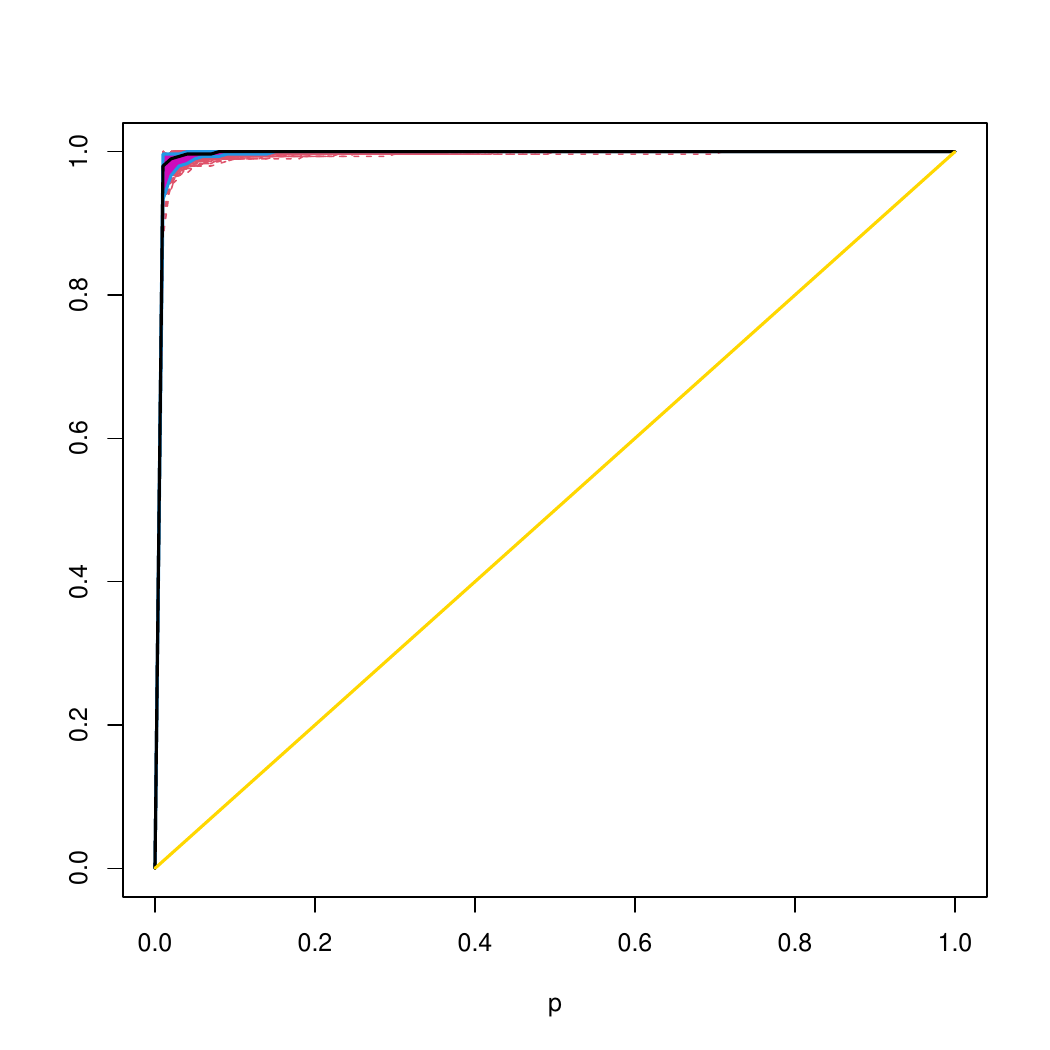}}
& \raisebox{-.5\height}{\includegraphics[scale=0.25]{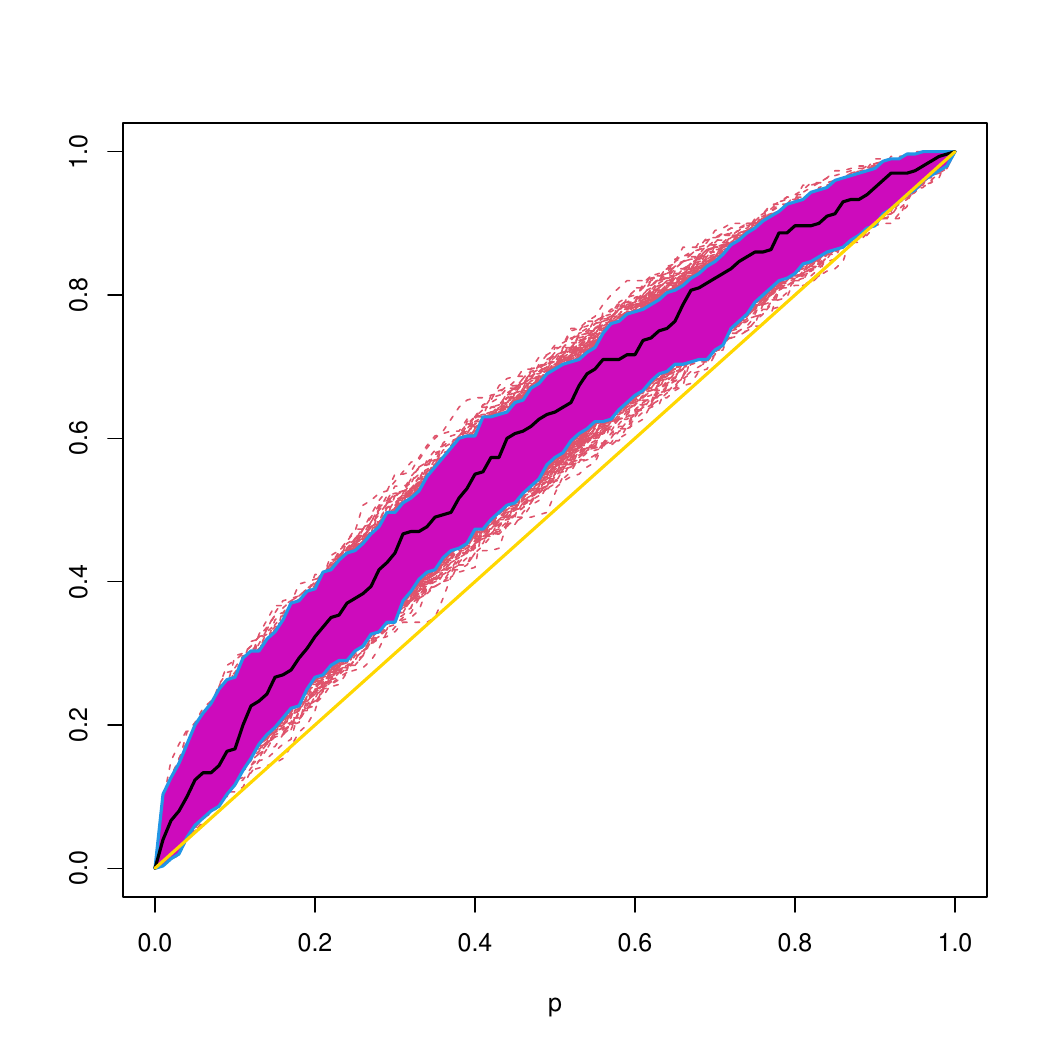}}

\end{tabular}
\caption{Functional boxplots of the estimators $\widehat{\ROC}$ under scheme \textbf{D1}. Rows correspond to discriminating indexes, while columns to  $\mu_D(t)=2\, \sin(\pi  t)$ and $\mu_H=0$.}
\label{fig:DIFF-D1} 
\end{center} 
\end{figure}

\begin{figure}[ht!]
 \begin{center}
 \footnotesize
 \renewcommand{\arraystretch}{0.2}

\begin{tabular}{p{2cm} cc}
 & \textbf{D21} &  \textbf{D20} \\[-2ex]  
$\Upsilon_{\maxi}$  &
 \raisebox{-.5\height}{\includegraphics[scale=0.25]{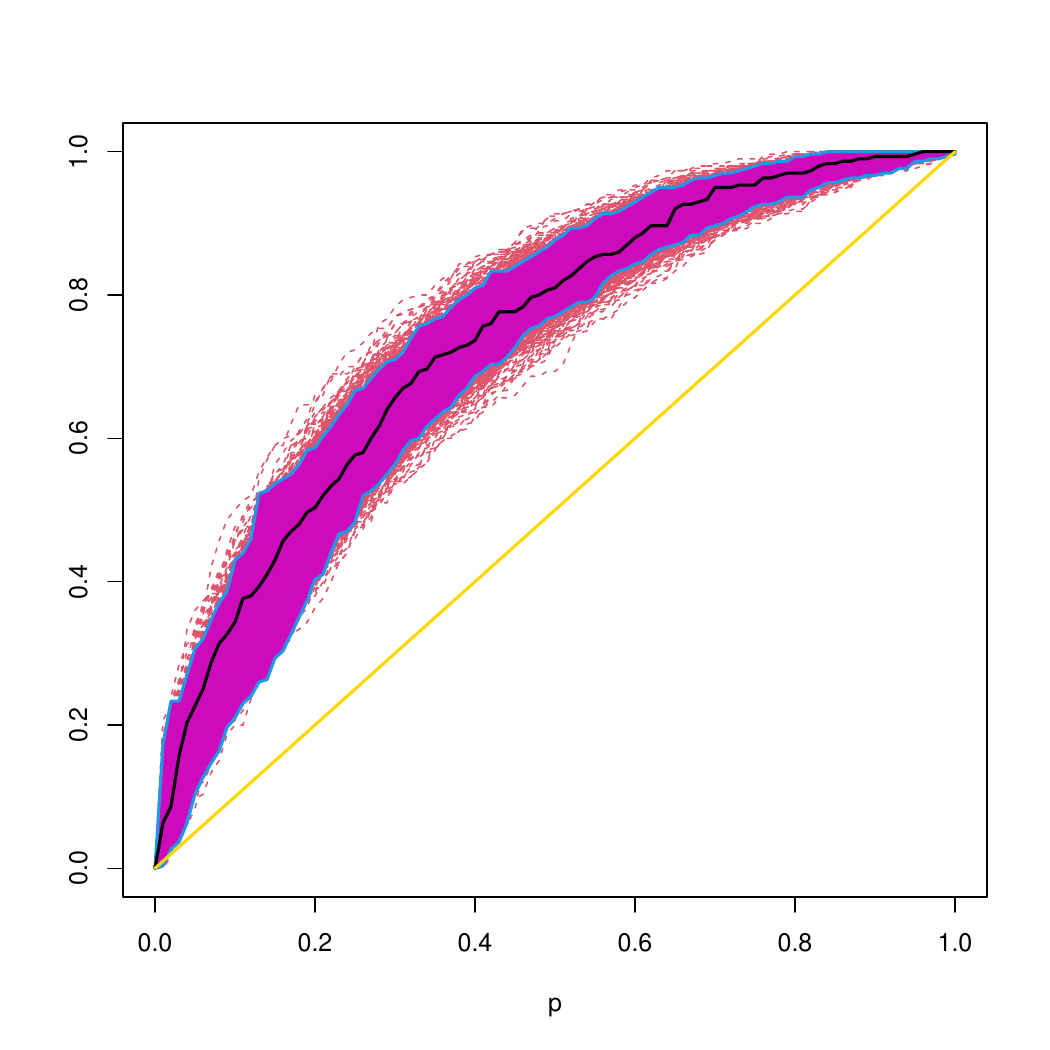}}
& \raisebox{-.5\height}{\includegraphics[scale=0.25]{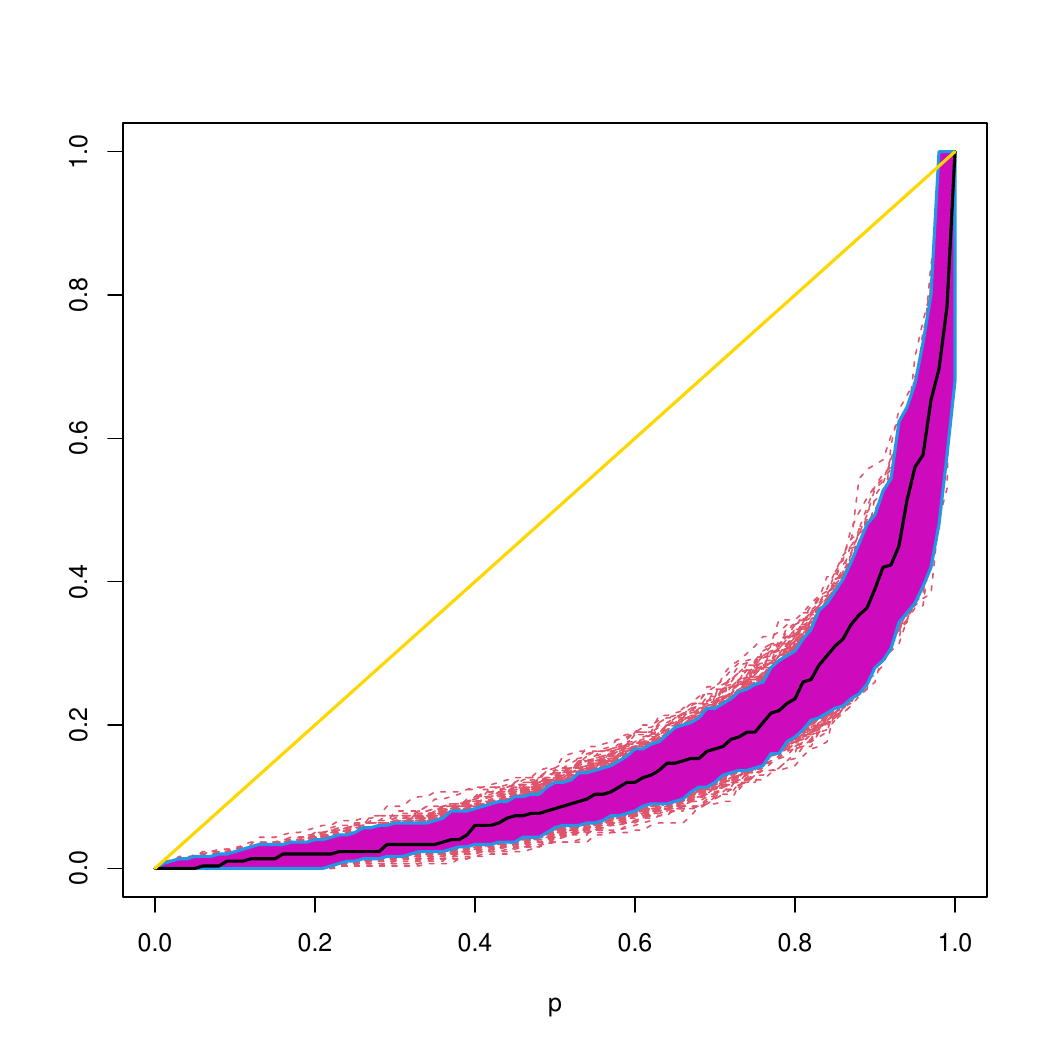}}
\\[-4ex]
 
$\Upsilon_{\inte}$  &
 \raisebox{-.5\height}{\includegraphics[scale=0.25]{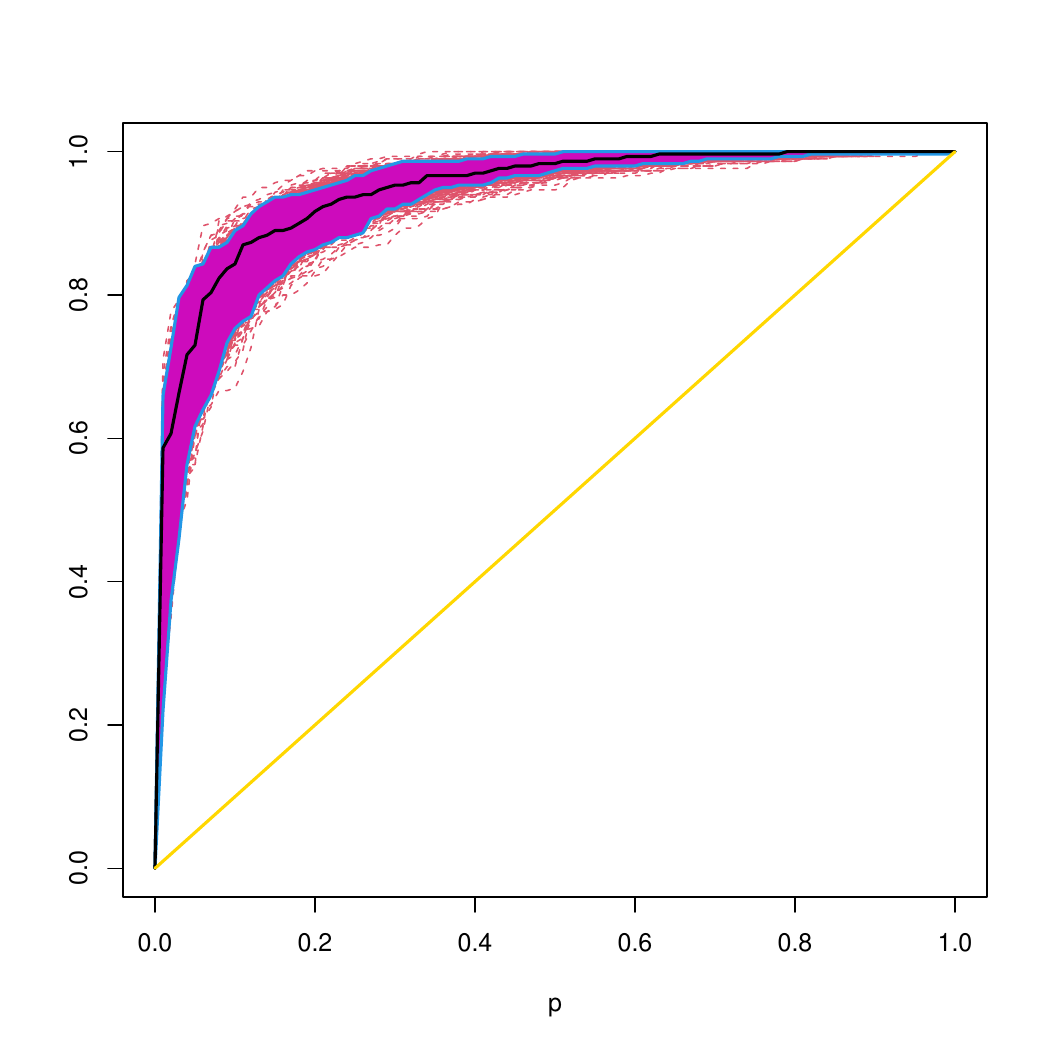}}
& \raisebox{-.5\height}{\includegraphics[scale=0.25]{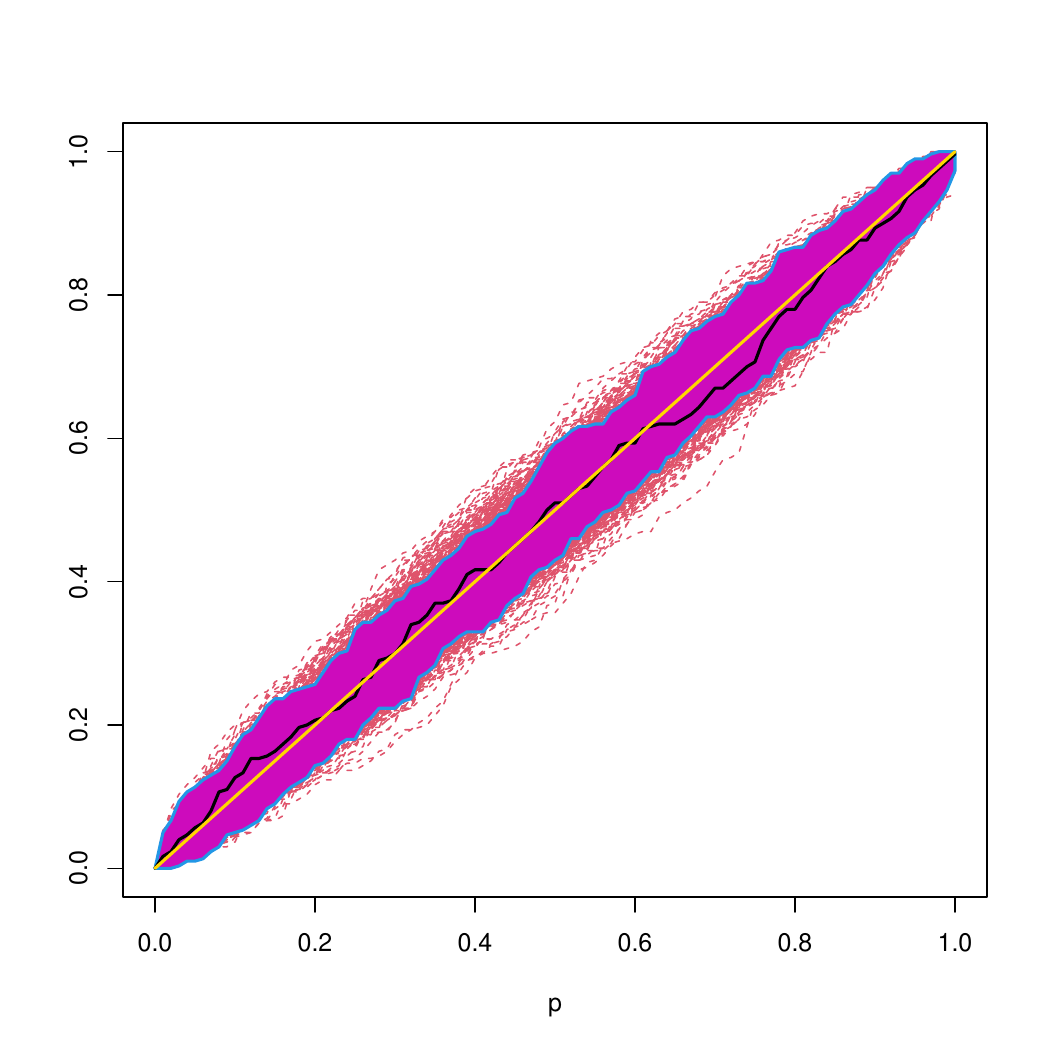}}
 \\[-4ex]
    
$\wUps_{\media}$ &
\raisebox{-.5\height}{\includegraphics[scale=0.25]{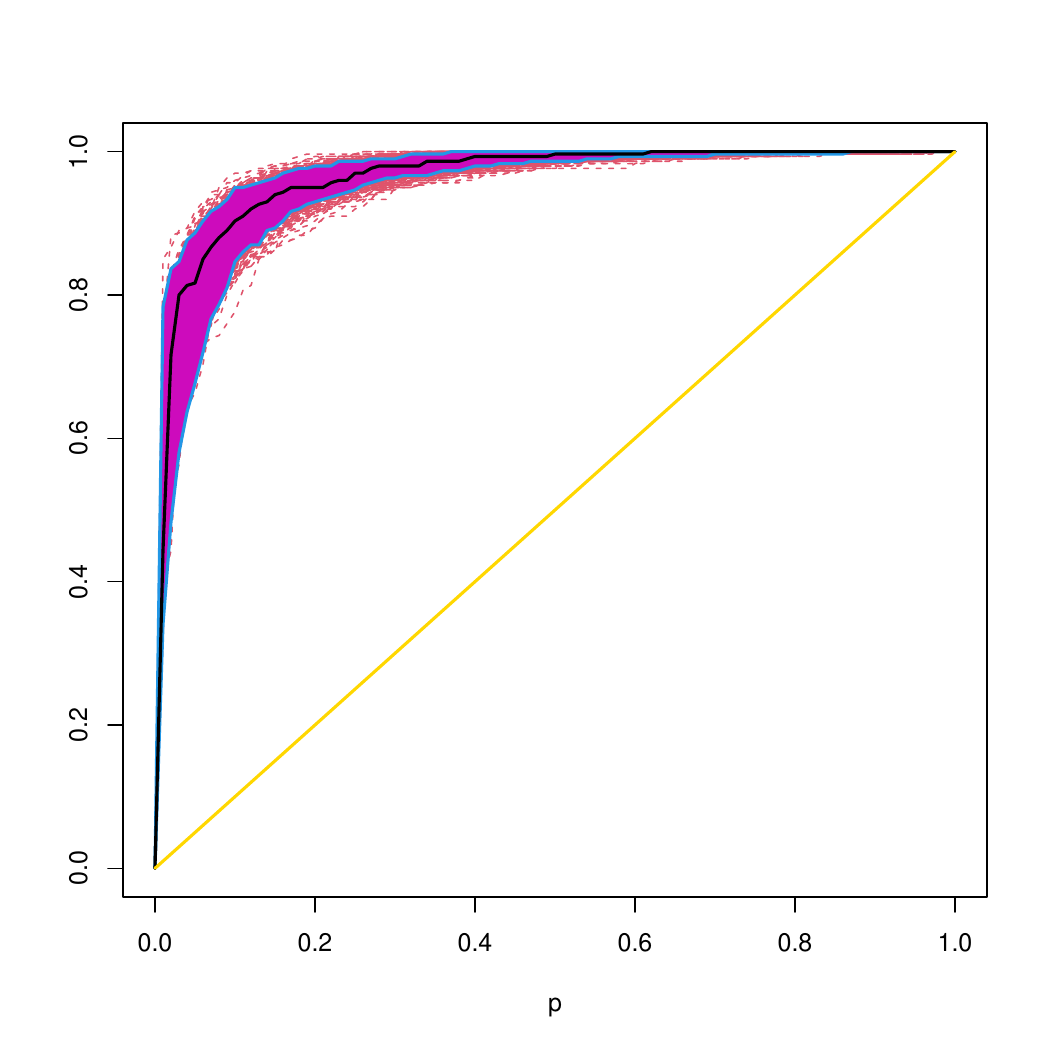}}
& \raisebox{-.5\height}{\includegraphics[scale=0.25]{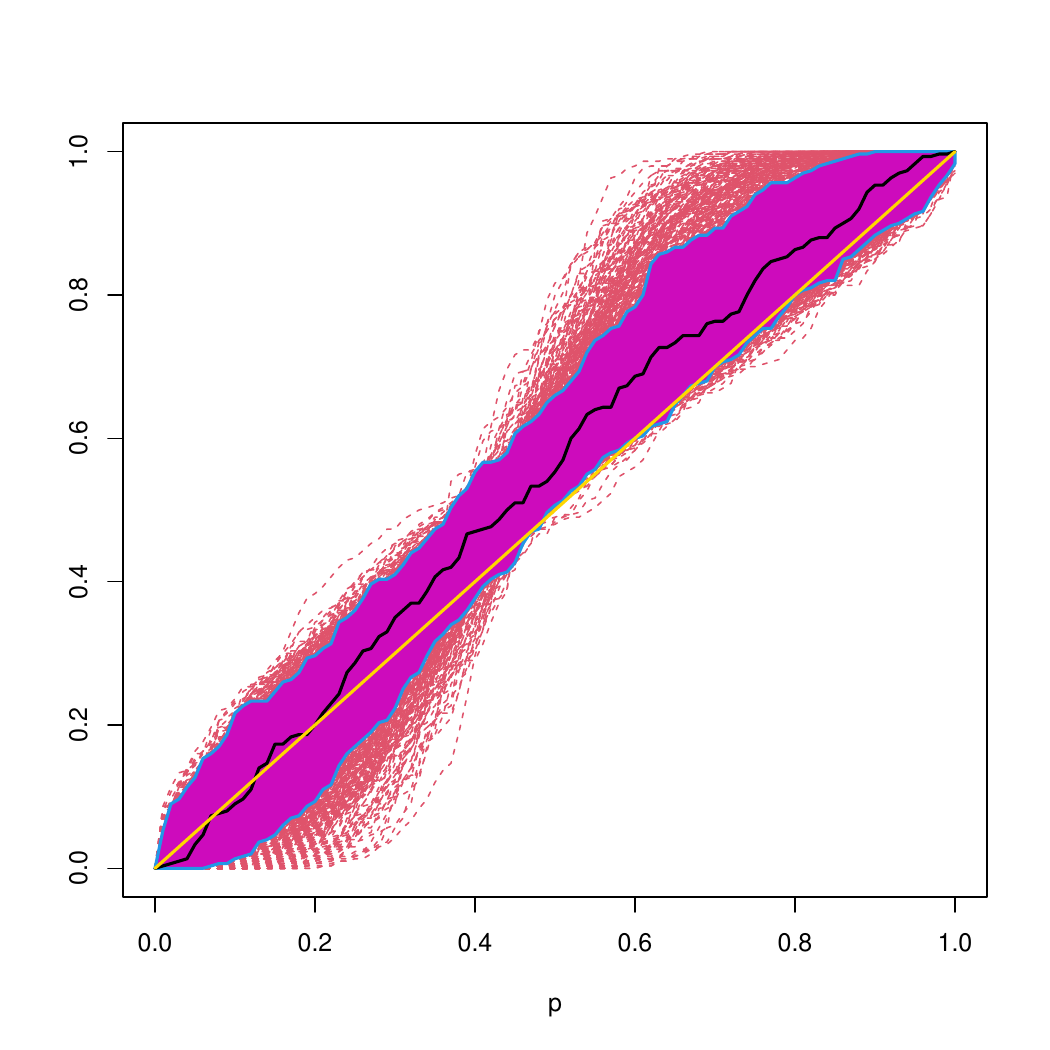}}
\\[-4ex]

$\wUps_{\lin}$  & 
\raisebox{-.5\height}{\includegraphics[scale=0.25]{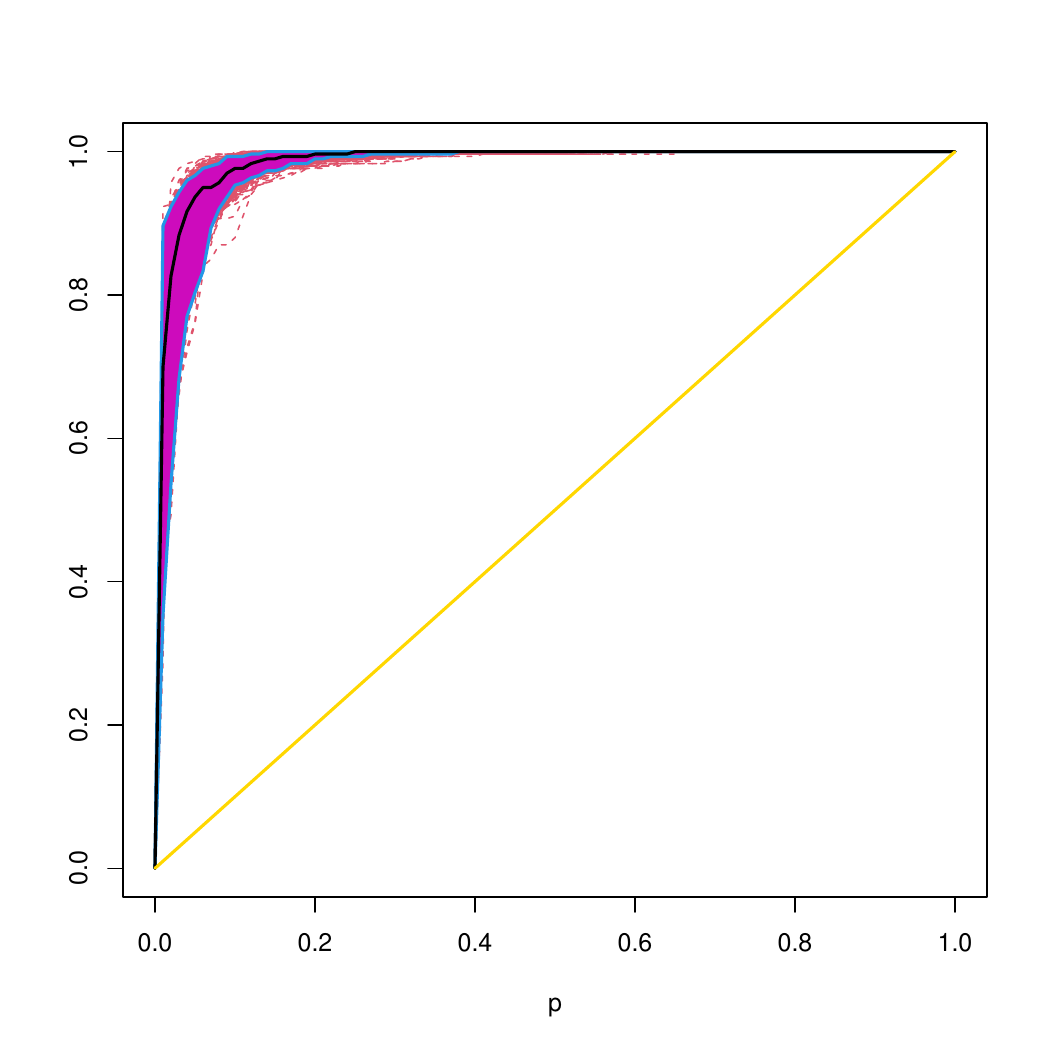}}
& \raisebox{-.5\height}{\includegraphics[scale=0.25]{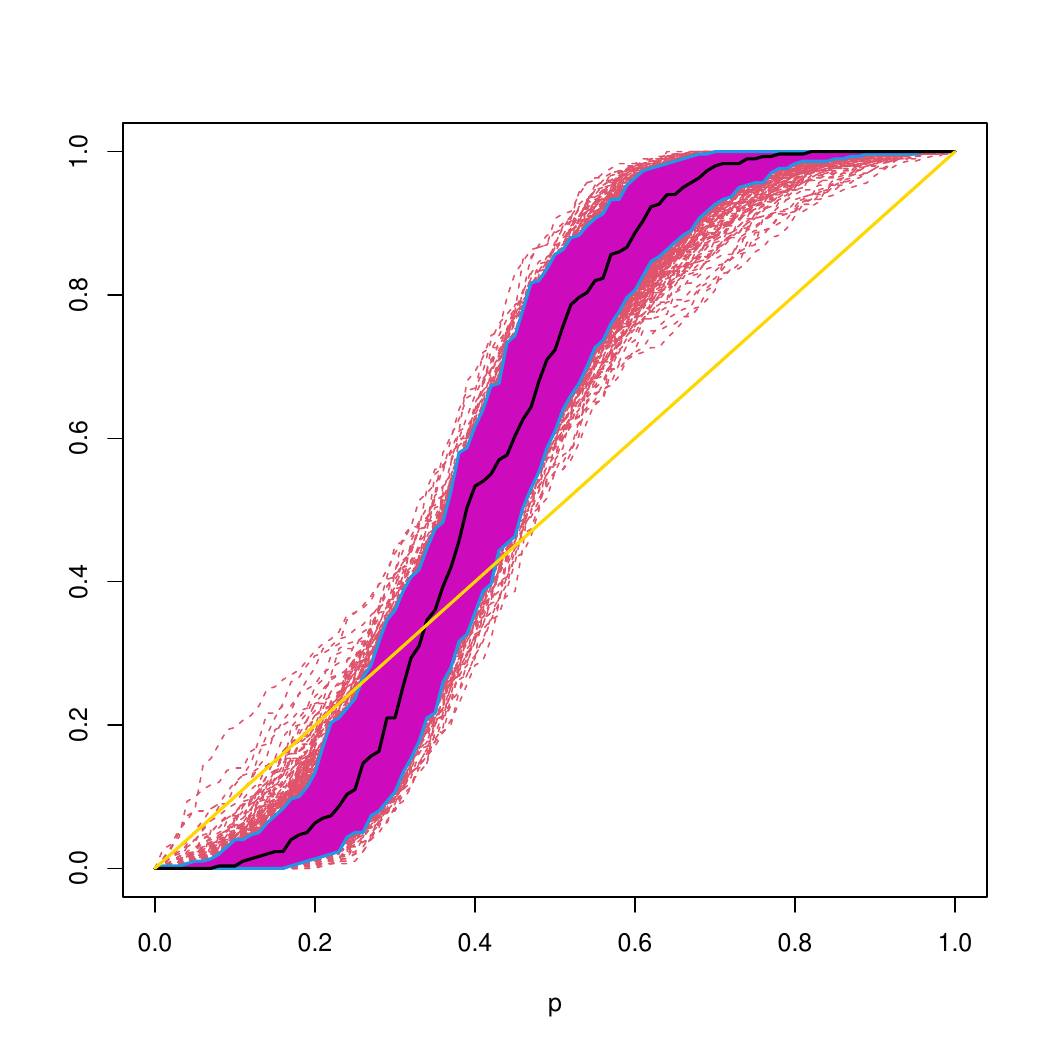}}
\\[-4ex]

$\wUps_{\cuad}$ & 
\raisebox{-.5\height}{\includegraphics[scale=0.25]{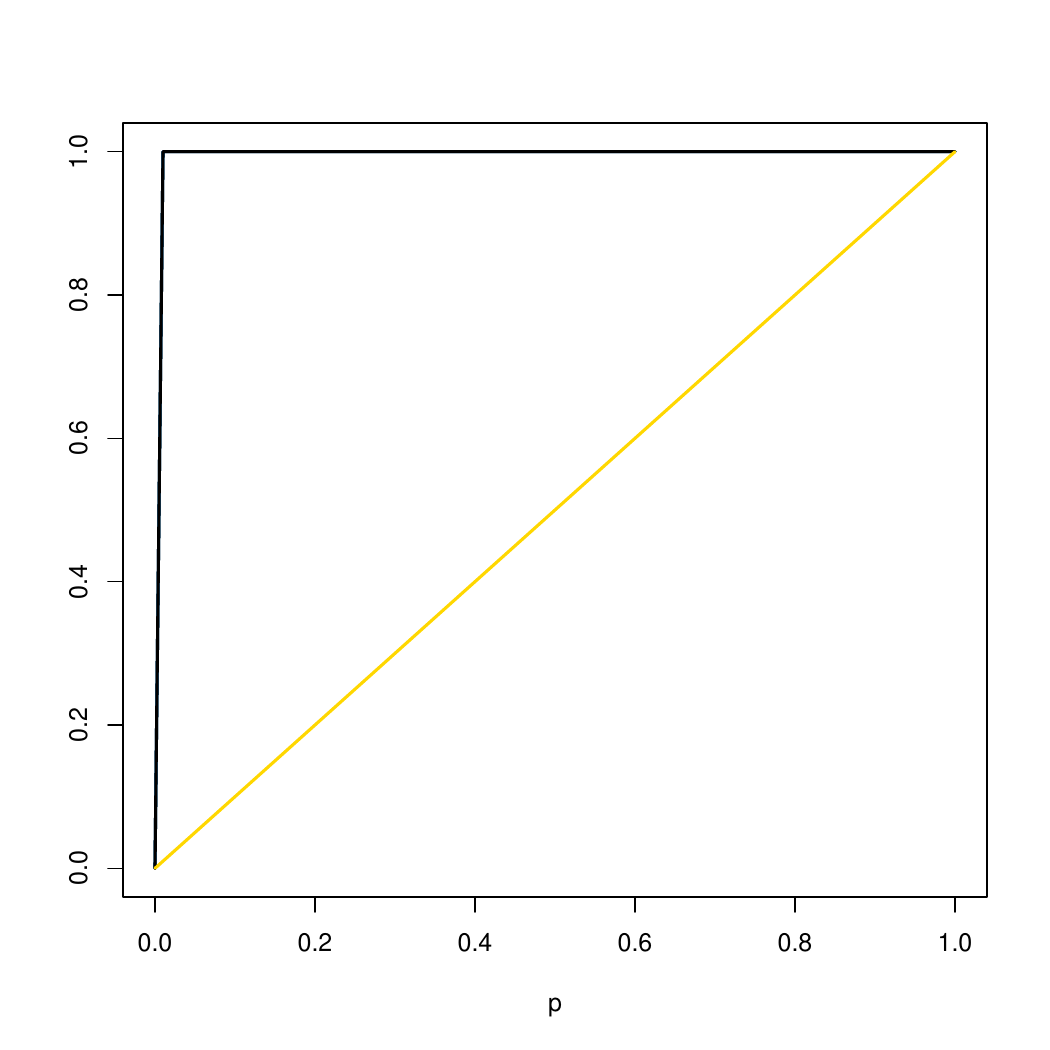}}
& \raisebox{-.5\height}{\includegraphics[scale=0.25]{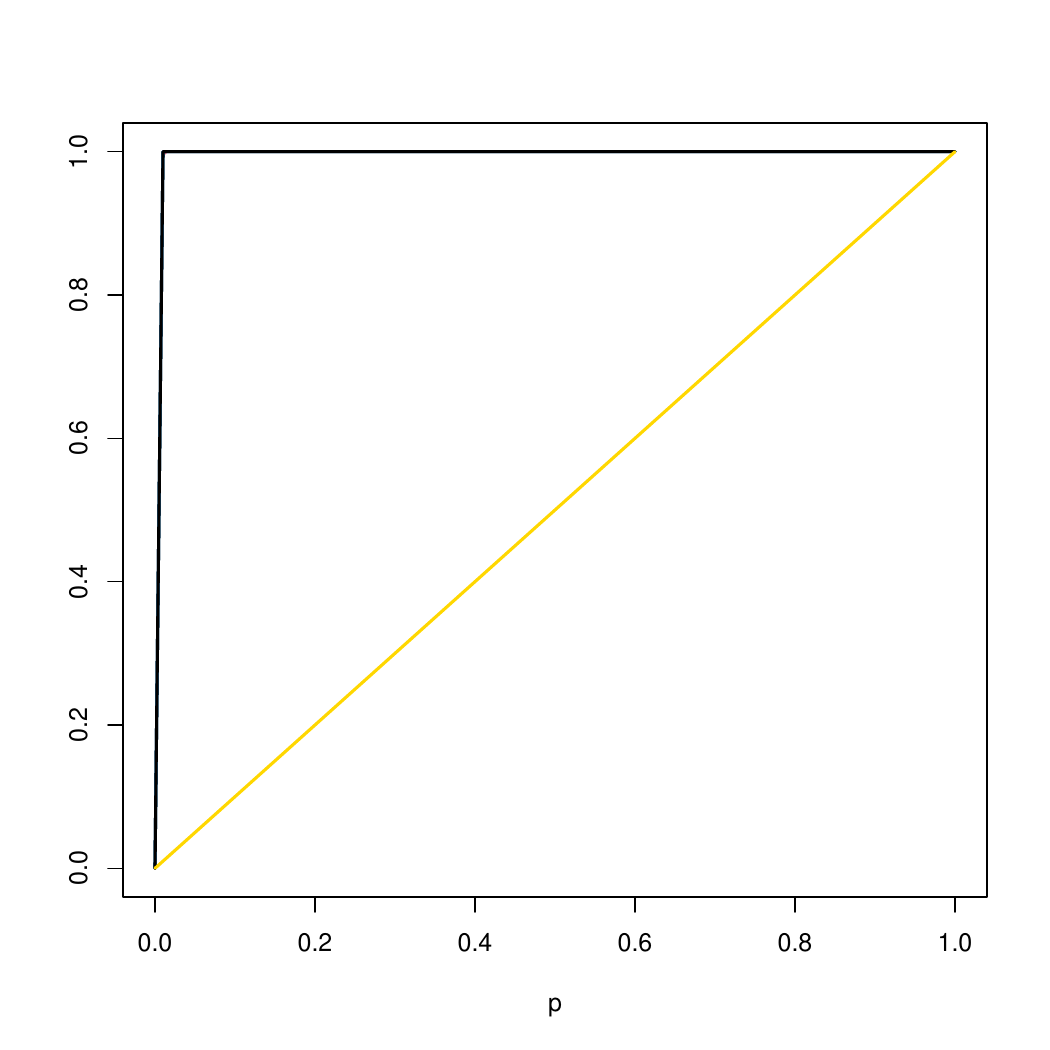}}

\end{tabular}
\caption{Functional boxplots of the estimators $\widehat{\ROC}$ under scheme \textbf{D2}.  Rows correspond to discriminating indexes, while columns to $\mu_D(t)=2\, \sin(\pi  t)$ and $\mu_H=0$.}
\label{fig:DIFF-D2}
\end{center} 
\end{figure}

\subsection{Numerical results for unbalanced designs}{\label{sec:desbalance}}
In this Section, we report the results of a numerical study conducted to evaluate the effect of  unbalanced sample sizes. To consider a framework similar to the cardiotoxicity data set, we chose $n_D=30$ and $n_H=250$ under a proportional model with $\rho=2$.  Table  \ref{tab:summary-propor-nD30-nH-250} displays the mean and standard deviations of the AUC estimators, while Figures \ref{fig:propor:Brownian-nD30-nH250} and \ref{fig:propor:varexp-nD30-nH250}  display  the functional boxplots corresponding to the estimates of the ROC. The obtained results reveal that, as for the situation where the two samples have the same size, the quadratic rule $\wUps_{\cuad}$ outperforms the other competitors, even when  this setting is not so harmful for the linear rule as the one where the covariance operators follow a functional common principal component model. This suggests that the quadratic rule should be taken into account in frameworks where equality of the covariance operators may be doubtful.

\begin{table}[ht!]
\begin{center}
\footnotesize
\renewcommand{\arraystretch}{1.2}
\setlength{\tabcolsep}{2pt}
\caption{\label{tab:summary-propor-nD30-nH-250} Mean and standard deviation of the $\widehat{\AUC}$, under scenario \textbf{PROP}, that is, under a proportional model  $\gamma_D(t,s)=\rho\;\gamma_H(t,s)$ with equal (\textbf{P0}) or different mean functions (\textbf{P1}), $\mu_H(t)=0$ and $\mu_D(t)=2\, \sin(\pi  t)$). In all cases, $n_H=250$ and $n_D=30$.} 
{\begin{tabular}{c c cccccc@{\extracolsep{1cm}}    c@{\extracolsep{3pt}}ccccc }
   \hline \\[-2ex]
$\hskip0.1in\rho\hskip0.1in$  & & $\Upsilon_{\maxi}$ & $\Upsilon_{\mini}$ & $\Upsilon_{\inte}$ & $\wUps_{\media}$ & $\wUps_{\lin}$  & $\wUps_{\cuad}$
& $\Upsilon_{\maxi}$ & $\Upsilon_{\mini}$ & $\Upsilon_{\inte}$ & $\wUps_{\media}$ & $\wUps_{\lin}$  & $\wUps_{\cuad}$
\\
\hline
& & \multicolumn{6}{c}{\textbf{P1}} & \multicolumn{6}{c}{\textbf{P0}}\\ \hline 
& & \multicolumn{12}{c}{Brownian Motion}\\
\hline 
 2 & Mean &  0.9435 &  0.6187 & 0.8965 & 0.9330 & \textit{0.9937} & \textbf{0.9951}
          &  0.5986 &  0.4031 & 0.5005 & 0.5816 & \textit{0.6216} & \textbf{0.7917}
  \\
  & SD & 0.0212 & 0.0655 & 0.0346 & 0.0235 & 0.0067 & 0.0057 
       & 0.0585 & 0.0581 & 0.0635 & 0.0370 & 0.0458 & 0.0462
 \\
\hline 
& & \multicolumn{12}{c}{Exponential Variogram}\\
\hline 
 2 & Mean & 0.9521 & 0.5520 & 0.9011 & 0.9295 & \textit{0.9683} & \textbf{0.9988}
 	      & 0.7193 & 0.2828 & 0.5006 & 0.6518 & \textit{0.7670} & \textbf{0.9943}
 \\
  & SD & 0.0223 & 0.0658 & 0.0342 & 0.0274 & 0.0173 & 0.0017
       &  0.0557 & 0.0546 & 0.0637 & 0.0363 & 0.0402 & 0.0052
 \\
\hline
\end{tabular}}
\end{center}
\end{table}


\begin{figure}[ht!]
 \begin{center}
 \footnotesize
 \renewcommand{\arraystretch}{0.2}
  
\begin{tabular}{p{2cm} cc}
 & \textbf{P1} &  \textbf{P0} \\[-2ex]
$\Upsilon_{\maxi}$  &
 \raisebox{-.5\height}{\includegraphics[scale=0.25]{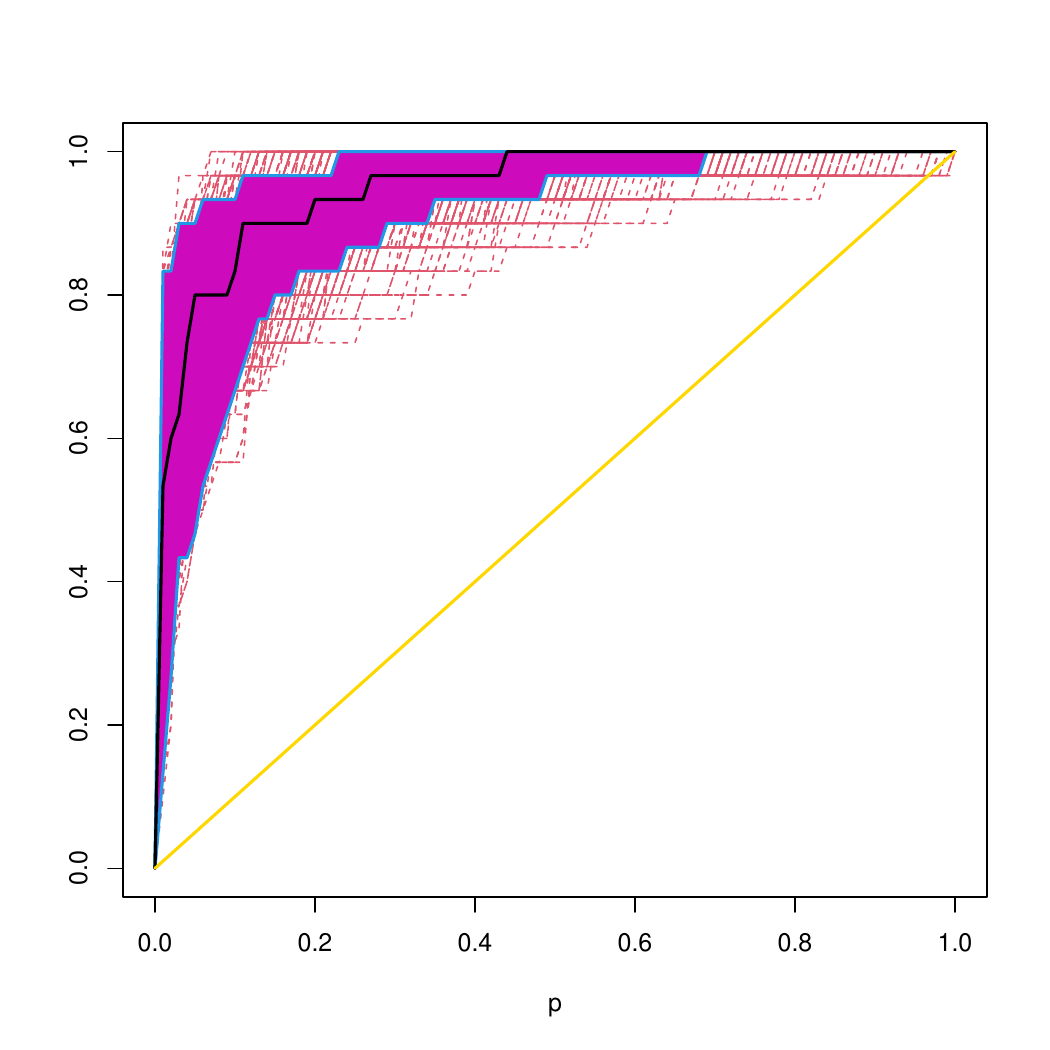}}
& \raisebox{-.5\height}{\includegraphics[scale=0.25]{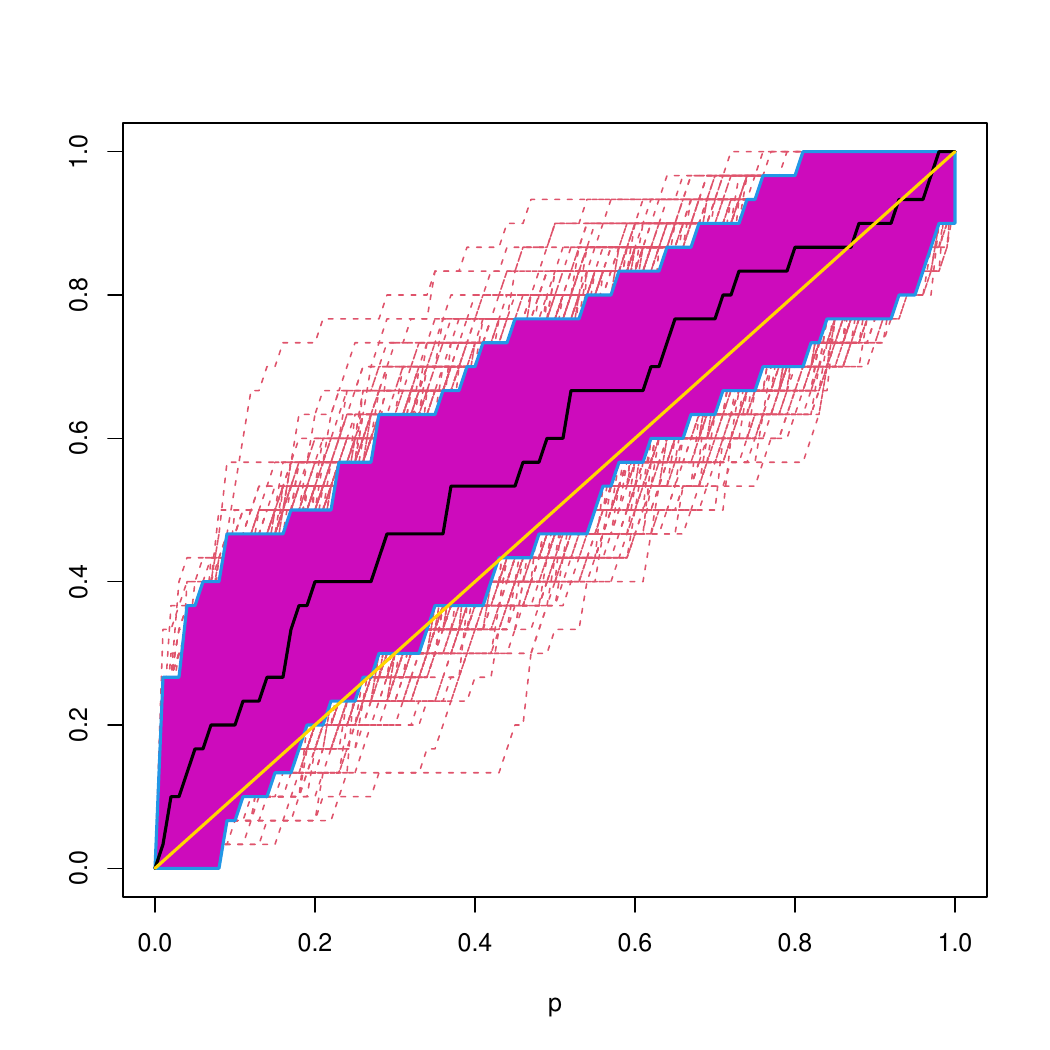}}
\\[-4ex]
$\Upsilon_{\inte}$  &
 \raisebox{-.5\height}{\includegraphics[scale=0.25]{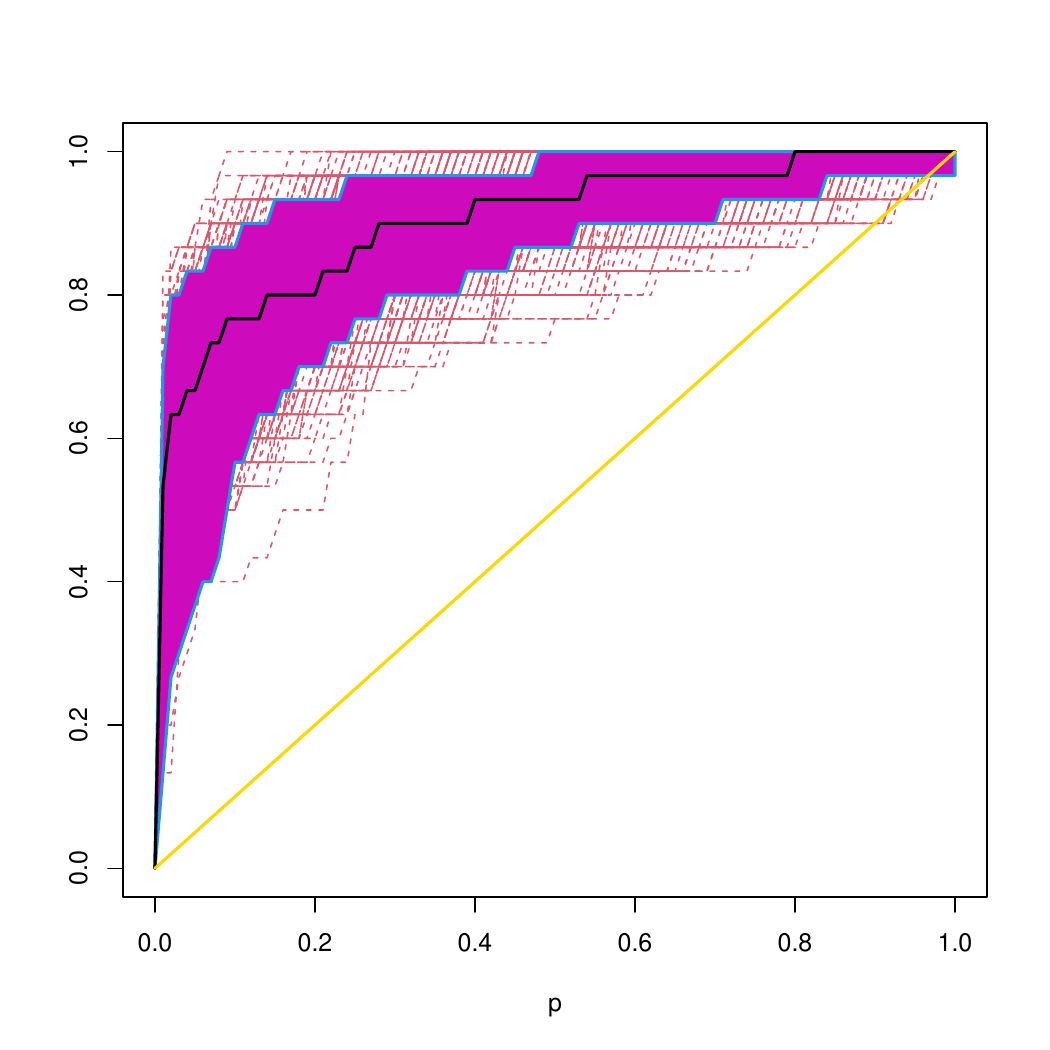}}
& \raisebox{-.5\height}{\includegraphics[scale=0.25]{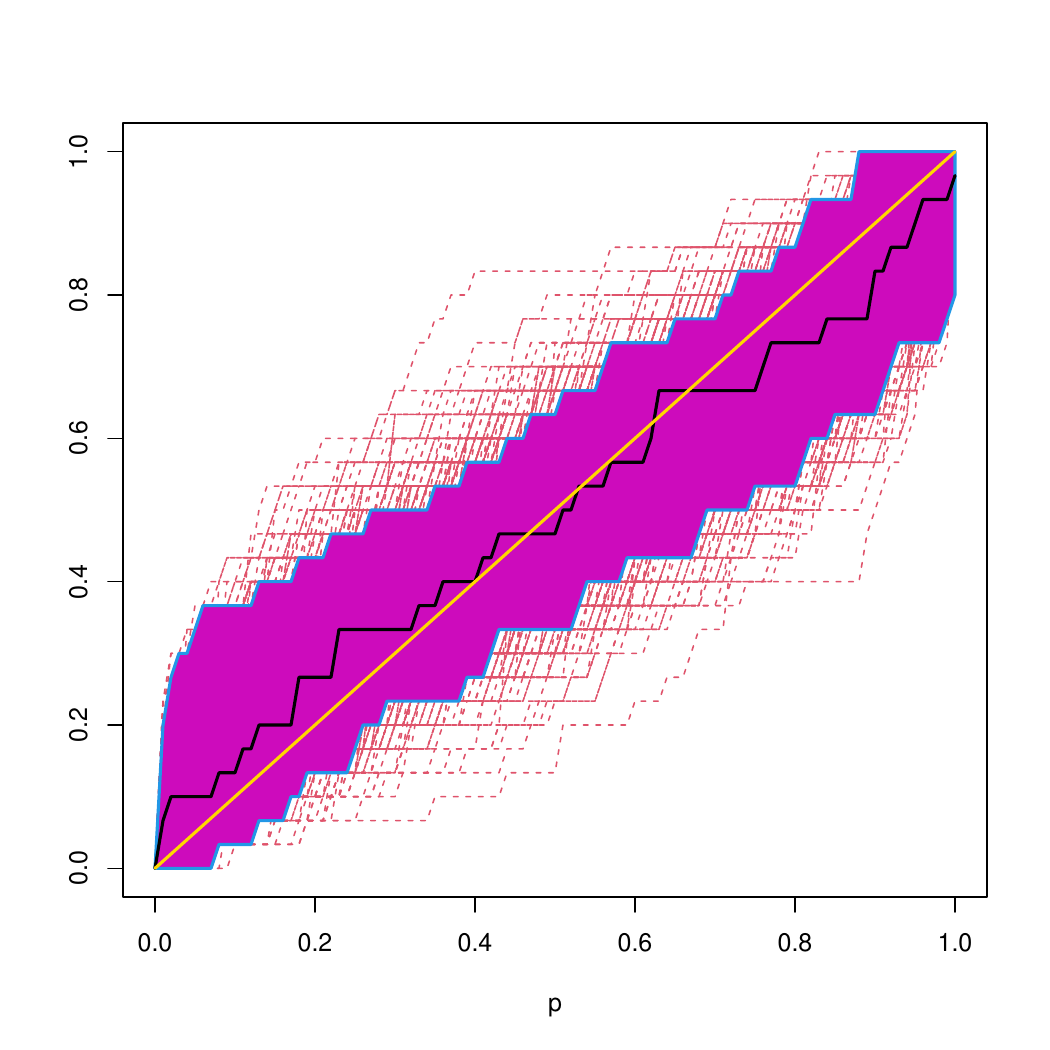}}
 \\[-4ex]
    
$\wUps_{\media}$ &
\raisebox{-.5\height}{\includegraphics[scale=0.25]{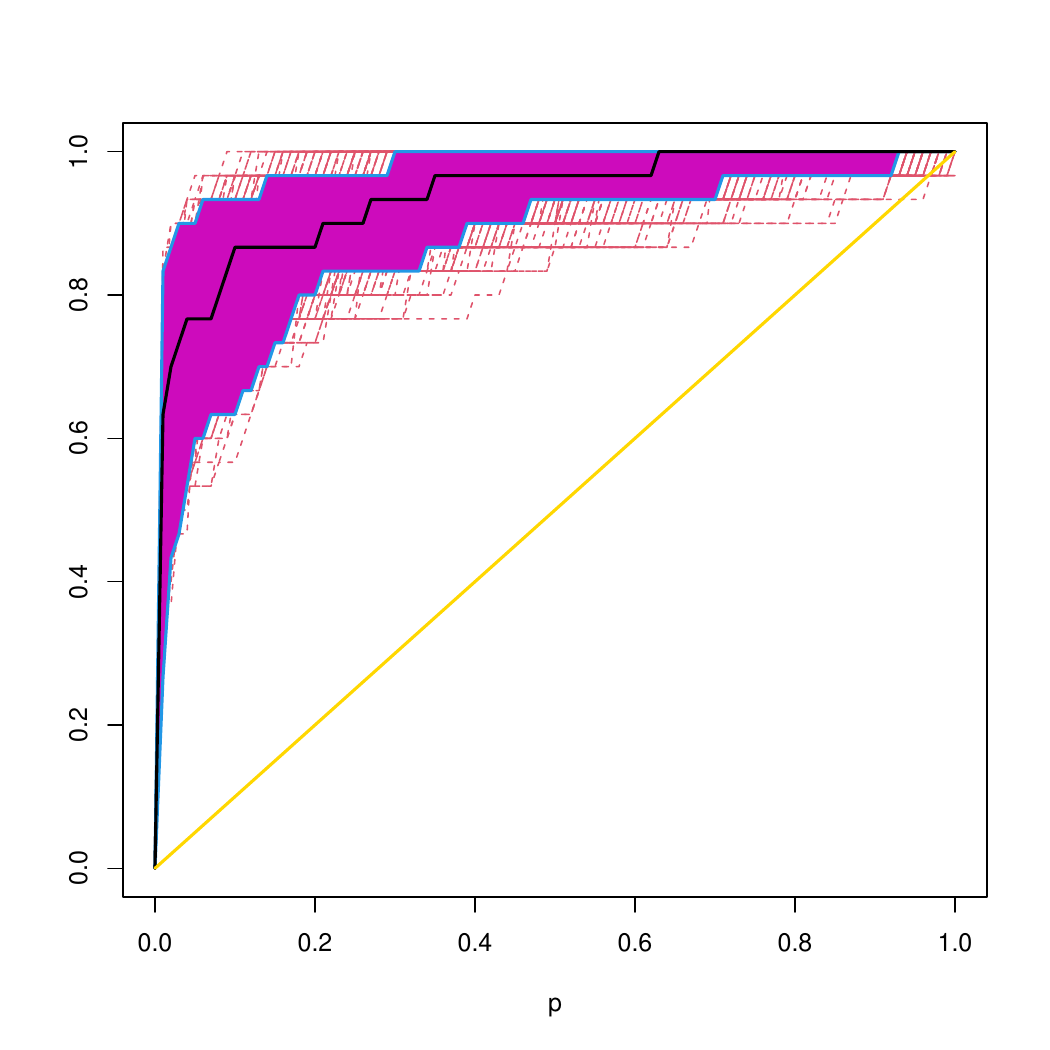}}
& \raisebox{-.5\height}{\includegraphics[scale=0.25]{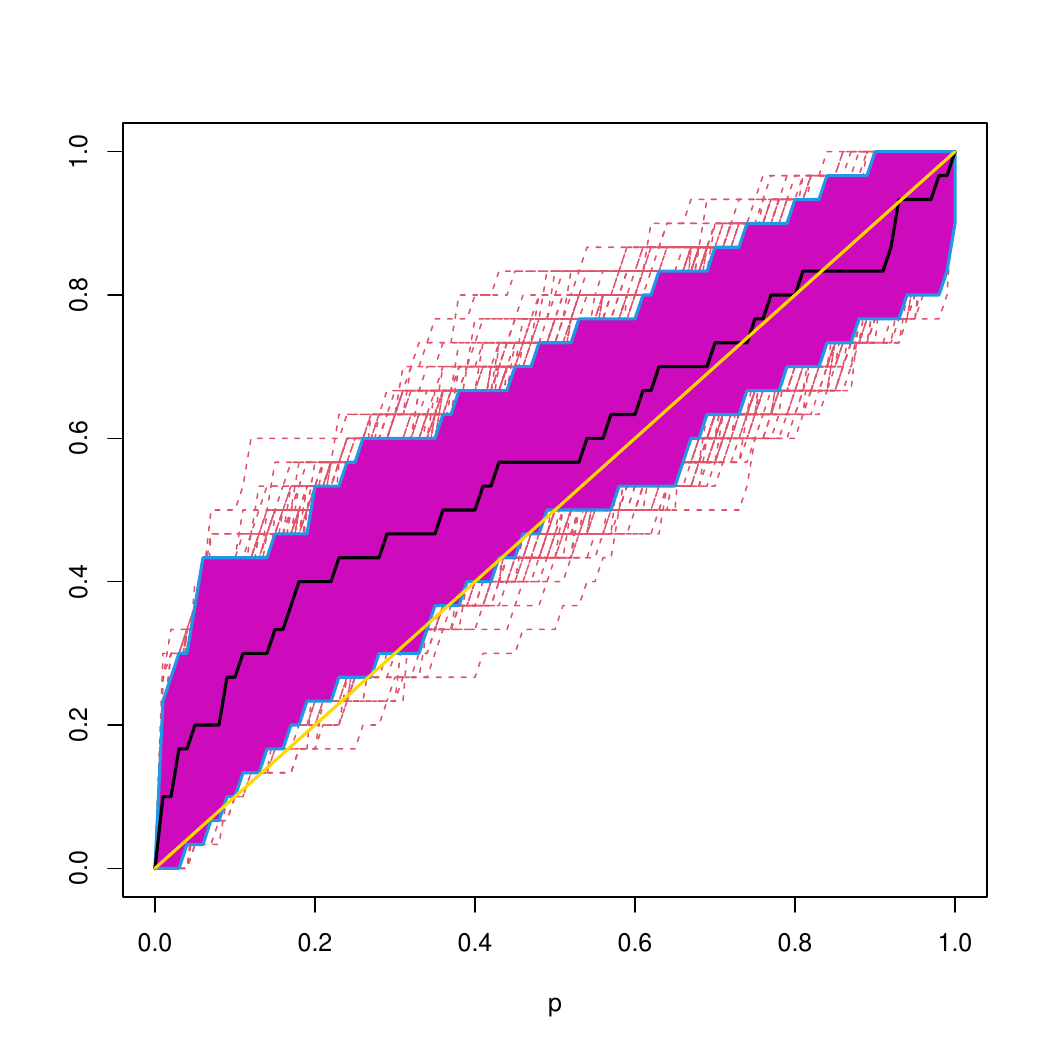}}
\\[-4ex]

$\wUps_{\lin}$  & 
\raisebox{-.5\height}{\includegraphics[scale=0.25]{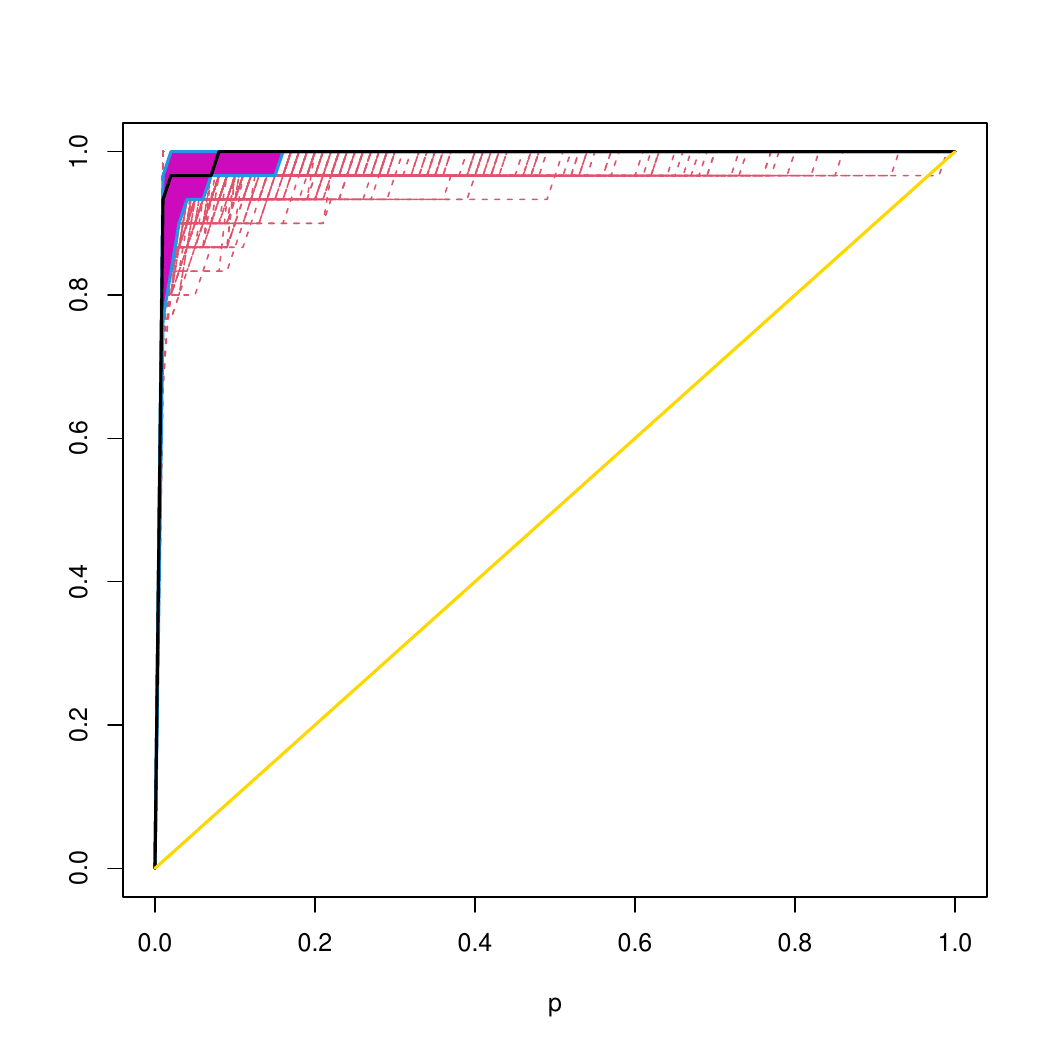}}
& \raisebox{-.5\height}{\includegraphics[scale=0.25]{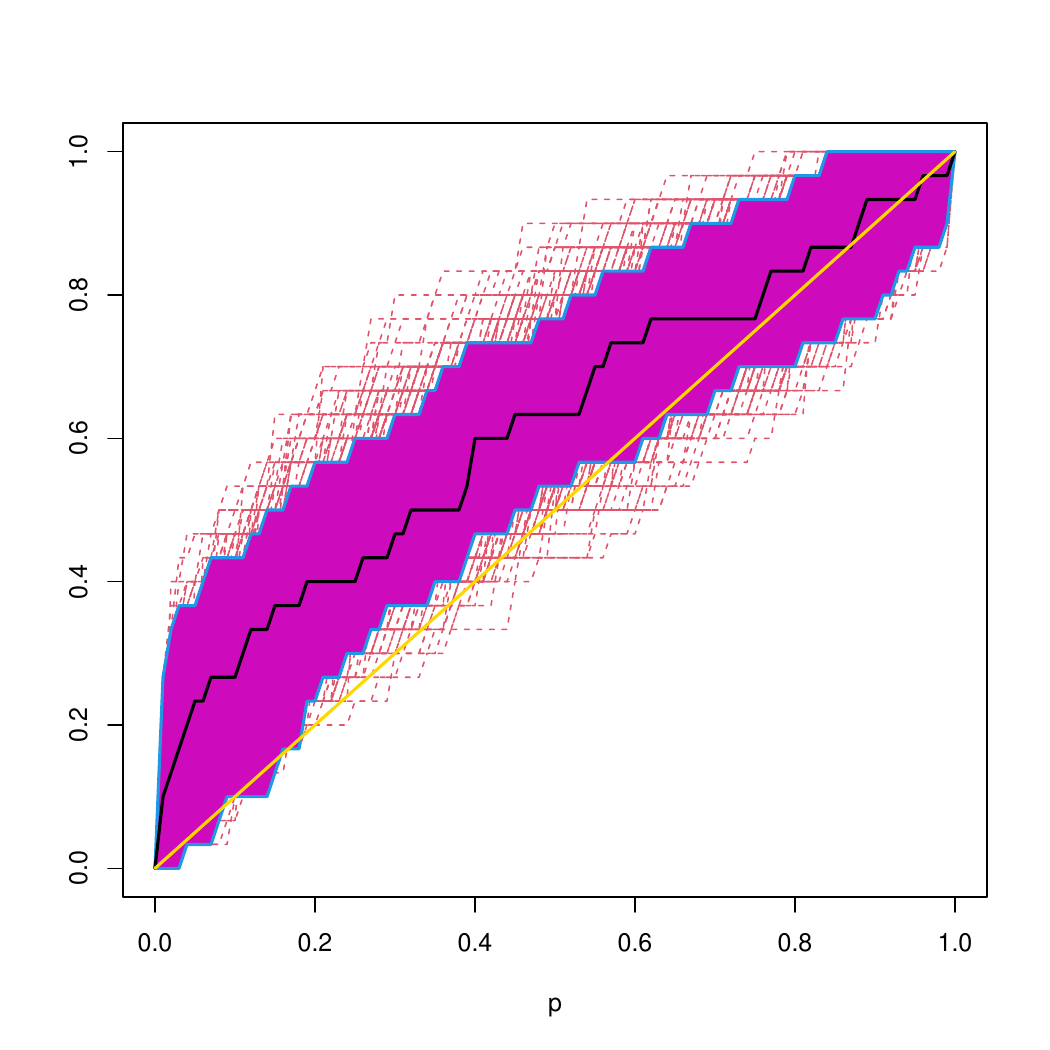}}
\\[-4ex]

$\wUps_{\cuad}$ & 
\raisebox{-.5\height}{\includegraphics[scale=0.25]{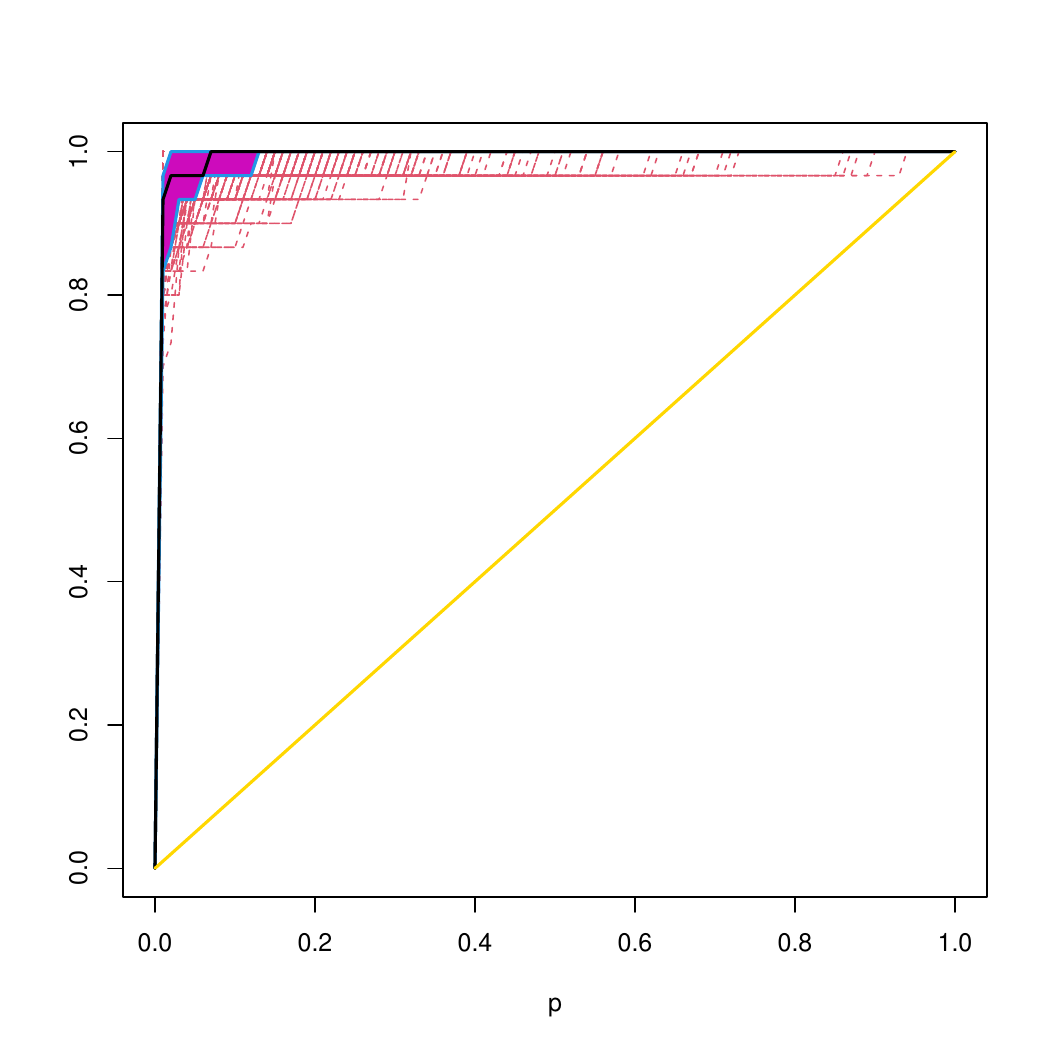}}
& \raisebox{-.5\height}{\includegraphics[scale=0.25]{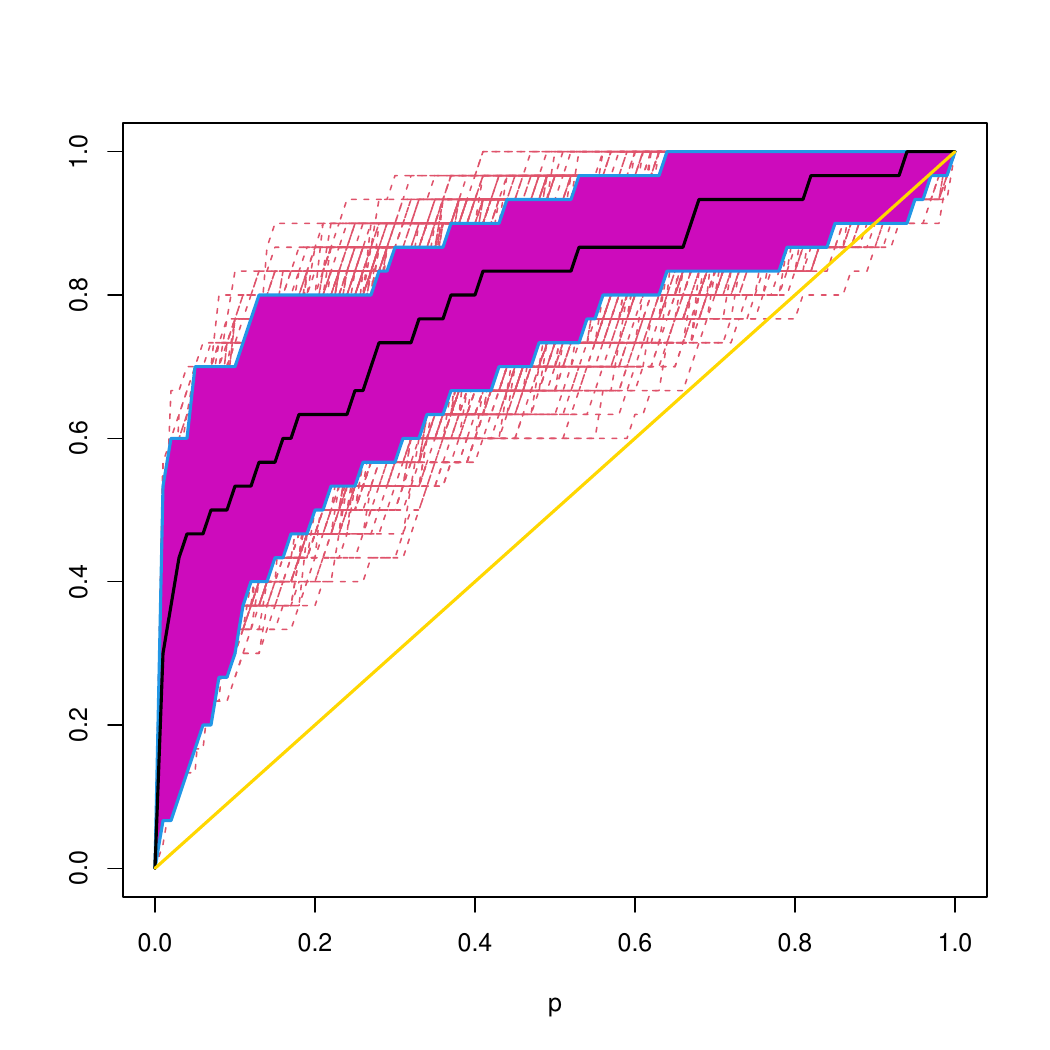}}

\end{tabular}
\caption{Functional boxplots of the estimators $\widehat{\ROC}$ under scenario \textbf{PROP} with  $\rho=2$ for the Brownian motion setting. Rows correspond to discriminating indexes, while columns to  $\mu_D(t)=2\, \sin(\pi  t)$ and $\mu_H=0$. The sample sizes are $n_D=30$ and $n_H=250$.}
\label{fig:propor:Brownian-nD30-nH250} 
\end{center} 
\end{figure}

\begin{figure}[ht!]
 \begin{center}
 \footnotesize
 \renewcommand{\arraystretch}{0.2}

\begin{tabular}{p{2cm} cc}
 & \textbf{P1} &  \textbf{P0} \\[-2ex]  
$\Upsilon_{\maxi}$ &
 \raisebox{-.5\height}{\includegraphics[scale=0.25]{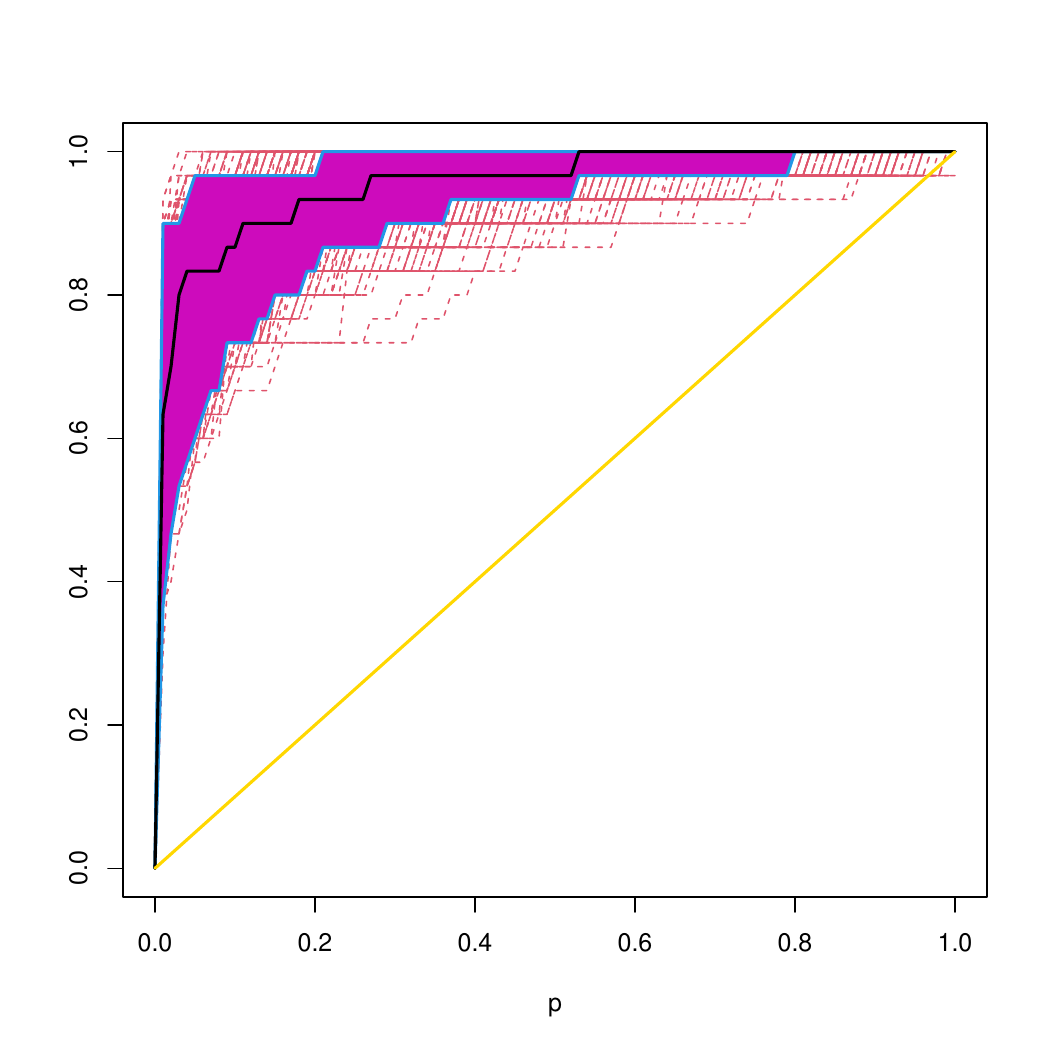}}
& \raisebox{-.5\height}{\includegraphics[scale=0.25]{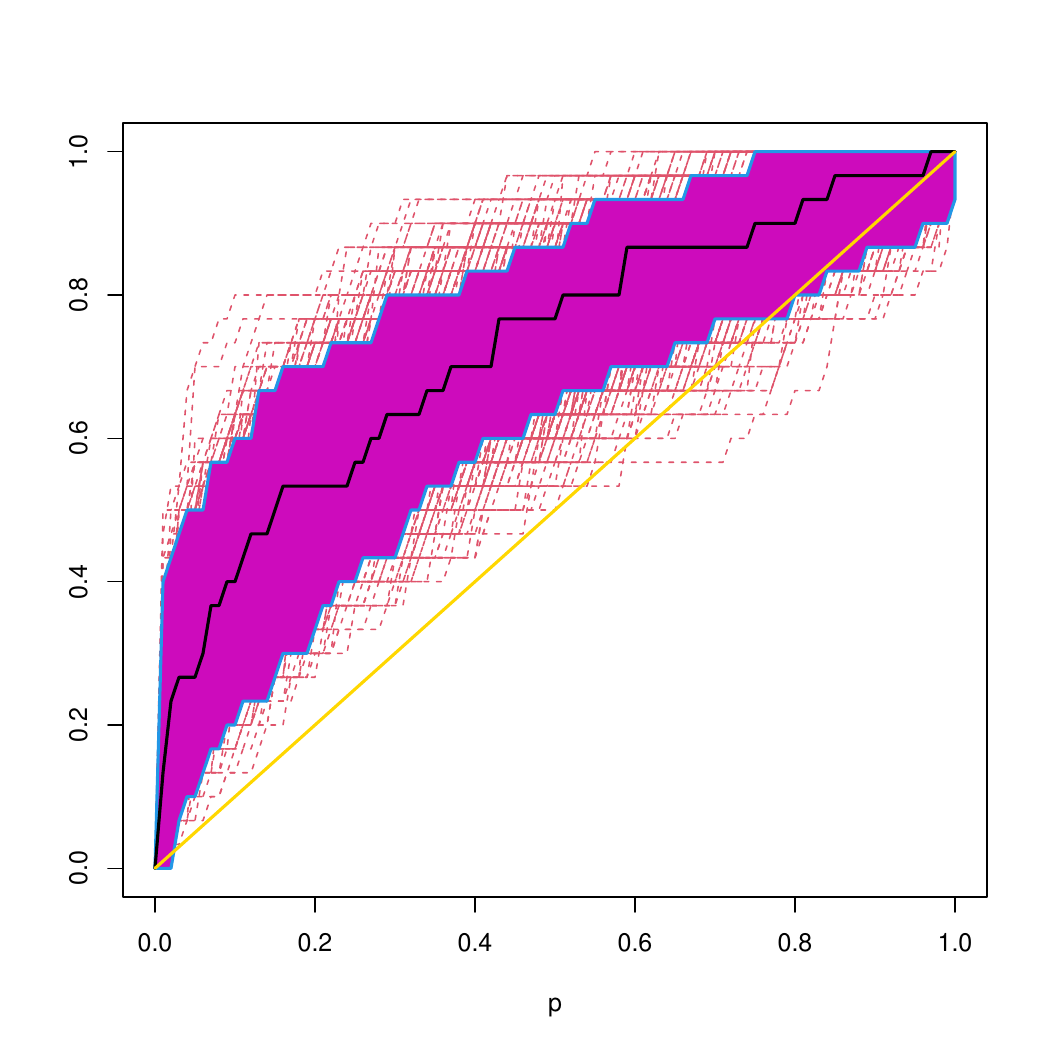}}
\\[-4ex]
 
$\Upsilon_{\inte}$  &
 \raisebox{-.5\height}{\includegraphics[scale=0.25]{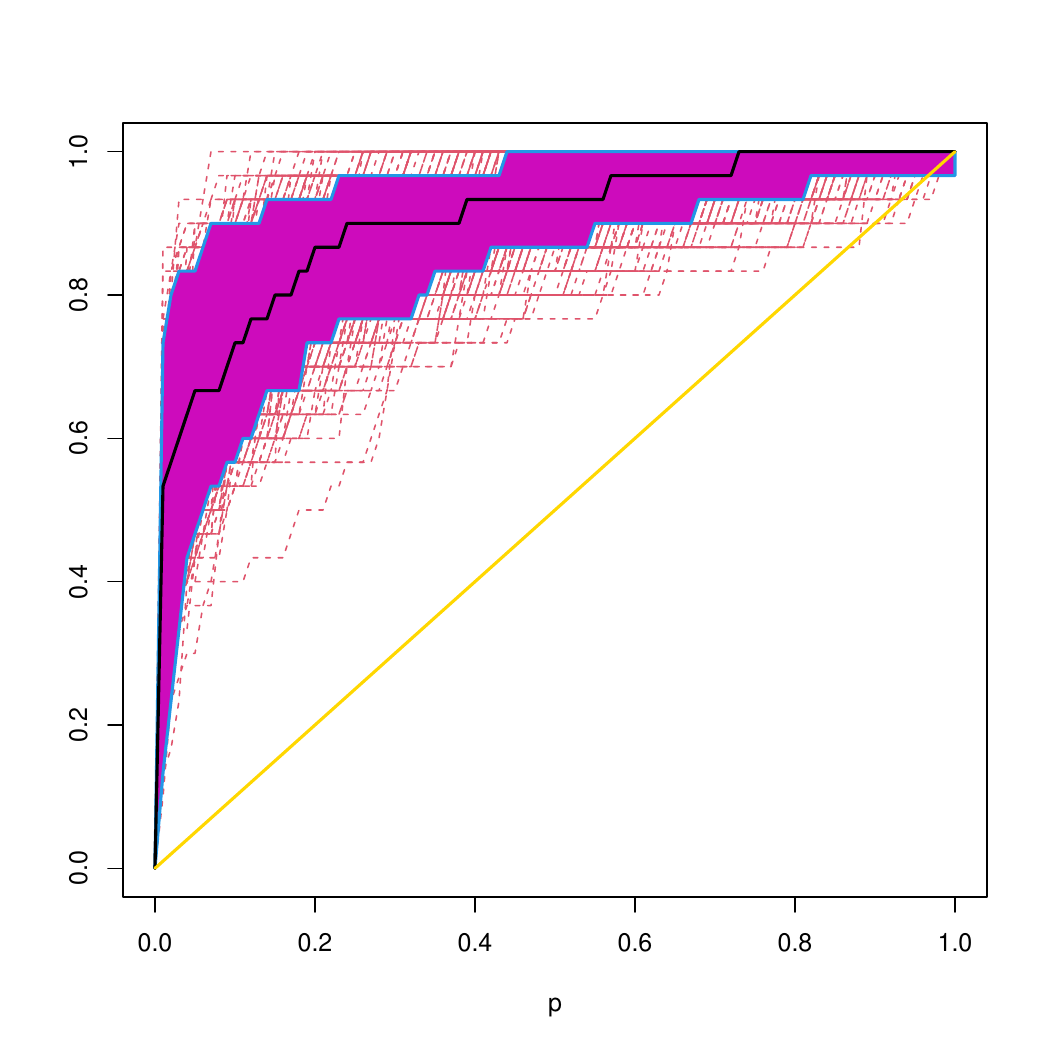}}
& \raisebox{-.5\height}{\includegraphics[scale=0.25]{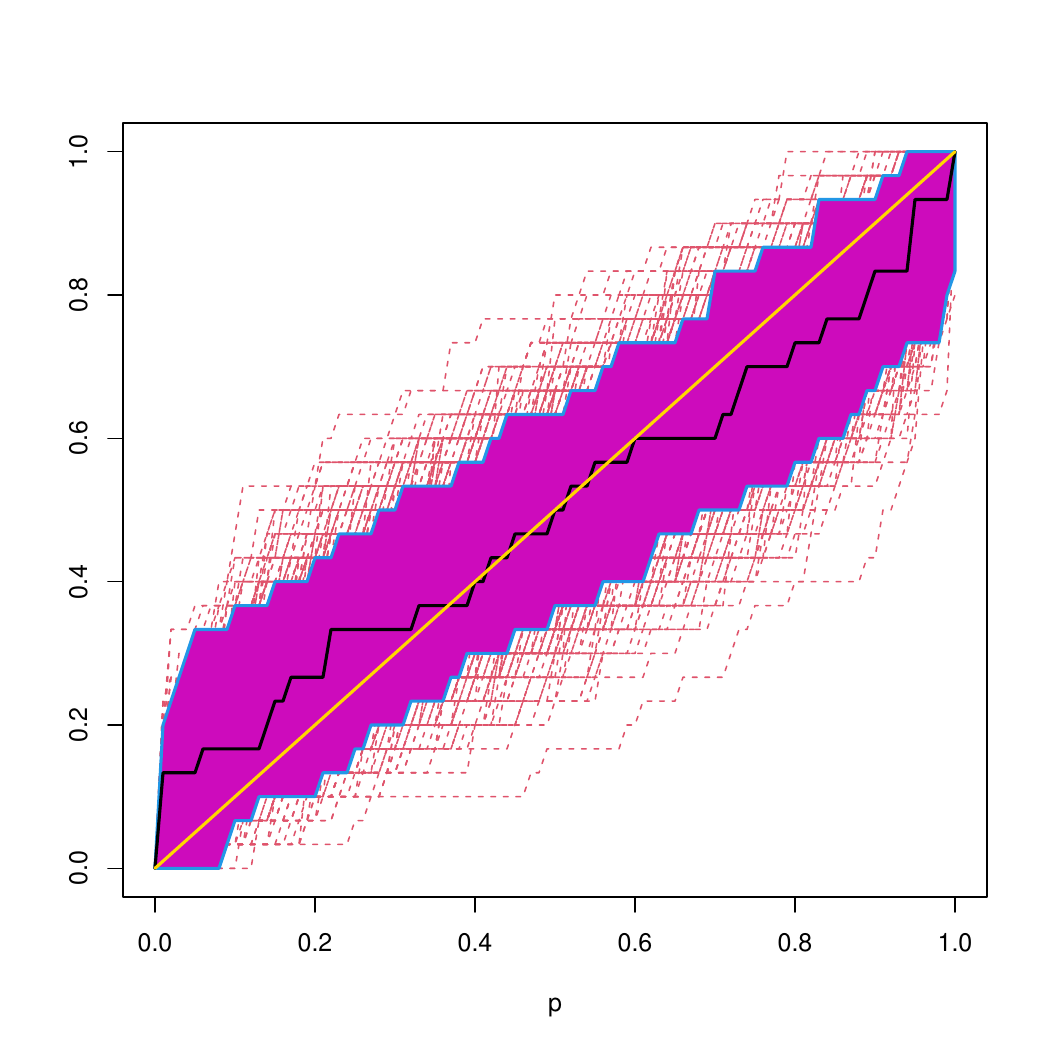}}
 \\[-4ex]
    
$\wUps_{\media}$ &
\raisebox{-.5\height}{\includegraphics[scale=0.25]{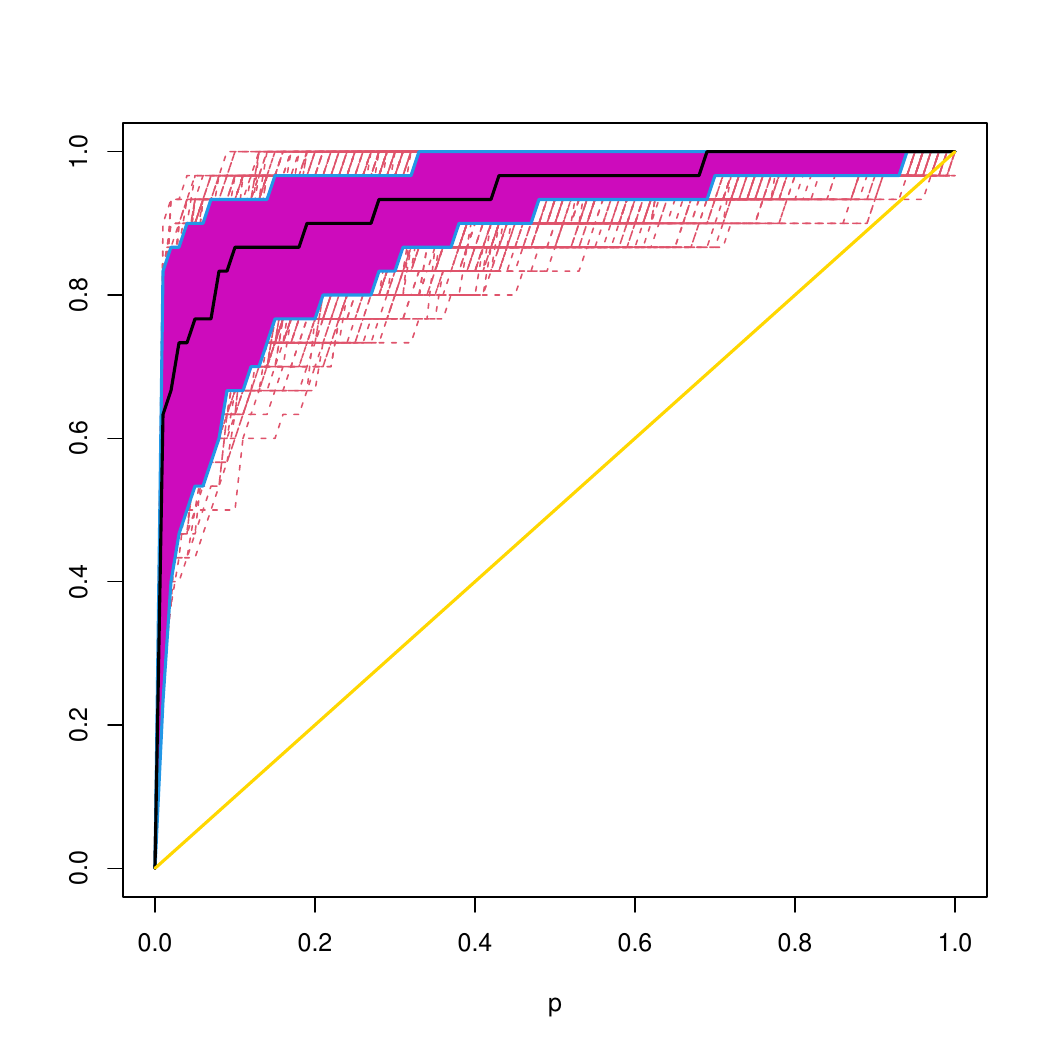}}
& \raisebox{-.5\height}{\includegraphics[scale=0.25]{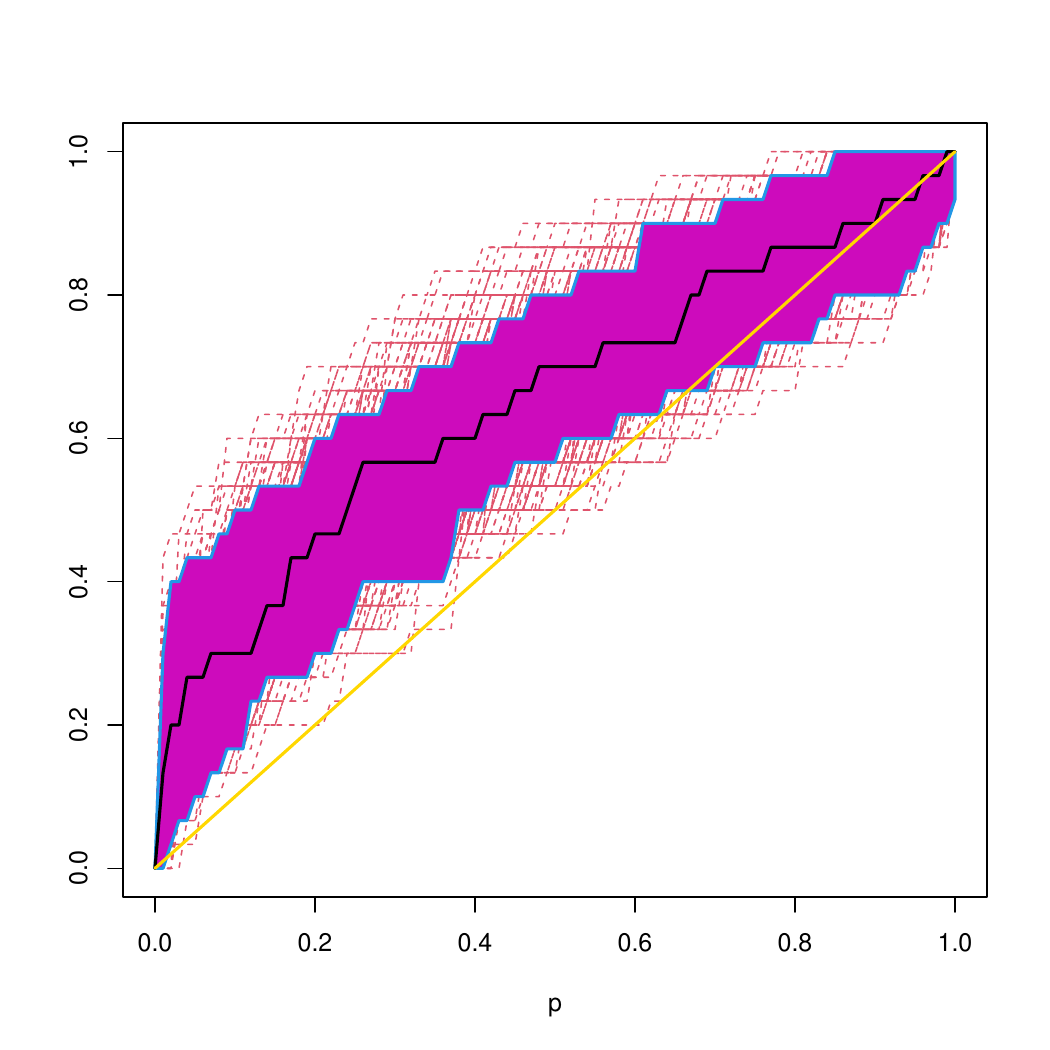}}
\\[-4ex]

$\wUps_{\lin}$  & 
\raisebox{-.5\height}{\includegraphics[scale=0.25]{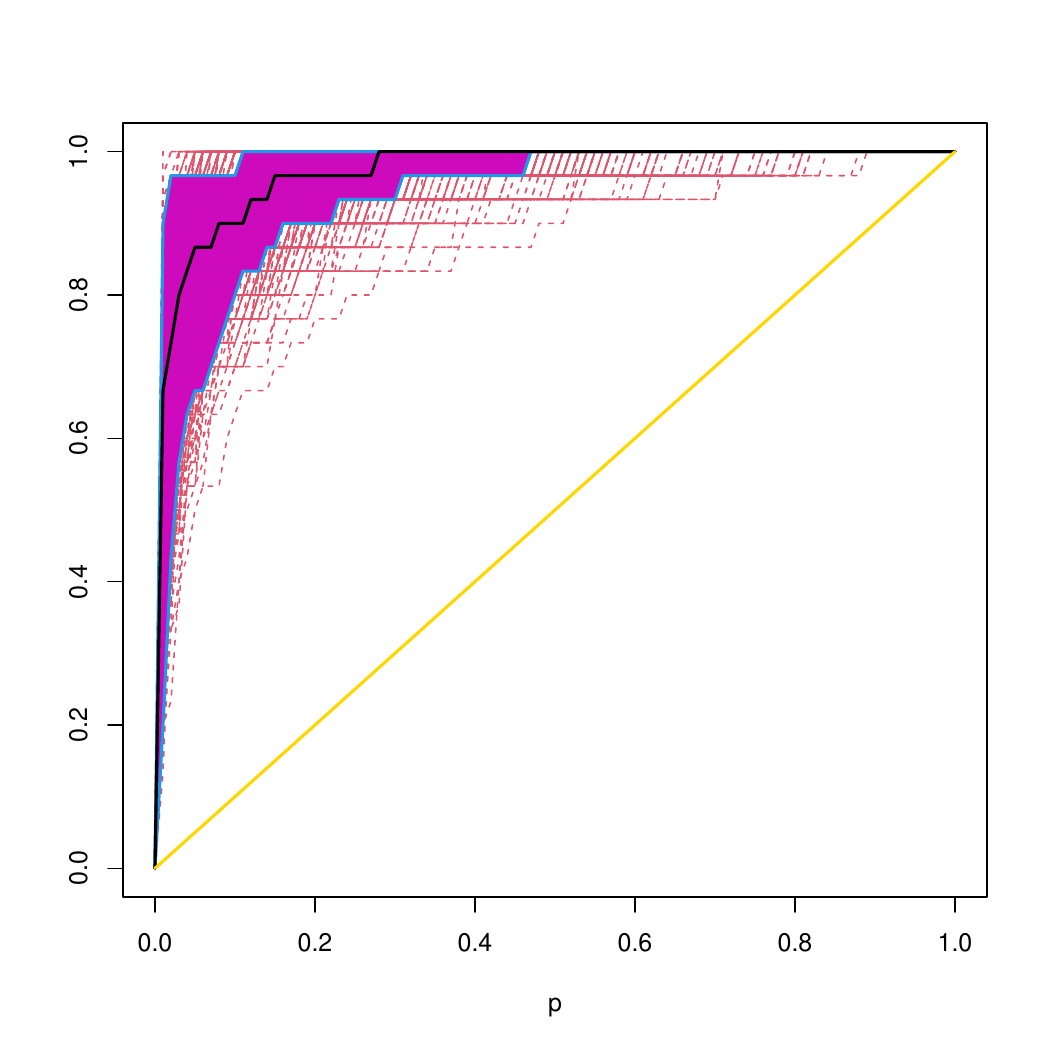}}
& \raisebox{-.5\height}{\includegraphics[scale=0.25]{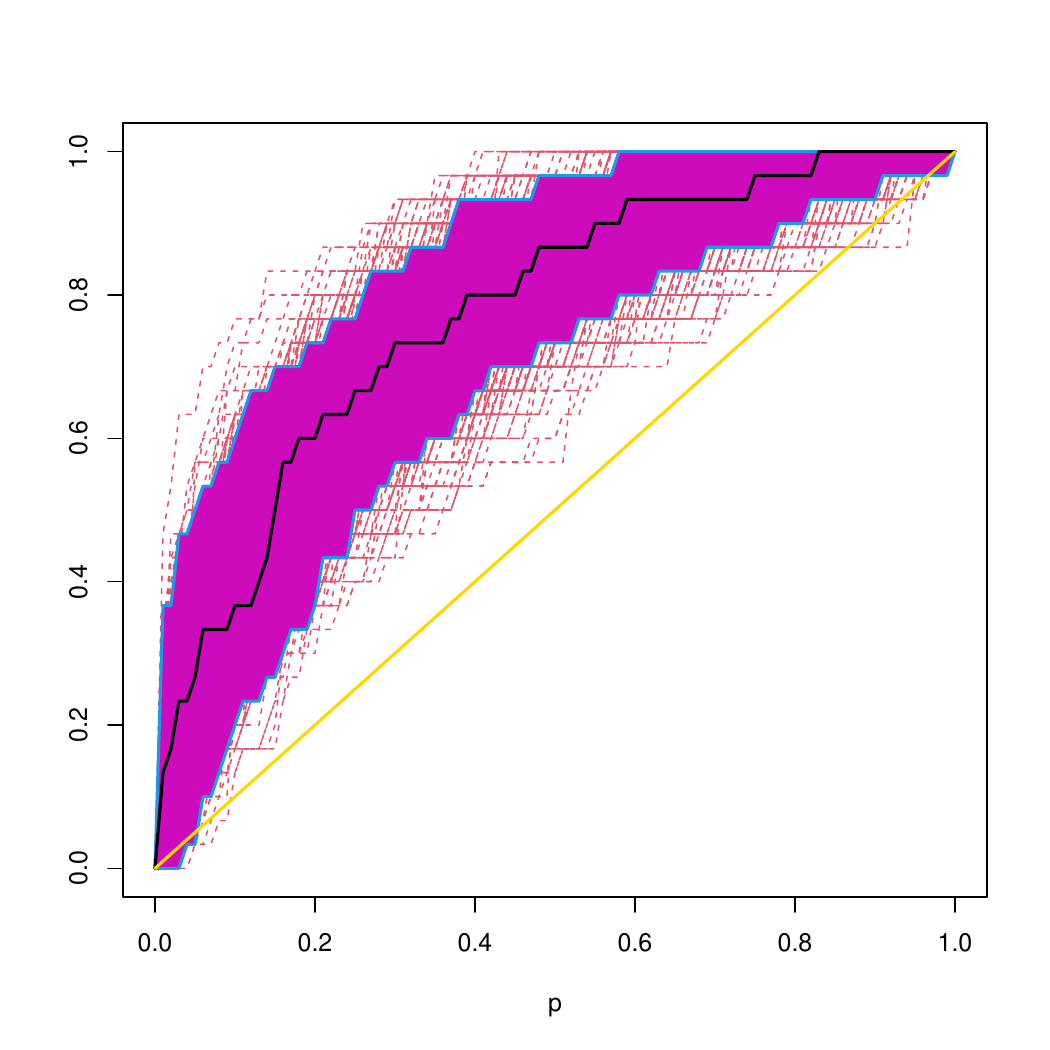}}
\\[-4ex]

$\wUps_{\cuad}$ 
& \raisebox{-.5\height}{\includegraphics[scale=0.25]{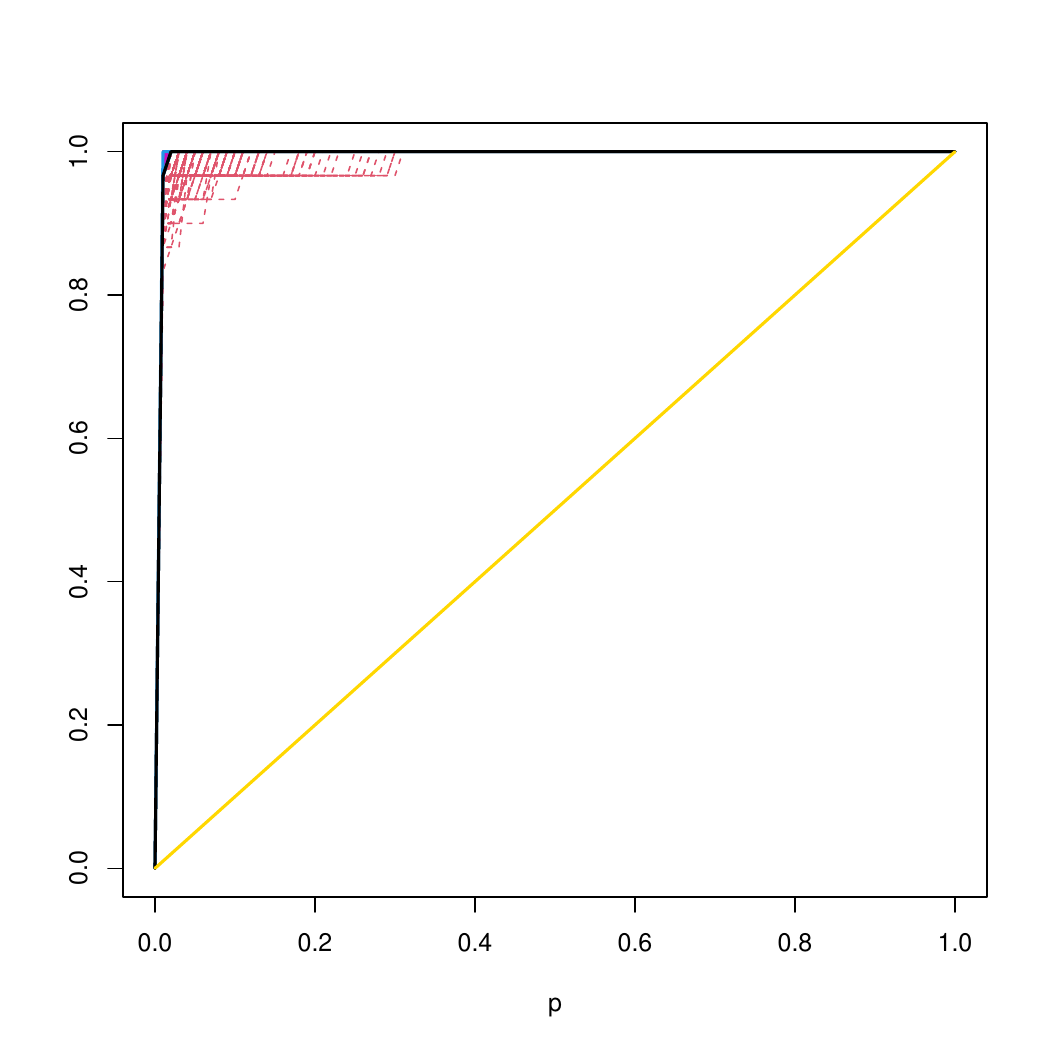}}
& \raisebox{-.5\height}{\includegraphics[scale=0.25]{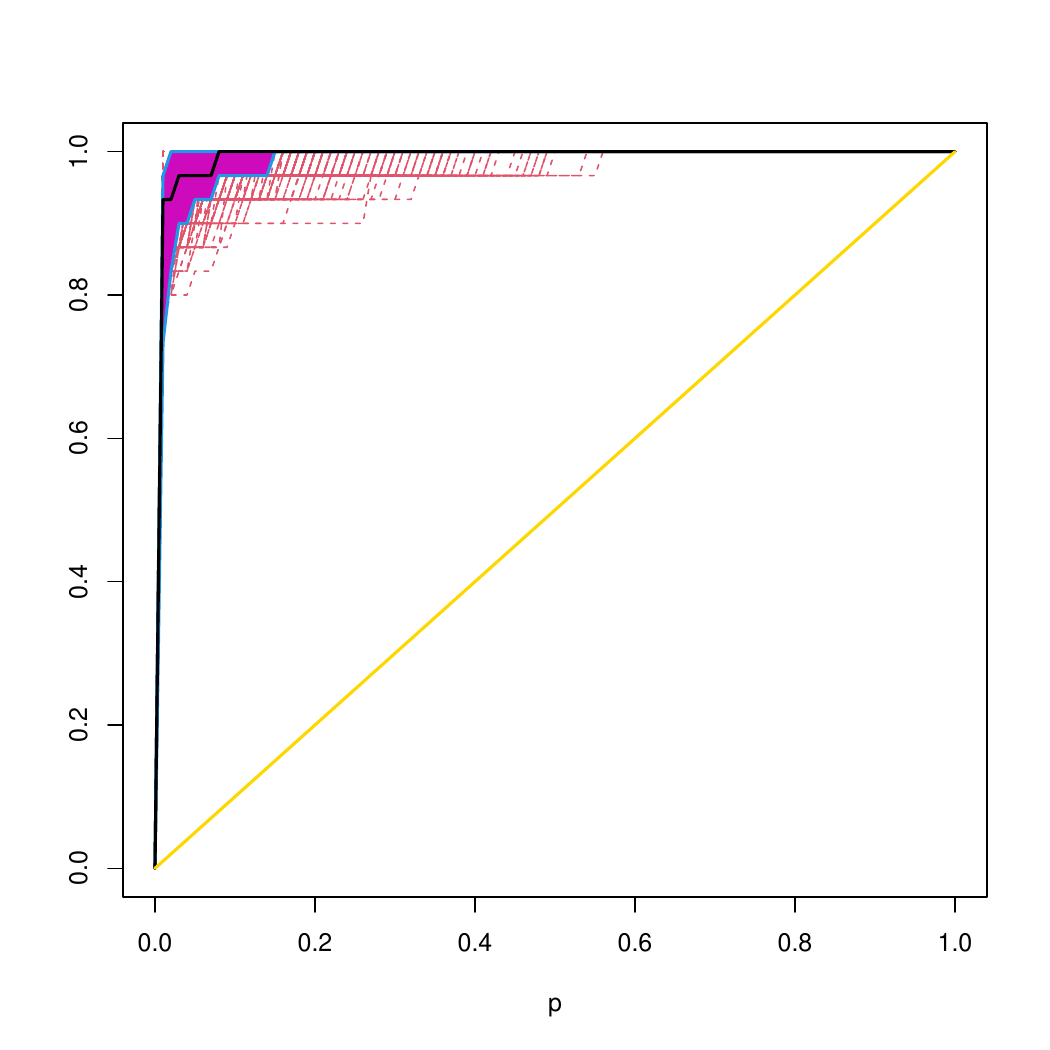}}

\end{tabular}
\caption{Functional boxplots of the estimators $\widehat{\ROC}$ under scenario \textbf{PROP} with  $\rho=2$ for the Exponential Variogram  process. Rows correspond to discriminating indexes, while columns to $\mu_D(t)=2\, \sin(\pi  t)$ and $\mu_H=0$. The sample sizes are $n_D=30$ and $n_H=250$.}
\label{fig:propor:varexp-nD30-nH250} 
\end{center} 
\end{figure}

\clearpage

\section{Real dataset analysis} \label{sec:realdata}

We illustrate the application of the developed methodology to a real dataset reported in \citet{Pineiro:etal:2023} related to  the  study of cardiotoxicity in breast cancer patients  mentioned in the Introduction.

Breast cancers related to high  levels of the protein human epidermal growth factor receptor 2
(HER2) are much more likely to respond to treatments with drugs that target the HER2 protein.
In fact, therapies that aim specifically HER2 have a strong anti--tumoral effect, improving the overall response of the patient and therefore, the survival expectancy. However, this kind of therapies may have side effects such as  cardiotoxicity. 
In this context, the detection of the cancer therapy-related cardiac dysfunction (CTRCD) is relevant with respect to the prognosis and hence,  it is recommended to follow--up the appearance of CTRCD through cardiac imaging tests, among  other  clinical tests. The availability of good markers to predict CTRCD is important to prevent cardiac problems.
The Tissue Doppler Imaging (TDI) is an echocardiographic technique that shows the velocity of myocardial motion. It may be helpful to early identify CTRCD if a study of the heart condition is performed before treatment.  TDI shows velocity as a function of time, thus it may be preprocessed  to obtain a functional datum, see \citet{Pineiro:etal:2023} for more details. 

\begin{figure}[ht!]
	\begin{center}
		\footnotesize
		\begin{tabular}{cc}
			CTRCD=0 & CTRCD=1\\[-2ex]
			\includegraphics[scale=0.35]{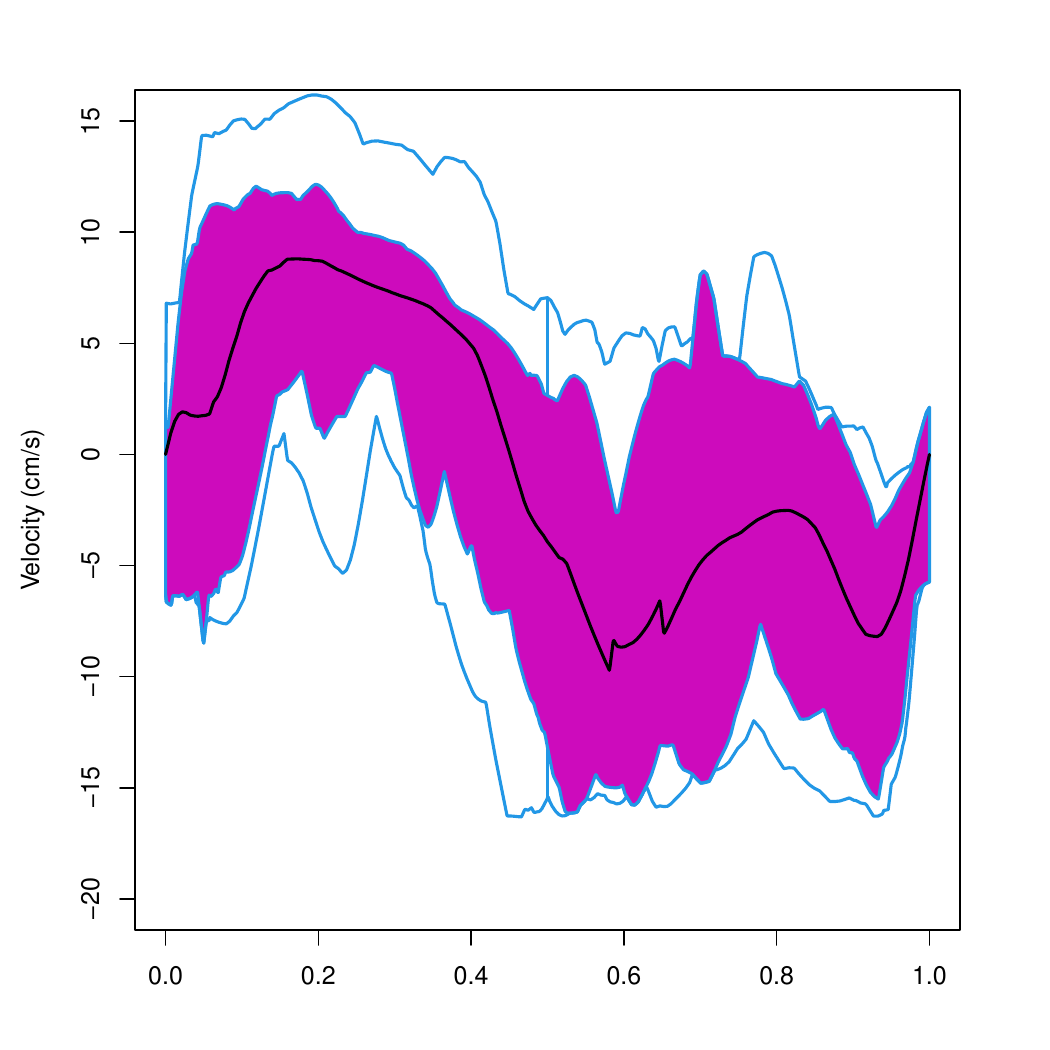} & 
			\includegraphics[scale=0.35]{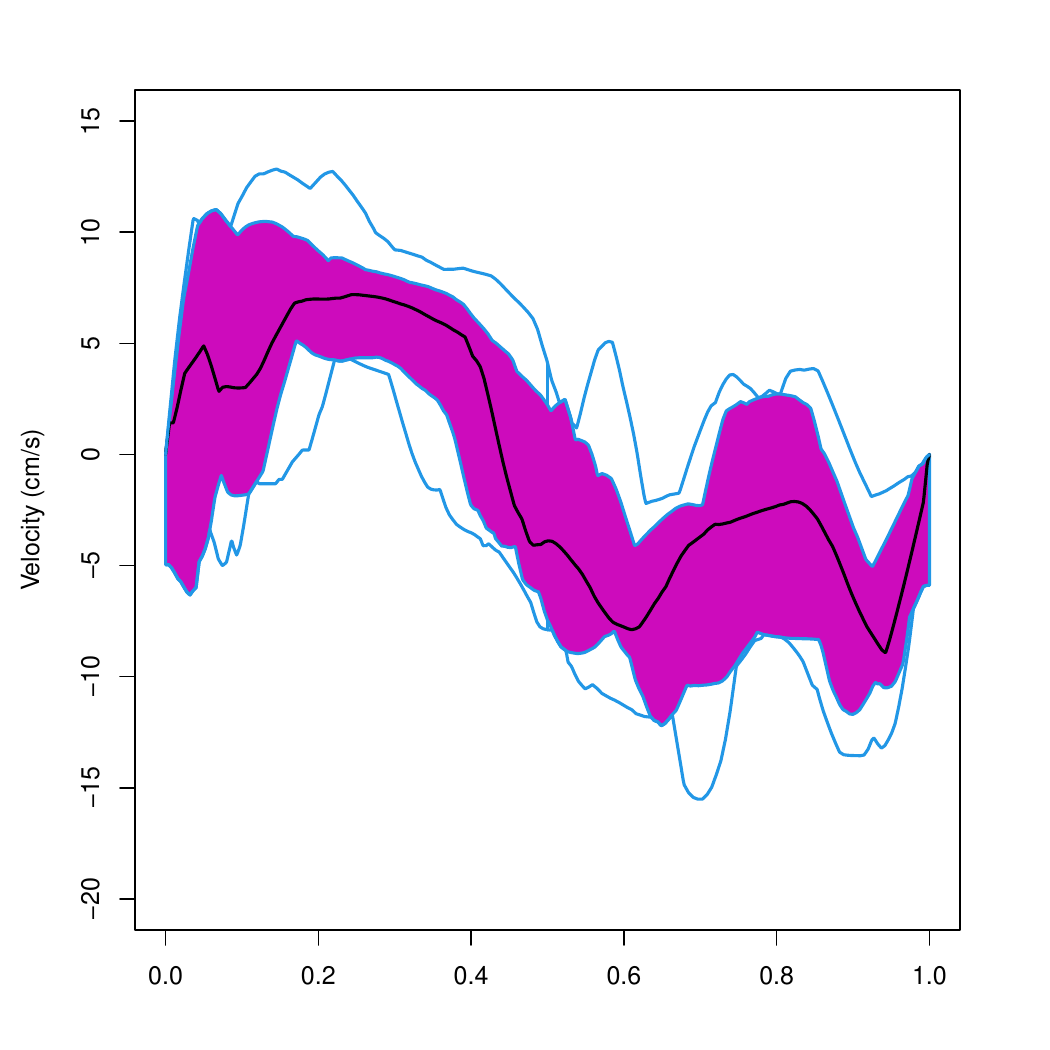}
		\end{tabular}
		\vskip-0.1in 
		\caption{\label{fig:ciclos_fbplot} Cardiotoxicity data. Left panel corresponds to the functional boxplot of the cycles of patients without CTRCD, while the right one to women with CTRCD.}
		 \end{center} 
\end{figure}

The data, displayed in Figure  \ref{fig:ciclos}, correspond to 270 women diagnosed with HER2+ breast cancer, 27 of them suffer from CTRCD. For each patient  the 
cycle extracted from the TDI discretized in 1001 equispaced points in the interval [0,1] is registered together with their CTRCD status. 
To have a deeper insight of the cycles in each status of CTRCD, in Figure \ref{fig:ciclos_fbplot} we display the functional boxplot of each group. No outlying cycles were detected in either group.

\begin{table}[ht!]
	\begin{center}
		\renewcommand{\arraystretch}{1.2}
		\caption{\label{tab:auc_example} Cardiotoxicity data. Estimated AUC of each method.}
		 \begin{tabular}{cccccc}
			\hline \\[-2ex]
			$\Upsilon_{\maxi}$ & $\Upsilon_{\mini}$  & $\Upsilon_{\inte}$ & $\wUps_{\media}$ &  $\wUps_{\lin}$ & $\wUps_{\cuad}$ \\
			\hline
			0.4547 & 0.6770  & 0.5328  & 0.6819  & 0.7034 & 0.8877\\
			\hline
		\end{tabular} 		
	\end{center}
\end{table}

\begin{figure}[ht!]
	\begin{center}
		\footnotesize
		\includegraphics[scale=0.40]{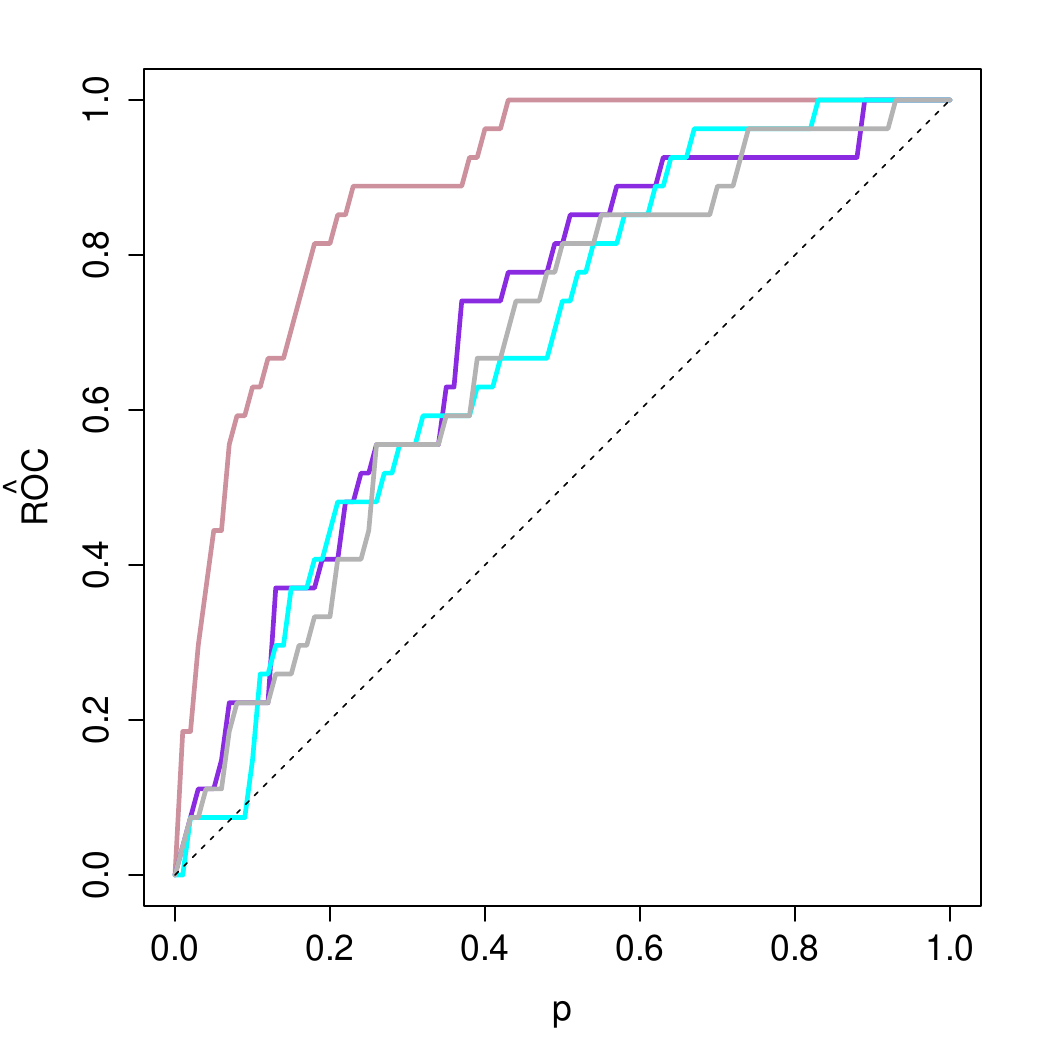}
		\vskip-0.1in 
		\caption{\label{fig:rocs} Cardiotoxicity data. Estimated ROC curves with $\mbox{AUC} \ge 0.65$. The dark pink, violet, cyan and gray lines correspond to the estimators related to  the quadratic method, the linear rule, the procedure based on the difference of means and  the one based on the minimum, respectively. }	  
	\end{center} 
\end{figure}

In order to assess the performance of the functional biomarker to distinguish between the two categories of CTRCD, we apply the discriminating indexes described in the previous sections. We computed the indexes based on the minimum 	($\Upsilon_{\mini}$), the maximum ($\Upsilon_{\maxi}$), the integral ($\Upsilon_{\inte}$),  the difference of means ($\wUps_{\media}$), and the linear ($\wUps_{\lin}$) and the quadratic ($\wUps_{\cuad}$) criteria taking the number of components that explain at least 95\% of the variability, which in this data is attained for $k=11$. Table \ref{tab:auc_example} collects the AUC for each method.  In Figure \ref{fig:rocs}  the estimates of the ROC curve with AUC greater or equal to 0.65 are depicted. The estimator obtained from the quadratic method is plotted in dark pink, in violet the  one corresponding to the linear rule, in cyan the estimate based on the difference of means and in gray that related to the minimum. It is evident from this figure that the better performance is achieved for the quadratic method. The better discriminating capability of the quadratic method is in some sense expectable due to the particular structure of the data, which makes difficult to distinguish the groups just taking into account either the minimum or the maximum or any linear rule.

\section{Final Comments}{\label{sec:finalcoment}}
In this paper, in order to construct a suitable ROC curve, we address the problem of defining proper univariate indexes when functional biomarkers are used to distinguish between two populations.  The defined  indexes allow  to construct a ROC curve to measure their discriminating capability. 
One of our goals was to provide a thorough insight of the difficulties arising at   population level to define and estimate a proper ROC curve in the functional setting. In particular, we have discussed the limitations of considering fully known operators to define the discriminant index and to solve this problem we have introduced two methods that require estimation of some unknown parameters.
In particular, we have proposed a linear  index  with the property of maximizing the AUC, when both populations have the same covariance operator. In order to estimate it and to circumvent  the curse of dimensionality we have suggested to use either a sieve or a penalized approach. The situation of different covariance structures has been also contemplated by means of a quadratic rule constructed  projecting the data over a finite--dimensional space. As discussed in Section \ref{sec:quadfun}, the need of this finite--dimensional approximation is justified by the fact that the limiting operator is defined only over  the intersection  of the squared--root covariance operator ranges.

Consistency results for the estimators of the ROC curve and its related summary measures were derived for both linear and quadratic discriminating indexes, under general assumptions. The results of our numerical experiments illustrate the advantages of using a quadratic rule in presence of different covariance operators. The application of  our proposals to a real data set confirms that, when differences between populations arise in covariances more than between means, as revealed in Figure \ref{fig:ciclos}, the quadratic index outperforms the linear one and the indexes constructed from known operators.

\setcounter{equation}{0}
\renewcommand{\theequation}{A.\arabic{equation}}

\section{Appendix: Proofs}{\label{sec:proofs}}

\subsection{Proof of the results in Section \ref{sec:binormal}}

\begin{proof}[Proof  of \eqref{eq:optYIbeta}]
To derive \eqref{eq:optYIbeta}, first note that   the value   $c$ maximizing $\Delta_{\bbech}(c)$ equals
$$c_{\bbech}= \frac{\bbe\trasp (\bmu_D+\bmu_H)}{2}\,,$$
giving the following expression for the Youden index
\begin{align*}
\YI(\bbe) & =\left|\Phi\left\{ \frac{\bbe\trasp (\bmu_H-\bmu_D)}{2 \left(\bbe\trasp \bSi   \bbe\right)^{1/2}}\right\}
-\Phi\left\{\frac{\bbe\trasp (\bmu_D-\bmu_H)}{2 \left(\bbe\trasp \bSi   \bbe\right)^{1/2}}\right\}\right|
=\left|1-2\,\Phi\left\{\frac{\bbe\trasp (\bmu_D-\bmu_H)}{2 \left(\bbe\trasp \bSi   \bbe\right)^{1/2}}\right\}\right| \\
& =\left|1-2\,\Phi\left\{\frac{1}{\sqrt{2}}L(\bbe)\right\}\right|\,.
\end{align*}
To simplify the notation let $\sigma_{\bbech}^2=\bbe\trasp \bSi  \bbe$ and $\bmu=(\bmu_{D}-  \bmu_H)/2$. Then,
$$\YI(\bbe)= \left|1-2\,\Phi\left(\frac{\bbe\trasp \bmu}{\sigma_{\bbech} }\right)\right|\,.$$
Taking into account that multiplying $\bbe$ by a constant does not change the value of the Youden index, to maximize it, we can search for the maximum of   $\YI^2(\bbe)$ under the constraint that $\sigma_{\bbech}^2=1$. Let
$$H(\bbe)=\left\{1-2\,\Phi\left( {\bbe\trasp \bmu} \right)\right\}^2+\lambda(\sigma_{\bbech}^2-1)\,.$$
Then, if $\varphi=\Phi^{\prime}$, we get that
\begin{align}
 \frac{\partial H}{\partial \bbe} &= \,-\, 4 \left\{1-2\,\Phi\left( {\bbe\trasp \bmu}\right)\right\}\, \varphi\left( {\bbe\trasp \bmu}\right)\bmu + 2\lambda \bSi \bbe\,.
 \label{eq:gradiente}
\end{align}
Multiplying \eqref{eq:gradiente} by $\bbe\trasp$ and using that the value maximizing $H(\bbe)$ has null gradient and that  $\sigma_{\bbech}^2=1$, we obtain that
\begin{align*}
0 &= \,-\, 4 \left\{1-2\,\Phi\left( {\bbe\trasp \bmu}\right)\right\}\, \varphi\left( {\bbe\trasp \bmu}\right)\bbe\trasp\bmu + 2\lambda \bbe\trasp\bSi \bbe
\\
 &= \,-\, 4 \left\{1-2\,\Phi\left( {\bbe\trasp \bmu}\right)\right\}\, \varphi\left( {\bbe\trasp \bmu}\right)\bbe\trasp\bmu + 2\lambda \,,
\end{align*}
which entails that
\begin{equation}
\label{eq:lambda}
2\, \lambda=   4 \left\{1-2\,\Phi\left( {\bbe\trasp \bmu}\right)\right\}\, \varphi\left( {\bbe\trasp \bmu}\right)\bbe\trasp\bmu \,.
\end{equation}
Therefore, if $\bbe\trasp \bmu=0$, we have that $\lambda=0$, $\bcero= {\partial H}/{\partial \bbe}$ and $H(\bbe)=\bcero$ meaning that the maximum is not reached in directions orthogonal to $\bmu$.

Assume that  $\bbe\trasp \bmu\ne 0$ and let $\nu_{\bbech}= 4 \left\{1-2\,\Phi\left( {\bbe\trasp \bmu}\right)\right\}\, \varphi\left( {\bbe\trasp \bmu}\right)\ne 0$. Using  \eqref{eq:lambda}, we conclude that  $2\, \lambda=  \nu_{\bbech}\; \bbe\trasp\bmu$. Besides, taking into account that at any critical point  $ {\partial H}/{\partial \bbe}=\bcero$, from \eqref{eq:gradiente} we conclude that $2\lambda \bSi \bbe   =  4 \left\{1-2\,\Phi\left( {\bbe\trasp \bmu}\right)\right\}\, \varphi\left( {\bbe\trasp \bmu}\right)\bmu =\nu_{\bbech} \bmu$, so
\begin{align*}
\nu_{\bbech} \; \bbe\trasp\bmu \; \bSi \bbe & = \nu_{\bbech} \bmu \,,
\end{align*}
or equivalently, $(\bbe\trasp\bmu)\,  \bSi \bbe = \bmu $.
Denoting   $a_{\bbech}=\bbe\trasp\bmu\ne 0$, we have 
$$\bbe= \bSi^{-1} \bmu\, \frac{1}{a_{\bbech}}\,,$$
which leads to $a_{\bbech}^2= \bmu\trasp \bSi^{-1} \bmu$ and 
$\bbe =  \bSi^{-1} \bmu/\left({\bmu\trasp \bSi^{-1} \bmu}\right)^{1/2}  $, concluding the proof.
\end{proof}

\subsection{Proof of the results in Section \ref{sec:functional}}

\begin{proof}[Proof of Lemma \ref{lema:Cancor}]
a) Note that as in the multivariate setting
\begin{align}
\cov(\langle \beta, X\rangle, G) &=\esp \left(G\; \langle \beta, X\rangle \right)- \esp \langle \beta, X\rangle \; \esp G
 = \esp \left\{G \;\esp \left(\langle \beta, X\rangle  \mid G \right)\right\}- \pi_D \esp \left\{\esp \left(\langle \beta, X\rangle \mid G \right) \right\}
\nonumber\\
&= \pi_D \esp \left(\langle \beta, X\rangle \mid G=1 \right) - \pi_D \left\{\pi_D  \esp \left(\langle \beta, X\rangle \mid  G=1\right)
+ \pi_H  \esp \left(\langle \beta, X\rangle \mid G=0\right)\right\}
\nonumber\\
&= \pi_D \left[\esp \left(\langle \beta, X_D\rangle \right)  -   \left\{\pi_D  \esp \left(\langle \beta, X_D\rangle  \right)
+ \pi_H  \esp \left(\langle \beta, X_H\rangle  \right)\right\}\right]
\nonumber\\
&= \pi_D \left\{  \langle \beta, \mu_D\rangle   -   \left(\pi_D   \langle \beta, \mu_D\rangle  +\pi_H   \langle \beta, \mu_H\rangle \right)\right\}
= \pi_D \left(\pi_H  \langle \beta, \mu_D\rangle   -   \pi_H   \langle \beta, \mu_H\rangle \right)
\nonumber\\
&= \pi_D \,\pi_H \, \langle \beta, \mu_D-\mu_H\rangle\,, 
\label{eq:covXG}
\end{align}
while 
\begin{align*}
\var(\langle \beta, X\rangle )&= \esp\left(\langle \beta, X\rangle^2\right) -\left( \esp \langle \beta, X\rangle\right)^2 
= \esp\left(\langle \beta, X\rangle^2\right) -\left(\pi_D   \langle \beta, \mu_D\rangle  +\pi_H   \langle \beta, \mu_H\rangle   \right)^2 \\
&= \esp\left\{\esp\left(\langle \beta, X\rangle^2 \mid G \right) \right\}
-\left(\pi_D   \langle \beta, \mu_D\rangle  +\pi_H   \langle \beta, \mu_H\rangle   \right)^2\\
&= \left\{\pi_D\esp\left(\langle \beta, X_D\rangle^2\right) 
+\pi_H\esp\left(\langle \beta, X_H\rangle^2\right)\right\}-\left(\pi_D   \langle \beta, \mu_D\rangle  +\pi_H   \langle \beta, \mu_H\rangle   \right)^2\\
&=  \pi_D\left(\langle \beta, \Gamma_D \beta\rangle +\langle \beta, \mu_D\rangle^2 \right) +\pi_H\left(\langle \beta, \Gamma_H \beta\rangle +\langle \beta, \mu_H\rangle^2 \right) \\
& -\left(\pi_D^2   \langle \beta, \mu_D\rangle^2  +\pi_H^2   \langle \beta, \mu_H\rangle^2 +2 \pi_D\, \pi_H  \langle \beta, \mu_D\rangle\, \langle \beta, \mu_H\rangle  \right) \\
&=  \langle \beta, \Gamma_{\pool} \beta\rangle +\pi_D \pi_H\langle \beta, \mu_D\rangle^2  +\pi_D \pi_H\langle \beta, \mu_H\rangle^2   - 2 \pi_D\, \pi_H  \langle \beta, \mu_D\rangle\, \langle \beta, \mu_H\rangle   \;.
\end{align*}
 Therefore, 
$$\var(\langle \beta, X\rangle )=   \langle \beta, \Gamma_{\pool} \beta\rangle + \pi_D \pi_H \langle \beta, \mu_D- \mu_H\rangle^2\,,$$
which together with the fact that $\var(G)=\pi_D\pi_H$ entails that
$$\mbox{corr}(\langle \beta, X\rangle, G)= 
\frac{\pi_D \,\pi_H \, \langle \beta, \mu_D-\mu_H\rangle }{\left\{\pi_D\pi_H\;\left( \langle \beta, \Gamma_{\pool} \beta\rangle + \pi_D \pi_H \langle \beta, \mu_D- \mu_H\rangle^2\right) \right\}^{1/2}}
=\frac{\pi_D^{1/2} \,\pi_H^{1/2} \,L_{\pool}(\beta)}{\left\{    1 +L_{\pool}^2(\beta) \right\}^{1/2}}\,,$$
 concluding the proof of a).
 
 b) Follows immediately  noting that $\Gamma_{\pool}= \Gamma$ when $\pi_D=1/2$ or when  $\Gamma_H =\Gamma_D$, since $\pi_D+\pi_H=1$ and the analogy between maximizing the AUC which corresponds to maximizing $L(\bbe)$ and that of maximizing $\mbox{corr}(\langle \beta, X\rangle, Z)$, which corresponds to $L_{\pool}(\beta)$. 
 \end{proof}

\begin{proof}[Proof of Proposition \ref{prop:expresionbeta}] 
Analogously to $\itR(\Gamma)$, define 
$$\itR(\Gamma^{1/2})=\left \{y\in \itH:\quad \sum_{\ell \ge 1} \frac 1{\lambda_{\ell}}\; \langle y, \phi_{\ell}\rangle^2 <\infty \right\}\,, $$
and  the inverse of $\Gamma^{1/2}$, which is well defined over $\itR(\Gamma^{1/2})$, as
$$\Gamma^{-1/2} (y)=\sum_{\ell \ge 1} \frac 1{ \lambda_{\ell}^{1/2}}\; \langle y, \phi_{\ell}\rangle\; \phi_{\ell}\,.$$
The fact that {$\lambda_{\ell}\to 0$ as $\ell\to \infty$} entails that for $\ell$ large enough $\lambda_\ell^2 <\lambda_\ell$, so taking into account that   $\mu_D-\mu_H\in \itR(\Gamma)$, we get that $\mu_D-\mu_H\in \itR(\Gamma^{1/2})$.

Let $R = (\pi_D\pi_H)^{-1/2}\Gamma^{-1/2} \Gamma_{XG}:\real \to L^2(0,1)$ with $\Gamma_{XG}$ being the covariance operator between $X$ and $G$, that is, the operator $\Gamma_{XG}:\real \to \itH$ such that for any $a\in \real$, $  \cov(\langle u, X\rangle, a\, G)=\langle u, \Gamma_{XG}(a)  \rangle$. Denoting $ \gamma_{XG}=\Gamma_{XG}(1)$ we have that  $  \cov(\langle u, X\rangle,   G)=\langle u, \gamma_{XG} \rangle$.
Analogous   arguments to those considered in Theorem 4.8 of \citet{he:etal:2003} allow to show that   the value $\beta_0$ maximizing $\mbox{corr}^2(\langle \beta, X\rangle, G)$ (respectively, the AUC)  equals $\beta_0=\Gamma^{-1/2} \psi_0$, where $\psi_0$ is the eigenfunction of the operator
$$R_0=R\,R^{*}: L^2(0,1)\to L^2(0,1)$$
related to its largest eigenvalue and $R^{*}$ stands for the adjoint operator of $R$.

From \eqref{eq:covXG}, we get that   $ \gamma_{XG}= \pi_D \,\pi_H   (\mu_D-\mu_H) $, then if   $\Delta_{DH}= (\pi_D\pi_H)^{1/2}\,\Gamma^{-1/2}  \left(\mu_D-\mu_H\right)\in \itH $, we get that $R a = a\, \Delta_{DH} $ and $R_0= \Delta_{DH} \left\{\Delta_{DH}\right\}^{*}$.
Note that $R^{*}:  L^2(0,1)\to \real $ satisfies $\langle u, R\, a\rangle=  a R^{*} \,u$, for any $a\in \real$, $u\in L^2(0,1)$, hence  we have that 
$$\langle u, R\, a\rangle= a\,  \;\langle u,   \Delta_{DH}\rangle = a\, \int_0^1 \Delta_{DH}(t) u(t) dt\,,$$
and $R^{*}$ is the linear operator with  representative $\Delta_{DH}$, i.e., $R^{*} u =\langle u, \Delta_{DH}\rangle$. Therefore, $R_0$ has only one eigenvalue different from 0, since for any $u\in \itH$  orthogonal to $\Delta_{DH}$, $R_0 u=0$ and $ R^{*}\Delta_{DH}=\|\Delta_{DH}\|^2$ meaning that
$$R_0 \Delta_{DH}= R \|\Delta_{DH}\|^2= \|\Delta_{DH}\|^2 \Delta_{DH}\,.$$
Thus, $\psi_0=\Delta_{DH}/\|\Delta_{DH}\|$ and $\beta_0=  (\pi_D\pi_H)^{1/2} \;\Gamma^{-1}\left(\mu_D-\mu_H\right)$, concluding the proof.
 \end{proof}

\begin{proof}[Proof of Proposition \ref{prop:cuad}]
a) follows easily noting that  $A^{*} A \alpha =\sum_{\ell=1}^k \langle  \alpha, \phi_\ell \rangle\,\phi_\ell$, $\balfa\trasp \bx=$ \linebreak  $\langle A \alpha, A X\rangle$   and the fact that
\begin{equation}
\Gamma_j^{-1} y = \sum_{\ell \ge 1} \frac 1{{\lambda_{j,\ell}}}\; \langle y, \phi_{\ell}\rangle\; \phi_{\ell}\,,\quad \mbox{ for any $y\in \itR(\Gamma_j)$}\, ,
\label{eq:gammajy}
\end{equation}
which implies that $A\,\Gamma_j^{-1}\mu_j=\bSi_j^{-1} \bmu_j$.

\noi b) Note that
\begin{align*}
 \bx\trasp \bLam \bx &=\sum_{\ell=1}^k \Lambda_\ell x_\ell^2= \sum_{\ell=1}^k \Lambda_\ell \langle X, \phi_\ell\rangle^2   =\sum_{\ell=1}^k \frac{1}{\lambda_{D,\ell}} \langle X, \phi_\ell\rangle^2 -\sum_{\ell=1}^k \frac{1}{\lambda_{H,\ell}} \langle X, \phi_\ell\rangle^2 \;.
 \end{align*}
 Taking into account \eqref{eq:gammajy} and that from the definition of the linear operator $A$, we have that
 $$A \Gamma_j^{-1} y=\left( \frac{\langle y, \phi_{1}\rangle}{\lambda_{j,1}}, \dots,  \frac{\langle y, \phi_{k}\rangle}{\lambda_{j,k}}\right)\trasp\,,$$
we easily obtain that,  for any $X\in \itR(\Gamma_D^{1/2})\cap \itR(\Gamma_H^{1/2})$,
 $$\sum_{\ell=1}^k \frac{1}{\lambda_{j,\ell}} \langle X, \phi_\ell\rangle^2  = \|A\,  \Gamma_j^{-1} X\|^2\,.$$
 The expression for $\Upsilon (X)$ follows easily from the convergence of the series $\sum_{\ell\ge 1}   \langle X, \phi_\ell\rangle^2/{\lambda_{j,\ell}}$, for $j=D,H$, and of the series  $\sum_{\ell\ge 1} \langle \alpha, \phi_\ell\rangle  \langle X, \phi_\ell\rangle=\langle\alpha, X\rangle$, concluding the proof.
\end{proof}

\subsection{Proof of the results in Section \ref{sec:consist}}
 

\begin{proof}[Proof of Theorem \ref{theo:consist.1}]
From Assumption \ref{ass:A3} and the continuity of the quantile function $F_{H}^{-1}=F_{H, \Upsilon}^{-1}:[0,1]\to \real$ when Assumption \ref{ass:A1} holds, we get that for each $0<p<1$, $\wF_{j}^{-1}(p)-F_j^{-1}(p)\convpp 0$. Therefore, using the continuity of $F_{D}$ stated in Assumption \ref{ass:A2}, the uniform convergence required in Assumption \ref{ass:A3} for  $j=D$ and the inequality
\begin{align*}
|\widehat{\ROC}(p) & - {\ROC}(p)|  =  \left|\wF_{D}\left\{  \wF_{H}^{-1}(1-p)\right\}- F_{D}\left\{  F_{H}^{-1}(1-p)\right\}\right|\\
&\le   \left|\wF_{D}\left\{  \wF_{H}^{-1}(1-p)\right\}- F_{D}\left\{  \wF_{H}^{-1}(1-p)\right\}\right| + \left| F_{D}\left\{  \wF_{H}^{-1}(1-p)\right\}- F_{D}\left\{  F_{H}^{-1}(1-p)\right\}\right|\\
&\le   \left\|\wF_{D}- F_D\right\|_{\infty} + \left| F_{D}\left\{  \wF_{H}^{-1}(1-p)\right\}- F_{D}\left\{  F_{H}^{-1}(1-p)\right\}\right|\; ,
\end{align*}
  we derive that
$ \widehat{\ROC}(p) \convpp {\ROC}(p) $. Moreover, the uniform convergence is obtained from the monotonicity of  $\widehat{\ROC}$ and  ${\ROC}$ and also from the continuity of ${\ROC}$.
\end{proof}

\begin{proof}[Proof of Theorem \ref{theo:consist.2}] 
From the monotonicity  of $\wF_j$ and $F_j$ and also the continuity of $F_j$, it will be enough to prove that $\wF_j(t)-F_j(t)\convpp 0$.  
	
Fix $j=D$ or $H$. Define $\eLe_t^{(j)}(\bbe)=\prob(\bx_j\trasp \bbe \le t)$  for any $t \in \real$, then
\begin{align*}
 \left|\wF_{j} (t)- F_{j}(t)\right| & \le  \left|\wF_{\wbbech, j} (t)- \eLe_t^{(j)}(\wbbe)\right|+ \left|\eLe_t^{(j)}(\wbbe)- F_{ \bbech_0 , j}(t)\right|= \left|\wF_{\wbbech , j} (t)- F_{\wbbech, j}(t)\right| \\
 & \hskip0.3in + \left| F_{\wbbech , j}(t)- F_{\bbech_0 , j}(t)\right|\,.
 \end{align*}
It suffices to prove that
\begin{align}
\sup_{t\in \real}\sup_{\bbech \in \real^k} \left|\wF_{\bbech, j} (t) -F_{ \bbech , j} (t) \right| & \convpp 0 
\label{eq:glivenkobeta}
\end{align}
and 
\begin{align}
\left| F_{ \wbbech , j}(t)- F_{ \bbech_0 , j}(t)\right| & \convpp 0 \,.
\label{eq:Fbeta}
\end{align}
To derive \eqref{eq:glivenkobeta}, let us consider the family of functions
  $$\itF=\{h_{\bbech,t}(\bx)=\indica_{\{ \bx\trasp\bbech\le   t\}} \mbox{ for } (\bbe, t )\in \real^k\times \real \} \,.$$ 
 Taking into account that $\{g(\bx)=\bx\trasp\bbe  -t ; (\bbe,  t) \in \real^k\times\real\}$ is a finite--dimensional space of functions with dimension $p+1$, from Lemmas 9.6  in \citet{kosorok:2008}  we get that $\itF$ is a VC-class  with index at most $p+3$. Hence, applying   Lemmas  9.8 and   9.9(iii) in \citet{kosorok:2008}, we get that the class of functions $\itF$ is a VC-class with index $V(\itF)$ smaller or equal than  $p+3$. Note that the envelope of $\itF$ equals $F\equiv 1$. Hence, Theorem 2.6.7 in \citet{vanderVaart:wellner:1996}  entails that, there exists a universal constant $K$ such that, for any measure $Q$
 $$N\{\epsilon, \itF, L_1(Q)\} \le K \;  V(\itF) \left(16 e\right)^{V(\itF)} \left(\frac{1}{\epsilon}\right)^{V(\itF)-1}\, ,$$
 which together with Theorem 2.4.3 in \citet{vanderVaart:wellner:1996}   or Theorem 2.4 in \citet{kosorok:2008}, leads to
 \begin{equation*}
 \label{eq:glivenko}
   \sup_{h\in \itF} | P_{n_j} h- P_j h| \convpp 0\,, 
 \end{equation*}
  where we have used the standard notation in empirical processes, i.e., $P h=\esp \{ h(\bX)\}$ and $P_n h=(1/n) \sum_{i=1}^n h(\bx_i)$. 
 Hence, we have that
 $$\sup_{t\in \real}\sup_{\bbech \in \real^k} \left|\wF_{\bbech , j} (t)- F_{\bbech , j}(t)\right| \convpp 0\,,$$
 which concludes the proof of  \eqref{eq:glivenkobeta}.

It remains to prove \eqref{eq:Fbeta}. To strengthen the dependence on the sample size denote $\wbbe_n=\wbbe$, where $n=n_D+n_H$. Then, from the fact that $\wbbe\convpp \bbe_0$, there exists $\itN\subset \Omega$ such that $\prob(\itN)=0$ and for $\omega\notin \itN$, $\wbbe_n(\omega)  \to  \bbe_0$. Take $\omega\notin \itN$, then  for any $\bx\in \real^k$,  $\bx\trasp\wbbe_n(\omega)  \to  \bx\trasp\bbe_0$. Hence, if $\bx_j\sim P_j$, the random variable $Z_n=\bx_j\trasp\wbbe_n(\omega)$ converges to $Z=\bx_j\trasp\bbe_0$ everywhere, so $F_{\wbbech_n(\omega) , j}(t)=\prob(Z_n\le t)\to \prob(Z\le t)=F_{j}(t)$, for any $t$. Using the fact that   $F_{j}(t)$ is  non-decreasing and continuous, we conclude that, for any $\omega\notin \itN$, 
$$ \left\| F_{ \wbbech_n(\omega) , j }- F_{j} \right\|_{\infty}\to 0\,,$$
concluding the proof of \eqref{eq:Fbeta}. 
\end{proof}

\begin{proof}[Proof of Theorem \ref{theo:consist.3}]
The proof is similar to that of Theorem \ref{theo:consist.2}. Denote $\itH_k$ the linear space spanned by $\phi_1, \dots, \phi_k$, that is,  $\itH_k=\{\beta=\sum_{s=1}^k  b_s \phi_s, \bb=(b_1,\dots, b_k)\trasp \in \real^k\}$.

Using that $\wF_j$ and $F_j$ are non--decreasing function and the continuity of $F_j$, it will be enough to show that  $\wF_j(t)-F_j(t)\convpp 0$.  
Fix $j=D$ or $H$ and define $\eLe_t^{(j)}(\beta)=\prob(\langle X_j,\beta\rangle \le t)$, then
\begin{align*}
 \left|\wF_{j} (t)- F_{j}(t)\right| & \le  \left|\wF_{ \wbeta , j} (t)- \eLe_t^{(j)}(\wbeta)\right|+ \left|\eLe_t^{(j)}(\wbeta)- F_{ \beta_0 , j}(t)\right|= \left|\wF_{ \wbeta , j} (t)- \eLe_t^{(j)}(\wbeta)\right| \\
 & \hskip0.3in + \left| F_{ \wbeta , j}(t)- F_{ \beta_0 , j}(t)\right|\,.
 \end{align*}
It is enough to show that
\begin{align}
\sup_{t\in \real}\sup_{\beta \in \itH_k} \left|\wF_{ \beta, j} (t) -F_{\beta , j} (t) \right| & \convpp 0 
\label{eq:glivenkobetafun}
\end{align}
and 
\begin{align}
\left| F_{ \wbeta, j}(t)- F_{ \beta_0, j}(t)\right| & \convpp 0 \;. 
\label{eq:Fbetafun}
\end{align}
The proof of \eqref{eq:Fbetafun} is similar to that of \eqref{eq:Fbeta} and for that reason  it is omitted.

To derive \eqref{eq:glivenkobetafun}, we will follow similar arguments to those considered in the proof of Proposition 1 in \citet{Bianco:Boente:2023}. Consider the family of functions
 $$\itF=\{h_{\beta,t}(x)=\indica_{\{ \langle x,\beta\rangle\le   t\}} \mbox{ for } (\beta, t )\in \itH_k\times \real \} \,.$$ 

Taking into account that for $\beta\in \itH_k$, $\langle x,\beta\rangle= \sum_{s=1}^k b_s \langle x,\phi_s\rangle= \bx\trasp \bb$, where $\bx=(\langle x,\phi_1\rangle, \dots, \langle x,\phi_k\rangle)\trasp$, we obtain that $\{g(x)=\langle x,\beta\rangle  -t  ; (\beta,  t) \in \itH_k\times\real\}$ is a finite--dimensional space of functions with dimension $k+1$. Thus, from Lemma  9.6  in \citet{kosorok:2008}  we get that $\itF$ is a VC-class  with index at most $k+3$. Hence, applying   Lemmas  9.8 and   9.9(iii) in \citet{kosorok:2008}, we conclude that the class of functions $\itF$ is a VC-class with index $V(\itF)$ smaller or equal than  $k+3$. Note that the envelope of $\itF$ equals $F\equiv 1$. Hence, Theorem 2.6.7 in \citet{vanderVaart:wellner:1996}   entails that there exists a universal constant $K$ such that, for any measure $Q$
\begin{equation}
	\label{eq:entropia}
	N\left\{\epsilon, \itF, L_1(Q)\right\} \le K \;  (k+3) \left(16 e\right)^{k+3} \left(\frac{1}{\epsilon}\right)^{k+2}\, .
	\end{equation}
 To show   $\sup_{h\in \itF} | P_{n_j} h- P_j h| \convpp 0$,  where  $P_j h=\esp \left\{h(X_j)\right\}$ and $P_{n_j} h=(1/{n_j}) \sum_{i=1}^{n_j} h(X_{j,i})$,  it will be enough to prove that 
\begin{equation}
	\label{eq:NPn}
	\frac{1}{n_j} \log N\left\{\epsilon, \itF, L_1(P_{n_j})\right\} \convprob 0\,.
\end{equation}
From \eqref{eq:entropia} we get that
$$\frac{1}{n_j} \log N\left\{\epsilon, \itF, L_1(P_{n_j})\right\}  \le C \frac{ k+2 }{n_j}\log \left(\frac{1}{\epsilon}\right)\,,$$
for some constant $C$. Thus, using that  $k_n/n\to 0$, we obtain that 
 $$\sup_{t\in \real}\sup_{\beta  \in \itH^k} \left|\wF_{ \beta , j} (t)- F_{\beta , j}(t)\right| \convpp 0\,,$$
 which concludes the proof of  \eqref{eq:glivenkobetafun}. 
 \end{proof}

\begin{proof}[Proof of Theorem \ref{theo:consist.4}]
Note that b) is a direct consequence of a) and Theorem \ref{theo:consist.1}, so we will only show a). For that purpose recall that, for any $\balfa \in \real^k$ and $\bLam \in \real^{k\times k}$, we have denoted
$\Upsilon_{\bLamch, \balfach} (X)= \,-\, \bx\trasp \bLam  \bx +  \balfa \trasp \bx$ where $\bx=A(X)=(\langle X, \phi_1\rangle, \dots,\langle X, \phi_k\rangle)\trasp$ and as in the proof of Theorem \ref{theo:consist.3} let $L_t^{(j)}(\bLam,\balfa)=\prob\left( \Upsilon_{\bLamch, \balfach}(X_j)\le t\right)$.
Then,
\begin{equation}
|\wF_j(t)-F_j(t)|\le |\wF_j(t)-L_t^{(j)}(\wbLam,\wbalfa)|+|L_t^{(j)}(\wbLam,\wbalfa)-L_t^{(j)}(\bLam_0,\balfa_0)| \;.
\label{eq:cotaquad}
\end{equation} 
As in the proof of \eqref{eq:Fbeta}, the second term on the right hand side in \eqref{eq:cotaquad} converges almost surely to $0$, since $(\wbLam, \wbalfa) \convpp (\bLam_0, \balfa_0)$.
Hence, we only have to show that $\wF_j(t)-L_t^{(j)}(\wbLam,\wbalfa) \convpp 0$, which follows as in  Theorem  \ref{theo:consist.2} using that the class of functions
 $$\itF=\left\{h_{\bLamch, \balfach, t}(X)=\indica_{\{\,-\, \bx\trasp \bLam  \bx +  \balfa \trasp \bx \le t\}}\;,  \bLam \in \real^{k\times k}, \balfa\in \real^k, t \in \real, \mbox{ where } \bx=A(X)\right\}\,,$$ 
is a finite--dimensional space of functions. 
\end{proof}

\small
\section*{Acknowledgement}
This research of A.~M.~Bianco and G.~Boente was partially supported by Grants 20020170100022\textsc{ba} from the Universidad de Buenos Aires and \textsc{pict} 2021-I-A-00260 from \textsc{anpcyt}, Argentina. The three authors were also supported by Grants PID2020-118101GB-I00 (MCIN/ AEI /10.13039/501100011033) and PID2023-148811NB-I00 (MICIU/AEI/10.13039/501100011033 and ERDF/EU), Spain. The research was begun while A.~M.~Bianco and G. Boente were visiting the Universidade de Vigo.

\bibliographystyle{apalike}
\bibliography{referencias2}

 \end{document}